\definecolor{linkcol}{rgb}{0,0,0.38}
\definecolor{citecol}{rgb}{0.8,0,0}
\definecolor{urlcol}{rgb}{0.1,0.35,0}
\DeclareFontFamily{U}{BOONDOX-calo}{\skewchar\font=45 }
\DeclareFontShape{U}{BOONDOX-calo}{m}{n}{
  <-> s*[1.05] BOONDOX-r-calo}{}
\DeclareFontShape{U}{BOONDOX-calo}{b}{n}{
  <-> s*[1.05] BOONDOX-b-calo}{}
\DeclareMathAlphabet{\link}{U}{BOONDOX-calo}{m}{n}
\DeclareMathAlphabet{\blink}{U}{BOONDOX-calo}{b}{n}
 \newtheoremstyle{light} %
     {\topsep}                    %
     {\topsep}                    %
     {\itshape}                   %
     {}                           %
     {\scshape}                   %
     {.}                          %
     {.5em}                       %
     {}  %
 \newtheorem{theorem}{Theorem}[section]
 \newtheorem{lemma}[theorem]{Lemma}
 \newtheorem{proposition}[theorem]{Proposition}
 \newtheorem{definition}[theorem]{Definition}
 \newtheorem{corollary}[theorem]{Corollary}
 \newtheorem{observation}[theorem]{Observation}
 \newtheorem{claim}[theorem]{Claim}
\crefname{claiminproof}{Claim}{Claims}
\crefname{claiminproof}{claim}{claims}
\crefname{observation}{Observation}{Observations}
\crefname{observation}{observation}{observations}
\crefname{algocf}{Algorithm}{Algorithms}
\crefname{algocf}{algorithm}{algorithms}
\crefname{conjecture}{Conjecture}{Conjectures}
\crefname{conjecture}{conjecture}{conjectures}
\crefname{thm}{Theorem}{Theorems}
\crefname{thm}{theorem}{theorems}
\crefname{lem}{Lemma}{Lemmas}
\crefname{lem}{lemma}{lemmas}
\newcommand{\labeltarget}[1]{\Hy@raisedlink{\hypertarget{#1}{}}}
\setlist[enumerate]{nosep,topsep=0.1em}
\setlist[enumerate,1]{label=(\roman*), leftmargin=2.2em}
\setlist[itemize]{nosep,topsep=0.3em}
\DeclareMathOperator{\child}{child}
\newcommand\appendtographicspath[1]{%
  \g@addto@macro\Ginput@path{#1}%
}
\let\truehypersetup\hypersetup
\renewcommand\hypersetup[1]{}
\let\hypersetup\truehypersetup
\DeclareRobustCommand{\cvs}{\ensuremath{\:/_{\!k}\:}}
\DeclareRobustCommand{\OPT}{\ensuremath{\mathrm{OPT}}}
\renewcommand{\epsilon}{\varepsilon}
\renewcommand{\P}{\text{P}}
\renewcommand{\NP}{\text{NP}}
\renewcommand{\epsilon}{\varepsilon}
\def\cupp{\stackrel{.}{\cup}}
\let\@@pmod\pmod
\DeclareRobustCommand{\pmod}{\@ifstar\@pmods\@@pmod}
\def\@pmods#1{\mkern8mu({\operator@font mod}\mkern 6mu#1)}
\let\@@mod\mod
\DeclareRobustCommand{\mod}{\@ifstar\@mods\@@mod}
\def\@mods#1{\mkern8mu{\operator@font mod}\mkern 6mu#1}
\definecolor{green}{rgb}{0.4,0.85,0.6}
 \author{
 Meike Neuwohner\thanks{
CNRS \& DIENS, ENS Paris - PSL Research University, Paris, France. \href{mailto:meike.neuwohner@ens.fr}%
{meike.neuwohner@ens.fr}.\newline
Part of the research was done while Meike Neuwohner was affiliated with the London School of Economics and Political Science. During this time, Meike Neuwohner was supported by EPSRC grant EP/X030989/1.
 }
 \and
 Vera Traub\thanks{
 Department of Computer Science, ETH Zurich, Zurich, Switzerland.
 Email: \href{mailto:vtraub@ethz.ch}%
 {vtraub@ethz.ch}.
 }
 \and
 Rico Zenklusen\thanks{
 Department of Mathematics, ETH Zurich, Zurich, Switzerland.
 Email: \href{mailto:ricoz@ethz.ch}%
 {ricoz@ethz.ch}.}
 }
\title{Approximation Schemes for Planar Graph Connectivity Problems}
\date{}
\begin{document}

\maketitle

\begin{abstract}
Finding a smallest subgraph that is $k$-edge-connected, or augmenting a $k$-edge-connected graph with a smallest subset of given candidate edges to become $(k+1)$-edge-connected, are among the most fundamental Network Design problems.
They are both APX-hard in general graphs.
However, this hardness does not carry over to the planar setting, which is not well understood, except for very small values of $k$.
One main obstacle in using standard decomposition techniques for planar graphs, like Baker's technique and extensions thereof, is that connectivity requirements are global (rather than local) properties that are not captured by existing frameworks.

We present a novel, and arguably clean, decomposition technique for such classical connectivity problems on planar graphs. This technique immediately implies PTASs for the problems of finding a smallest $k$-edge-connected or $k$-vertex-connected spanning subgraph of a planar graph for arbitrary $k$.
By leveraging structural results for minimally $k$-edge-connected graphs, we further obtain a PTAS for planar $k$-connectivity augmentation for any constant $k$. 
We complement this with an \NP-hardness result, showing that our results are essentially optimal. 
\end{abstract}

\section{Introduction}

The field of Network Design in the context of approximation algorithms has been largely shaped by basic questions on how to create graphs that fulfill certain connectivity requirements in the most economical way.
Some of the most fundamental such questions are the $k$-Edge/Vertex-Connected Spanning Subgraph problems and augmentation problems.
They aim at creating a graph with a prescribed connectivity.
The $k$-Edge/Vertex-Connected Spanning Subgraph problems ask about creating such a graph from scratch, whereas augmentation problems start with an existing graph and enlarge it by adding edges.

More formally, let $k\in \mathbb{Z}_{\geq 1}$. In the $k$-Edge-Connected Spanning Subgraph problem ($k$-ECSS), we are given a graph $G=(V,E)$, and the goal is to find a spanning subgraph of $G$ that is $k$-edge-connected with as few edges as possible.
The $k$-Vertex-Connected Spanning Subgraph problem ($k$-VCSS) requires $k$-vertex-connectivity instead.
In the $k$-Connectivity Augmentation problem ($k$-CAP), we are given a $k$-edge-connected graph $G=(V,E)$ together with a set $L\subseteq \binom{V}{2}$ of candidate edges to add, also called \emph{links}, and the task is to find a subset $F\subseteq L$ of minimum size so that $(V,E\dot{\cup} F)$ is $(k+1)$-edge-connected.%
\footnote{Note that $k$-ECSS seeks a $k$-edge-connected graph whereas $k$-CAP aims for $(k+1)$-edge-connectivity.
 Despite this $k$ vs.~$k+1$ discrepancy, we stick to these definitions of $k$-ECSS and $k$-CAP as they are common in the literature.
}
In the weighted counterparts of these problems, $k$-WECSS, $k$-WVCSS, and $k$-WCAP, respectively, edges/links have non-negative costs and the goal is to minimize the total cost of the selected edges/links.

The study of these and related problems, on some of which we expand later, has a long history and led to a rich and steadily growing body of results and new techniques.
We refer the interested reader to \cite{goemansGeneralApproximationTechnique1995,jainFactor2Approximation2001,vaziraniApproximationAlgorithms2001,williamsonDesignApproximationAlgorithms2011,traubBetterThan2ApproximationsWeighted2025} and references therein for further information.

The problems $k$-WECSS, $k$-WVCSS, and $k$-WCAP are well-known to be hard to approximate, in particular $\APX$-hard, in essentially all natural variations (see \cite{fernandesBetterApproximationRatio1998,czumajApproximabilityMinimumcostKconnected1999,kortsarzHardnessApproximationVertexConnectivity2004,pritchardKEdgeConnectivityApproximationLP2011}).
Therefore, the gold standard for approximation algorithms in general graphs is achieving strong constant-factor guarantees.

Another approach to these basic problems is to understand whether better approximation guarantees can be obtained in certain relevant graph classes.
One particularly interesting, and arguably among the most studied, classes in this context, are planar graphs, which is also the focus of this paper.
Contrary to general graphs, $k$-ECSS and $k$-VCSS, as well as further variations, are only known to be \NP-hard (instead of \APX-hard) in planar graphs~\cite{gareyComputersIntractabilityGuide1990}, and no hardness result for $k$-CAP in planar graphs was known prior to this work.
An additional reason to hope for stronger results in planar graphs is that many graph optimization problems that are \NP-hard in general graphs, admit polynomial time approximation schemes (PTAS) in planar graphs.

Indeed, important progress has been made on PTASs for these problems in planar graphs.
Unfortunately, this progress has largely been limited to small connectivity requirements.
The most relevant prior results in this context for our work are the following.
\textcite{provanTwoConnectedAugmentationProblems1999} showed that planar $1$-WCAP can be solved exactly in polynomial time.
Moreover, \textcite{czumajApproximationSchemesMinimum2004} presented PTASs for planar $2$-ECSS and planar $2$-VCSS.\footnote{They also presented PTASs for the weighted versions when the ratio of the total edge cost to the optimum solution cost is bounded.} The first result was later strengthened by \textcite{bergerMinimumWeight2EdgeConnected2007}, who presented a linear-time PTAS for planar $2$-ECSS and a PTAS for planar $2$-WECSS.
Both \textcite{czumajApproximationSchemesMinimum2004} and \textcite{bergerMinimumWeight2EdgeConnected2007} proceed by repeatedly buying cheap cycles, which allows for splitting the connectivity requirement into a requirement inside the cycle and one outside the cycle.
This technique is specific to $2$-edge/vertex-connectivity, and it is unknown how to generalize it to higher connectivities.
More recently, \textcite{zhengLineartimeApproximationSchemes2017} showed that also both planar $3$-ECSS and planar $3$-VCSS admit a PTAS.
While some of the used techniques have parallels to ours, namely the creation of overlapping layers, the analysis remains tailored to connectivity $3$, and is based on properties that fail for higher connectivities.

One of the arguably most generic frameworks to obtain PTASs for planar graph problems, which also inspired our work, grew out of Baker's technique~\cite{bakerApproximationAlgorithmsNPcomplete1994} and extensions thereof, for example by \textcite{kleinLineartimeApproximationScheme2005}.
These approaches have been hugely successful, leading to PTASs for a large class of problems on planar graphs, including maximum independent set, minimum vertex cover, minimum dominating set, a linear time PTAS for the Traveling Salesman Problem (TSP), and many others.
For optimization problems where edges have to be picked, the approaches can be divided into four main steps (where not all steps are always needed):
\begin{enumerate}[label=(\roman*)]\label{enum:baker_steps}
 \item \emph{Sparsifier step:} Delete some edges of the input graph while approximately preserving the optimal value.
 \item \emph{Decomposition step:} Decompose the resulting sparsified graph into independent subproblems of bounded treewidth, by fixing/buying a cheap substructure of the solution.
 \item \emph{Dynamic programming step:} Solve each subproblem optimally using dynamic programming.
 \item \emph{Combining step:} Combine the solutions of the subproblems to obtain a solution for the whole graph.
\end{enumerate}

The decomposition step typically considers the planar dual to partition the graph into layers of constant width, defined by their distance from the outer face.
Such layers are known to have constant treewidth.
To achieve this decomposition, one buys a small constant fraction of all edges.
To make sure that this constant fraction is small in cost, the sparsifier step is often necessary, whose goal is to make sure that the cost of all edges is at most a constant factor higher than the cost of an optimal solution.

Unfortunately, this approach is known to require a certain \emph{locality} of the considered problem, which is not satisfied by many problems with more global requirements.
This need for locality has been repeatedly highlighted as a major obstacle of Baker's technique (see, for example,~\cite{demaineApproximationSchemesPlanar2008,kleinLineartimeApproximationScheme2005,zhengLineartimeApproximationSchemes2017,fox-epsteinEmbeddingPlanarGraphs2019}), in particular when more global connectivity requirements are needed, like in $k$-ECSS.
There are some notable exceptions, which, however, do not apply to our setting of higher edge-connectivities.
One is the framework of bidimensionality introduced by \textcite{demaineBidimensionalityNewConnections2005}, which leads to a PTAS for connected dominating set, through a technique that can be interpreted as a non-trivial extension of Baker's approach.
Another one is the ubiquity framework of \textcite{cohen-addadApproximatingConnectivityDomination2016} and an extension thereof by \textcite{fominApproximationSchemesWidth2019}, which, if certain conditions are met by the problem under consideration, leads to a PTAS for planar graphs.
As mentioned, a Baker-type approach has been used to obtain a linear time PTAS for planar TSP~\cite{kleinLineartimeApproximationScheme2005}, improving on prior PTASs~\cite{grigniApproximationSchemePlanar1995,aroraPolynomialtimeApproximationScheme1998} not relying on Baker's technique.
\citeauthor{kleinLineartimeApproximationScheme2005} avoids the use of higher connectivities by phrasing TSP as the problem of finding a (singly) connected graph with only even degrees.
In summary, even $2$-edge-connectivity falls outside the classical realm of Baker's framework (and is also neither captured by the bidimensionality \cite{demaineBidimensionalityNewConnections2005} nor the ubiquity framework \cite{cohen-addadApproximatingConnectivityDomination2016,fominApproximationSchemesWidth2019}). We elaborate on this in \cref{sec:bidimensionality_and_ubiquity}.

The goal of this paper is to provide a novel and arguably clean way to use a Baker-type technique that, for the first time, can deal with larger connectivity requirements, thus addressing the locality issue of Baker's approach for these settings.

\subsection{Our results}

We start with our main result for planar connectivity augmentation, which works also for weighted instances as long as the \emph{cost ratio}, the ratio between the maximum and minimum cost of any link, is bounded by a constant.
\begin{theorem}\label{theorem:PTAS_WCAP}
 For any fixed $k \in \mathbb{Z}_{\geq 1}$, there is a PTAS for planar $k$-WCAP for bounded cost ratio instances. 
\end{theorem}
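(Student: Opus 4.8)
The plan is to implement the four-step Baker-type scheme described above, where the crux is the decomposition step. Fix $k\in\mathbb{Z}_{\geq1}$ and $\epsilon>0$, and let $(G=(V,E),L)$ be a planar $k$-WCAP instance with bounded cost ratio, together with an optimal solution $F^*\subseteq L$. Since $F^*$ has size at least $1$ (assuming the instance is nontrivial) and the cost ratio is bounded by some constant $\rho$, we have $c(L)\le \rho\,|L|\cdot c_{\min}$; this alone does not suffice, so first I would argue a \emph{sparsifier step}: restrict attention to a subset $L'\subseteq L$ of links that contains an optimal solution and whose total cost is $O_k(c(F^*))$ — here one uses that in a $(k+1)$-edge-connected augmentation one may discard links that are ``dominated'' (e.g. links whose endpoints can be reconnected through shorter links), bounding the number of relevant links crossing any tight cut. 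The point of this step is to make sure that buying a $\Theta_\epsilon(1)$-fraction of the candidate links costs at most $\epsilon\cdot c(F^*)$.

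Next comes the \emph{decomposition step}, which I expect to be the main obstacle. Using a planar embedding of $G$, slice the dual into breadth-first layers and group consecutive layers into overlapping bands of width $w=w(\epsilon,k)$, in the spirit of \textcite{kleinLineartimeApproximationScheme2005}. The classical difficulty, emphasized in the introduction, is that $(k+1)$-edge-connectivity is a \emph{global} property: cutting the graph into bands destroys long-range connectivity witnesses. The idea to overcome this is to exploit structural results on minimally $k$-edge-connected graphs — the fact that a minimally $k$-edge-connected graph has $O(k|V|)$ edges and, more importantly, a tree-like/laminar structure of its small cuts (the cactus-type representation of minimum cuts, and Mader-type results on vertices of degree $k$). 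Concretely, I would take an optimal augmentation $G^*=(V,E\dot\cup F^*)$, observe that $(V,E)$ has a ``cheap'' core whose $k$-cuts form a laminar-like family of bounded complexity, and show that by paying a $(1+\epsilon)$ factor one may \emph{buy outright} all links of $F^*$ that cross the chosen band boundaries, because only $O_{k,w}(1)$ of them cross any single boundary face; choosing the band offset uniformly at random among $w$ shifts makes the expected cost of these bought links at most $\epsilon\cdot c(F^*)$. After buying them, the residual problem decomposes: within each band the augmentation requirement becomes a bounded-treewidth connectivity problem, since each band of width $w$ in a planar graph has treewidth $O(w)$.

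For the \emph{dynamic programming step}, each band is a planar graph of treewidth $O(w)=O_{k,\epsilon}(1)$, and the requirement ``make the whole graph $(k+1)$-edge-connected, given that the cross-band links are already fixed'' reduces inside the band to satisfying a collection of cut constraints that are local to the band; a standard treewidth DP over cut-states (tracking, for each bag, the partition induced by already-selected edges together with the deficiency of each part, as in Frank–Jordán-style augmentation DPs) solves this optimally in time $f(k,\epsilon)\cdot n^{O(1)}$. The \emph{combining step} is then immediate: the union of the per-band solutions together with the bought cross-band links is a feasible $(k+1)$-edge-connected augmentation, since every cut of $G$ either lies inside some band (handled by that band's DP) or is crossed by a bought link. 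Summing costs gives $(1+O(\epsilon))\,c(F^*)$ after rescaling $\epsilon$, which proves the theorem; the hard technical part, as anticipated, is the structural argument that few links of a near-optimal solution cross each band boundary, for which the minimally-$k$-edge-connected structure theory is the essential ingredient.
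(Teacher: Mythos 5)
There is a genuine gap, and it sits exactly where you anticipated the difficulty. Your sparsifier step asserts that one can thin $L$ to a set $L'$ containing a near-optimal solution with $c(L')=O_k(c(F^*))$ by discarding ``dominated'' links; this is precisely what fails in general and what the paper identifies as the central obstruction. The obstruction is formed by \emph{snug paths} (cf.\ the example in \cref{fig:snug_chain_1}): long chains of nested $k$-cuts whose internal shores are covered differently by different links with snug endpoints, so no feasibility-preserving thinning of $L$ in the \emph{uncontracted} instance brings the total link cost down to $O(c(\OPT))$. The paper only achieves a cost bound after contracting all snug paths: non-snug links are thinned via \cref{lemma:reduction_to_sparse_instances}, the snug links are replaced (in the contracted graph) by the representatives $L^\circ_{\rm snug}$, and the counting that makes this cheap relies on the structural bound $|V\setminus V_{\rm snug}|,|\mathcal{P}^G_{\rm chain}|=O(n_k(G))$ of \cref{lemma:bound_non_snug_vertices_and_snug_paths} together with $c(\OPT)\ge \tfrac{1}{2}n_k(G)c_{\min}$. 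Your appeal to ``cactus/Mader-type'' structure gestures in this direction but does not supply the argument, and without it the averaging over band offsets only bounds the bought links by $\epsilon\cdot c(L)$, not $\epsilon\cdot c(\OPT)$. A second, independent flaw: you propose to ``buy outright all links of $F^*$ that cross the band boundaries,'' but $F^*$ is unknown to the algorithm; a Baker-type scheme must buy \emph{all} candidate links incident to the boundary (the set $L_{\mathcal{U}}$), which again forces the cost comparison against $c(L)$ and reintroduces the sparsification problem. The auxiliary claim that only $O_{k,w}(1)$ links of $F^*$ cross any single boundary is also unsubstantiated and is not how any bound could be salvaged.

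Even granting a decomposition, the dynamic-programming step is not standard in the way you describe. After the necessary snug-path contraction, the pieces of the cover have bounded treewidth, but the pieces of the \emph{original} instance do not (a single piece can contain arbitrarily long snug paths), and moreover a snug path lying near a piece boundary need not remain a snug path of the subinstance. The paper handles this by (i) buying an extra link set $L^{\star\star}$ covering the snug shores of the paths $\mathcal{Q}$ close to the boundaries (\cref{lemma:interaction_of_snug_shores_with_cover,lemma:augmentation_safe_cover_respecting_snug_paths,lemma:link_sets_covering_snug_chains}), and (ii) introducing \emph{snug-treewidth} and designing a dedicated DP (\cref{thm:WCAP_bounded_snug_treewidth}) that tracks, per contracted snug path, which prefix/suffix of its internal shores is already covered. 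Your plan contains neither ingredient, and the ``cut-state'' treewidth DP you invoke does not apply because the width of the uncontracted pieces is unbounded. So while the skeleton (layered decomposition plus per-piece DP plus combining) matches the paper's high-level strategy via $k$-edge-safe / $k$-augmentation-safe covers, the two steps that actually make $k$-WCAP tractable --- the snug-path contraction that enables both the cost bound and the width bound, and the snug-treewidth DP --- are missing, and the sparsification claim as stated is false.
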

Prior to our work, an exact algorithm was known for planar $1$-WCAP~\cite{provanTwoConnectedAugmentationProblems1999}.
For larger $k$, we are unaware of any planar $k$-WCAP algorithms that improve over the best-known algorithms for general graphs~\cite{%
traub15varepsilonApproximationAlgorithmWeighted2023,
traubBetterThan2ApproximationsWeighted2025,
},
which only give constant-factor guarantees.

We complement the above with the following hardness result for planar $k$-CAP, which we derive through a reduction from planar $3$-SAT.
\begin{theorem}\label{thm:hardness_k-CAP}
For any $k\geq 2$, planar $k$-CAP is \NP-hard.
\end{theorem}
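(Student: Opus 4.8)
The plan is to reduce from planar $3$-SAT, using the standard reformulation of connectivity augmentation as a covering problem: since $G=(V,E)$ is $k$-edge-connected, the graph $(V,E\,\dot{\cup}\,F)$ is $(k+1)$-edge-connected if and only if every $k$-edge-cut of $G$ is crossed by at least one link of $F$. A planar $k$-CAP instance is thus fully described by the planar graph $G$, the planar family of available links, and the family of $k$-cuts of $G$ that the links must hit, and the task becomes to engineer a planar $k$-edge-connected graph whose family of $k$-cuts, together with a carefully restricted planar link set, encodes the satisfiability of a given planar $3$-CNF formula $\phi$. I would produce a budget $B$ so that $\phi$ is satisfiable if and only if the constructed instance admits a feasible augmentation $F$ with $|F|\le B$; a gap of one between the satisfiable and unsatisfiable cases already suffices for \NP-hardness.

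I would first carry out the case $k=2$ and then lift it. Fix a plane drawing of the variable--clause incidence graph of $\phi$. For each variable $x$ I build a planar $2$-edge-connected ``switch'' gadget whose internal $2$-cuts can be destroyed in exactly two minimum-size ways, corresponding to $x=\mathrm{true}$ and $x=\mathrm{false}$; each of the two choices simultaneously makes available, via the same purchased links, a signal along one of two vertex-disjoint planar \emph{channels} leaving the gadget. A channel is a chain of $2$-cuts running alongside the cycle arcs of the gadget, designed so that it can be resolved within budget precisely when the switch was set to the matching value, which forces the variable's choice to propagate consistently to every clause in which the corresponding literal occurs. For each clause $C=\ell_1\vee\ell_2\vee\ell_3$ I build a clause gadget containing one distinguished $2$-cut that can be covered only by the three links delivered along the channels of $\ell_1,\ell_2,\ell_3$, so this cut is within budget exactly when at least one of the literals is true. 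Placing the variable gadgets, channels, and clause gadgets according to the plane drawing of the incidence graph keeps $G$ together with all of its links planar, and setting $B$ to the total forced cost (one unit per variable plus the fixed costs of the channels and clause gadgets) yields the claimed equivalence, an uncovered clause-cut costing exactly one extra link.

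For general $k\ge 2$ I would adapt the backbone on which the gadgets sit rather than redesign them. If $k=2t$ is even, replace every edge of the $k=2$ base graph $G$ by $t$ parallel copies: the resulting planar multigraph $G_t$ is $2t$-edge-connected, a set of $2t$ edges disconnects $G_t$ only if it is the union of two full parallel bundles, so the $2t$-cuts of $G_t$ are in bijection with the $2$-cuts of $G$, and a link crosses such a cut exactly when it crossed the corresponding $2$-cut of $G$; keeping the same link set and budget shows planar $2t$-CAP is \NP-hard. If $k=2t+1$ is odd, uniform edge multiplicities fail---they send every relevant cut to an even size---so I would instead rebuild the backbone from \emph{alternating bundles}, with consecutive cycle arcs carrying $t$ and $t+1$ parallel edges, so that the minimum cuts have size exactly $2t+1$ and consist of two adjacent bundles; the gadget interiors are replaced by $k$-edge-connected analogues, and the variable/channel/clause logic goes through once rephrased in terms of these $(2t+1)$-cuts.

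The main obstacle, as is typical for such reductions, is verifying that the constructed graph has precisely the intended family of $k$-cuts: the interior of each gadget must contribute no $k$-cut beyond its designated ones, gluing gadgets along the channels must create neither spurious small cuts nor a ``shortcut'' across a channel, and one must check that no link crosses a cut it is not meant to help cover. A further, genuinely delicate point is ensuring that the entire drawing of $G$ together with the link set remains planar---this is exactly where the fixed plane embedding of the $3$-SAT instance is used, to route all channels without crossings---and a final source of bookkeeping is making the construction work uniformly over all $k\ge 2$, which is why the odd case needs the separate alternating-bundle backbone.
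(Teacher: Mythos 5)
Your overall strategy is the same as the paper's: view $k$-CAP as covering the $k$-cuts of $G$ with links, reduce from a planar 3-SAT variant via variable/clause gadgets with a cardinality budget, and lift even $k$ by taking $k/2$ parallel copies of every edge (this last step is verbatim what the paper does). However, as a proof the proposal has two genuine gaps. First, the gadgets are only specified by their intended behaviour ("a switch whose internal $2$-cuts can be destroyed in exactly two minimum-size ways", "a distinguished $2$-cut coverable only by the three channel links"); none of them is actually constructed, and constructing a planar, $2$-edge-connected graph whose $2$-cuts are \emph{exactly} the designated ones, with all links routed planarly, is the entire technical content of the theorem. The paper does not reduce from generic planar 3-SAT but from \emph{Linked} Planar 3-SAT (incidence graph plus a Hamiltonian cycle visiting all clauses and then all variables, positive occurrences inside the cycle, negated ones outside, each variable in at most three clauses); that cycle is what provides a planar $2$-edge-connected backbone into which explicit variable gadgets can be spliced, and the correctness argument is a concrete counting lemma (budget $13m$, forced matchings on inner vertices, consistency cuts forcing the true/false link sets). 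You flag these verifications as "the main obstacle" but do not carry them out, so the reduction is a plan rather than a proof.

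Second, the odd-$k$ lift is wrong as stated. If consecutive arcs of a cycle carry $t$ and $t+1$ parallel edges, a cut through two non-adjacent $t$-bundles has size $2t<2t+1$, so the backbone is not even $(2t+1)$-edge-connected, and nothing restricts minimum cuts to pairs of \emph{adjacent} bundles; you would also have to redesign every gadget interior to be $(2t+1)$-edge-connected with a controlled cut structure, which is exactly the part you leave to "the logic goes through once rephrased". The paper's route is different and is the part you would need to supply: it builds a fresh $k=3$ construction (new clause gadgets in which a distinguished vertex $c_j$ forces one of two links, new variable gadgets, budget $15m+l$), and then for odd $k\geq 5$ adds $(k-3)/2$ copies of a carefully chosen $2$-edge-connected spanning subgraph $\tilde{G}$ of $G$ that contains exactly two edges of every $3$-cut; this raises every $3$-cut to size exactly $k$ while creating no new minimum cuts, so the instance is equivalent. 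Without either that mechanism or a corrected version of your alternating-bundle idea (plus the explicit gadgets), the argument for odd $k$ does not go through.
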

Prior to our work, it was unknown whether planar $k$-CAP is \NP-hard for any $k\geq 2$.
The presented PTAS together with the above hardness result therefore essentially settles the approximability of planar $k$-CAP.

Interestingly, trying to apply a classical Baker-type approach to (even unweighted) planar $k$-CAP, as outlined on page \pageref{enum:baker_steps}, faces two important issues.
First, it is highly unclear how to sparsify a planar $k$-CAP instance.
This is a major obstacle, because an optimal $k$-CAP solution may cost a negligible amount compared to the cost of all links. (See discussion and example in \Cref{sec:augmentation_short}.)
Second, as previously mentioned, prior decomposition techniques do not apply to higher connectivities.
This is why also for the (unweighted) $k$-ECSS problem, where sparsification turns out to be easy, a PTAS was only known for $k\leq 3$.
We resolve these issues through a combination of structural insights and a new decomposition technique.

Our new decomposition also readily applies to planar $k$-ECSS and its vertex-connectivity counterpart, planar $k$-VCSS, leading to the following result.

\begin{theorem}\label{thm:PTASs_k-WECSS_k-WVCSS_constant_k}
For any fixed $k \in \mathbb{Z}_{\geq 1}$, planar $k$-WECSS and planar $k$-WVCSS, both with bounded cost ratios, admit a linear-time PTAS.
\end{theorem}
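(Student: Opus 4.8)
The plan is to follow the four-step template on page~\pageref{enum:baker_steps}, replacing only the decomposition step by a connectivity-aware variant. Fix $k$, the target accuracy $\varepsilon>0$, and let $\rho$ be the (constant) cost ratio; write $\OPT$ for the optimum. For the instance to be feasible the input graph $G=(V,E)$ must itself be $k$-edge-connected (resp.\ $k$-vertex-connected), so every vertex has degree at least $k$ and hence $\OPT\ge \frac{k}{2}|V|\cdot c_{\min}$, where $c_{\min}$ is the smallest edge cost. On the other hand, $G$ is planar, so (after discarding all but $k$ parallel copies of each edge, which never helps a minimum solution) $|E|=O(|V|)$ and therefore $c(E)=O(|V|)\cdot c_{\max}=O(\rho/k)\cdot\OPT$. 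In other words the total edge cost is within a constant factor of $\OPT$; no genuine sparsifier step is needed, and we may ``buy'' for free any subset of $E$ whose cost is a small constant fraction of $c(E)$, at an additive loss of $O(\varepsilon)\cdot\OPT$.

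For the decomposition, fix a planar embedding of $G$ and a breadth-first layering (of $G$, or of its radial graph) that assigns each edge a level in $\mathbb{Z}_{\ge 0}$ so that for every $a\le b$ the edges of level in $[a,b]$ form an edge-separating ``band'' and the subgraph of $G$ consisting of all edges of level at most $b$ is connected. Let $p$ be a large integer constant (depending on $k,\varepsilon,\rho$) and $w=\Theta(k)$. For an offset $r\in\{0,\dots,p-1\}$ let $F_r$ be the union of the bands with levels in $[jp+r-w,\,jp+r+w]$ over all $j$. Each edge lies in $F_r$ for at most $2w+1$ of the $p$ offsets, so by averaging there is an offset $r^\star$ with $c(F_{r^\star})\le\frac{2w+1}{p}\,c(E)\le\varepsilon\cdot\OPT$ once $p$ is chosen large enough. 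We buy $F_{r^\star}$. The bought bands are nested separating annuli that cut $G$ into \emph{slabs} $S_1,\dots,S_m$, where $S_j$ consists of the edges of $G$ on the levels strictly between two consecutive bought bands together with those two bands. Each slab spans only $p+O(k)$ consecutive levels and hence has treewidth $O(p)=O_{k,\varepsilon,\rho}(1)$.

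The heart of the argument is the following \emph{decomposition claim}, which plays for $k$-ECSS and $k$-VCSS the role that the paper's decomposition technique plays for $k$-CAP: once $F_{r^\star}$ has been bought, it is equivalent, up to the already-paid $O(\varepsilon)\cdot\OPT$, to solve for each slab $S_j$ \emph{independently} the local problem $\Pi_j$ of buying a minimum-cost edge set inside $S_j$ that, together with the two bought boundary bands, makes $S_j$ $k$-edge-connected (resp.\ $k$-vertex-connected) after the parts of $G$ beyond each boundary band have been suitably collapsed. Two directions are needed. First, the union of any family of feasible solutions to $\Pi_1,\dots,\Pi_m$ is globally $k$-edge-connected: a global cut of size below $k$ is, in the planar dual, a short closed curve, which either stays inside a single slab -- contradicting feasibility of $\Pi_j$ there -- or crosses some fully bought band of width $w=\Theta(k)$, which already carries $k$ edge-disjoint routes between its two sides and so cannot be crossed so cheaply. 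Second, an optimal global solution $H^\star$ restricts to a feasible family for $\Pi_1,\dots,\Pi_m$ of total cost at most $c(H^\star)+c(F_{r^\star})$: within each slab, $H^\star$ together with free use of the (fully present) bought bands can reroute any $H^\star$-path that temporarily exits the slab, so it certifies the local requirement, and the only double counting occurs inside the bought bands, which we pay for regardless. Turning this claim into a clean proof -- in particular identifying the correct way to collapse the slab exteriors, the right band width, and carrying the argument through for \emph{vertex}- as well as edge-connectivity -- is the step I expect to be the main obstacle, and it is precisely where the new decomposition technique is required.

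Given the decomposition claim, the remaining steps are routine. For each slab we solve $\Pi_j$ \emph{optimally} by dynamic programming over a width-$O(p)$ tree decomposition: the state along a separator records, for the separator vertices and the (constantly many) boundary terminals, the relevant connectivity pattern with edge-/vertex-cut values capped at $k$; since $k$ and the treewidth are constants, this runs in time linear in $|S_j|$, and summing over slabs gives total linear time (the embedding, the layering, and the choice of $r^\star$ are all computable in linear time as well). Concatenating $F_{r^\star}$ with the optimal local solutions yields a $k$-edge-connected (resp.\ $k$-vertex-connected) spanning subgraph of $G$ of cost at most $c(F_{r^\star})+\sum_j\OPT(\Pi_j)\le \varepsilon\cdot\OPT+\bigl(c(H^\star)+c(F_{r^\star})\bigr)\le (1+2\varepsilon)\,\OPT$; rescaling $\varepsilon$ finishes the proof.
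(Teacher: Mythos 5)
Your overall architecture is the same as the paper's (overlapping BFS-style slabs in a radial/dual layering, boundary bands of width $\Theta(k)$ that are bought by an averaging argument, bounded-treewidth dynamic programming in each slab, and gluing), but the argument has a genuine gap exactly where you flag it: the ``decomposition claim'' is not a routine step, it is the content of the theorem, and the justification you sketch for it is not correct as stated. For the direction that the union of local solutions is globally $k$-edge-connected, what is needed is the precise statement that every \emph{connected} cut whose boundary is not contained in a single slab has at least $k$ of its crossing edges among the bought edges. Your justification --- that a fully bought band of width $\Theta(k)$ ``carries $k$ edge-disjoint routes between its two sides and so cannot be crossed so cheaply'' --- is neither established nor the right mechanism: a band of $\Theta(k)$ consecutive levels need not contain $k$ edge-disjoint paths across it, and even if it did, that alone does not force the cut to contain $k$ \emph{bought} edges. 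The paper instead proves (in the construction of its $k$-edge-safe cover) that the dual cycle corresponding to $\delta(S)$ must, for each of $k$ consecutive distance levels inside the shared ring, use an edge both of whose endpoints lie at that exact distance from the root in the vertex-face graph, yielding $k$ distinct bought edges in $\delta(S)$; together with the formal gluing and cost lemmas (solutions in the contracted pieces combine to a feasible solution, and the restriction of $\OPT$ to each contracted piece is feasible there), this is what makes the ``solve slabs independently'' step legitimate. Without an argument of this kind your claim remains an assertion.

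The vertex-connectivity half has a further, specific obstruction that your proposal does not engage with: collapsing the slab exterior by ordinary contraction does \emph{not} preserve $k$-vertex-connectivity, so the direction ``$H^\star$ restricted to a slab is feasible for the local problem'' simply fails if the exterior is contracted naively, and the local instances may not even be feasible. The paper resolves this by introducing a $k$-vertex-safe contraction (each contracted component is replaced by a clique on $k$ vertices attached to the component's neighbors), and then has to re-prove preservation of $k$-vertex-connectivity, a treewidth bound for the modified (no longer planar) pieces, and vertex analogues of the gluing and cost lemmas via a characterization of $k$-vertex-connectivity through connected cuts and their neighborhoods. Identifying the right collapse operation and carrying these lemmas through is a genuinely new idea, not a routine adaptation, so the proposal as written does not yet prove the $k$-WVCSS case (nor, strictly, the $k$-WECSS case, for the reason above). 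The remaining ingredients you use --- the bound $c(E)=O(\Delta)\cdot c(\OPT)$ after keeping at most $k$ parallel edges, the averaging choice of offset, and linear-time bounded-treewidth DP --- do match the paper and are fine.
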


Note that a planar graph can be at most $5$-vertex-connected because it contains a vertex with at most $5$ neighbors.\footnote{And there are planar graphs that are actually $5$-vertex-connected, like the icosahedral graph.
This is the graph formed by the vertices and edges of an icosahedron, one of the $5$ platonic solids.}
Hence, a PTAS for planar $k$-VCSS is only relevant for $k\leq 5$.

Combining the above result with known $k$-ECSS techniques that are strong for large $k$, we can lift the restriction of $k$ being a fixed constant at the expense of losing linear-time efficiency.

\begin{corollary}\label{cor:PTASs_k-WECSS}
Planar $k$-WECSS admits a PTAS when cost ratios are bounded and $k$ is part of the input.
\end{corollary}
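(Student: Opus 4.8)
The plan is to split on the magnitude of $k$, which is now part of the input, and to choose a threshold $k_0 = k_0(\epsilon,\beta)$ depending only on the target accuracy $\epsilon$ and the bound $\beta$ on the cost ratio. If $k \le k_0$, then $k$ is a constant (depending on $\epsilon$ and $\beta$ only), so I would simply invoke \cref{thm:PTASs_k-WECSS_k-WVCSS_constant_k} with this value of $k$; since the hidden constant there depends only on $k \le k_0(\epsilon,\beta)$, this runs in polynomial (in fact linear) time and returns a $(1+\epsilon)$-approximation.

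For $k > k_0$ I would instead appeal to approximation algorithms for $k$-ECSS whose ratio improves with $k$: for the unweighted minimum-size $k$-edge-connected spanning subgraph problem there are polynomial-time $(1+O(1/k))$-approximations (e.g.\ the matching-based algorithm of Cheriyan and Thurimella, or the LP-rounding of Gabow, Goemans, Tardos, and Williamson). To handle bounded-cost-ratio weights, I would first round every cost up to the nearest power of $1+\epsilon$, which leaves only $O_\beta(1/\epsilon)$ distinct values and changes \OPT\ by at most a factor $1+\epsilon$, and then run a weight-aware version of the large-$k$ algorithm across these $O_\beta(1/\epsilon)$ cost classes. Taking $k_0$ large enough that the $O(1/k_0)$ term drops below $\epsilon$, this yields a $k$-edge-connected spanning subgraph of cost $(1+O(\epsilon))\cdot\OPT$ in polynomial time. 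Combining the two regimes (and first checking feasibility, i.e.\ that $G$ is itself $k$-edge-connected, by a max-flow computation) proves \cref{cor:PTASs_k-WECSS}. Only the large-$k$ branch is non-linear — LP-rounding is not a linear-time procedure — which is precisely why the linear running time of \cref{thm:PTASs_k-WECSS_k-WVCSS_constant_k} is not inherited.

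The step I expect to require the most care is the large-$k$ regime for \emph{weighted} instances: a black-box reduction to unweighted $k$-ECSS only yields a $\beta$-approximation, because a minimum-size $k$-edge-connected spanning subgraph can be far from a minimum-cost one. So one genuinely has to argue that cost-bucketing together with a weight-sensitive version of the large-$k$ algorithm composes to a $(1+\epsilon)$ factor against the \emph{weighted} optimum; the remainder is routine bookkeeping of running times and $\epsilon$-dependencies across the two regimes, and all of the conceptual content sits in \cref{thm:PTASs_k-WECSS_k-WVCSS_constant_k}.
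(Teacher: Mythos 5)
Your two-regime structure (threshold $k_0(\epsilon,\beta)$, small $k$ handled by \cref{thm:PTASs_k-WECSS_k-WVCSS_constant_k}, large $k$ by a $(1+O(1/k))$-type algorithm) is the same skeleton the paper uses, but the large-$k$ branch as you describe it has a real gap, and it is exactly the place you flag as needing care.

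Cost-bucketing into $O_\beta(1/\epsilon)$ classes and then running ``a weight-aware version of the large-$k$ algorithm across these cost classes'' is not an algorithm: $k$-edge-connectivity is a global cut requirement, and the cost classes do not partition the problem into subproblems you can solve class by class; nor is it clear what ``weight-aware'' means for the Cheriyan--Thurimella or Gabow--Goemans--Tardos--Williamson algorithms, which are stated and analyzed for the \emph{unweighted} minimum-size objective. You have correctly identified that a black-box reduction to unweighted $k$-ECSS loses a $\beta$ factor, but your proposed fix does not close the gap.

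The paper's actual argument in the large-$k$ regime is simpler and sidesteps the unweighted/weighted issue entirely: solve the natural LP relaxation of $k$-WECSS to an optimal \emph{vertex} solution, and round every fractional value up to $1$. In a vertex solution, among any set of parallel edges at most one coordinate is fractional, and since $G$ is planar there are at most $3|V|-6$ parallel classes, so at most $3|V|$ edges get rounded up, costing at most $3|V|\cdot c_{\max}$. Since any feasible solution has at least $\tfrac{k}{2}|V|$ edges and hence cost at least $\tfrac{k}{2}|V|\cdot c_{\min}$, the rounding cost is at most $\tfrac{6\Delta}{k}\cdot c(\OPT)$, giving a $\bigl(1+\tfrac{6\Delta}{k}\bigr)$-approximation for the weighted instance directly. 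Taking $k_0=\tfrac{6\Delta}{\epsilon}$ then closes the case split. This LP-rounding argument works natively with weights of bounded cost ratio, needs no bucketing, and is self-contained; to make your approach rigorous you would either have to reproduce something like it, or actually exhibit a weighted $(1+O(\Delta/k))$-approximation built from the unweighted algorithms you cite (which is possible, as the paper notes in a footnote, but is not ``routine bookkeeping'').
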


Prior to our work, PTASs for planar $k$-ECSS and planar $k$-VCSS were only known for $k\leq 3$ \cite{czumajApproximationSchemesMinimum2004,bergerMinimumWeight2EdgeConnected2007,zhengLineartimeApproximationSchemes2017}.
(Where \cite{bergerMinimumWeight2EdgeConnected2007} presented a PTAS for planar $2$-WECSS, without restrictions on the cost ratio.)

\subsection{Further related results}\label{sec:further-related-results}

Augmentation problems have been heavily studied in general graphs and many special graph classes (see \cite{%
fredericksonApproximationAlgorithmsSeveral1981,%
khullerApproximationAlgorithmsGraph1993,%
nagamochiApproximationFindingSmallest2003,%
even18ApproximationAlgorithm2009,%
cohen1Ln2Approximation2013,%
kortsarzSimplified15ApproximationAlgorithm2016,%
nutovTreeAugmentationProblem2017,%
cheriyanApproximatingUnweightedTree2018a,%
cheriyanApproximatingUnweightedTree2018,%
adjiashviliBeatingApproximationFactor2018,%
fioriniApproximatingWeightedTree2018,%
grandoniImprovedApproximationTree2018,%
kortsarzLPRelaxationsTreeAugmentation2018,%
byrkaBreaching2ApproximationBarrier2020,%
cecchettoBridgingGapTree2021,%
nutovApproximationAlgorithmsConnectivity2021,%
traub15varepsilonApproximationAlgorithmWeighted2023,%
cecchettoBetterthan43approximationsLeaftoleafTree2024,%
traubBetterThan2ApproximationsWeighted2025,%
} and references therein), including planar graphs~\cite{gutwengerSubgraphInducedPlanar2003,rutterAugmentingConnectivityPlanar2008}.
The (unweighted) augmentation variant for planar graphs studied in \cite{gutwengerSubgraphInducedPlanar2003} is a variation of $k$-CAP.
In particular, they fix a subset of the vertices $W \subseteq V$ and allow edges to be added between \emph{any} pair of vertices in $W$ as long as planarity of the whole graph is preserved.
Their goal is to make the subgraph induced by $W$ (singly) connected.
Similarly, \textcite{rutterAugmentingConnectivityPlanar2008} show \NP-hardness for making a planar graph $2$-edge-connected if one is allowed to add edges between \emph{any} pair of vertices, as long as the resulting graph remains planar.

In general (not necessarily planar) graphs, $k$-ECSS, for any $k\in \mathbb{Z}_{\geq 2}$, is known not to admit a PTAS unless $\P=\NP$ \cite{fernandesBetterApproximationRatio1998,czumajApproximabilityMinimumcostKconnected1999,pritchardKEdgeConnectivityApproximationLP2011}.
For the weighted version, $k$-WECSS, the best constant-factor approximation algorithm is an over four decades old $2$-approximation \cite{fredericksonApproximationAlgorithmsSeveral1981,fredericksonRelationshipBiconnectivityAugmentation1982}, which can also be matched with more recent techniques, like primal-dual \cite{goemansGeneralApproximationTechnique1995} or iterative rounding \cite{jainFactor2Approximation2001}.
Progress beyond the factor $2$ has been achieved for (the unweighted) $2$-ECSS, where successive improvements~\cite{%
khullerBiconnectivityApproximationsGraph1994,%
cheriyanImproving15ApproximationSmallest2001,%
hunkenschroder43ApproximationAlgorithm2019,%
seboShorterToursNicer2014,%
gargImprovedApproximationTwoEdgeConnectivity2023,%
kobayashiApproximationAlgorithmTwoEdgeConnected2023,%
bosch-calvo54ApproximationTwoEdge2025,%
} led to the currently best approximation factor of slightly below $\sfrac{5}{4}$~\cite{hommelsheimBetterThan$54$ApproximationTwoEdge2026}.
For arbitrary $k\geq 2$, it is known how to get $(1+\mathcal{O}(\sfrac{1}{k}))$-approximation algorithms~\cite{cheriyanApproximatingMinimumSizeKConnected2000,gabowApproximatingSmallestIt2009,gabowIteratedRoundingAlgorithms2012}.

We note that the difficulties of obtaining results for larger connectivities faced here are not shared by some other variants of the problem, for example when multiple copies of the same edge can be bought, which is known as the (weighted) $k$-edge-connected spanning multi-subgraph problem ($k$-(W)ECSM).
Indeed, one can for example extend the idea to contract cheap edge sets to simplify the instance.
More precisely, if one can identify an edge set $U$ of cost at most $\mathcal{O}(\frac{\varepsilon}{k} c(\OPT))$, whose contraction simplifies the instance, then one can buy $k$ many times each edge of $U$ and contract $U$.
Exploiting this, \textcite{demaineApproximationAlgorithmsContraction2010} presented a PTAS for planar $k$-WECSM, for any fixed $k$.
Together with known $k$-WECSM results for the non-planar case, like the $(1+\mathcal{O}(\sfrac{1}{k}))$-approximation algorithm of~\cite{hershkowitzGhostValueAugmentation2024}, this implies a PTAS for planar $k$-WECSM even when $k$ is part of the input.%
\footnote{
 Together with the above-mentioned $(1+\mathcal{O}(\sfrac{1}{k}))$-approximation for $k$-WECSM, this leads to a PTAS for planar $k$-WECSM (with $k$ part of the input) as follows.
 If $\varepsilon$ is larger than the $\mathcal{O}(\sfrac{1}{k})$ term, then one can simply use the $(1+\mathcal{O}(\sfrac{1}{k}))$-approximation algorithm for planar $k$-WECSM; otherwise, $k$ is a constant and one can use the PTAS for planar $k$-WECSM by \textcite{demaineApproximationAlgorithmsContraction2010}.
}

Moreover, we want to highlight that there has been a very fruitful line of work on further survivable network design problems and related problems on planar graphs (and graphs of bounded genus), including Steiner Tree and generalizations thereof (we refer the interested reader to
\cite{%
borradailePolynomialtimeApproximationScheme2007,%
borradaileLogApproximationScheme2009,%
bateniApproximationSchemesSteiner2011,%
borradailePolynomialTimeApproximationSchemes2014,%
borradaileTwoEdgeConnectivitySurvivableNetwork2016,%
borradailePTASThreeEdgeConnectedSurvivable2017,%
kleinCorrelationClusteringTwoEdgeConnected2023,%
}
and references therein).

Finally, we note that the generality of Baker's technique has led to an interesting attempt to provide a syntactic characterization of problems that admit a PTAS on planar graphs through components based on Baker's technique~\cite{khannaSyntacticCharacterizationPTAS1996}.

\subsection{Organization of the paper}

In \Cref{sec:overview}, we provide an overview of our techniques.
In particular, \Cref{sec:summary_k_safe_covers} provides an overview of our new decomposition technique, which we call \emph{$k$-edge-safe cover}, using planar $k$-ECSS as an example.
This allows us to use Baker-type techniques for higher connectivities.
Moreover, \Cref{sec:augmentation_short} discusses how further structural insights together with this technique can be used to obtain our augmentation result, \Cref{theorem:PTAS_WCAP}.

\Cref{sec:k_safe_covers} contains missing details of the $k$-safe-cover decomposition.
\Cref{sec:WECSS_and_WVCSS} provides formal proofs of \Cref{thm:PTASs_k-WECSS_k-WVCSS_constant_k,cor:PTASs_k-WECSS}.
Missing proofs for our augmentation result are contained in \Cref{sec:augmentation}.
\Cref{sec:hardness} proves our hardness result for planar $k$-CAP, \Cref{thm:hardness_k-CAP}.
Finally, \Cref{sec:bidimensionality_and_ubiquity} discusses why both the bidimensionality and ubiquity frameworks do not apply to planar problems requiring higher connectivities.

\section{Overview of techniques}\label{sec:overview}

In this section, we provide an overview of our techniques.
Most proofs are deferred to later sections for better readability.

\subsection{Instance decomposition via $k$-edge-safe covers}\label{sec:summary_k_safe_covers}

We now exemplify our decomposition technique for connectivity problems on planar graphs, which can be thought of as an extension of Baker's framework.
For the sake of simplicity of presentation, we focus on planar $k$-WECSS in this overview.
The same approach can be used for $k$-WVCSS (see \Cref{sec:WECSS_and_WVCSS}) and we will use it, together with additional techniques, to obtain our main result on planar connectivity augmentation problems.

Let $k\in \mathbb{Z}_{\geq 1}$, and let $G=(V,E)$ be a planar graph with positive edge costs $c\colon E \to \mathbb{R}_{> 0}$.
Moreover, we denote by $\Delta \coloneqq  \max_{e\in E}c(e)/\min_{e\in E}c(e)$ the max-to-min cost ratio of the instance.

As in the classical framework of Baker, we aim at decomposing the problem into independent instances of bounded treewidth, which are solvable in linear time by a result of \textcite{chalermsook_et_al:LIPIcs.APPROX-RANDOM.2018.8}.
\begin{theorem}[see~\cite{chalermsook_et_al:LIPIcs.APPROX-RANDOM.2018.8}, Theorem 3]\label{thm:bounded_treewidth_kWECSS}
For any fixed $k\in \mathbb{Z}_{\geq 1}$, planar $k$-WECSS and planar $k$-WVCSS can be solved in linear time in graphs of bounded treewidth.
\end{theorem}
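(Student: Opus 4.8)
The plan is to solve both problems by dynamic programming over a tree decomposition, and the most economical way to package this is to reduce each problem to the optimization of a fixed monadic second-order formula. Note first that $k$-WECSS asks for an edge set $F\subseteq E$ minimizing $\sum_{e\in F}c(e)$ subject to $(V,F)$ being $k$-edge-connected; the spanning requirement is automatic, since for $k\geq 1$ every vertex of a $k$-edge-connected graph on at least two vertices has an incident edge (and the case $|V|\leq 1$ is trivial). Since $k$ is a fixed constant, the property ``$(V,F)$ is $k$-edge-connected'' is expressible by a fixed formula $\varphi_k(F)$ in $\mathrm{MSO}_2$, i.e.\ monadic second-order logic over the incidence structure of $G$ (quantifying over vertex sets and over edge sets), with a single free edge-set variable $F$. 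Indeed, ``$(V,F)$ is connected'' is the standard $\mathrm{MSO}_2$ statement that there is no proper nonempty set $A\subseteq V$ such that no edge of $F$ joins $A$ to $V\setminus A$, and ``$(V,F)$ is $k$-edge-connected'' is the conjunction of $|V|\geq 2$ with ``for every $D\subseteq F$ with at most $k-1$ edges, $(V,F\setminus D)$ is connected''; the bound $|D|\leq k-1$ is a fixed first-order constraint (it forbids $k$ pairwise distinct elements of $D$), so $\varphi_k$ has size depending only on $k$. Completely analogously, for fixed $k$ the property ``$(V,F)$ is $k$-vertex-connected'' --- namely $|V|\geq k+1$ together with ``$(V\setminus T,F)$ is connected for every $T\subseteq V$ with $|T|\leq k-1$'' --- is a fixed $\mathrm{MSO}_2$ formula $\psi_k(F)$.

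With these formulas in hand, the theorem follows from the optimization (LinEMSO) extension of Courcelle's theorem: for every fixed $w$ and every fixed $\mathrm{MSO}_2$ formula with one free edge-set variable, on any edge-weighted graph of treewidth at most $w$ one can compute in linear time the minimum total weight of an edge set satisfying the formula, or detect that no such set exists. Its proof is a bottom-up dynamic program over a tree decomposition of width at most $w$ --- computed in linear time by Bodlaender's algorithm if one is not already supplied --- whose table at a node records, for each relevant equivalence type of the chosen edges on the at most $w+1$ boundary vertices of the bag, the cheapest partial solution realizing that type. Applying this to $\varphi_k$ gives the claim for $k$-WECSS, and applying it to $\psi_k$ gives the claim for $k$-WVCSS; the constant hidden in the running time depends on $k$ and on the treewidth bound but not on $|V|$ or $|E|$.

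If instead one wants a self-contained argument, one builds this dynamic-programming table explicitly. The natural state at a node $t$ with boundary bag $X_t$ summarizes a partial solution $F_t$ inside the already-processed subgraph $G_t$ by the vector assigning to each ordered partition $(A,B)$ of $X_t$ the value $\min\{\,k,\ \text{(minimum number of $F_t$-edges of $G_t$ whose deletion separates the $A$-side from the $B$-side)}\,\}$: there are at most $2^{w+1}$ partitions of $X_t$, each entry lies in $\{0,1,\dots,k\}$, so the number of distinct states is bounded by a constant, and for each state one stores the minimum cost. One then gives the update rules at introduce/forget/join nodes and, at the root, reads off whether the global minimum edge cut is at least $k$ (the $k$-VCSS version replaces edge cuts by vertex cuts). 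The one genuinely delicate point of this route, and the step I would expect to cost the most work, is proving that these capped connectivity summaries compose correctly across the tree --- that a global minimum cut can always be reconstructed from the per-bag summaries --- which rests on an uncrossing/submodularity argument for cuts across a separator. Routing the proof through the metatheorem above is attractive precisely because it discharges this difficulty automatically.
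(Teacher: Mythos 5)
Your proposal is correct, but it does not follow the paper's route, because the paper does not prove this statement at all: it is imported wholesale from the cited work of Chalermsook et al.\ (Theorem~3 there), whose argument is an explicit dynamic program over a tree decomposition for survivable-network-design-type requirements. Your main argument instead goes through MSO$_2$-expressibility plus the linear-time optimization version of Courcelle's theorem (Arnborg--Lagergren--Seese/LinEMSO), and this is a legitimate proof of the statement as used in the paper: for fixed $k$, both ``$(V,F)$ is $k$-edge-connected'' and ``$(V,F)$ is $k$-vertex-connected'' are fixed MSO$_2$ formulas with a free edge-set variable (the cardinality bounds $|D|\le k-1$, $|T|\le k-1$ are first-order for fixed $k$), Bodlaender's algorithm supplies a width-$O(1)$ tree decomposition in linear time, and the metatheorem then optimizes $c(F)$ in linear time; planarity is not even needed. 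The trade-off is the usual one: the metatheorem buys brevity and rigor at the price of astronomically large (tower-type) hidden constants in $k$ and the treewidth, whereas the cited explicit DP gives concrete, much better constants and extends to nonuniform requirements and to $k$ as part of the input, which is immaterial for the paper's use but is why the authors cite it. Your optional ``self-contained'' sketch is the weaker part: the capped per-bipartition cut values do compose for edge cuts (cut sizes add across the bag separator for a fixed trace, and capping at $k$ commutes with addition and minimization), but you would additionally need nonemptiness flags for traces with an empty bag side, and the throwaway remark that the vertex-connectivity case ``replaces edge cuts by vertex cuts'' glosses over a genuinely different state (which bag vertices are deleted plus connectivity patterns of the survivors); since you explicitly route the actual proof through the metatheorem, this does not create a gap.
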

Continuing with classical Baker, we will buy a small set of edges $E_{\mathcal{U}}$ (the choice of notation will become clear soon), so that the problem decomposes into independent instances of small treewidth. Interestingly, despite the fact that they can be solved independently of each other, they are actually not disjoint in terms of edges/vertices; this is in contrast to classical applications of Baker's framework.
Another key difference to prior approaches is that the edges we buy do not have the desired property themselves, i.e., the connected components of $E_{\mathcal{U}}$ are not necessarily $k$-edge-connected.
Instead, we take a different perspective.
Recall that a graph is $k$-edge-connected if every cut has at least $k$ crossing edges.
We want to buy an edge set $E_{\mathcal{U}}$ and decompose the graph into (overlapping) parts, so that every cut that is not fully contained in one of the parts has at least $k$ edges in $E_{\mathcal{U}}$ that cross it.
Hence, it remains to take care of cuts fully contained in one of the parts, which means that it suffices to $k$-edge-connect the parts, which decomposes the problem.
We now formalize this idea.

We start by observing that, to check $k$-edge-connectivity of a connected graph, it suffices to check that every \emph{connected cut} is crossed by at least $k$ edges.
Connected cuts, also sometimes called \emph{simple cuts}, are defined as follows.
\begin{definition}[connected cut]
Let $G=(V,E)$ be a graph.
A nonempty vertex set $S\subsetneq V$ is a connected cut (in $G$) if both induced subgraphs $G[S]$ and $G[V\setminus S]$ are connected graphs.
\end{definition}

A connected graph $G$ is $k$-edge-connected if and only if every connected cut has at least $k$ crossing edges.\footnote{
To see this, note that in a connected graph $G=(V,E)$, an arbitrary cut, which is a nonempty vertex set $S\subsetneq V$, is connected if and only if it is minimal in the following sense: there is no other cut $\overline{S}$ with $\delta(\overline{S}) \subsetneq \delta(S)$, where $\delta(W)$, for $W\subseteq V$, contains all edges with precisely one endpoint in $W$.
} Based on this observation, we introduce the notion of a \emph{$k$-edge-safe cover}, which allows us to decompose a $k$-ECSS instance.
\begin{definition}\label{def:k-edge-safe_cover}
Let $G=(V,E)$ be a graph and let $\mathcal{U}=(U_i)_{i\in I}$ be a collection of nonempty subsets of $V$ that cover $V$.
Let $V_{\mathcal{U}}$ be the set of vertices that appear in at least $2$ distinct $U_i$'s and let $E_\mathcal{U}$ be the set of edges incident to a vertex in $V_{\mathcal{U}}$. 

We call $\mathcal{U}$ a \emph{$k$-edge-safe cover of $G$} if for every connected cut $S$, we have
\begin{enumerate}[(i)]
    \item \label{def:k_edge_safe_1}
     $|\delta(S) \cap E_\mathcal{U}|\ge k$, or 
     \item \label{def:k_edge_safe_2}
     $\delta(S)\subseteq E[U_i]$ for some $i\in I$.   
\end{enumerate}
\end{definition}

The definition of a $k$-edge-safe cover aims at providing the conditions we need to decompose the problem after buying the edges $E_{\mathcal{U}}$.
In the lemma below, the expression $(V,F_i)/E[V\setminus U_i]$ denotes the graph obtained from $(V,F_i)$ by contracting all edges in $E[V\setminus U_i]$, i.e., we contract all edges with both endpoints in $V \setminus U_i$.\footnote{
An edge contraction merges the endpoints of the edge into a single vertex and deletes resulting loops.
}

\begin{restatable}{lemma}{GluingSolutions}\label{lemma:glue_solutions_together}
Let $G=(V,E)$ be a connected graph, and let $\mathcal{U}=(U_i)_{i\in I}$ be a $k$-edge-safe cover of~$G$.
Let $(F_i)_{i\in I}$ be edge sets such that $(V,F_i)/E[V\setminus U_i]$ is $k$-edge-connected. 
Let $F\coloneqq E_\mathcal{U}\cup \bigcup_{i\in I} (F_i\cap E[U_i])$. Then $(V,F)$ is $k$-edge-connected.
\end{restatable}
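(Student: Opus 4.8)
The plan is to verify the definition of $k$-edge-connectivity for $(V,F)$ directly, by showing that $|\delta(S)\cap F|\ge k$ for every nonempty proper vertex set $S\subsetneq V$, where $\delta$ denotes cuts in $G$. Since $k\ge 1$, this statement already contains the claim that $(V,F)$ is connected, so no separate connectivity argument is needed. (The case $|V|\le 1$ is trivial, so assume $|V|\ge 2$.)

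The first step is the standard reduction from arbitrary cuts to connected cuts of $G$. Given $S$, pick among all nonempty proper $C\subsetneq V$ with $\delta(C)\subseteq\delta(S)$ one for which $|\delta(C)|$ is minimum. By the characterization recalled just before \cref{def:k-edge-safe_cover} — a cut is connected precisely when its edge set is inclusion-minimal — this $C$ is a connected cut of $G$: otherwise some cut $C'$ would satisfy $\delta(C')\subsetneq\delta(C)\subseteq\delta(S)$, contradicting the choice of $C$. Since $\delta(C)\subseteq\delta(S)$, every edge of $F$ crossing $C$ also crosses $S$, so $\delta(C)\cap F\subseteq\delta(S)\cap F$ and it suffices to prove $|\delta(C)\cap F|\ge k$. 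Now apply \cref{def:k-edge-safe_cover} to the connected cut $C$: either (i) $|\delta(C)\cap E_\mathcal{U}|\ge k$, or (ii) $\delta(C)\subseteq E[U_i]$ for some $i\in I$. Case (i) is immediate because $E_\mathcal{U}\subseteq F$: all of the at least $k$ edges of $\delta(C)\cap E_\mathcal{U}$ lie in $F$, so $|\delta(C)\cap F|\ge k$.

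Case (ii) is the heart of the argument, and I expect the bookkeeping here to be the main obstacle. The idea is to push $C$ forward to a cut of $H\coloneqq(V,F_i)/E[V\setminus U_i]$ and invoke the hypothesis that $H$ is $k$-edge-connected. The crucial observation is that, since $\delta(C)\subseteq E[U_i]$ contains no edge with an endpoint outside $U_i$, every connected component $D$ of $G[V\setminus U_i]$ lies entirely inside $C$ or entirely inside $V\setminus C$: the edges of $G[D]$ belong to $E[V\setminus U_i]$, hence are disjoint from $\delta(C)$, and $G[D]$ is connected. Consequently $C$ determines a well-defined vertex set $\bar C\subseteq V(H)$, namely the vertices of $U_i\cap C$ together with the contracted vertices of $H$ representing the components $D$ with $D\subseteq C$; and $\bar C$ is nonempty and proper in $V(H)$ because $C$ and $V\setminus C$ are, and because every vertex of $H$ is the image of vertices lying on a single side of $C$. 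Moreover, an edge of $F_i$ crossing $C$ lies in $\delta(C)\subseteq E[U_i]$, hence has both endpoints in $U_i$ and therefore survives the contraction as an edge of $H$ crossing $\bar C$; conversely, every edge of $H$ crossing $\bar C$ arises this way. This yields a bijection between $\delta(C)\cap F_i$ and $\delta_H(\bar C)$, so $|\delta(C)\cap F_i|=|\delta_H(\bar C)|\ge k$ by the $k$-edge-connectivity of $H$. Finally $\delta(C)\cap F_i\subseteq E[U_i]\cap F_i\subseteq F$, so $|\delta(C)\cap F|\ge|\delta(C)\cap F_i|\ge k$. In both cases $|\delta(S)\cap F|\ge|\delta(C)\cap F|\ge k$, which is what we wanted.

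The only genuinely delicate part is Case (ii): one must carefully check that $C$ induces a nonempty proper cut $\bar C$ in the contracted \emph{multigraph} $H$, and that every $F_i$-edge crossing $C$ is preserved by the contraction rather than being destroyed as a loop (this is exactly where the hypothesis $\delta(C)\subseteq E[U_i]$ is used). The remaining ingredients — the inclusions $E_\mathcal{U}\subseteq F$ and $F_i\cap E[U_i]\subseteq F$, the passage $\delta(C)\cap F\subseteq\delta(S)\cap F$, and the reduction to connected cuts — are routine.
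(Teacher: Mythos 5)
Your proposal is correct and follows essentially the same route as the paper's proof: reduce an arbitrary cut to an inclusion-minimal (hence connected) cut it contains, then split on the two cases of the $k$-edge-safe cover definition, with case (ii) handled by observing that the cut survives in $(V,F_i)/E[V\setminus U_i]$ and its at least $k$ crossing $F_i$-edges lie in $F_i\cap E[U_i]\subseteq F$. The only difference is that you spell out the contraction bookkeeping (components of $G[V\setminus U_i]$ lying on one side, no crossing edge becoming a loop) which the paper leaves implicit.
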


See \cref{sec:WECSS_and_WVCSS} for a proof.
\Cref{lemma:glue_solutions_together} shows that, after we buy the edges in $E_{\mathcal{U}}$, it indeed suffices to find $k$-edge-connected subgraphs independently in each graph $(V,F_i)/E[V\setminus U_i]$.
For such a solution to be competitive with respect to an optimum solution in the original graph $(V,E)$, we need to ensure that both 
\begin{enumerate*}
    \item $E_{\mathcal{U}}$ is cheap, and 
    \item after having bought $E_{\mathcal{U}}$, the cost of optimum $k$-ECSS solutions in $(V,E)/E[V\setminus U_i]$ compares to the cost of an optimum solution in $(V,E)$.
\end{enumerate*}
We start by showing that the second condition is satisfied by any $k$-edge-safe cover.
(The first condition will be fulfilled by choosing the $k$-edge-safe cover appropriately, which we discuss afterwards.)

\begin{restatable}{lemma}{CostCombinedSolution} \label{lemma:cost_combined_solution}
Let $G=(V,E)$ be a $k$-edge-connected graph with edge costs $c\colon E \to \mathbb{R}_{\geq 0}$, and let $\mathcal{U}=(U_i)_{i\in I}$ be a $k$-edge-safe cover of $G$.
For $i\in I$, let $G_i\coloneqq (V,E)/E[V\setminus U_i]$ and obtain $c_i$ from $c$ by setting the costs of all edges in $E_{\mathcal{U}}$ and in $E(G_i)\setminus E[U_i]$ to zero. Let $\OPT_i$ be an optimum $k$-WECSS solution for $(G_i,c_i)$.

Then, $F\coloneqq E_\mathcal{U}\cup \bigcup_{i\in I} (\OPT_i\cap E[U_i])$ is a $k$-WECSS solution in $G$ of cost 
\[
c(F) \leq c(\OPT) + c(E_{\mathcal{U}}),
\]
where $\OPT$ is an optimum $k$-WECSS solution in $(G,c)$.
\end{restatable}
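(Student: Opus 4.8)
The plan is to check the two claims separately: that $F$ is a feasible $k$-WECSS solution in $G$, and that $c(F)\le c(\OPT)+c(E_\mathcal{U})$. Feasibility is immediate from \cref{lemma:glue_solutions_together}, applied with $F_i\coloneqq\OPT_i$ for each $i\in I$: since $G$ is $k$-edge-connected, so is every contraction $G_i=(V,E)/E[V\setminus U_i]$, hence each $\OPT_i$ is well defined, and $(V,\OPT_i)/E[V\setminus U_i]=(V(G_i),\OPT_i)$ is $k$-edge-connected because $\OPT_i$ is a feasible $k$-WECSS solution for $(G_i,c_i)$. Thus \cref{lemma:glue_solutions_together} applies and produces a $k$-edge-connected graph with edge set $E_\mathcal{U}\cup\bigcup_{i\in I}(\OPT_i\cap E[U_i])=F$; as $E_\mathcal{U}\subseteq E$ and $\OPT_i\cap E[U_i]\subseteq E$, we have $F\subseteq E$, so $F$ is a $k$-WECSS solution in $G$.

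For the cost bound I would first record two elementary observations. (O1) The edge sets $E[U_i]\setminus E_\mathcal{U}$, $i\in I$, are pairwise disjoint: if $e=\{u,v\}\in E[U_i]\setminus E_\mathcal{U}$, then $e$ is incident to no vertex of $V_\mathcal{U}$, so each of $u,v$ belongs to exactly one member of the cover, which, since $u,v\in U_i$, must be $U_i$; hence $e\notin E[U_j]$ for every $j\neq i$. In particular, for any $X\subseteq E$ the sets $X\cap(E[U_i]\setminus E_\mathcal{U})$ are pairwise disjoint subsets of $X$, so $\sum_{i\in I}c\big(X\cap(E[U_i]\setminus E_\mathcal{U})\big)\le c(X)$. (O2) By the definition of $c_i$ we have $c_i(e)=c(e)$ for $e\in E[U_i]\setminus E_\mathcal{U}$ and $c_i(e)=0$ for every other $e\in E(G_i)$; therefore $c_i(Y)=c\big(Y\cap(E[U_i]\setminus E_\mathcal{U})\big)$ for all $Y\subseteq E(G_i)$.

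With these in hand the computation is short. Writing $F=E_\mathcal{U}\cup\bigcup_{i\in I}\big(\OPT_i\cap(E[U_i]\setminus E_\mathcal{U})\big)$, noting that $E_\mathcal{U}$ is disjoint from each $\OPT_i\cap(E[U_i]\setminus E_\mathcal{U})$, and using (O1) and (O2),
\[
 c(F)= c(E_\mathcal{U})+\sum_{i\in I}c\big(\OPT_i\cap(E[U_i]\setminus E_\mathcal{U})\big)= c(E_\mathcal{U})+\sum_{i\in I}c_i(\OPT_i).
\]
For each $i$, the graph $(V,\OPT\cup E[V\setminus U_i])/E[V\setminus U_i]$ is $k$-edge-connected (a contraction of the $k$-edge-connected graph $(V,\OPT\cup E[V\setminus U_i])$) and is a spanning subgraph of $G_i$ with edge set $\OPT\setminus E[V\setminus U_i]$, hence a feasible solution for $(G_i,c_i)$. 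By optimality of $\OPT_i$ together with (O2) (and $E[U_i]\cap E[V\setminus U_i]=\emptyset$),
\[
 c_i(\OPT_i)\le c_i\big(\OPT\setminus E[V\setminus U_i]\big)= c\big(\OPT\cap(E[U_i]\setminus E_\mathcal{U})\big).
\]
Summing over $i\in I$ and applying (O1) with $X=\OPT$ yields $\sum_{i\in I}c_i(\OPT_i)\le c(\OPT)$, and combining this with the first display gives $c(F)\le c(\OPT)+c(E_\mathcal{U})$.

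I expect the only genuine subtlety to be this last step: bounding each $c_i(\OPT_i)$ against $c(\OPT)$ individually would lose a factor of $|I|$. What rescues the argument is that the zero-cost re-weighting $c_i$ forces $\OPT_i$ to pay only for the edges of $U_i$ that are ``private'' to $U_i$ (those in $E[U_i]\setminus E_\mathcal{U}$), so the disjointness (O1) of these private edge sets across $i$ makes the per-index bounds sum to $c(\OPT)$ rather than $|I|\cdot c(\OPT)$; everything else is a routine unpacking of the definitions of $V_\mathcal{U}$, $E_\mathcal{U}$, and $c_i$.
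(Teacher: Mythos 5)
Your proof is correct and follows essentially the same route as the paper: feasibility via \cref{lemma:glue_solutions_together} with $F_i = \OPT_i$, the per-index bound $c_i(\OPT_i)\le c\big(\OPT\cap(E[U_i]\setminus E_\mathcal{U})\big)$ obtained by restricting/contracting $\OPT$, and then summation using the pairwise disjointness of the sets $E[U_i]\setminus E_\mathcal{U}$. The paper compresses what you call (O1) into the single observation $E[U_i]\cap E[U_j]=E[U_i\cap U_j]\subseteq E_{\mathcal{U}}$ for $i\neq j$, and uses $T_i\coloneqq\OPT\setminus E[V\setminus U_i]$ directly as the feasible competitor, but these are cosmetic differences; your unpacking of the disjointness and the re-weighting makes the same argument slightly more explicit.
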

\begin{proof}[Sketch of proof]
By \Cref{lemma:glue_solutions_together}, $F$ is a feasible solution to $k$-WECSS in $G$. 
Moreover, for each $i\in I$,
$T_i\ \coloneqq\ \OPT \setminus E[V\setminus U_i]$ is a feasible solution to $k$-WECSS in $(G_i,c_i)$, implying $c_i(\OPT_i)\le c_i(T_i)=c((\OPT\cap E[U_i])\setminus E_{\mathcal{U}})$.
Thus, using $E[U_i]\cap E[U_j]=E[U_i\cap U_j]\subseteq E_{\mathcal{U}}$ for every $i,j\in I$ with $i\neq j$, we get
\begin{align*}
    c(F)& \le c(E_{\mathcal{U}}) + \sum_{i\in I} c((\OPT_i\cap E[U_i])\setminus E_{\mathcal{U}}) \le c(E_\mathcal{U}) + c(\OPT).
\end{align*}

\end{proof}

Hence, decomposing the problem using $k$-edge-safe covers allows us to find a solution that is competitive with respect to an optimum solution in the original graph, as long as we ensure that $c(E_{\mathcal{U}})$ is small.
Moreover, to be able to solve the resulting instances efficiently, we want the graphs $G_i$ to have small treewidth.
The following statement shows that such $k$-edge-safe covers do exist, even without requiring $G$ to be $k$-edge-connected.\footnote{However, if the graph $G$ is not $k$-edge-connected, then we won't be able to find a feasible $k$-ECSS solution in each of the $G_i$.}
In terms of the cost of $E_{\mathcal{U}}$, the statement only shows that it can be made arbitrarily small compared to $c(E)$, which we compare afterwards against $c(\OPT)$.

\begin{theorem}\label{theorem:compute_edge_safe_cover_overview}
Let $\epsilon \in (0,1)$. Given a planar connected graph $G=(V,E)$ with edge costs $c\colon E \to \mathbb{R}_{\geq 0}$, we can, in time $\mathcal{O}(|V|+|E|)$, compute a $k$-edge-safe cover $\mathcal{U}=(U_i)_{i\in I}$ of $G$, together with the graphs $G_i\coloneqq (V,E)/E[V\setminus U_i]$ for $i\in I$, such that
\begin{enumerate}[(i)]
    \item \label{k-edge-safe_cover_costs} $c(E_{\mathcal{U}}) \le \epsilon \cdot c(E)$, and
    \item \label{k-edge-safe_cover_treewidth} $G_i$ has treewidth at most $\frac{26k}{\epsilon}$ for each $i\in I$.
\end{enumerate}
\end{theorem}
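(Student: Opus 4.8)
The plan is to instantiate a Baker-type layering scheme, but with \emph{overlapping} windows of BFS layers whose overlap bands are $\Theta(k)$ layers thick. Such a band plays two roles at once: it can be made cheap by averaging over shifts, and, since a connected cut that is not confined to a single window must ``cross'' the band, it forces at least $k$ of that cut's edges into $E_{\mathcal{U}}$. First I would fix a plane embedding of $G$ and a BFS layering $\Lambda_0,\dots,\Lambda_D$ from an arbitrary root (linear time), using throughout that each ball $B_{\le j}:=\Lambda_0\cup\dots\cup\Lambda_j$ induces a connected subgraph and that every edge joins layers whose indices differ by at most one. Set the overlap width $w:=k+1$, the window width $W:=w+\lceil(w+1)/\epsilon\rceil$, and the period $P:=W-w$. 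For each shift $s\in\{0,\dots,P-1\}$, let $\mathcal{U}^{(s)}$ be the cover whose parts are the blocks of $W$ consecutive layers starting at indices $s,\,s+P,\,s+2P,\dots$ (the first and last clipped to $[0,D]$); then consecutive parts overlap in exactly $w$ layers, and the parts cover all layers, hence all vertices and edges.

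For the cost condition I would average over the $P$ shifts: each layer lies in an overlap band for exactly $w$ shifts, so (an edge touches at most two layers) each edge lies in $E_{\mathcal{U}^{(s)}}$ for at most $w+1$ shifts, whence $\sum_{s}c(E_{\mathcal{U}^{(s)}})\le(w+1)\,c(E)$ and some shift $s^\star$ gives $c(E_{\mathcal{U}^{(s^\star)}})\le\tfrac{w+1}{P}c(E)\le\epsilon\,c(E)$; I fix $\mathcal{U}:=\mathcal{U}^{(s^\star)}$. To see $\mathcal{U}$ is $k$-edge-safe, take a connected cut $S$ and let $[\alpha,\beta]$ be the set of layer indices incident to $\delta(S)$. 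If $[\alpha,\beta]$ fits inside one window $[a_i,b_i]$, then every edge of $\delta(S)$ has both endpoints in $U_i$, so $\delta(S)\subseteq E[U_i]$ -- this is condition~(ii) of \cref{def:k-edge-safe_cover}. Otherwise an elementary interval argument (consecutive windows overlap, and the exclusive part of each window lies strictly between two overlap bands) forces $[\alpha,\beta]$ to contain a full overlap band $[a_{i+1},b_i]$ of width $w$. Now $\delta(S)$ is a bond, hence in the plane a simple closed curve $\gamma$ separating the side containing $S$ from the side containing $V\setminus S$; since $\gamma$ meets an edge incident to $\Lambda_\alpha$ and an edge incident to $\Lambda_\beta$, for each intermediate $\ell$ it enters and leaves the region occupied by $B_{\le\ell}$ and therefore crosses that region's boundary an even, hence $\ge 2$, number of times, each crossing being realised by an edge of $\delta(S)$ joining $\Lambda_\ell$ and $\Lambda_{\ell+1}$. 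Taking $\ell$ over the overlap band gives $\ge 2(w-1)=2k$ pairwise distinct such edges, each incident to a vertex that lies in two of the windows and hence in $E_{\mathcal{U}}$; so $|\delta(S)\cap E_{\mathcal{U}}|\ge k$, which is condition~(i). (Cuts with $|\delta(S)|\le 2$, where the ``curve'' degenerates to a loop or a digon, have $[\alpha,\beta]$ of width $\le 2$ and are covered by condition~(ii) since $W\ge 3$.)

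For the treewidth condition I would examine $G_i=(V,E)/E[V\setminus U_i]$. Contracting $E[V\setminus U_i]$ merges the inner ball $B_{\le a_i-1}$ to a single vertex $r_i$ and each connected component of $G[\overline{B_{\le b_i}}]$ to a single vertex; using that every edge joins consecutive layers, each such outer component has minimum layer exactly $b_i+1$, so it contracts to a vertex adjacent to $\Lambda_{b_i}$, while a shortest path from the root into layer $a_i+j$ spends its first $a_i$ steps inside $B_{\le a_i-1}$, so in $G_i$ every vertex of $U_i$ lies within distance $W-1$ of $r_i$ and every contracted outer vertex within distance $W$. Hence $G_i$ has radius $\le W$; since a planar graph admitting a BFS layering into $W+1$ levels has treewidth at most $3W+O(1)$, and $W\le 2(k+2)/\epsilon$ for $\epsilon\le 1$, this gives $\operatorname{tw}(G_i)\le 26k/\epsilon$ for every $k\ge 1$. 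BFS, forming the windows, and the contractions (relabel each vertex by its layer, and by its component when outside $U_i$) all run in $\mathcal{O}(|V|+|E|)$ time.

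I expect the main obstacle to be the $k$-edge-safe property -- specifically, turning ``a connected cut not confined to a single window crosses the overlap band'' into ``it uses at least $k$ edges of $E_{\mathcal{U}}$''. This genuinely relies on the topology of planar cuts (a bond is a closed curve, equivalently a dual cycle) rather than on the layering alone, and the delicate point is to argue that the $\ge 2$ crossings of the cut curve with the boundary of each intermediate ball can be charged to distinct edges of $\delta(S)$ (by choosing the dual drawing appropriately, or by a direct combinatorial substitute), together with the bookkeeping for degenerate cuts. The treewidth bound is more routine but still needs care with the two kinds of contracted vertices and with pinning down the absolute constant.
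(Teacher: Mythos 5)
There is a genuine gap, and it is fatal to the construction itself, not just to the write-up: layering by \emph{primal} BFS distance does not have the continuity property your safety argument relies on. You claim that if the layer indices touched by $\delta(S)$ span $[\alpha,\beta]$, then for every intermediate $\ell$ the dual cycle must cross at least two edges of $\delta(S)$ joining $\Lambda_\ell$ to $\Lambda_{\ell+1}$. This is false, because the edges between two consecutive BFS layers do not form a closed curve separating $B_{\le\ell}$ from the rest of the plane: a single face of $G$ can be incident to vertices of many different layers, so the dual cycle can ``jump'' across many layers inside one face. Concrete counterexample: let $G$ be a cycle $r,v_1,\dots,v_m,u_m,u_{m-1},\dots,u_1,r$, rooted at $r$, so $v_i,u_i$ lie in layer $i$. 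For any $j$, the set $S=\{v_j,\dots,v_m\}$ is a connected cut with $\delta(S)=\{v_{j-1}v_j,\ v_mu_m\}$, only two edges, yet its layer range $[j-1,m]$ is arbitrarily wide. With your windows of width $W=\Theta(k/\epsilon)$ and $m-j\gg W$, no single $U_i$ contains all endpoints of $\delta(S)$, so condition~(ii) of \cref{def:k-edge-safe_cover} fails, and $|\delta(S)\cap E_{\mathcal{U}}|\le 2<k$ for $k\ge 3$, so condition~(i) fails too: the cover you output is simply not $k$-edge-safe. The same example refutes your parenthetical claim that cuts with $|\delta(S)|\le 2$ have layer-width at most $2$. (Your cost-averaging and radius-to-treewidth steps are fine in themselves; the problem is confined to the safety argument, but it cannot be patched while keeping primal BFS layers, since the counterexample attacks the output object, not the proof.)

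The paper avoids exactly this trap by layering with respect to distances in a dual-like structure, namely the vertex-face graph $D_G$ (vertices and faces, adjacent when incident). There, a connected cut corresponds to a cycle in the planar dual, and along that cycle the $D_G$-distance of the traversed faces changes by at most $2$ per step; hence a cut whose dual cycle spans a wide range of distance levels is forced to contain, for every intermediate level, an edge both of whose endpoints lie at that level, yielding the $k$ distinct edges (in fact $k$ distinct boundary vertices, which is what makes the vertex-connectivity version work) inside the overlap ring. Dually, a small cut — like the size-two cut above — automatically has small dual-distance span and lands in condition~(ii). If you replace your primal BFS layers by rings defined via $\mathrm{dist}_{D_G}$ (or ordinary dual distance, for the edge version), your overlapping-window scheme, averaging argument, and outerplanarity-based treewidth bound all go through essentially as in the paper.
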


Our construction of a $k$-edge-safe cover to obtain \Cref{theorem:compute_edge_safe_cover_overview} can be interpreted as an extension of Baker's framework.
Loosely speaking, each $U_i$ can be thought of as a ring of vertices whose distances in the planar dual graph of $G$ to a fixed root is within some bounded range.
In order to be able to extend our results to the vertex-connectivity setting, we use the following slightly modified version of the dual graph: the \emph{vertex-face} graph $D_G$ for a fixed planar embedding of $G=(V,E)$ has vertex set $V\cup F$, where $F$ is the set of faces, and there is an edge between $v\in V$ and $f\in F$ if vertex $v$ is on face $f$.
Then we fix a root $r\in V$ and partition $V$ into rings $R_j$ (for $j=1,2,\dots$) of the form
$
R_j = \{ v\in V \colon \mathrm{dist}_{D_G}(r,v) \in [2\alpha_j,2\alpha_{j+1})\},
$
where the integers $\alpha_j$ are chosen such that the \emph{width} $\alpha_{j+1} -\alpha_j$ of each ring $R_j$ is $\mathcal{O}(k)$.
Each $U_i$ is then defined as the union of $\mathcal{O}(\frac{1}{\epsilon})$ many consecutive rings $R_j$, where $U_i$ and $U_{i+1}$ are overlapping in one ring $R_j$.
We provide details about this construction in \Cref{sec:k_safe_covers}.

To put the pieces together, it remains to show that $c(E) = \mathcal{O}(c(\OPT))$.
In general, for $k$-WECSS, even with bounded cost ratio, this is not true, as there could be many parallel edges that increase the cost of $E$ without contributing to $\OPT$.
However, we can easily assume without loss of generality that a $k$-WECSS instance has at most $k$ parallel edges between any pair of vertices.
Under this additional assumption, we have $c(E) \leq 6\cdot \Delta\cdot c(\OPT)$ because any simple planar graph on $n$ vertices has
fewer than $3n$ edges and $\OPT$ must have at least $\frac{n \cdot k}{2}$ edges.

Combining these results yields our PTAS for planar $k$-WECSS with fixed $k$ and bounded cost ratio.
For details, see \cref{sec:WECSS_and_WVCSS}.

\subsection{A PTAS for planar $k$-WCAP}\label{sec:augmentation_short}

As $k$-WCAP constitutes a special case of $(k+1)$-WECSS, it appears tempting to follow the same strategy as in the previous section: Given an instance $(G=(V,E),L,c)$ of planar $k$-WCAP, first apply \cref{theorem:compute_edge_safe_cover_overview} to obtain a $(k+1)$-edge-safe cover $\mathcal{U}=(U_i)_{i\in I}$ for $G+L\coloneqq(V,E\dot{\cup}L)$ such that the total cost of $L_{\mathcal{U}}$, the set of links incident to $V_{\mathcal{U}}$, is bounded by $\epsilon\cdot c(L)$. Next, for $i\in I$, solve the $k$-WCAP optimally in the subinstance $(G_i,L_i,c_i)$ of bounded treewidth that arises from $(G,L,c)$ by contracting $(E\dot{\cup} L)[V\setminus U_i]$. Finally, combine the obtained solutions with the set $L_{\mathcal{U}}$ to a solution to the whole instance.

 Unfortunately, this approach does not result in a PTAS, even when all links have unit costs. The reason for this is that $c(L)$, the total cost of the link set, can be much larger than the cost of an optimum solution, rendering it too expensive to purchase all of $L_{\mathcal{U}}$ for our solution.\footnote{Note that this does not contradict the results in the previous section: An instance of $k$-CAP corresponds to an instance of $(k+1)$-WECSS with $\{0,1\}$-costs; such an instance does not have a bounded cost ratio.} To see this, consider the instance of planar $3$-CAP shown in \cref{fig:snug_chain_1}. Adding a single link suffices to render the graph $4$-edge-connected; however, the total number of links may be linear in the number of vertices.
\begin{figure}[t]
	\begin{center}
		\begin{tikzpicture}[xscale=0.9, mynode/.style={circle, draw, fill, inner sep = 0pt, minimum size = 2mm}]
			\draw[thick, red, dashed] (0,0) to [out = -20, in = 180](6,-1);
			\draw[thick, red, dashed] (6,-1) to [out = 0, in = -160](12,0);
			\foreach \i in {0,...,5}
			{
				\node[mynode] (n\i) at (\i,0){};
			}
			\foreach \i in {0,...,4}
			{
				\draw[thick, black] (\i,0)--(\i+1,0);
			}
			\foreach \i in {0,2}
			{
				\draw[thick, black] (\i,0) to [out = 50, in = 180] (\i+1,0.5);
				\draw[thick, black] (\i+1,0.5) to [out = 0, in = 130] (\i+2,0);
				\draw[thick, black] (\i+1,0) to [out = -50, in = 180] (\i+2,-0.5);
				\draw[thick, black] (\i+2,-0.5) to [out = 0, in = -130] (\i+3,0);
			}
			\draw[thick, black] (0,0) to [bend left = 45] (1,0);
			\draw[thick, black] (4,0) to [out = 50, in = 180] (5,0.5);
			\draw[thick, black] (5,0.5) to (5.3,0.5);
			\draw[thick, black] (5,0) to (5.3,0);
			\draw[thick, black] (5,0) to ({5+0.3*cos(50)},{-0.3*sin(50)});
			\node at (6,0){\dots};
			\node at (6,-0.5){\dots};
			\node at (6,0.5){\dots};
			\begin{scope}[shift={(12,0)}, xscale = -1]
				\foreach \i in {0,...,5}
				{
					\node[mynode] (n\i) at (\i,0){};
				}
				\foreach \i in {0,...,4}
				{
					\draw[thick, black] (\i,0)--(\i+1,0);
				}
				\foreach \i in {0,2}
				{
					\draw[thick, black] (\i,0) to [out = 50, in = 180] (\i+1,0.5);
					\draw[thick, black] (\i+1,0.5) to [out = 0, in = 130] (\i+2,0);
					\draw[thick, black] (\i+1,0) to [out = -50, in = 180] (\i+2,-0.5);
					\draw[thick, black] (\i+2,-0.5) to [out = 0, in = -130] (\i+3,0);
				}
				\draw[thick, black] (0,0) to [bend left = 45] (1,0);
				\draw[thick, black] (4,0) to [out = 50, in = 180] (5,0.5);
				\draw[thick, black] (5,0.5) to (5.3,0.5);
				\draw[thick, black] (5,0) to (5.3,0);
				\draw[thick, black] (5,0) to ({5+0.3*cos(50)},{-0.3*sin(50)});	
			\end{scope}
			\foreach \i in {0,...,11}
			{
				\draw[thick, dashed] (\i+0.5,-1.2)--(\i+0.5,0.7);
			}
		\end{tikzpicture}
	\end{center}
	\caption{A $3$-edge-connected planar graph $G$, with $3$-cuts indicated by dashed vertical lines. One link (dashed, red) suffices to cover all $3$-cuts in $G$.}\label{fig:snug_chain_1}
\end{figure}
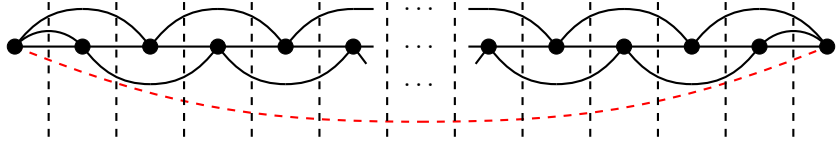
We show that structures similar to the one shown in \cref{fig:snug_chain_1}, which we call \emph{snug paths}, form the only obstacle in applying the above strategy to obtain a PTAS for planar $k$-WCAP with bounded cost ratio. More precisely, we prove that if an instance of planar $k$-WCAP with bounded cost ratio does not contain any snug paths, then the total cost of all links can be bounded in terms of the cost of an optimum solution. To obtain this result, we will assume that a given instance $(G,L,c)$ of planar $k$-WCAP is feasible (i.e., $G+L$ is $(k+1)$-edge-connected) and has the property that $G$ is \emph{minimally $k$-edge-connected}, meaning that every edge of $G$ appears in a cut of size $k$.
Both assumptions are without loss of generality: Feasibility can be checked in polynomial time and if $G$ is not minimally $k$-edge-connected, we can contract all edges that do not appear in any $k$-cut.
Note that contractions preserve planarity.

\subsubsection{Identifying the problem: snug vertices and snug paths}

For the remainder of this section, we fix a planar $k$-WCAP instance $(G=(V,E),L,c)$, where $G$ is minimally $k$-edge-connected, as well as an arbitrary \emph{root} $r\in V$. 
We call a vertex $v\in V\setminus \{r\}$ \emph{snug} if $\mathrm{deg}(v)\ge k+1$ (where $\mathrm{deg}(v)$ denotes the degree of $v$) and there exist two distinct $k$-cuts $S_1, S_2 \subseteq V\setminus\{r\}$ with $S_2 = S_1 \cup \{v\}$. 
We denote the set of snug vertices by $V_{\mathrm{snug}}$. One can show that for $v\in V_{\mathrm{snug}}$, there exists a \emph{unique} pair of distinct $k$-cuts $(S_1^v,S_2^v)$ with $S^v_1, S^v_2\subseteq V\setminus\{r\}$ and $S^v_2 = S^v_1 \cup \{v\}$ (see \cref{lemma:snug_shores_unique}), which we call the \emph{snug shores} for $v$. 
In \cref{fig:snug_chain_1}, every vertex except for the outer two vertices is snug, its snug shores are the $k$-cuts drawn to its left and right; the snug vertex ``snugly fits in-between''.

To capture the relations between snug vertices and their snug shores, we introduce the directed \emph{chain graph} $G_{\mathrm{chain}}=(V_{\mathrm{snug}},A)$. It contains an arc $(u,v)$ if and only if $u$ and $v$ are snug vertices such that $S_2^u=S_1^v$, i.e., the ``outer'' snug shore for $u$ coincides with the ``inner'' snug shore for $v$.
In \cref{fig:snug_chain_1}, the chain graph consists of a single directed path traversing all vertices except for the two outer ones from left to right. Note that $G_{\rm{chain}}$ can be computed in polynomial time, e.g., by enumerating all minimum cuts of $G$ (see \cite{doi:10.1137/S0895480194271323}). In \cref{sec:snug_vertices_and_chains}, we prove the following structure results.
\begin{restatable}{lemma}{lemmachaingraphpaths}\label{lemma:chain_graph_paths}
$G_{\mathrm{chain}}$ is a vertex-disjoint union of directed paths.
\end{restatable}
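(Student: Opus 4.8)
The plan is to establish three structural properties of the digraph $G_{\mathrm{chain}}$: (a) every vertex has out-degree at most $1$; (b) every vertex has in-degree at most $1$; (c) $G_{\mathrm{chain}}$ contains no directed cycle. A finite digraph satisfying (a)--(c) is a vertex-disjoint union of directed paths: the degree bounds force every connected component of the underlying undirected simple graph to be a path or a cycle, and an undirected cycle can only carry an orientation with all in- and out-degrees at most $1$ if it is a consistently oriented directed cycle, which (c) forbids. Throughout I would rely on two basic facts about $G$. Since $G$ is minimally $k$-edge-connected it is in particular $k$-edge-connected, so a $k$-cut is a cut of \emph{minimum} cardinality; and the cut function $S\mapsto|\delta(S)|$ is submodular, so whenever $S$ and $S'$ are $k$-cuts for which $S\cap S'$ and $S\cup S'$ are both nonempty proper subsets of $V$, both $S\cap S'$ and $S\cup S'$ are again $k$-cuts. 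I would also repeatedly invoke uniqueness of snug shores (\cref{lemma:snug_shores_unique}): for each snug vertex $w$ there is exactly one pair of distinct $k$-cuts $(T_1,T_2)$ with $T_1,T_2\subseteq V\setminus\{r\}$ and $T_2=T_1\cup\{w\}$, namely $(S_1^w,S_2^w)$.

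For (a), suppose a vertex $u$ has arcs to two distinct snug vertices $v\neq v'$. By definition of $A$, $S:=S_2^u=S_1^v=S_1^{v'}$, so $S$, $S\cup\{v\}=S_2^v$ and $S\cup\{v'\}=S_2^{v'}$ are $k$-cuts contained in $V\setminus\{r\}$, with $v,v'\notin S$ and $v\neq v'$. Since $(S\cup\{v\})\cap(S\cup\{v'\})=S$ and $(S\cup\{v\})\cup(S\cup\{v'\})=S\cup\{v,v'\}$ are nonempty and miss $r$, submodularity makes $S\cup\{v,v'\}$ a $k$-cut. Then $\bigl(S\cup\{v\},\,S\cup\{v,v'\}\bigr)$ is a pair of distinct $k$-cuts in $V\setminus\{r\}$ differing exactly in $\{v'\}$, so by uniqueness applied to $v'$ it must equal $(S_1^{v'},S_2^{v'})=(S,S\cup\{v'\})$; in particular $S\cup\{v\}=S$, contradicting $v\notin S$.

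For (b), suppose distinct snug vertices $u\neq u'$ both have an arc to $v$, so $S':=S_2^u=S_1^v=S_2^{u'}$. From $S_2^u\setminus S_1^u=\{u\}$ and $u\in S_2^u$ we get $S_1^u=S'\setminus\{u\}$, and likewise $S_1^{u'}=S'\setminus\{u'\}$; moreover $u,u'\in S'$ and $r\notin S'$. If $S'=\{u,u'\}$, then $S_1^u=\{u'\}$ is a $k$-cut, i.e.\ $\deg(u')=k$, contradicting that $u'$ is snug ($\deg(u')\ge k+1$). Otherwise $S'\setminus\{u,u'\}$ is a nonempty proper subset of $V$, and submodularity applied to the $k$-cuts $S'\setminus\{u\}$ and $S'\setminus\{u'\}$ shows $S'\setminus\{u,u'\}$ is a $k$-cut. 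Then $\bigl(S'\setminus\{u,u'\},\,S'\setminus\{u'\}\bigr)$ is a pair of distinct $k$-cuts in $V\setminus\{r\}$ differing exactly in $\{u\}$, so by uniqueness applied to $u$ it equals $(S_1^u,S_2^u)=(S'\setminus\{u\},S')$, forcing $u'\notin S'$, a contradiction.

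For (c), each arc $(u,v)$ satisfies $|S_1^v|=|S_2^u|=|S_1^u|+1$, so $|S_1^{(\cdot)}|$ strictly increases along every directed walk in $G_{\mathrm{chain}}$, which is impossible around a closed walk. Combining (a)--(c) yields the statement. The only step with content beyond unwinding definitions is the uncrossing argument behind (a) and (b); the one place where the snug degree condition $\deg\ge k+1$ is genuinely needed is to rule out the degenerate case $S'=\{u,u'\}$ in (b) (and implicitly to ensure the sets appearing in the uncrossing are proper and nonempty so that minimality applies). I do not expect planarity to be needed for this particular lemma.
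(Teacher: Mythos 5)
Your proof is correct, and its skeleton is exactly the paper's: out-degree at most one, in-degree at most one, and acyclicity, which together give a disjoint union of directed paths (the paper's \cref{prop:at_most_one_outgoing,prop:at_most_one_incoming,prop:no_directed_cycle}; your acyclicity argument via the strictly increasing shore size is the same as the paper's strict inclusion chain). Where you differ is in how the two degree bounds are obtained. The paper first proves a non-crossing statement (\cref{lemma:interaction_with_snug_shores}: no $k$-cut avoiding $r$ can cross a pair of snug shores) and then gets the stronger per-cut statements that each $k$-cut $S\subseteq V\setminus\{r\}$ is the ``inner'' shore of at most one snug vertex and the ``outer'' shore of at most one snug vertex, each in two lines. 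You instead uncross directly with submodularity of the cut function plus $k$-edge-connectivity to manufacture the fourth $k$-cut ($S\cup\{v,v'\}$, respectively $S'\setminus\{u,u'\}$), and then invoke uniqueness of snug shores (\cref{lemma:snug_shores_unique}) to reach the contradiction; your treatment of the degenerate case $S'=\{u,u'\}$ via $\deg(u')\ge k+1$ is exactly where the paper's argument also uses the degree condition (inside the proof of \cref{lemma:interaction_with_snug_shores}). Both routes are uncrossing arguments of the same flavor; the paper's detour through \cref{lemma:interaction_with_snug_shores} buys the per-cut uniqueness facts that it reuses later (e.g.\ for \cref{lemma:shore_for_at_most_one_path} and \cref{lemma:only_cut_intersecting_snug_edge}), while your version is self-contained modulo the uniqueness lemma and makes the submodularity explicit. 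You are also right that planarity plays no role here.
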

\begin{restatable}{lemma}{lemmachaingraphsubgraph}\label{lemma:chain_graph_subgraph}
	The underlying undirected graph of $G_{\mathrm{chain}}$ is a subgraph of $G$.
\end{restatable}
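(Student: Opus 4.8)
It is enough to prove that for every arc $(u,v)$ of $G_{\mathrm{chain}}$ the pair $\{u,v\}$ is an edge of $G$; since the vertex set $V_{\mathrm{snug}}$ of $G_{\mathrm{chain}}$ is by definition a subset of $V$, this yields the lemma. So the plan is to fix an arc $(u,v)$ and set $S\coloneqq S_2^u=S_1^v$. By the definition of snug shores and \cref{lemma:snug_shores_unique}, one then has $S_1^u=S\setminus\{u\}$ and $S_2^v=S\cup\{v\}$, and all three sets $S\setminus\{u\}$, $S$, $S\cup\{v\}$ are $k$-cuts of $G$; in particular $u\in S$, $v\notin S$, and $u\neq v$. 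Thus an arc of $G_{\mathrm{chain}}$ hands us a chain of three nested $k$-cuts $S\setminus\{u\}\subsetneq S\subsetneq S\cup\{v\}$ that differ exactly in the two vertices $u$ and $v$, and the task reduces to extracting an edge between $u$ and $v$ from this configuration.

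The remaining argument is a short double-counting over the partition of $V$ into the four parts $A\coloneqq S\setminus\{u\}$, $\{u\}$, $\{v\}$, and $B\coloneqq V\setminus(S\cup\{v\})$ (here $A$ and $B$ are nonempty, since $A=S_1^u$ and $V\setminus B=S_2^v$ are $k$-cuts). For disjoint vertex sets $X,Y$ write $e(X,Y)$ for the number of edges of $G$ with one endpoint in each, abbreviating $e(A,u)\coloneqq e(A,\{u\})$ and so on. Because $G$ is minimally $k$-edge-connected it has no loops (a loop lies in no cut), so $\mathrm{deg}(u)=e(A,u)+e(u,v)+e(u,B)$ and $\mathrm{deg}(v)=e(A,v)+e(u,v)+e(v,B)$, while the equalities $|\delta(S\setminus\{u\})|=|\delta(S)|=|\delta(S\cup\{v\})|=k$ become
\[
e(A,u)+e(A,v)+e(A,B)=k,\qquad e(A,v)+e(A,B)+e(u,v)+e(u,B)=k,\qquad e(A,B)+e(u,B)+e(v,B)=k.
\]
Subtracting consecutive equalities gives $e(A,u)=e(u,v)+e(u,B)$ and $e(v,B)=e(A,v)+e(u,v)$, hence $\mathrm{deg}(u)=2\,e(A,u)$ and $\mathrm{deg}(v)=2\bigl(e(A,v)+e(u,v)\bigr)$. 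Since $u$ and $v$ are snug, $\mathrm{deg}(u)\ge k+1$ and $\mathrm{deg}(v)\ge k+1$; summing these gives $2\,e(A,u)+2\,e(A,v)+2\,e(u,v)\ge 2k+2$, whereas the first cut equality gives $e(A,u)+e(A,v)\le k$. Subtracting yields $e(u,v)\ge 1$, i.e.\ $\{u,v\}\in E$.

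I do not expect a genuine obstacle here: everything is driven by the three nested $k$-cuts that an arc of $G_{\mathrm{chain}}$ forces, and the rest is bookkeeping — writing out the edge-count system and noticing that adding the two snug-degree bounds and comparing with the first cut equation isolates $e(u,v)$. The only inputs used are the uniqueness of snug shores (\cref{lemma:snug_shores_unique}) and the loop-freeness of a minimally $k$-edge-connected graph, which is immediate from the definition; in particular \cref{lemma:chain_graph_paths} is not needed. (As a side remark, $G_{\mathrm{chain}}$ has at most one arc between any two vertices, since arcs $(u,v)$ and $(v,u)$ would force $|S_1^v|=|S_1^u|+1$ and $|S_1^u|=|S_1^v|+1$ simultaneously, so its underlying undirected graph is well defined.)
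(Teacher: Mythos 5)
Your proof is correct and amounts to essentially the same argument as the paper's: both play the snug degree bounds $\deg(u),\deg(v)\ge k+1$ against the snug-shore $k$-cuts $S_1^u$ and $S_2^v$ of total size $2k$ --- the paper via the containment $\delta_G(u)\cup\delta_G(v)\subseteq\delta_G(S_1^u)\cup\delta_G(S_2^v)$ under the assumption $\{u,v\}\notin E$, you via an explicit edge count over the partition $A,\{u\},\{v\},B$ that isolates $e(u,v)\ge 1$. One cosmetic point: the lemma is proved in a setting where $G$ is only assumed $k$-edge-connected, so you should not invoke minimal $k$-edge-connectivity for loop-freeness; loops are irrelevant to the counting here in any case.
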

\begin{restatable}{lemma}{lemmaonlycutintersectingsnugedge}\label{lemma:only_cut_intersecting_snug_edge}
An arc $(u,v)\in A$ does not cross any $k$-cut other than $S_2^u=S_1^v$ (and $V\setminus S_2^u$).
\end{restatable}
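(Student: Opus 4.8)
The plan is to reduce, by possibly replacing a candidate $k$-cut by its complement, to proving that the only $k$-cut $T$ with $u\in T$ and $v\notin T$ is $T = S_2^u$. Recall that since $(u,v)$ is an arc of $G_{\mathrm{chain}}$ we have $S_2^u = S_1^v$, so $S_1^u\subsetneq S_2^u\subsetneq S_2^v$ is a chain of $k$-cuts contained in $V\setminus\{r\}$ with $S_2^u = S_1^u\cup\{u\}$ (and $u\notin S_1^u$) and $S_2^v = S_2^u\cup\{v\}$ (and $v\notin S_2^u$), and that $\deg(u),\deg(v)\ge k+1$ since $u$ and $v$ are snug. One half of the statement is immediate: $uv$ crosses $S_2^u$, because $u\in S_2^u$ while $v\notin S_2^u$. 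For the rest, let $T$ be an arbitrary $k$-cut crossed by $uv$; replacing $T$ by $V\setminus T$ if necessary, assume $u\in T$, so that $v\notin T$. It suffices to show $T = S_2^u$, since applying this in the opposite orientation settles the remaining case $T = V\setminus S_2^u$.

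The heart of the argument is to establish $S_2^u\subseteq T$ by uncrossing and then invoking the uniqueness of snug shores (\cref{lemma:snug_shores_unique}). First I would uncross $S_2^u$ with $T$: both $S_2^u\cap T$ and $S_2^u\cup T$ contain $u$ and avoid $v$, hence are nonempty and proper, so submodularity of $|\delta(\cdot)|$ together with the $k$-edge-connectivity of $G$ forces $|\delta(S_2^u\cap T)| = |\delta(S_2^u\cup T)| = k$; in particular $S_2^u\cap T$ is a $k$-cut. Since $u\in S_2^u\cap T\subseteq S_2^u = S_1^u\cup\{u\}$, we can write $S_2^u\cap T = (S_1^u\cap T)\cup\{u\}$, and here $S_1^u\cap T\ne\emptyset$: otherwise $S_2^u\cap T = \{u\}$ and $k = |\delta(\{u\})| = \deg(u)\ge k+1$, a contradiction. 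Uncrossing $S_1^u$ with $T$ in the same way --- note $S_1^u\cup T$ again contains $u$ and avoids $v$ --- yields $|\delta(S_1^u\cap T)|\le |\delta(S_1^u)| + |\delta(T)| - |\delta(S_1^u\cup T)| = 2k - |\delta(S_1^u\cup T)| \le 2k - k = k$, hence $|\delta(S_1^u\cap T)| = k$ because $S_1^u\cap T$ is a nonempty proper vertex set. Now $(S_1^u\cap T,\,S_2^u\cap T)$ is a pair of distinct $k$-cuts contained in $V\setminus\{r\}$ differing by exactly the vertex $u$, so by the uniqueness of the snug shores of $u$ it must equal $(S_1^u,S_2^u)$. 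In particular $S_2^u\cap T = S_2^u$, i.e.\ $S_2^u\subseteq T$.

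It then remains to prove $T\subseteq S_2^u$, for which I would use the outer shore $S_2^v$. Since $S_2^u\subseteq T$ and $v\notin T$, we have $S_2^v\setminus T = \{v\}$ and $T\setminus S_2^v = T\setminus S_2^u$, so submodularity gives $2k = |\delta(T)| + |\delta(S_2^v)| \ge |\delta(T\setminus S_2^u)| + |\delta(\{v\})| = |\delta(T\setminus S_2^u)| + \deg(v) \ge |\delta(T\setminus S_2^u)| + k+1$, whence $|\delta(T\setminus S_2^u)|\le k-1$. As $u\in S_2^u$, the set $T\setminus S_2^u$ is a proper subset of $V$, so were it nonempty it would be a cut of size at least $k$ --- impossible. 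Hence $T\setminus S_2^u = \emptyset$, and together with $S_2^u\subseteq T$ this gives $T = S_2^u$.

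I expect the middle step to be the main obstacle. The crucial point is that uncrossing $S_2^u$ with $T$ returns \emph{again} a nested pair of $k$-cuts that differ by exactly the vertex $u$, which is precisely what makes the uniqueness of snug shores applicable; carrying this out cleanly requires checking that $S_2^u\cap T$, $S_2^u\cup T$, $S_1^u\cup T$ and $S_1^u\cap T$ are all nonempty and proper, and --- crucially --- using $\deg(u)\ge k+1$ to rule out the degenerate case $S_2^u\cap T = \{u\}$. I would also note that no case analysis on the location of the root $r$ is needed: the argument concludes $T = S_2^u$ outright, and $r\notin S_2^u$ by construction, so whether or not $r$ lies in $T$ sorts itself out.
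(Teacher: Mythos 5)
Your proof is correct, but it takes a different route from the paper's. The paper's proof is a two-line application of \cref{lemma:interaction_with_snug_shores}: applying it to the snug shores $(S_1^u,S_2^u)$ and the cut $S$ gives $S_2^u\subseteq S$, and applying it to $(S_1^v,S_2^v)$ gives $S\subseteq S_1^v=S_2^u$, so $S=S_2^u$ (with complements handled by the standard convention that $k$-cuts are represented by their $r$-free side). You never invoke that lemma; instead you redo the uncrossing from scratch. For the inclusion $S_2^u\subseteq T$ you uncross $T$ with $S_2^u$ and with $S_1^u$, use $\deg(u)\ge k+1$ to rule out $S_2^u\cap T=\{u\}$, and then appeal to \cref{lemma:snug_shores_unique} applied to the pair $(S_1^u\cap T,\,S_2^u\cap T)$ --- which is legitimate, since both sets are $k$-cuts avoiding $r$ (being subsets of the shores) and differ exactly in $u$. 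For the reverse inclusion you use the inequality $|\delta(T)|+|\delta(S_2^v)|\ge|\delta(T\setminus S_2^v)|+|\delta(S_2^v\setminus T)|$ together with $\deg(v)\ge k+1$; a small terminological note: this is posimodularity rather than submodularity, though it is equivalent here to submodularity applied to $T$ and $V\setminus S_2^v$, so the step is valid. In effect you re-prove inline the consequences of \cref{lemma:interaction_with_snug_shores} that the paper cites as a black box; your version is longer but slightly more robust, since it handles cuts $T$ containing the root directly rather than via complementation, whereas the paper's proof buys brevity by reusing the interaction lemma for both inclusions.
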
 
We call the maximal paths in  $G_{\mathrm{chain}}$ the \emph{snug paths} of $G$ and denote their collection by $\mathcal{P}^G_{\mathrm{chain}}$. \cref{lemma:bound_non_snug_vertices_and_snug_paths}, proven in \cref{sec:thinning}, formalizes our claim that snug paths constitute the only obstacle in bounding $c(L)$.
\begin{lemma} \label{lemma:bound_non_snug_vertices_and_snug_paths}
We have $|V\setminus V_{\mathrm{snug}}|< 4\cdot n_k(G)$ and $|\mathcal{P}^G_{\mathrm{chain}}|<2\cdot n_k(G)$, where $n_k(G)$ denotes the number of vertices of $G$ of degree $k$.
\end{lemma}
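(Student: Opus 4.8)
The plan is to reduce both inequalities to a single counting bound on a contracted graph, and then to prove that bound by a planar discharging argument. Let $V''$ be the set of vertices $v\in V\setminus\{r\}$ with $\mathrm{deg}(v)\ge k+1$ that are not snug, and let $V_k$ be the set of vertices of degree exactly $k$, so that $|V_k|=n_k(G)$. Comparing $|\delta(S_1^v)|=|\delta(S_2^v)|=k$ for a snug vertex $v$ shows that $v$ sends exactly $\mathrm{deg}(v)/2$ edges into $S_1^v$ and exactly $\mathrm{deg}(v)/2$ edges into $V\setminus S_2^v$; in particular every snug vertex has even degree at least $k+1$, so $V_k\cap V_{\mathrm{snug}}=\emptyset$, $V\setminus V_{\mathrm{snug}}=\{r\}\cup V_k\cup V''$, and $|V\setminus V_{\mathrm{snug}}|\ge n_k(G)$. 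Hence it suffices to bound $|V''|$ and $|\mathcal{P}^G_{\mathrm{chain}}|$.

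By \cref{lemma:chain_graph_paths,lemma:chain_graph_subgraph}, the edges lying on snug paths form a vertex-disjoint family of paths of $G$ spanning $V_{\mathrm{snug}}$; contracting all of them yields a planar graph $G'$ with $|V(G')|=|V\setminus V_{\mathrm{snug}}|+|\mathcal{P}^G_{\mathrm{chain}}|$, since contracting a snug path on $t$ vertices removes $t-1$ vertices. Moreover $n_k(G')=n_k(G)$: every degree-$k$ vertex of $G$ is non-snug, hence survives the contraction with its degree unchanged, while the vertex obtained by contracting a snug path $v_1\cdots v_t$ keeps the $\mathrm{deg}(v_1)/2$ edges that $v_1$ sends into $S_1^{v_1}$ together with the $\mathrm{deg}(v_t)/2$ edges that $v_t$ sends into $V\setminus S_2^{v_t}$ — two disjoint, nonempty edge sets, neither of which is internal to the path — so this vertex has degree exceeding $k$. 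Consequently, the lemma follows from the single inequality $|V(G')|<3\,n_k(G')$: it yields $|V\setminus V_{\mathrm{snug}}|\le|V(G')|<3\,n_k(G)<4\,n_k(G)$ and $|\mathcal{P}^G_{\mathrm{chain}}|=|V(G')|-|V\setminus V_{\mathrm{snug}}|\le|V(G')|-n_k(G)<2\,n_k(G)$, which also accounts for the asymmetry of the two claimed bounds.

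To prove $|V(G')|<3\,n_k(G')$ I would argue as follows. The graph $G'$ is $k$-edge-connected and planar, and — crucially — \cref{lemma:only_cut_intersecting_snug_edge} (each snug-path edge lies in a unique $k$-cut) lets one transfer enough of the ``every edge lies in a $k$-cut'' structure of $G$ to $G'$, after which one may pass to a minimally $k$-edge-connected spanning subgraph of $G'$ while arguing that the degree-$k$ count is not affected, and check that the contraction has trivialized the chain structure, so that in $G'$ no two snug vertices are joined by a chain arc. Every vertex of $G'$ of degree at least $k+1$ is then either of odd degree or isolated in the chain graph of $G'$, and in both cases \cref{lemma:snug_shores_unique} forces it to border enough $k$-cuts that run through degree-$k$ vertices to ``pay for itself'' in a discharging argument over a planar embedding, which yields the bound. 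It is essential here that the chains have been quotiented out: the linear bound $|V(G')|=\mathcal{O}(n_k(G'))$ is false for minimally $k$-edge-connected planar graphs in general, as long ``bubble chains'' have only a constant number of degree-$k$ vertices. The case $k=1$ is degenerate — $G$, and hence $G'$, is a tree, the snug vertices are exactly the interior vertices of degree $2$, and the claim reduces to the elementary fact that a tree has fewer vertices of degree at least $3$ than leaves — and I would dispatch it separately.

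The main obstacle is this planar discharging argument, and in particular pinning the constant down to $3$: one must show that once the chain structure is trivialized, a minimally $k$-edge-connected planar graph cannot contain substantially more vertices than it has vertices of degree $k$. The supporting bookkeeping — disjointness and spanning of the snug paths, invariance of $n_k$ under the contraction, and reducing to a minimally $k$-edge-connected graph without disturbing the degree-$k$ count or reintroducing chains — is conceptually routine but delicate, and the $k=1$ case must be excluded from the main argument.
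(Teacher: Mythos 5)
There is a genuine gap: the heart of your argument — the bound $|V(G')| < 3\,n_k(G')$ on the graph $G'$ obtained by contracting all snug paths — is only announced, not proved. You acknowledge this yourself (``The main obstacle is this planar discharging argument, and in particular pinning the constant down to $3$''), and nothing in the proposal actually carries out the discharging. This is not a detail to be routinely filled in; it is the entire content of the lemma. You are also aiming for a stronger constant than the paper proves ($3$ on $|V(G')|$ would give $|V\setminus V_{\rm snug}| < 3\,n_k(G)$, whereas the paper only establishes $< 4\,n_k(G)$), and no evidence is offered that this sharpening holds. Several of the steps you label ``conceptually routine'' are not: passing to a minimally $k$-edge-connected spanning subgraph of $G'$ can push a vertex from degree $>k$ to degree exactly $k$, so $n_k$ need not be preserved, and — more fundamentally — deleting edges from $G'$ can create brand-new snug vertices and chain arcs, destroying the ``trivialized chain structure'' on which your discharging scheme is supposed to rest. (Your claim that the contracted vertices $u_P$ have no chain arcs in $G'$ is, by itself, correct — it follows from the maximality of snug paths, \cref{lemma:shore_for_at_most_one_path}, and \cref{lemma:only_cut_intersecting_snug_edge} — but it would not survive such a subgraph reduction.)

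Beyond these gaps, the approach diverges from the paper's in a way worth understanding. Your method is inherently planar (the discharging step), whereas the paper's proof of this lemma is planarity-free: it builds a maximal laminar family $\mathcal{L}\subseteq 2^{V\setminus\{r\}}$ of $k$-cuts via a Gomory--Hu tree (\cref{lemma:laminar_family}), shows that leaves of $\mathcal{L}$ are singleton degree-$k$ cuts and that sets with exactly one child are exactly snug shores (\cref{lemma:properties_laminar_family}), and then counts: each snug path is tagged by its first shore $S_0\in\mathcal{L}$, which is a leaf or has $\geq 2$ children (\cref{lemma:number_snug_paths}), while a degree-sum argument over $\mathcal{L}$ bounds the non-snug vertices by $2|\mathcal{L}\setminus\mathcal{L}_s|$ (\cref{lemma:non_snug_vertices}). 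Your contraction-based reformulation is a clean observation — it shows the two inequalities are facets of a single inequality $|V(G')| < C\cdot n_k(G')$, and the paper's counting already gives $C < 6$ via $|V(G')| \leq 3|\mathcal{L}\setminus\mathcal{L}_s| < 6\,n_k$ — but as written it leaves the main inequality unestablished and rests on a planarity assumption the paper avoids.
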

Observe that if $G_{\rm{chain}}$ does not contain any arcs, i.e., every snug path consists of a single snug vertex, then \cref{lemma:bound_non_snug_vertices_and_snug_paths} and planarity of $G+L$ allow us to linearly bound $|L|$ in terms of $n_k(G)$. On the other hand, in an optimum solution $\OPT$, every vertex of degree $k$ must be incident to at least one link in $\OPT$, implying $2\cdot |\OPT|\ge n_k(G)$. As the cost ratio of the instance is bounded, $c(L)$ can then be at most a constant factor larger than $c(\OPT)$.

\subsubsection{Solving the problem: contracting snug paths}

\cref{lemma:bound_non_snug_vertices_and_snug_paths} tells us that in the graph $G/\mathcal{P}^G_{\mathrm{chain}}$ that arises from $G$ by \emph{contracting every snug path into a single vertex}, the total number of vertices will be in $\mathcal{O}(|\OPT|)$. Note that by \cref{lemma:chain_graph_subgraph}, these contractions preserve planarity. By \cref{lemma:only_cut_intersecting_snug_edge}, contracting $\mathcal{P}^G_{\mathrm{chain}}$ preserves all $k$-cuts except for snug shores. 

This suggests the following modified strategy: Apply \cref{theorem:compute_edge_safe_cover_overview} to $\widetilde{G}_\mathcal{P}\coloneqq(V_{\mathcal{P}},E_{\mathcal{P}}\dot{\cup} L_{\mathcal{P}})\coloneqq (G+L)/\mathcal{P}^G_{\rm{chain}}$, the graph arising from $G+L$ by the contraction of snug paths. This yields a $(k+1)$-edge-safe cover $\mathcal{U}^\mathcal{P}=(U^\mathcal{P}_i)_{i\in I}$ of $\widetilde{G}_{\mathcal{P}}$ such that each subinstance $\widetilde{G}_{\mathcal{P}}/(E_{\mathcal{P}}\dot{\cup} L_{\mathcal{P}})[V_{\mathcal{P}}\setminus U^\mathcal{P}_i]$ has bounded treewidth and such that the set of links incident to $\bigcup_{i, j\in I: i\neq j} (U^\mathcal{P}_i\cap U^\mathcal{P}_j)$ is cheap. Lift $\mathcal{U}^\mathcal{P}$ to a ``cover'' $\mathcal{U}$ for $G$.

We remark that we cannot hope for $\mathcal{U}$ to be a $(k+1)$-edge-safe cover for $G+L$; this is because we cannot control the interaction between snug paths and connected cuts of size more than $k$. Instead, we introduce the weaker notion of a \emph{$k$-augmentation-safe cover}, which is tailored to the augmentation setting and only considers $k$-cuts of $G$ rather than all connected cuts (see \cref{def:augmentation_safe_cover}).

To obtain a good $k$-augmentation-safe cover by lifting a $(k+1)$-edge-safe cover in $\widetilde{G}_{\mathcal{P}}$, there are multiple challenges we have to address.

First of all, we have to discuss how to solve the $k$-WCAP in each of the subinstances we obtain in the uncontracted instance, which may no longer have bounded treewidth.
To this end, we introduce the concept of \emph{snug-treewidth}.\footnote{We remark that $\mathcal{P}^G_{\mathrm{chain}}$ depends on the (fixed) choice of the root that we implicitly make. While this dependence can be avoided, it simplifies the presentation.} 
\begin{definition}
	The \emph{snug-treewidth} $\mathrm{snugtw}(G,L)$ of an instance $(G,L,c)$ of $k$-WCAP is the treewidth of $(G+L)/\mathcal{P}^G_{\mathrm{chain}}$.
\end{definition}
 In \cref{sec:bounded_snug_treewidth} we prove that instances with constant snug-treewidth can be solved optimally via a dynamic programming approach.
\begin{restatable}{theorem}{wcapBoundedSnugTreewidth}\label{thm:WCAP_bounded_snug_treewidth}
	For any constant $\tau\in\mathbb{Z}_{\ge 0}$, $k$-WCAP can be solved in polynomial time on instances of snug-treewidth at most $\tau$.
\end{restatable}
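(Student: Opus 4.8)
\medskip
\noindent\emph{Proof plan.}
Throughout I use the notation of \cref{sec:augmentation_short}; in particular $G$ is minimally $k$-edge-connected, so that the $k$-cuts of $G$ are exactly its minimum cuts---each of them a connected cut, and there are only $\mathcal{O}(n^2)$ of them, enumerable in polynomial time~\cite{doi:10.1137/S0895480194271323}. Recall that $F\subseteq L$ is feasible for $k$-WCAP precisely if every $k$-cut of $G$ is crossed by at least one link of $F$. Set $\widetilde{G}\coloneqq(G+L)/\mathcal{P}^G_{\mathrm{chain}}$, which by hypothesis admits a tree decomposition of width at most~$\tau$. The plan is to split the $k$-cuts of $G$ into two groups according to how they interact with the snug paths, and to cover both groups within a single dynamic program over this tree decomposition.

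\textbf{Step 1: a dichotomy of the $k$-cuts.}
Call a $k$-cut $S$ of $G$ \emph{external} if every snug path lies entirely inside or entirely outside $S$, and \emph{internal} otherwise. If $S$ is internal, some snug path $P=(v_1,\dots,v_\ell)$ is split by $S$, so an edge $\{v_i,v_{i+1}\}$ of $P$---which lies in $G$ by \cref{lemma:chain_graph_subgraph}---crosses $S$; by \cref{lemma:only_cut_intersecting_snug_edge} this forces $S\in\{S_2^{v_i},\,V\setminus S_2^{v_i}\}$. Thus, up to complementation, every internal $k$-cut is one of the \emph{internal cuts} $C^P_1\subsetneq\dots\subsetneq C^P_{\ell-1}$ of a unique snug path $P$, where $C^P_i\coloneqq S_2^{v_i}$ (the chain of strict inclusions follows from $S_2^{v_i}=S_1^{v_{i+1}}=S_1^{v_i}\cup\{v_i\}$ along $P$, using \cref{lemma:chain_graph_paths}). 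For an external $k$-cut, $S$ descends to a cut $\widetilde S$ of $\widetilde G$ and $\delta_{G+L}(S)$ corresponds bijectively to $\delta_{\widetilde G}(\widetilde S)$; I would verify, using that contractions do not decrease edge-connectivity for one direction and \cref{lemma:only_cut_intersecting_snug_edge} for the other, that the external $k$-cuts of $G$ are in bijection with the minimum cuts of $\widetilde G$. Consequently, covering all external $k$-cuts of $G$ is equivalent to choosing $F\subseteq L$ so that $\widetilde G+F$ is $(k+1)$-edge-connected, which is handled by the standard bounded-treewidth dynamic program for edge-connectivity augmentation (in the spirit of \cref{thm:bounded_treewidth_kWECSS}, applied to $\widetilde G$ with the edges inherited from $E$ free and mandatory).

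\textbf{Step 2: the internal cuts of a single snug path, and the combined dynamic program.}
For a snug path $P$ of length $\ell$ and a link $e=\{x,z\}\in L$, the link $e$ crosses $C^P_i$ iff exactly one of $x,z$ lies in $C^P_i$; since the $C^P_i$ form a nested chain, the set $I_P(e)\subseteq\{1,\dots,\ell-1\}$ of indices for which this happens is a (possibly empty) contiguous interval, computable in polynomial time from the list of minimum cuts. Covering all internal cuts of $P$ thus amounts to selecting links so that the intervals $I_P(e)$ cover $\{1,\dots,\ell-1\}$---a one-dimensional interval point-cover task that a left-to-right sweep with $\mathcal{O}(\ell)$ states would solve in isolation. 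I would combine the two steps into a single dynamic program on a nice tree decomposition of $\widetilde G$ of width~$\tau$ that maintains, in every bag, (i) the usual edge-connectivity profile certifying $(k+1)$-edge-connectivity of the part of $\widetilde G+F$ processed so far (this also records which loops at snug super-vertices---i.e.\ links internal to a single snug path---have been bought), and (ii) for every snug super-vertex $p_P$ in the bag, an interval-cover frontier tracking how far along $P$ the internal cuts are already covered. Since the bags containing $p_P$ form a connected subtree, each frontier is an integer in $\{0,\dots,n\}$, and at most $\tau+1$ super-vertices occur in a bag, this enlarges the state space by only a factor $n^{\mathcal{O}(\tau)}$; together with the $k^{\mathcal{O}(\tau)}$ states of the connectivity profile the table stays polynomial. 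A link's buy/no-buy decision is taken once, at a node whose bag contains both endpoints of its image, and is fed simultaneously into the connectivity profile and into the frontiers of all snug paths $P$ with $I_P(e)\ne\emptyset$; when $p_P$ is forgotten we require its frontier to have reached $\ell-1$.

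\textbf{Main obstacle.}
The subtle point, where I expect most of the effort to go, is that a link can cover internal cuts of a snug path $P$ without its image being incident to the super-vertex $p_P$: these are exactly the links with one endpoint in the ``left side'' $S_1^{v_1}$ of $P$ and one in the ``right side'' $V\setminus S_2^{v_\ell}$, and one checks that such a link covers \emph{all} internal cuts of $P$ and, equivalently, simultaneously covers the two external boundary cuts $\delta(S_1^{v_1})$ and $\delta(S_2^{v_\ell})$ of $P$. A clean way to account for them is to refine the connectivity part of the dynamic program so that, for each of the at most $\tau+1$ snug super-vertices currently in the bag, it additionally records whether the chosen links already include one jointly covering these two boundary cuts; if so, $P$'s internal cuts come for free, and otherwise the remaining obligation for $P$ is the purely local interval cover of (ii), fed only by links whose image is incident to $p_P$. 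Making this rigorous requires arguing that these ``joint coverage'' flags can be maintained with polynomially many states---using that the boundary cuts are minimum cuts and are controlled by the part of the tree decomposition around $p_P$ together with the standard cut-tracking machinery---and, dually, that a link whose image is incident to two distinct super-vertices is updated consistently in both of the corresponding frontiers. Once this coordination is in place, correctness follows from Steps~1 and~2, and the running time is $n^{\mathcal{O}(\tau)}\,k^{\mathcal{O}(\tau)}$, which is polynomial for fixed~$\tau$.
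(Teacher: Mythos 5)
Your plan follows the same route as the paper's proof: split the $k$-cuts into those preserved by contracting the snug paths (handled by a connectivity-type DP over a width-$\tau$ tree decomposition of the contracted graph, cf.\ \cref{lemma:solution_contract_snug_chains}) and the internal snug shores of each path (a nested chain, so that each link covers a contiguous interval of them), and track both in one dynamic program. However, two pieces of your state design would fail as written. First, the per-path ``frontier'' consisting of a \emph{single} integer cannot represent the coverage that actually accumulates: by \cref{lemma:link_furthest_out,lemma:link_deepest_in}, a link whose other endpoint lies on the ``inner'' side $S_0$ of a snug path covers a \emph{prefix} of its internal shores, while one whose other endpoint lies in $V\setminus S_{t+1}$ covers a \emph{suffix}, and in a tree-decomposition DP these links are committed in an order unrelated to the left-to-right order along the path. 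Hence at an intermediate bag the covered set is in general a prefix together with a suffix; the paper's state is exactly such a pair (the family $\mathcal{Q}(q)$ of sets $\{1,\dots,q_1\}\cup\{q_2,\dots,q-1\}$), still only $n^{\mathcal{O}(\tau)}$ states, but a one-pointer sweep either rejects feasible cheap solutions or cannot certify feasibility. Second, ``recording which loops at snug super-vertices have been bought'' is exponential state if meant literally, and if their effect is instead folded into the frontier one would need arbitrary unions of intervals (path-internal links cover middle intervals). The paper sidesteps both issues by never buying path-internal links while the corresponding node sits in a bag: all links with both endpoints in a snug path $w$, together with (by the domination lemmas) only $\mathcal{O}(1)$ candidate links from $w$ to each node still in the bag, are purchased at the \emph{forget} bag of $w$, where the residual uncovered portion is a plain interval-cover problem solved on the spot.

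The point you flag as the main obstacle---links that cover all internal shores of a path $P$ without their image being incident to the contracted vertex of $P$---is indeed the crux, and your proposed remedy (read this information off the cut-size part of the state, since the boundary shores are $k$-cuts) is the right direction; but this is precisely the technically hard step, not routine bookkeeping. The paper proves it as \cref{lem:introduce_bag}: when a snug node $w$ is introduced, whether the links chosen in the subtree below already cover $w$'s shores is determined by comparing the stored value $\pi(W_w\cap X_j)$ with $|\delta_{E_j^{\downarrow}}(V(W_w))|$, and the proof needs that $V(W_w\cap W_j^{\downarrow})$ is the \emph{unique} minimizer of the corresponding cut problem (\cref{claim:unique_minimizer_below}), which in turn rests on \cref{lemma:only_cut_intersecting_snug_edge}. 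Without an argument of this kind, your ``joint coverage'' flags cannot be initialized or updated consistently, so the proposal has a genuine gap at exactly the step you anticipated; the rest is the paper's argument modulo the state-space repairs above.
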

While \cref{theorem:compute_edge_safe_cover_overview} allows us to compute a $(k+1)$-edge-safe cover $\mathcal{U}^\mathcal{P}$ in $\widetilde{G}_{\mathcal{P}}$ such that each piece $\widetilde{G}_{\mathcal{P}}/(E_{\mathcal{P}}\dot{\cup} L_{\mathcal{P}})[V_{\mathcal{P}}\setminus U^\mathcal{P}_i]$ has bounded treewidth, we point out that its lift $\mathcal{U}$ to $G$ will not necessarily have bounded snug-treewidth in each piece. The reason for this is that a snug path of $G$ contained in $G/(E\dot{\cup}L)[V\setminus U_i]$ does not necessarily constitute a snug path of this subinstance. The latter can only be guaranteed for snug paths that are ``far enough from the boundary of $U_i$'', ensuring that the corresponding snug shores also correspond to $k$-cuts in $G/(E\dot{\cup}L)[V\setminus U_i]$. 
Similarly, we cannot even guarantee the properties of a $k$-augmentation-safe cover for snug shores too close to the ``boundaries'' of the $U_i$.  
To deal with these issues, we have to ``buy'' a set of additional links covering the snug shores of the collection $\mathcal{Q}$ of snug paths situated close to the ``boundaries'' of the sets $(U_i)_{i\in I}$, making sure that we can safely keep these paths contracted.

Another challenge we need to address is how to bound the total cost of the links present in the contracted instance by $\mathcal{O}(c(\OPT))$ to make sure that the link set $L_{\mathcal{U}}$ is cheap. We point out that this is not an immediate consequence of \cref{lemma:bound_non_snug_vertices_and_snug_paths}.
The reason for this is that the contraction produces parallel links, and to ensure feasibility after un-contracting, it is not sufficient to keep only a cheapest copy.
Instead, we work with a carefully ``thinned'' link set $L^\star$, ensuring that $L^\star_{\mathcal{U}}$ is cheap, and combining it with solutions to the subinstances guarantees feasibility. 
Our main technical result is \cref{theorem:PTAS_key_theorem}, which we prove in \cref{sec:PTAS_WCAP}. 
The link set $L^{\star\star}$ in the theorem below contains both $L_{\mathcal{U}}^{\star}$ and the above-mentioned links we buy to cover the snug shores of the collection $\mathcal{Q}$ of snug paths situated close to the boundaries.
\begin{restatable}{theorem}{theoremPTASkeytheorem}\label{theorem:PTAS_key_theorem}
Let $\epsilon\in(0,1)$. Given an instance $(G,L,c)$ of planar $k$-WCAP with cost ratio $\Delta$, we can, in polynomial time, compute $\mathcal{Q}\subseteq \mathcal{P}^G_{\rm{chain}}$, link sets $L^\star,L^{\star\star}\subseteq L$ such that the graph $(G+L^\star)/\mathcal{Q}$ is $(k+1)$-edge-connected, and a $k$-augmentation-safe cover $\mathcal{U}=(U_i)_{i\in I}$ for the instance $(G_\mathcal{Q}=(V_\mathcal{Q},E_{\mathcal{Q}}),\allowbreak L^\star_\mathcal{Q},c_\mathcal{Q})\coloneqq (G,L^\star,c)/\mathcal{Q}$ such that:
\begin{enumerate}[(i)]
	\item \label{item:size_of_cover} $|I|\in \mathcal{O}(|V(G)|)$;
	\item  \label{item:bounded_snug_treewidth_pieces} For $i\in I$, the subinstance $(G^i_\mathcal{Q},L^i_\mathcal{Q},c^i_\mathcal{Q})\coloneqq (G_\mathcal{Q},L^\star_\mathcal{Q},c_\mathcal{Q})/(E_\mathcal{Q}\dot{\cup}L^\star_\mathcal{Q})[V_\mathcal{Q}\setminus U_i]$ has snug-treewidth $\mathcal{O}(\frac{k\Delta^2}{\epsilon^2})$;
	\item \label{item:combine_to_good_solution} For each  $i\in I$, let $\OPT_i\subseteq L^\star$ correspond to an optimum solution to the subinstance $(G^i_\mathcal{Q},L^i_\mathcal{Q},c^i_\mathcal{Q})$. Then $F\coloneqq(\bigcup_{i\in I} \OPT_i)\cup L^{\star\star}$ is a solution to the instance $(G,L,c)$ with $c(F)\le (1+\epsilon)\cdot c(\OPT)$.
\end{enumerate}
\end{restatable}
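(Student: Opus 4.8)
The plan is to transport the $k$-edge-safe cover of \cref{theorem:compute_edge_safe_cover_overview} to the augmentation setting by working in the snug-path-contracted graph, after first \emph{thinning} the link set so that the links we are eventually forced to buy outright stay cheap. As explained before the statement, we may assume the instance feasible and $G$ minimally $k$-edge-connected, and we compute $\mathcal{P}^G_{\mathrm{chain}}$ in polynomial time. The key to the thinning is that, by \cref{lemma:only_cut_intersecting_snug_edge}, along a snug path $P=v_1\to\dots\to v_m$ the internal snug shores $S_2^{v_1}\subsetneq\dots\subsetneq S_2^{v_{m-1}}$ form a nested chain and every link of $L$ crosses a \emph{contiguous} subrange of them, so covering all internal snug shores of $P$ is an interval-covering problem; we let $C_P\subseteq L$ be a cheapest such cover and note that a greedy choice uses a set in which all but $\mathcal{O}(1)$ links have both endpoints on $P$ and hence become loops once $P$ is contracted. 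We then set $L^\star\coloneqq\bigcup_P C_P$ together with, for every $k$-cut of $G$ that is not an internal snug shore of any snug path, boundedly many covering links chosen so that $(k+1)$-edge-connectivity is preserved under un-contracting arbitrary subcollections of snug paths (keeping only a cheapest copy per parallel class does not suffice, see \cref{sec:augmentation_short}). Then $G+L^\star$ is $(k+1)$-edge-connected, and in $\widetilde{G}_{\mathcal{P}}\coloneqq(G+L^\star)/\mathcal{P}^G_{\mathrm{chain}}$ the non-loop links of $L^\star$ form a planar multigraph on $\mathcal{O}(n_k(G))=\mathcal{O}(|\OPT|)$ links (using \cref{lemma:bound_non_snug_vertices_and_snug_paths} and $2|\OPT|\ge n_k(G)$), so by the bounded cost ratio their total cost is $\mathcal{O}(\Delta\cdot c(\OPT))$.

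Next we apply \cref{theorem:compute_edge_safe_cover_overview} to $\widetilde{G}_{\mathcal{P}}$, with all non-link edge costs set to $0$ and parameter $\epsilon'=\Theta(\epsilon/\Delta)$, obtaining a $(k+1)$-edge-safe cover $\mathcal{U}^{\mathcal{P}}=(U_i^{\mathcal{P}})_{i\in I}$ with $|I|=\mathcal{O}(|V(\widetilde{G}_{\mathcal{P}})|)=\mathcal{O}(|V(G)|)$ (property~(i)), each piece of treewidth $\mathcal{O}(k/\epsilon')$, and total ``boundary-link'' cost at most $\epsilon'\cdot\mathcal{O}(\Delta\cdot c(\OPT))=\mathcal{O}(\epsilon\cdot c(\OPT))$. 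Recalling from \cref{sec:summary_k_safe_covers} that the $U_i^{\mathcal{P}}$ are unions of consecutive distance rings of the vertex-face graph, we let $\mathcal{Q}\subseteq\mathcal{P}^G_{\mathrm{chain}}$ be the snug paths whose contracted images fall within the $\mathcal{O}(k)$-wide zones around the ring boundaries used by $\mathcal{U}^{\mathcal{P}}$, so that every other snug path lies in the ``deep interior'' of some $U_i^{\mathcal{P}}$; a shifting/averaging argument over the choice of ring offsets is then used to additionally ensure that some subset of $\bigcup_{P\in\mathcal{Q}}C_P$ covering all internal snug shores of the paths in $\mathcal{Q}$ has cost $\mathcal{O}(\epsilon\cdot c(\OPT))$. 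Setting $G_{\mathcal{Q}}\coloneqq(G,L^\star,c)/\mathcal{Q}$, we define $U_i$ as the preimage of $U_i^{\mathcal{P}}$ under the contraction $G_{\mathcal{Q}}\to\widetilde{G}_{\mathcal{P}}$ that un-contracts the snug paths not in $\mathcal{Q}$; as this only reintroduces those paths' internal snug shores as cuts, each covered by the relevant $C_P\subseteq L^\star$, the graph $(G+L^\star)/\mathcal{Q}$ remains $(k+1)$-edge-connected.

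It then remains to check the three properties. For property~(ii): a snug path of $G$ lying in the deep interior of $U_i^{\mathcal{P}}$ stays a snug path of the subinstance $(G^i_{\mathcal{Q}},L^i_{\mathcal{Q}},c^i_{\mathcal{Q}})$ because the $k$-cuts forming its snug shores remain confined to $U_i$ and survive contracting $V_{\mathcal{Q}}\setminus U_i$; hence contracting the subinstance's snug paths reproduces the bounded-treewidth $i$-th piece of $\widetilde{G}_{\mathcal{P}}$ up to $\mathcal{O}(k)$-wide corrections near the ring boundaries, so $\mathrm{snugtw}(G^i_{\mathcal{Q}},L^i_{\mathcal{Q}})=\mathcal{O}(k/\epsilon'+k)=\mathcal{O}(k\Delta/\epsilon)$, within the claimed bound. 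That $\mathcal{U}$ is $k$-augmentation-safe (\cref{def:augmentation_safe_cover}) follows because the image in $\widetilde{G}_{\mathcal{P}}$ of any $k$-cut $S$ of $G_{\mathcal{Q}}$ is a connected cut (again by \cref{lemma:only_cut_intersecting_snug_edge}), so the $(k+1)$-edge-safe property of $\mathcal{U}^{\mathcal{P}}$ either puts many boundary links across it---which lift to a link of $L^\star_{\mathcal{U}}$ crossing $S$---or confines its cut to a single $U_i^{\mathcal{P}}$, hence to $U_i$. Finally, for property~(iii) we take $L^{\star\star}\coloneqq L^\star_{\mathcal{U}}$ together with the cheap cover of the $\mathcal{Q}$-shores from the previous step; feasibility of $F=(\bigcup_i\OPT_i)\cup L^{\star\star}$ for $(G,L,c)$ follows exactly as in \cref{lemma:glue_solutions_together} (the $k$-augmentation-safe property reduces the $k$-cuts of $G_{\mathcal{Q}}$ to ones crossed by $L^\star_{\mathcal{U}}$ or handled inside some $U_i$ by $\OPT_i$, the cover of $\mathcal{Q}$-shores handles the internal snug shores of the contracted paths, and all other $k$-cuts of $G$ are covered since $G+L^\star$ is $(k+1)$-edge-connected), and the cost bound follows as in \cref{lemma:cost_combined_solution}: $\sum_i c_i(\OPT_i)\le c(\OPT)$ because $\OPT$ with the outside of $U_i$ contracted is feasible for the $i$-th subinstance and the pieces overlap only in links of zero cost there, so adding $c(L^{\star\star})=\mathcal{O}(\epsilon\cdot c(\OPT))$ gives $c(F)\le(1+\epsilon)\cdot c(\OPT)$.

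I expect the main obstacle to be the joint design of $L^\star$ and $\mathcal{Q}$. On one side, $L^\star$ must carry enough---yet only boundedly many---parallel links that $(G+L^\star)/\mathcal{Q}$ stays $(k+1)$-edge-connected, and that $F$ is feasible, whichever snug paths we end up keeping contracted. On the other side, $\mathcal{Q}$ (essentially determined by the offsets of the distance rings) must be selected so that \emph{both} the cover's boundary links \emph{and} the links bought to cover the internal snug shores of the contracted paths together cost only an $\epsilon$-fraction of $c(\OPT)$; making the second part work requires the shifting argument together with a bound on how many contracted snug paths a single optimum link can interact with, and this interplay is also what pushes the snug-treewidth bound from the $\mathcal{O}(k\Delta/\epsilon)$ one would naively expect up to the stated $\mathcal{O}(k\Delta^2/\epsilon^2)$.
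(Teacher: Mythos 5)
Your high-level plan (contract snug paths, build a safe cover in the contracted graph, select a set $\mathcal{Q}$ of paths that stay contracted, lift) matches the paper's. However, there are two substantive gaps that I don't see how to close along the route you sketch, and one piece of the argument as stated is incorrect.

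First, you apply \cref{theorem:compute_edge_safe_cover_overview} (which measures only \emph{edge} costs) and then defer the crucial part --- ensuring that the internal snug shores of the paths in $\mathcal{Q}$ can be covered cheaply --- to an unstated ``shifting/averaging argument over the choice of ring offsets.'' This is not a minor detail: it is exactly the point where the argument lives or dies, and the paper handles it not by re-averaging on top of \cref{theorem:compute_edge_safe_cover_overview} but by proving a strictly stronger safe-cover theorem, namely \cref{theorem:compute_k_vertex_safe_cover_with_vertex_weights}, which additionally accepts vertex weights. The snug-path covers $L_P$ of \cref{lemma:link_sets_covering_snug_chains} are encoded as weights $w(u_P)=c(L_P)$ on the contracted vertices, so property~\ref{k-vertex_safe_edge_and_vertex_costs} of that theorem simultaneously guarantees that the boundary links \emph{and} the covering cost of the paths landing near the boundary are small. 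Without the vertex-weight extension, there is no clean way to guarantee that a single offset makes both quantities small at once; you gesture at this yourself in your last paragraph but do not supply the argument.

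Second, your $L^\star\coloneqq\bigcup_P C_P$ (plus a bounded set of ``extra'' links) is thinned far too aggressively, and this breaks property~\ref{item:combine_to_good_solution}. For the cost bound you claim ``$\sum_i c_i(\OPT_i)\le c(\OPT)$ because $\OPT$ with the outside of $U_i$ contracted is feasible for the $i$-th subinstance,'' but the $i$-th subinstance only admits links in $L^\star$, and $\OPT$ need not be a subset of $L^\star$. A minimal interval cover $C_P$ throws away almost all links with endpoints inside $P$, so a link of $\OPT$ may simply be unavailable in the subinstance, and you have no mechanism to replace it. The paper's $L^\star$ is constructed precisely to avoid this: it starts from the sparsified-but-rich set $\tilde{L}\coloneqq\overline{L}\setminus L_{\rm snug}\cup L^\circ_{\rm snug}$ (where $\overline{L}$ comes from \cref{lemma:reduction_to_sparse_instances}), and then \emph{adds back} all links incident to the snug paths outside $\mathcal{Q}$, so that the explicit solutions $F_i$ built from $\overline{\OPT}$ --- with $L_{\rm snug}$ links between two $\mathcal{Q}$-paths replaced by their $L^\circ_{\rm snug}$ representative --- are genuinely contained in $L^\star$. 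Your interval-cover-only $L^\star$ does not support this translation, and there is no obvious alternative way to obtain $\sum_i c_i(\OPT_i)\le(1+\mathcal{O}(\epsilon))c(\OPT)$.

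Finally, you define $\mathcal{Q}$ geometrically (paths whose contracted images fall in $\mathcal{O}(k)$-wide boundary zones), but the paper defines $\mathcal{Q}$ via the case analysis of \cref{lemma:interaction_of_snug_shores_with_cover}: a path stays contracted iff its contracted image lies in $V'_{\mathcal{U}'}\cup\Gamma(V'_{\mathcal{U}'})$ (case~\ref{item:u_P_in_V_U_or_neighborhood}) or iff all its snug shores are already covered by boundary links (case~\ref{item:all_snug_cuts_covered}). The second of these cases is not a geometric condition, and without it \cref{lemma:augmentation_safe_cover_respecting_snug_paths} does not give that the lifted $\mathcal{U}/\mathcal{Q}$ is a valid $k$-augmentation-safe cover; in particular the snug-treewidth bound for the pieces (your property~\ref{item:bounded_snug_treewidth_pieces}) relies on the uncontracted paths being deep in exactly one $U_i$, which the case analysis delivers but a purely geometric $\mathcal{Q}$ does not in general.
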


\section{Constructing $k$-safe covers}\label{sec:k_safe_covers}

In this section, we show how to construct good $k$-edge-safe covers, that is, we prove the following slight strengthening of \Cref{theorem:compute_edge_safe_cover_overview}.
The additional properties \Cref{item:inAtMostTwo,item:noEdgesAcross} will prove helpful in the running time analysis.

\begin{theorem}\label{theorem:compute_edge_safe_cover}
    Given a planar connected graph $G=(V,E)$ with edge costs $c\colon E \to \mathbb{R}_{\geq 0}$ and $\epsilon \in (0,1)$, we can compute, in time $\mathcal{O}(|V|+|E|)$, a $k$-edge-safe cover $\mathcal{U}=(U_i)_{i\in I}$ of $G$, together with the graphs $G_i\coloneqq (V,E)/E[V\setminus U_i]$ for $i\in I$, such that
\begin{enumerate}[(i)]
    \item \label{k-edge-safe_cover_costs_full} $c(E_{\mathcal{U}}) \le \epsilon \cdot c(E)$,
    \item \label{k-edge-safe_cover_treewidth_full} $G_i$ has treewidth at most $\frac{26k}{\epsilon}$ for each $i\in I$,
    \item\label{item:inAtMostTwo} every vertex is contained in at most $2$ sets $U_i$, and
    \item\label{item:noEdgesAcross} for distinct $i,j\in I$, no edge has one endpoint in $U_i \setminus U_j$ and one in $U_j \setminus U_i$.
\end{enumerate}
\end{theorem}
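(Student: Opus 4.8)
The plan is to turn the informal picture from \Cref{sec:summary_k_safe_covers} into a concrete construction and then verify the four properties. First I would fix a planar embedding, build the vertex--face graph $D_G$, pick an arbitrary root $r\in V$, and run a BFS in $D_G$ from $r$; then every $v\in V$ gets an even label $\mathrm{dist}_{D_G}(r,v)$ and every face an odd one. Fixing a ring width $w\coloneqq 2k$ and $\alpha_j\coloneqq(j-1)w$, the rings $R_j=\{v\in V:\mathrm{dist}_{D_G}(r,v)\in[2\alpha_j,2\alpha_{j+1})\}$ partition $V$ into bands of $w$ consecutive vertex layers. I would then pick $m\coloneqq\lceil 2/\epsilon\rceil+1$ and, for each offset $s$, group the rings into blocks of $m$ consecutive rings overlapping consecutively in single ``boundary rings'', where the boundary rings are taken every $m-1$ rings with the starting position dictated by $s$ (and the very first ring never chosen, to avoid a degenerate block at the root). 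The candidate cover $\mathcal{U}^{(s)}$ consists of these blocks; then $V^{(s)}_{\mathcal U}$ is the union of the boundary rings and $E^{(s)}_{\mathcal U}$ the set of edges incident to them, and I output the offset minimizing $c(E^{(s)}_{\mathcal U})$.

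Two elementary facts drive everything. Since adjacent vertices of $G$ are at $D_G$-distance at most $2$, each edge of $G$ joins vertices in the same or consecutive rings; and any two consecutive rings lie together in at least one block. These immediately give property~\ref{item:inAtMostTwo} (consecutive blocks share a ring and non-consecutive ones are disjoint), property~\ref{item:noEdgesAcross} (an edge between $U_i\setminus U_j$ and $U_j\setminus U_i$ would have to skip over the ring shared by $U_i$ and $U_{i+1}$), and the identity $V_{\mathcal U}=\bigcup(\text{boundary rings})$. For property~\ref{k-edge-safe_cover_costs_full} I would invoke the standard shifting/averaging argument: each ring is a boundary ring for exactly one offset, and each edge---incident to at most two rings---lies in $E^{(s)}_{\mathcal U}$ for at most two offsets, so $\sum_s c(E^{(s)}_{\mathcal U})\le 2c(E)$ and the chosen offset satisfies $c(E_{\mathcal U})\le\tfrac{2}{m-1}c(E)\le\epsilon\,c(E)$. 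For property~\ref{k-edge-safe_cover_treewidth_full}, observe that $V\setminus U_i$ splits into the inner part (the rings below $U_i$, which induces a connected subgraph because it is a union of lowest BFS layers of the connected graph $G$) and the outer components (each attached to the rest only through the top ring of $U_i$); contracting $E[V\setminus U_i]$ makes the inner part one vertex and each outer component one vertex, so $G_i$ is planar with all vertices within $O(wm)=O(k/\epsilon)$ BFS layers of the contracted inner vertex, and the classical treewidth-versus-layers bound with $w=2k$, $m\le 4/\epsilon$ gives treewidth at most $\tfrac{26k}{\epsilon}$. All of this is computable in $O(|V|+|E|)$ time, since $m$ is a constant and each ring affects the boundary sets of at most two offsets.

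The heart of the proof is checking that $\mathcal{U}$ is a $k$-edge-safe cover. Let $S$ be a connected cut with $r\notin S$. As $G[S]$ and $G[\bar S]$ are connected and adjacency changes the ring index by at most one, $S$ occupies a contiguous range of rings, and $\delta(S)$ only touches rings within one more ring of that range. If all rings touched by $\delta(S)$ lie in the ring range of a single block $U_i$, then $\delta(S)\subseteq E[U_i]$ and condition~\ref{def:k_edge_safe_2} of \Cref{def:k-edge-safe_cover} holds. Otherwise, by the placement of boundary rings there is a boundary ring $R_b$ with $\delta(S)$ touching rings strictly below $b$ and rings strictly above $b$. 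I would then show that both $S$ and $\bar S$ contain a vertex in every vertex layer of $R_b$ (using $r\in\bar S$ to pin the inner side, the above/below witnesses to pin the outer side, and the fact that a connected subgraph meeting two BFS layers meets all layers in between). Finally, at each of the $w$ layers of $R_b$ I would extract an edge of $\delta(S)$ incident to $R_b$---e.g.\ by following a shortest $D_G$-path between an $S$-vertex and a $\bar S$-vertex of that layer and taking the edge where the sides switch---these edges being distinct for distinct layers; hence $|\delta(S)\cap E_{\mathcal U}|\ge w\ge k$ since $R_b\subseteq V_{\mathcal U}$, which is condition~\ref{def:k_edge_safe_1}.

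I expect the last step to be the main obstacle: making the claim ``both sides of $S$ traverse all layers of $R_b$'' fully rigorous, handling the boundary cases where the cut only barely protrudes past $R_b$ (which may cost a few layers, absorbed by taking $w=\Theta(k)$ large enough), and exhibiting an honest injection from the layers of $R_b$ to distinct $\delta(S)$-edges incident to it. The remaining ingredients---the vertex--face graph, the shifting argument, the Baker-type treewidth bound, and the linear-time bookkeeping---are by-now-standard and should go through routinely.
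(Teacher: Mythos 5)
Your construction is essentially the paper's: partition $V$ into rings $R_j$ by $D_G$-distance, choose a boundary-ring offset by a shifting/averaging argument, form overlapping blocks, and bound the treewidth of $G/E[V\setminus U_i]$ via an outerplanarity argument (\cref{lemma:treewidth_G_i}). Properties \ref{k-edge-safe_cover_costs_full}, \ref{item:inAtMostTwo}, \ref{item:noEdgesAcross}, and the ``easy'' case $\delta(S)\subseteq E[U_i]$ are all handled the same way the paper handles them, and your preliminary observation that both $S$ and $\bar S$ reach every vertex layer of the boundary ring $R_b$ is correct and provable along the lines you indicate.

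The gap is exactly where you flag it: extracting $k$ distinct edges of $\delta(S)\cap E_{\mathcal U}$ from the fact that $S$ and $\bar S$ both meet every layer. Your proposed extraction --- ``following a shortest $D_G$-path between an $S$-vertex and a $\bar S$-vertex of that layer and taking the edge where the sides switch'' --- is not well-defined: a $D_G$-path alternates $G$-vertices and faces, so there is no $G$-edge on it to take. One can patch this by noting that when the path switches sides between two consecutive $G$-vertices on a common face $f$, the boundary cycle of $f$ must contain some $\delta(S)$ edge; but then that edge lives near $f$, and $f$ can lie far from level $\ell$ (a shortest $D_G$-path between two level-$\ell$ vertices is not confined to level $\ell$), so the extracted edge need not be incident to $R_b$. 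More fundamentally, ``$S$ and $\bar S$ both have a vertex at level $\ell$'' does not by itself yield a $\delta(S)$-edge touching level $\ell$. The paper closes this by working directly with the structure of $\delta(S)$: by \cref{obs:cuts_dual_cycle} it is a cycle $C$ in the planar dual, and along either of the two $f_1$-$f_2$ halves of $C$ (with $f_1,f_2$ realizing the min/max face distance), an intermediate-value argument on the face-distance function picks out, for each $\gamma$ in a range of length $k$, a dual edge whose two incident faces are at distances $2\gamma+1$ and $2\gamma+3$; the corresponding primal edge then has \emph{both} endpoints at distance exactly $2\gamma+2$, hence in $V_{\mathcal U}$, and the $k$ resulting edges are automatically distinct because they sit at distinct distances. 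You should replace your extraction step with this dual-cycle argument; without it the ``heart of the proof'', as you call it, is not established.
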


In order to also be able to handle $k$-vertex-connectivity, we also introduce the notion of a \emph{$k$-vertex-safe cover}, analogous to $k$-edge-safe covers.
For  a graph $G=(V,E)$, an edge set $F\subseteq E$ and a vertex set $S\subseteq V$, we denote by $\Gamma_F(S) \coloneqq \{ v \in V\setminus S : \{v,w\}\in F\text{ for some }w\in S\}$ the neighbors of $S$ with respect to edges in $F$.
For $F=E$, we write $\Gamma(S)\coloneqq \Gamma_E(S)$.
Moreover, if we want to be explicit about the underlying graph $G$, then we also write $\Gamma_G(S) \coloneqq \Gamma(S)$.
We use analogous notation for the set of edges crossing a cut $S$, i.e., for an edge set $F$, we denote by $\delta_F(S)$ the set of edges in $F$ with one endpoint in $S$ and one endpoint in $V\setminus S$. Moreover, we will write $\delta_G(S)$ to be explicit about the underlying graph $G$.

\begin{definition}[$k$-vertex-safe cover]\label{def:k_vertex_safe_cover}
Let $G=(V,E)$ be a graph and let $\mathcal{U}=(U_i)_{i\in I}$ be a collection of nonempty subsets of $V$ that cover $V$.
Let $V_{\mathcal{U}}$ be the set of vertices that appear in at least $2$ distinct $U_i$'s and let $E_\mathcal{U}$ be all edges with at least one endpoint in $V_{\mathcal{U}}$. 
We call $\mathcal{U}$ a \emph{$k$-vertex-safe cover of $G$} if for every connected cut $S\subseteq V$, we have
\begin{enumerate}[(i)]
    \item $|\Gamma_{E_\mathcal{U}}(S)|\ge k$, or   
    \item $\delta(S)\subseteq E[U_i]$ for some $i\in I$.
\end{enumerate}
\end{definition}

We remark that a $k$-vertex-safe cover is also a $k$-edge-safe cover, because $|\Gamma_{E_\mathcal{U}}(S)| \leq |\delta(S) \cap E_\mathcal{U}|$ for all $S\subseteq V$.
We will prove the existence of good $k$-vertex-safe covers, analogous to  \Cref{theorem:compute_edge_safe_cover} for $k$-edge-safe covers.

\begin{theorem}\label{theorem:compute_k_vertex_safe_cover}
Given a planar connected graph $G=(V,E)$ with edge costs $c: E \to \mathbb{R}_{\geq 0}$ and $\epsilon \in (0,1)$, we can compute, in time $\mathcal{O}(|V|+|E|)$, a $k$-vertex-safe cover $\mathcal{U}=(U_i)_{i\in I}$ of $G$, together with the graphs $G_i\coloneqq (V,E)/E[V\setminus U_i]$ for $i\in I$, such that
\begin{enumerate}[(i)]
    \item \label{k-vertex_safe_edge_costs} $c(E_{\mathcal{U}}) \le \epsilon \cdot c(E)$,
    \item \label{k-vertex_safe_tree_width}
        $G_i$ has treewidth at most $\frac{26k}{\epsilon}$ for each $i\in I$,
    \item\label{item:inAtMostTwo_vertex} every vertex is contained in at most $2$ sets $U_i$, and
    \item\label{item:noEdgesAcross_vertex} for distinct $i,j\in I$, no edge has one endpoint in $U_i \setminus U_j$ and one in $U_j \setminus U_i$.
\end{enumerate}
\end{theorem}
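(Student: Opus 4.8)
The plan is to run the very construction that establishes \Cref{theorem:compute_edge_safe_cover}, and to show that the cover it produces is already \emph{$k$-vertex-safe}, not merely $k$-edge-safe. Recall that construction: fix a root $r\in V$, pass to the vertex-face graph $D_G$, partition $V$ into \emph{rings} $R_j=\{v\in V:\mathrm{dist}_{D_G}(r,v)\in[2\alpha_j,2\alpha_{j+1})\}$ of width $\alpha_{j+1}-\alpha_j=\Theta(k)$, let each $U_i$ be a union of $\mathcal{O}(\frac1\epsilon)$ consecutive rings with $U_i$ and $U_{i+1}$ sharing exactly one ring, and fix the ring offsets by a Baker-shifting average so that $c(E_{\mathcal{U}})\le\epsilon\cdot c(E)$. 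Since the vertex set $V_{\mathcal{U}}$, the edge set $E_{\mathcal{U}}$, and the graphs $G_i=(V,E)/E[V\setminus U_i]$ are defined exactly as in \Cref{def:k-edge-safe_cover} (with which \Cref{def:k_vertex_safe_cover} agrees on all of these notions), the cost bound, the treewidth bound $\tfrac{26k}{\epsilon}$, the ``each vertex in at most two sets'' property, and the ``no edge joins $U_i\setminus U_j$ to $U_j\setminus U_i$'' property all hold verbatim, their proofs never invoking the safety condition. Thus the only genuinely new point is that $\mathcal{U}$ satisfies the defining condition of a $k$-vertex-safe cover: every connected cut $S$ meets (i) $|\Gamma_{E_{\mathcal{U}}}(S)|\ge k$ or (ii) $\delta(S)\subseteq E[U_i]$ for some $i$.

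So fix a connected cut $S$ with $\delta(S)\not\subseteq E[U_i]$ for every $i$; I want $|\Gamma_{E_{\mathcal{U}}}(S)|\ge k$. Writing $U_i$ as the vertices of $D_G$-distance in $[\beta_i,\gamma_i)$ to $r$, and $\lambda_{\min}\le\lambda_{\max}$ for the smallest and largest $D_G$-distance of an endpoint of an edge of $\delta(S)$, one has $\delta(S)\subseteq E[U_i]$ exactly when $\beta_i\le\lambda_{\min}\le\lambda_{\max}<\gamma_i$. Taking the largest $i$ with $\beta_i\le\lambda_{\min}$ (not the last index, as otherwise $\delta(S)\subseteq E[U_i]$), the assumed failure forces $\lambda_{\max}\ge\gamma_i$, whereas $\lambda_{\min}<\beta_{i+1}$ by the choice of $i$; hence $\delta(S)$ \emph{straddles} the overlap ring $T\coloneqq U_i\cap U_{i+1}$, consisting of the vertices of $D_G$-distance in $[\beta_{i+1},\gamma_i)$. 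Since every edge of $G$ joins two vertices whose $D_G$-distances to $r$ differ by at most $2$, an endpoint realizing $\lambda_{\min}$ either lies outside $S$ (so it is in $\Gamma(S)$, at distance $\lambda_{\min}$) or lies in $S$ (so the other end of that cut edge is in $\Gamma(S)$, at distance at most $\lambda_{\min}+2$), and symmetrically $\Gamma(S)$ contains a vertex at distance at least $\lambda_{\max}-2$. As $D_G$-distances of vertices of $G$ are even, $\Gamma(S)$ therefore contains a vertex at $D_G$-distance at most $\beta_{i+1}$ and one at $D_G$-distance at least $\gamma_i-2$.

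The heart of the matter is the structural fact that \emph{for every even integer $\ell$ between the smallest and the largest $D_G$-distance attained by $\Gamma(S)$, some vertex of $\Gamma(S)$ is at $D_G$-distance exactly $\ell$}; this is where the vertex-face graph, rather than the ordinary dual, is indispensable. Indeed, since $S$ is a connected cut, $\delta(S)$ is an inclusion-minimal cut and so its edges correspond to a cycle in the planar dual, which lets us list them cyclically as $e_1,\dots,e_q$ with $e_j$ and $e_{j+1}$ (indices mod $q$) on a common face $f_j$ of $G$. Letting $t_j$ be the endpoint of $e_j$ outside $S$, we get $\Gamma(S)=\{t_1,\dots,t_q\}$ and, as $t_j,t_{j+1}$ both lie on $f_j$, a closed walk $t_1,f_1,t_2,f_2,\dots,t_q,f_q,t_1$ in $D_G$. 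Consecutive vertices of this walk are adjacent in the bipartite graph $D_G$ and hence at $D_G$-distances differing by exactly $1$; such a closed walk attains every integer value between its minimum and its maximum, and the even values are attained only at the $t_j$'s, which proves the fact. Combining it with the two distance bounds from the previous paragraph, $\Gamma(S)$ contains a vertex at every even $D_G$-distance in $[\beta_{i+1},\gamma_i-2]$; each such vertex lies in $T\subseteq V_{\mathcal{U}}$, so every cut edge joining it to $S$ lies in $E_{\mathcal{U}}$ and the vertex lies in $\Gamma_{E_{\mathcal{U}}}(S)$. Distinct distances yield distinct vertices, so $|\Gamma_{E_{\mathcal{U}}}(S)|\ge\tfrac12(\gamma_i-\beta_{i+1})$, which is the width of the overlap ring $T$ and is at least $k$ since the ring widths in the construction are $\Theta(k)$ and can be taken to be at least $k$.

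I expect the structural fact together with the bookkeeping of additive constants to be the main obstacle: the edge-safe analysis already produces $k$ edges of $\delta(S)$ inside $E_{\mathcal{U}}$, but several of these may be incident to one and the same vertex, so the real work is to upgrade ``$\delta(S)$ crosses an overlap ring'' to ``$\Gamma(S)$ contains $k$ distinct vertices inside that ring'' — exactly what the closed-walk argument in $D_G$ buys, and exactly what fails for the ordinary dual. One must also verify that no enlargement of the ring widths beyond $\mathcal{O}(k)$ is required, so that the bounds $\tfrac{26k}{\epsilon}$ on treewidth and $\epsilon\cdot c(E)$ on $c(E_{\mathcal{U}})$ remain untouched; then the same construction, analyzed once for $\delta(S)\cap E_{\mathcal{U}}$ and once for $\Gamma_{E_{\mathcal{U}}}(S)$, proves \Cref{theorem:compute_edge_safe_cover} and \Cref{theorem:compute_k_vertex_safe_cover} together.
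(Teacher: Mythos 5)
Your proposal is correct and follows essentially the same route as the paper: the identical ring construction in the vertex-face graph $D_G$ with Baker-type offset averaging (the paper in fact proves the vertex-safe version directly, via \cref{theorem:compute_k_vertex_safe_cover_with_vertex_weights}, and derives the edge-safe statement from it), and the same discrete intermediate-value argument along the dual cycle of a connected cut to produce one distinct neighbor of $S$ in the overlap ring per even distance level. The only cosmetic difference is in how that level-by-level argument is organized — you run the intermediate-value step on the closed walk $t_1,f_1,t_2,f_2,\dots$ through the outside endpoints and faces, whereas the paper walks along the dual path of faces and pins both endpoints of an edge $e_\gamma$ at distance exactly $2\gamma+2$ whenever consecutive faces sit at distances $2\gamma+1$ and $2\gamma+3$ — and both yield the same count of $k$ distinct vertices in $\Gamma_{E_{\mathcal{U}}}(S)$.
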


The notion of $k$-vertex-safe covers captures the properties we need for applications involving $k$-vertex-connectivity.
For later extensions to planar $k$-CAP (in \Cref{sec:augmentation}), we will prove the following generalization of \Cref{theorem:compute_edge_safe_cover,theorem:compute_k_vertex_safe_cover} that, in addition to edge costs, also allows for incorporating vertex weights. 
\begin{theorem}\label{theorem:compute_k_vertex_safe_cover_with_vertex_weights}
Given a planar connected graph $G=(V,E)$ with vertex weights $w\colon V\rightarrow\mathbb{R}_{\ge 0}$, edge costs $c: E \to \mathbb{R}_{\geq 0}$, and $\delta \in (0,1)$, we can compute, in time $\mathcal{O}(|V|+|E|)$, a $k$-vertex-safe cover $\mathcal{U}=(U_i)_{i\in I}$ of $G$, together with the graphs $G_i\coloneqq (V,E)/E[V\setminus U_i]$ for $i\in I$, such that
\begin{enumerate}[(i)]
	\item \label{k-vertex_safe_edge_and_vertex_costs} $w(V_{\mathcal{U}}\cup\Gamma(V_{\mathcal{U}}))+ c(E_{\mathcal{U}}) \le \delta \cdot (w(V)+c(E))$, 
	\item \label{k-vertex_safe_stronger_tree_width}
    $G_i$ has treewidth at most $\frac{26k}{\delta}$ for each $i\in I$,
	\item\label{item:inAtMostTwo_stronger} every vertex is contained in at most $2$ sets $U_i$, and
	\item\label{item:noEdgesAcross_stronger} for distinct $i,j\in I$, no edge has one endpoint in $U_i \setminus U_j$ and one in $U_j \setminus U_i$.
\end{enumerate}
\end{theorem}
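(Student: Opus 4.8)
The plan is to carry out the ring-based construction sketched in the overview, but tracking vertex weights alongside edge costs. First I would fix a planar embedding of $G$ and form the vertex-face graph $D_G$ with vertex set $V \cup F$, and pick a root $r \in V$. Running BFS from $r$ in $D_G$ partitions $V$ into \emph{levels} $L_0, L_1, L_2, \dots$ according to $\mathrm{dist}_{D_G}(r, \cdot)$ (restricted to the $V$-side). The key geometric fact, which I would establish or invoke, is that for any two consecutive even-distance values $2a < 2b$, the vertex set $\{v \in V : \mathrm{dist}_{D_G}(r,v) \in [2a, 2b)\}$ induces a subgraph of treewidth $O(b-a)$, and moreover any connected cut whose boundary edges all lie strictly between two such ``cut levels'' is captured by one of the parts; this is the standard layering argument (à la Baker/Klein), with the vertex-face graph used so that \emph{vertex} separators — not just edge separators — behave well, which is what makes the output a $k$-\emph{vertex}-safe cover rather than merely a $k$-edge-safe cover.

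Next I would handle the cost/weight budget. For a shift parameter $s \in \{0, 1, \dots, t-1\}$ (with $t = \Theta(k/\delta)$), consider the family of ``cut levels'' at distances $\equiv 2s \pmod{2t}$ in $D_G$; these determine which levels become the overlap rings $R_j$ between consecutive $U_i$'s. For a given shift $s$, let $V_{\mathcal{U}}^{(s)}$ be the vertices lying in the overlap rings (equivalently, those appearing in two consecutive $U_i$'s), $E_{\mathcal{U}}^{(s)}$ the incident edges, and $\Gamma(V_{\mathcal{U}}^{(s)})$ their neighborhood. The crucial averaging step: as $s$ ranges over $\{0,\dots,t-1\}$, each vertex $v$ lies in $V_{\mathcal{U}}^{(s)}$ for $O(1)$ many values of $s$ — because the overlap ring has width $O(1)$ and the cut-level spacing is $2t$ — and similarly each vertex is in $\Gamma(V_{\mathcal{U}}^{(s)})$ for $O(1)$ values of $s$, and each edge is in $E_{\mathcal{U}}^{(s)}$ for $O(1)$ values of $s$. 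Hence $\sum_{s=0}^{t-1}\bigl(w(V_{\mathcal{U}}^{(s)} \cup \Gamma(V_{\mathcal{U}}^{(s)})) + c(E_{\mathcal{U}}^{(s)})\bigr) = O(1)\cdot (w(V) + c(E))$, so some shift $s^\star$ achieves $w(V_{\mathcal{U}}^{(s^\star)} \cup \Gamma(V_{\mathcal{U}}^{(s^\star)})) + c(E_{\mathcal{U}}^{(s^\star)}) \le \frac{O(1)}{t}(w(V)+c(E))$; choosing $t$ appropriately makes the right side at most $\delta\cdot(w(V)+c(E))$, giving property~\ref{k-vertex_safe_edge_and_vertex_costs}. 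For that shift, each $U_i$ is a union of $O(1/\delta)$ consecutive levels between two cut levels, so $G_i = (V,E)/E[V\setminus U_i]$ has treewidth $O(k/\delta)$; a careful accounting of the constants yields the stated bound $26k/\delta$, giving~\ref{k-vertex_safe_stronger_tree_width}. Properties~\ref{item:inAtMostTwo_stronger} and~\ref{item:noEdgesAcross_stronger} are immediate from the construction: consecutive $U_i, U_{i+1}$ overlap in exactly one ring and non-consecutive ones are disjoint, and any edge joins vertices in the same or adjacent levels, hence cannot have one endpoint privately in $U_i$ and one privately in $U_j$ for $i \ne j$.

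For the $O(|V|+|E|)$ running time, I would note that $D_G$ has size $O(|V|+|E|)$ (by planarity, $|F| = O(|V|)$ and $D_G$ has $O(|E|)$ edges), BFS is linear, computing all the per-shift budgets is linear by the $O(1)$-multiplicity bound, selecting $s^\star$ is linear, and forming the contracted graphs $G_i$ costs $O(|V_i| + |E_i|)$ each with the pieces overlapping $O(1)$-fold by~\ref{item:inAtMostTwo_stronger}, so linear in total. The main obstacle I anticipate is the proof that the layered parts genuinely form a $k$-vertex-safe cover — i.e., that every connected cut $S$ either has $\delta(S) \subseteq E[U_i]$ for some $i$, or has $|\Gamma_{E_{\mathcal{U}}}(S)| \ge k$: one must argue that if $S$'s boundary is not confined to a single part then $S$ must ``cross'' a full band of $k$ overlap-levels in $D_G$, and that crossing such a band in the vertex-face graph forces at least $k$ distinct neighbors of $S$ to lie in $V_{\mathcal{U}}$ (using that a connected cut corresponds to a closed curve in the embedding and a width-$k$ band in $D_G$ cannot be crossed by such a curve without hitting $k$ vertices on the $V$-side). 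The vertex-weight bookkeeping is then an easy add-on once the edge-cost version is in place, since $\Gamma(V_{\mathcal{U}})$ is controlled by the same $O(1)$-multiplicity argument; I would likely prove the edge version (Theorems~\ref{theorem:compute_edge_safe_cover} and~\ref{theorem:compute_k_vertex_safe_cover}) first and then obtain this theorem by re-running the averaging with the augmented objective $w(V)+c(E)$ and the augmented ``bought set'' $V_{\mathcal{U}} \cup \Gamma(V_{\mathcal{U}})$ together with $E_{\mathcal{U}}$.
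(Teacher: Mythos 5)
Your plan follows essentially the same route as the paper's proof: BFS layering of $V$ in the vertex-face graph $D_G$, overlap bands of width $k$ selected by averaging over shifts (with the vertex-weight term handled exactly as you describe, since $\Gamma(V_{\mathcal{U}})$ stays within adjacent levels), the dual-cycle argument that a connected cut not confined to a single piece must traverse a full overlap band and therefore has $k$ distinct neighbours in $V_{\mathcal{U}}$, treewidth of the contracted pieces via bounded outerplanarity, and a linear-time assembly of the $G_i$ using the $O(1)$-fold overlap. The only slip is bookkeeping: with level-granularity shifts and width-$k$ overlap bands each vertex and edge is counted $\Theta(k)$ times rather than $O(1)$, but since you take $t=\Theta(k/\delta)$ the averaging still gives the $\delta$ bound — the paper avoids the issue by shifting in units of width-$k$ rings, which is just a reparametrization of your argument.
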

Then \Cref{theorem:compute_edge_safe_cover,theorem:compute_k_vertex_safe_cover} follow by setting all vertex weights to $0$ and because every $k$-vertex-safe cover is a $k$-edge-safe cover.

Throughout our algorithm, we work with a fixed planar embedding of our graph $G$.
In our proof of \Cref{theorem:compute_k_vertex_safe_cover_with_vertex_weights}, we will exploit the following well-known correspondence between connected cuts and cycles in the planar dual.

\begin{observation}\label{obs:cuts_dual_cycle}
Let $G=(V,E)$ be a connected planar graph.
A cut $S \subsetneq V, S\neq \emptyset$ is connected if and only if $\delta(S)$ corresponds to a cycle in the planar dual of $G$.
\end{observation}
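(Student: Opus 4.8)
The plan is to pass to the planar dual and reduce the statement to a computation of cycle rank. Fix the given planar embedding of $G$, let $G^*$ denote its planar dual, and for $F\subseteq E$ write $F^*$ for the set of dual edges corresponding to $F$, viewed as a subgraph of $G^*$; what must be shown is that $S$ is a connected cut if and only if $\delta(S)^*$ is the edge set of a simple cycle of $G^*$. The quantity I would track is the \emph{cycle rank} of the subgraph $\delta(S)^*$ of $G^*$ (number of edges minus number of vertices plus number of components), and the goal of the main computation is to prove that it equals $c-1$, where $c$ denotes the number of connected components of $G-\delta(S)$.

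First I would record that $\delta(S)^*$ is always an \emph{even} subgraph of $G^*$, i.e.\ every dual vertex has even degree in it. This is immediate from a parity argument on facial walks: traversing the boundary walk of any face $f$ of $G$, one alternately meets vertices of $S$ and of $V\setminus S$, switching sides exactly when the walk crosses an edge of $\delta(S)$; since the walk is closed, the number of such crossings is even, and it equals the degree of $f$ in $\delta(S)^*$. (Equivalently, $\delta(S)$ is the modulo-$2$ sum of the stars $\delta(v)$ over $v\in S$, hence lies in the cut space of $G$, and $e\mapsto e^*$ maps the cut space of $G$ into the cycle space of $G^*$.) In particular $\delta(S)^*$ has no vertex of degree $1$, so each of its components containing an edge has cycle rank at least $1$.

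Next I would carry out the count $\operatorname{cyclerank}(\delta(S)^*)=c-1$. Applying Euler's formula to the plane graphs $G$ (connected) and $G-\delta(S)$ (with $c$ components) shows that deleting the edges of $\delta(S)$ decreases the number of faces by exactly $|\delta(S)|+1-c$. On the other hand, deleting $\delta(S)$ merges precisely those faces of $G$ that lie in a common connected component of the subgraph $\delta(S)^*$ of $G^*$; so if $\delta(S)^*$ has $p$ components containing an edge, spanning $q$ dual vertices in total, then the number of faces decreases by $q-p$. Equating the two expressions gives $q-p=|\delta(S)|+1-c$, and substituting into the definition of cycle rank yields $\operatorname{cyclerank}(\delta(S)^*)=|\delta(S)|-q+p=c-1$.

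Finally I would combine the pieces. Since $\delta(S)^*$ is even with cycle rank $c-1$, it is a single simple cycle if and only if $c=2$: if $c=2$ the cycle rank is $1$, which forces a single edge-containing component of cycle rank $1$, and a connected even graph of cycle rank $1$ has $|E|=|V|$ with all degrees even and positive, hence is $2$-regular, i.e.\ a simple cycle; conversely a simple cycle has cycle rank $1$, so $c=2$. Since $G-\delta(S)$ is the disjoint union of $G[S]$ and $G[V\setminus S]$, both nonempty because $\emptyset\ne S\subsetneq V$, the condition $c=2$ is exactly that both induced subgraphs are connected, i.e.\ that $S$ is a connected cut. The step I expect to need the most care is the face-merging count in the third paragraph: pinning down exactly which faces of $G$ coalesce under deletion of $\delta(S)$ — including the degenerate case in which an edge of $\delta(S)$ is a bridge, corresponding to a loop of $G^*$ that merges no faces — and checking consistency with the Euler-formula bookkeeping. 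Alternatively, one can sidestep this entirely by invoking the standard fact that $e\mapsto e^*$ carries the bonds of a connected plane graph bijectively onto the edge sets of the simple cycles of its dual, together with the already-noted characterization of connected cuts as exactly the bonds.
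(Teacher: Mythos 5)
Your argument is correct, but it is worth noting that the paper does not prove this observation at all: it is stated as a ``well-known correspondence'' and implicitly rests on the classical facts that, in a connected graph, connected cuts are exactly the minimal nonempty cuts (bonds) --- the paper records this equivalence in a footnote --- and that planar duality maps bonds bijectively onto edge sets of cycles of the dual. That is precisely the shortcut you mention in your closing sentence. What you supply instead is a self-contained proof: evenness of $\delta(S)^*$ via the parity of side-switches along facial walks, the identity that the cycle rank of $\delta(S)^*$ equals $c-1$ (with $c$ the number of components of $G-\delta(S)$) obtained by comparing Euler's formula for $G$ and $G-\delta(S)$ with the face-merging count, and the elementary fact that a connected even multigraph of cycle rank $1$ is $2$-regular, hence a cycle. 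The bookkeeping you flag as delicate is indeed the only place needing care, and it checks out: the faces of $G-\delta(S)$ correspond to the components of the spanning subgraph of the dual with edge set $\delta(S)^*$ (equivalently, deletion in $G$ is contraction in $G^*$, with a bridge of $G$ giving a loop of $G^*$ whose contraction, like its deletion, leaves the face count unchanged), which yields exactly your $q-p$ count; note this also handles correctly the degenerate cases where the ``cycle'' in the dual is a loop or a pair of parallel edges, which is the reading the observation needs (e.g.\ when $\delta(S)$ is a single bridge or a pair of parallel edges). So your route is more elementary and fully verifiable from first principles, at the cost of some Euler-formula bookkeeping, while the paper's (implicit) route buys brevity by invoking the standard bond--cycle duality.
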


Motivated by this observation, we work with the planar dual of the graph $G$.
To handle vertex-connectivity, it will be useful to work with the following variant of the planar dual:

\begin{definition}[vertex-face graph]
Let $G=(V,E)$ be a planar graph with a fixed planar embedding.
We define a bipartite graph $D_G$ with
\begin{itemize}
    \item vertex set $V\cup F$, where $F$ is the set of faces of the fixed planar embedding of $G$, and
    \item there is an edge between $v\in V$ and $f\in F$ if $v$ is a vertex on face $f$.
\end{itemize}
\end{definition}

Observe that this graph is bipartite and two faces have distance $2$ in the graph $D_G$ if and only if the cycles bounding them share a vertex.
(For comparison, in the planar dual of $G$ two faces are adjacent if and only if the cycles bounding them share an edge.)

To construct a $k$-vertex-safe cover, we fix an arbitrary vertex $r\in V$, which we call \emph{root}.
We remark that in $D_G$, every vertex has an even distance from the root while every face has an odd distance from $r$.
The sets $U_i$ belonging to our $k$-vertex-safe cover $\mathcal{U}$ will correspond to rings around the root $r$.
More precisely, $U_i$ will be the set of all vertices whose distance  from $r$ in $D_G$ is at least $2\alpha_i$ and less than $2\beta_i$ for some suitable $\alpha_i,\beta_i \in \mathbb{Z}_{\geq 0}$.
  
\begin{definition}[$[\alpha, \beta)$-ring]
For $\alpha, \beta \in \mathbb{Z}_{\ge 0}$ with $\alpha \leq \beta$, the $[\alpha, \beta)$-ring (centered at $r$) is the vertex set
\[
V_{[\alpha,\beta)} \coloneqq \Big\{ v\in V \colon \mathrm{dist}_{D_G}(r,v) \in [2\alpha,2\beta) \Big\}.
\]
We also define $V_{< \alpha} \coloneqq V_{[0,\alpha)}$, $V_{\geq\beta} \coloneqq V_{[\beta, \infty)}$, and $V_\alpha \coloneqq V_{[\alpha, \alpha+1)}$.
\end{definition}

Then every set $U_i$ in our $k$-vertex-safe cover will be defined as $U_i \coloneqq V_{[\alpha_i,\beta_i)}$ for some $\alpha_i,\beta_i \in \mathbb{Z}_{\geq 0}$.
In \Cref{theorem:compute_k_vertex_safe_cover}, we want that the graph $G_i\coloneqq (V,E)/E[V\setminus U_i]$ has small treewidth.
In order to establish this property, we will show that the graph $G_i = G/E[V\setminus V_{[\alpha_i,\beta_i)}]$ has treewidth at most $3(\beta_i-\alpha_i) + 5$.
To prove this, we need the following lemma:

\begin{lemma}\label{lem:G_alpha_connected}
Let $\alpha\in \mathbb{Z}_{\geq 1}$. Then the graph $G[V_{<\alpha}]$ is connected.    
\end{lemma}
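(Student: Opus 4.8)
The plan is to prove the statement by induction on $\alpha$, using the layered structure of the vertex-face graph $D_G$. Recall that in $D_G$ every vertex of $G$ is at an even distance from $r$ and every face at an odd distance; writing $V_j$ for the set of vertices at $D_G$-distance exactly $2j$ from $r$ (so $V_j = V_{[j,j+1)}$), we have $V_{<\alpha} = V_0 \cup V_1 \cup \dots \cup V_{\alpha-1}$ and $V_0 = \{r\}$. For the base case $\alpha = 1$, the graph $G[V_{<1}] = G[\{r\}]$ is a single vertex, hence connected.

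For the inductive step, I would assume $G[V_{<\alpha}]$ is connected; since $V_{<\alpha+1} = V_{<\alpha} \cup V_\alpha$ and $G[V_{<\alpha}]$ is a subgraph of $G[V_{<\alpha+1}]$, it suffices to show that every vertex $v \in V_\alpha$ lies in the same connected component of $G[V_{<\alpha+1}]$ as some vertex of $V_{<\alpha}$. Fix $v \in V_\alpha$ and take a shortest path from $r$ to $v$ in $D_G$. As $D_G$ is bipartite between $V$ and the face set $F$, this path alternates vertices and faces; since $v$ is at distance $2\alpha$, it has the form $r = u_0, f_1, u_1, \dots, f_\alpha, u_\alpha = v$, where (subpaths of shortest paths being shortest) $f_\alpha$ is at $D_G$-distance $2\alpha - 1$ and $u_{\alpha-1}$ at distance $2(\alpha-1)$. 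Thus $u_{\alpha-1} \in V_{\alpha-1} \subseteq V_{<\alpha}$, and both $u_{\alpha-1}$ and $v$ are incident to the face $f_\alpha$.

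Next I would observe that every vertex $u$ incident to $f_\alpha$ satisfies $\mathrm{dist}_{D_G}(r,u) \le \mathrm{dist}_{D_G}(r,f_\alpha) + 1 = 2\alpha$, hence $u \in V_{<\alpha+1}$. Finally, to connect $v$ to $u_{\alpha-1}$ inside $G[V_{<\alpha+1}]$, I would invoke the standard fact that in a connected plane graph the boundary of each face is a single closed walk; consequently the subgraph of $G$ spanned by all edges incident to $f_\alpha$ is connected, its vertex set lies in $V_{<\alpha+1}$ by the previous observation, and it contains both $v$ and $u_{\alpha-1}$, so these two vertices lie in one component of $G[V_{<\alpha+1}]$. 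Combined with the induction hypothesis that all of $V_{<\alpha}$ lies in a single component of $G[V_{<\alpha}] \subseteq G[V_{<\alpha+1}]$, this completes the step. The one step deserving genuine care is the appeal to the facial-walk structure: because the lemma assumes only connectivity of $G$ (not $2$-connectivity), the boundary of $f_\alpha$ may traverse a vertex or edge more than once, so one should argue via ``single closed walk $\Rightarrow$ connected incident subgraph'' rather than ``the boundary is a cycle''. Everything else is routine parity bookkeeping in $D_G$.
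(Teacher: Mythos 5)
Your proof is correct and takes essentially the same approach as the paper's: both locate the last face $f_\alpha$ on a shortest $r$--$v$ path in $D_G$ and then walk along the facial boundary of $f_\alpha$, observing that every vertex on that boundary lies within the correct distance ring. The only cosmetic difference is the induction variable (you induct on $\alpha$, the paper inducts on $\mathrm{dist}_{D_G}(r,v)$ for a fixed $\alpha$), and your remark that the facial boundary is a closed \emph{walk} rather than a cycle matches the paper's careful phrasing for the non-$2$-connected case.
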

\begin{proof}
Let $v\in V_{<\alpha}$. We show the existence of an $r$-$v$ walk in $G[V_{<\alpha}]$ by induction on $\mathrm{dist}_{D_G}(r,v)$. If $\mathrm{dist}_{D_G}(r,v)=0$, then $r=v$ and there is nothing to show. Next, assume $\mathrm{dist}_{D_G}(r,v)>0$. Let $f\in F$ be the predecessor of $v$ on a shortest $r$-$v$ path in $D_G$ and let $w\in V$ be the predecessor of~$f$. Then $\mathrm{dist}_{D_G}(r,w)=\mathrm{dist}_{D_G}(r,v)-2$, $\mathrm{dist}_{D_G}(r,f)=\mathrm{dist}_{D_G}(r,v)-1$, and by walking along the edges of $G$ bounding the face~$f$, we obtain a $w$-$v$ walk $Q$ in $G$ with the property that every vertex on $Q$ is incident to $f$, and as such, has a distance of at most $\mathrm{dist}_{D_G}(r,f)+1=\mathrm{dist}_{D_G}(r,v)< \alpha$ to~$r$. Combining $Q$ with an $r$-$w$ walk in $G[V_{<\alpha}]$ concludes the induction step.

\end{proof}

The following result by \textcite{bodlaenderPartialKarboretumGraphs1998} shows that in order to prove that the graphs $G_i$ have small treewidth, it suffices to show that they are $k$-outerplanar for some small number $k$.
A graph $H$ is $k$-outerplanar if it can be embedded in the plane such that, after removing the vertices on the outer face, the remaining graph is $(k-1)$-outerplanar; moreover, $1$-outerplanar graphs are exactly the outerplanar graphs, where all vertices lie on the outer face.
Note that this is equivalent to saying that $H$ admits an embedding for which there exists a face $f_0$ such that every vertex has distance at most $2k-1$ from $f_0$ in $D_H$.

\begin{lemma}[\textcite{bodlaenderPartialKarboretumGraphs1998}]\label{lem:outerplanar_treewidth}
Any $k$-outerplanar graph has treewidth at most $3 k-1$.
\end{lemma}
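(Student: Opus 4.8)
The plan is to prove the bound by induction on $k$, peeling off one outerplanar layer at a time. For the base case $k=1$, a $1$-outerplanar graph is outerplanar, hence a spanning subgraph of a maximal outerplanar graph; the latter is a $2$-tree and therefore has treewidth exactly $2 = 3\cdot 1 - 1$, so the bound holds.

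For the inductive step, I would assume the statement for $(k-1)$-outerplanar graphs, take $k\ge 2$, and let $G$ be $k$-outerplanar with a fixed witnessing embedding. Since treewidth is the maximum of $1$ and the treewidths of the biconnected components, and biconnected components of a $k$-outerplanar graph are again $k$-outerplanar, I may assume $G$ is $2$-connected, so its outer face is bounded by a cycle $C$; set $W \coloneqq V(C)$. By definition of $k$-outerplanarity, $G - W$ (with the induced embedding) is $(k-1)$-outerplanar, so by the induction hypothesis it admits a tree decomposition $(T,(B_t)_{t\in V(T)})$ of width at most $3(k-1)-1 = 3k-4$, i.e.\ $|B_t|\le 3k-3$ for every $t$. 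The goal is then to combine this with the cyclic structure of $W$ so as to produce a tree decomposition of $G$ in which every bag has size at most $3k$, i.e.\ of width at most $3k-1$.

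To carry out the combination, I would first reduce to the case where every internal face of $G$ is a triangle: triangulating the internal faces only adds edges, which cannot decrease treewidth, and, since a chord placed inside a face joins two vertices of the same or consecutive peeling layers, it preserves $k$-outerplanarity. In the triangulated graph, the neighbours of any fixed $c\in W$ inside $G-W$ appear, in cyclic order around $c$, as a path $x_1,\dots,x_p$ of $G-W$, so $\{c\}\cup\{x_1,\dots,x_p\}$ together with this path is a ``fan'', of treewidth at most $2$. Using this, for each connected component $H$ of $G-W$ one splices the relevant fans into the decomposition of $H$ inherited from $(T,(B_t))$; the cyclic order in which the vertices of $W$ attach around $H$ — forced by planarity — should guarantee that, as one walks around $C$, only a bounded ``window'' of at most three vertices of $W$ is ever simultaneously active, so that at most three vertices of $W$ need to be inserted into any single bag. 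Stitching the decompositions of the various components of $G-W$ together through a tree decomposition of the outerplanar graph $G[W]$ (whose bags have size at most $3$) then yields a tree decomposition of all of $G$ of width at most $(3k-3)+3-1 = 3k-1$, after which verifying the three tree-decomposition axioms is routine bookkeeping.

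The main obstacle is exactly this combination step: making rigorous the claim that planarity confines the attachments of $W$ to the components of $G-W$ to a cyclic/laminar pattern, so that three extra vertices per bag always suffice and — crucially — the connectivity axiom survives when one stitches together the (possibly many) component decompositions together with the decomposition of $G[W]$. The clean arithmetic $3(k-1)-1+3 = 3k-1$ together with the tight base case $3\cdot 1-1 = 2$ strongly suggests that this peeling scheme is the intended one, with the embedding-based counting of how many vertices of $W$ can ``claim'' a given bag being the delicate point.
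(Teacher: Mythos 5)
There is nothing in the paper to compare against: \Cref{lem:outerplanar_treewidth} is imported verbatim from \cite{bodlaenderPartialKarboretumGraphs1998}, so the ``paper's proof'' is a citation. Bodlaender's own argument is, notably, \emph{not} a layer-peeling induction: he works with the whole $k$-outerplanar graph at once, constructs a maximal spanning forest adapted to the layer structure, bounds its vertex and edge ``remember numbers'' (roughly $3k-1$ and $2k$), and then applies a general lemma converting a spanning forest with bounded remember numbers into a tree decomposition. That global construction is precisely what supplies the compatibility between the decomposition and the embedding that your plan needs but does not have.

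That is also where your proposal has a genuine gap, which you yourself flag: the combination step. Your induction hypothesis hands you \emph{some} width-$(3k-4)$ tree decomposition of $G-W$, as a black box, with no control over how its bags sit relative to the cyclic order in which vertices of $W$ attach to the components of $G-W$. To cover an edge $\{c,x\}$ with $c\in W$ and $x$ inner, you must insert $c$ into a bag containing $x$; different neighbours of $c$ (and the cycle edges of $C$) may force $c$ into bags that are far apart in the tree, and the connectivity axiom then forces $c$ into every bag on the connecting paths. Nothing in the inductive hypothesis bounds how many distinct vertices of $W$ have such paths through a single bag, so the claim ``at most three vertices of $W$ are ever simultaneously active in a bag'' is not a consequence of planarity plus the IH — it is a compatibility property of a \emph{well-chosen} decomposition of $G-W$, and establishing that one can always choose such a decomposition (a much stronger inductive invariant, e.g.\ that the outer boundary of $G-W$ traverses the decomposition tree in a path-like way consistent with the embedding) is essentially the whole content of the theorem. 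Since $3k-1$ is tight (cf.\ \cite{kammerLowerBoundTreewidth2009}), there is no slack to absorb losses here. A secondary, fixable issue: the assertion that triangulating internal faces preserves $k$-outerplanarity also needs an argument (a chord drawn inside a face that merges into the outer face after one peel could a priori enclose vertices and push them to a deeper layer), but the stitching step is the real missing piece; as written, the proposal is a plausible plan rather than a proof, and the intended route in the cited source is a different one.
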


The bound from \Cref{lem:outerplanar_treewidth} has later been shown to be tight (see~\cite{kammerLowerBoundTreewidth2009}).
Using \Cref{lem:G_alpha_connected,lem:outerplanar_treewidth}, we can now establish an upper bound on the treewidth of  $G_i = G/E[V\setminus V_{[\alpha_i,\beta_i)}]$.

\begin{lemma}\label{lemma:treewidth_G_i}
Let $\alpha, \beta \in \mathbb{Z}_{\geq 0}$ with $\alpha < \beta$.
Then the graph  $G/E[V\setminus V_{[\alpha,\beta)}]$ is $(\beta-\alpha+2)$-outerplanar. 
In particular, $\mathrm{treewidth}(G/E[V\setminus V_{[\alpha,\beta)}]) \le 3\cdot(\beta-\alpha)+5$.
\end{lemma}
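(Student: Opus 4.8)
We need to show that the graph $H \coloneqq G/E[V\setminus V_{[\alpha,\beta)}]$ is $(\beta - \alpha + 2)$-outerplanar; the treewidth bound then follows immediately from Lemma~\ref{lem:outerplanar_treewidth}, since $3(\beta-\alpha+2)-1 = 3(\beta-\alpha)+5$.

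The plan is to exhibit a planar embedding of $H$ and a face $f_0$ such that every vertex of $H$ has distance at most $2(\beta-\alpha+2)-1 = 2(\beta-\alpha)+3$ from $f_0$ in $D_H$, which by the characterization recalled before Lemma~\ref{lem:outerplanar_treewidth} is equivalent to $(\beta-\alpha+2)$-outerplanarity. First, I would take the fixed planar embedding of $G$ and perform the contractions: contracting $E[V\setminus V_{[\alpha,\beta)}]$ merges the vertices of $V_{<\alpha}$ and those of $V_{\ge\beta}$ according to the connected components they form. The key point is that $G[V_{<\alpha}]$ is connected by Lemma~\ref{lem:G_alpha_connected} (when $\alpha\ge 1$; if $\alpha=0$ this set is empty and there is nothing to contract there), so all of $V_{<\alpha}$ collapses to a single vertex, call it $a$. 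The set $V_{\ge\beta}$ may split into several connected components, but each lies in a single face of the drawing of $G[V_{<\beta}]$ after removing the "inner" part — more precisely, each component of $G[V_{\ge\beta}]$ lies, in the planar embedding, inside one of the faces of the drawing of $G - V_{\ge\beta}$. Contracting each such component to a point keeps the drawing planar, and I would then choose $f_0$ to be the face of $H$ created around (or incident to) the vertex $a$ obtained from contracting $V_{<\alpha}$ (in the $\alpha = 0$ case, one takes $f_0$ to be the outer face, or any face incident to $V_0$).

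The heart of the argument is a distance estimate in $D_H$. A vertex $v$ of $H$ that comes from a vertex in ring $j$, i.e.\ $\mathrm{dist}_{D_G}(r,v) = 2j$ with $\alpha \le j < \beta$, should have distance roughly $2(j-\alpha)$ from $f_0$ in $D_H$: walking "inward" from $v$ through the vertex-face graph, each step of length $2$ decreases the $D_G$-distance to $r$ by $2$, and once we reach ring $\alpha$ we are adjacent (in $D_G$, hence in $D_H$ up to the contraction) to a face touching $a$. So I would argue by induction on $j - \alpha$, mirroring the walk-construction in the proof of Lemma~\ref{lem:G_alpha_connected}: given $v$ in ring $j$, take its predecessor face $f$ and then vertex $w$ on a shortest $r$–$v$ path in $D_G$; $w$ lies in ring $j-1$, the face $f$ survives in $H$ (it is bounded by edges not all of which are contracted, since it touches both rings), and this gives a path of length $2$ in $D_H$ from $v$ to $w$, completing the induction. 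The vertices coming from $V_{\ge\beta}$ are each adjacent to some face bounded partly by ring-$(\beta-1)$ vertices, so they sit at distance at most $2 + 2(\beta-1-\alpha) + 1$ or so from $f_0$; the "$+2$" slack in $(\beta-\alpha+2)$ is exactly what absorbs these two collapsed layers ($a$ and the components from $V_{\ge\beta}$) plus the off-by-one coming from $f_0$ being a face rather than a vertex.

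The main obstacle I anticipate is handling the contracted faces carefully: when we contract $E[V\setminus V_{[\alpha,\beta)}]$, some faces of $G$ may disappear (those all of whose bounding edges get contracted) or merge, and I must make sure that (a) the faces used as "stepping stones" in the inductive walk are not among those destroyed — this holds because any face incident to a vertex in ring $j$ with $\alpha\le j<\beta$ has at least one uncontracted bounding edge — and (b) the final drawing of $H$ is genuinely planar with $f_0$ as claimed, which requires the observation that the components of $G[V_{\ge\beta}]$ do not "interleave" across faces, a consequence of planarity of $G$ together with connectivity of $G[V_{<\beta}]$ (again Lemma~\ref{lem:G_alpha_connected} applied with $\beta$ in place of $\alpha$). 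Once these topological bookkeeping points are pinned down, the distance bound and hence the outerplanarity follow as sketched, and the treewidth bound is then immediate from Lemma~\ref{lem:outerplanar_treewidth}.
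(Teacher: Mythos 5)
Your proposal takes essentially the same route as the paper's proof: work with the embedding of $G$ induced on the contracted graph, bound the distance in the vertex--face graph of every vertex from the super-vertex arising from $V_{<\alpha}$ (your ring-by-ring induction is just an inductive phrasing of the paper's observation that the tail of a shortest $r$-$v$ path in $D_G$ beyond ring $\alpha$ survives contraction, since its intermediate faces touch only uncontracted vertices), handle the contracted components of $V_{\ge\beta}$ via their adjacency to ring $\beta-1$, then choose $f_0$ incident to that super-vertex and conclude with Lemma~\ref{lem:outerplanar_treewidth}. The accounting you sketch matches the paper's bound, so this is correct and not a genuinely different argument.
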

\begin{proof}
The fixed embedding of $G$ gives rise to an embedding of $G'\coloneqq G/E[V\setminus V_{[\alpha,\beta)}]$. 
By \Cref{lem:G_alpha_connected}, the subgraphs $G[V_{< \alpha}]$ and $G[V_{< \alpha+1}]$ of $G$ are connected  and thus, every vertex in $V_{\alpha}$ is connected to the super-vertex $r'$ arising from the contraction of $V_{< \alpha}$ in $G'$. 
We will show that every vertex $v$ of $G'$ has distance at most $2(\beta-\alpha+1)$ to the vertex $r'$ in $D_{G'}$.
Then, choosing $f_0$ as a face of $G'$ incident to $r'$, every vertex of $G'$ has a distance of at most $2\cdot(\beta-\alpha+2)-1$ to $f_0$ in $D_{G'}$. 
Hence, $G'$ is $(\beta-\alpha+2)$-outerplanar. 
In particular, by \Cref{lem:outerplanar_treewidth}, the graph $G'$ has treewidth at most $3\cdot(\beta-\alpha)+5$, as claimed.

It remains to prove that every vertex $v$ of $G'$ has distance at most $2(\beta-\alpha+1)$ to the vertex $r'$ in $D_{G'}$.
Let $v\in V_{[\alpha+1,\beta)}$ and let $P$ be a shortest $r$-$v$ path in $D_G$ with vertices $v_0=r,f_1,v_2,f_3,\dots,v_d=v$ visited by $P$ in this order.
As $f_{2\alpha+1}$,$f_{2\alpha+3},\dots,f_{d-1}$ are only incident to vertices in $V_{[\alpha,\beta)}$ and none of these are contracted, the $v_{2\alpha}$-$v$ subpath of $P$ also exists in $D_{G'}$. 
Hence, $\mathrm{dist}_{D_{G'}}(r',v)\le 2\cdot(d-\alpha)+2\le 2(\beta-\alpha)$. 
Finally, each super-vertex $w$ corresponding to a connected component of $G[V\setminus V_{[\alpha,\beta)}]$ other than $V_{<\alpha}$ must be adjacent to some vertex in $v\in V_{[\alpha,\beta)}$, implying $\mathrm{dist}_{D_{G'}}(r',w)\le \mathrm{dist}_{D_{G'}}(r',v)+\mathrm{dist}_{D_{G'}}(v,w)\le 2(\beta-\alpha)+2=2(\beta-\alpha+1)$. 

\end{proof}

We conclude this section by showing that good $k$-vertex-safe covers exist and can be constructed quickly, as claimed in \cref{theorem:compute_k_vertex_safe_cover_with_vertex_weights}.
See \cref{fig:safe_cover} for an illustration of the construction.

\begin{figure}
\begin{center}
\begin{tikzpicture}[
xscale=0.55,
yscale=0.55,
vertex/.style={thick,draw=white,fill=black,circle,minimum size=4,inner sep=2pt},
edge/.style={line width=2pt},
defaultedge/.style={},
cut/.style={line width=1.5pt, opacity=0.5, green!70!black}
]

\foreach \copy/\yshift in {1/5.5,2/-5.5} {
\begin{scope}[shift={(0,\yshift)}]

\foreach \i/\color in {1/{yellow!90!black},2/{orange},3/{orange},4/{red!80!black},5/{red!80!black}} {
  \pgfmathsetmacro\numsegments{int(\i*7)}
  
  \foreach \j  [evaluate=\j as \block using int(ceil(\j/\i)), evaluate=\j as \offset using {int(mod(\j-1, \i))}] in {1,...,\numsegments}{
  \pgfmathsetmacro\r{2*\i + (\j)*(360/\numsegments)}
  \node[vertex,fill=\color] (v\copy-\i-\block-\offset) at (\r:\i) {};
  }
}

\foreach [evaluate=\block as \nextblock using {int(mod(\block, 7) + 1)}] \block in {1, ...,7}{
		\draw (v\copy-1-\block-0) -- (v\copy-1-\nextblock-0);
		\begin{scope}[edge]
		\draw (v\copy-2-\block-0) -- (v\copy-2-\block-1) --(v\copy-2-\nextblock-0);
		\draw (v\copy-3-\block-0) -- (v\copy-3-\block-1)  -- (v\copy-3-\block-2) -- (v\copy-3-\nextblock-0);
		\end{scope}
		\draw (v\copy-4-\block-0) --  (v\copy-4-\block-1)  -- (v\copy-4-\block-2) -- (v\copy-4-\block-3) -- (v\copy-4-\nextblock-0);
		\draw (v\copy-5-\block-0) --  (v\copy-5-\block-1)  -- (v\copy-5-\block-2) -- (v\copy-5-\block-3) -- (v\copy-5-\block-4)-- (v\copy-5-\nextblock-0);
		
		\begin{scope}[edge]
		\draw (v\copy-1-\block-0) --  (v\copy-2-\block-0);
		\draw (v\copy-1-\block-0) -- (v\copy-2-\block-1);
		
		\draw (v\copy-2-\block-0) --  node[vertex, pos=0.5, fill=orange]  {} (v\copy-3-\block-1);
                 \draw (v\copy-2-\block-1) -- (v\copy-3-\nextblock-0);
                 
                 \draw (v\copy-4-\block-3) -- (v\copy-3-\block-1);
                 \draw (v\copy-4-\block-3) -- (v\copy-3-\nextblock-0);
                 \draw (v\copy-3-\block-1) -- (v\copy-4-\block-1);
                 \end{scope}
                 
                 \draw(v\copy-4-\block-3) -- (v\copy-5-\block-3);
                 \draw(v\copy-4-\block-3) -- (v\copy-5-\block-4);
                 \draw(v\copy-4-\block-3) -- node[vertex, pos=0.7, fill=red!80!black]  {} (v\copy-5-\nextblock-1);
                 \draw (v\copy-4-\block-2) -- (v\copy-5-\block-1);
}
\end{scope}
}

\foreach \copy/\yshift/\angle in {1/5.5/-35,2/-5.5/145} {
\begin{scope}[shift={(0,\yshift)}]
\foreach \i/\color/\edgestyle [evaluate=\i as \lastoffset using int(\i-1)] in {
	6/{red!80!black}/defaultedge,
	7/{red!80!black}/defaultedge,
	8/{red!50!blue}/edge,
	9/{red!50!blue}/edge,
	10/{blue!90!black}/defaultedge%
} {
  \pgfmathsetmacro\numsegments{int(\i*7)}
  \pgfmathsetmacro\numnodes{int(\i*4)}
  
  \foreach \j  [
  	evaluate=\j as \block using int(ceil(\j/\i)),
	evaluate=\j as \offset using {int(mod(\j-1, \i))},
  	evaluate=\j as \lastblock using int(ceil((\j-1)/\i)),
	evaluate=\j as \lastoffset using {int(Mod(\j-2, \i))}
	] in {1,...,\numnodes}{
  \pgfmathsetmacro\r{2*\i + (\j)*(360/\numsegments)+\angle}
  \node[vertex, fill=\color] (v\copy-\i-\block-\offset) at (\r:\i) {};
  \ifthenelse{\j > 1}{
  	\draw[\edgestyle] (v\copy-\i-\lastblock-\lastoffset) -- (v\copy-\i-\block-\offset);
  }{} 
  }
}
\end{scope}

}

\foreach \i/\color/\edgestyle [evaluate=\i as \lastoffset using int(\i-1)] in {
	6/{red!80!black}/defaultedge,
	7/{red!80!black}/defaultedge,
	8/{red!50!blue}/edge,
	9/{red!50!blue}/edge,
	10/{blue!90!black}/defaultedge%
}{
	\foreach \fromcopy/\tocopy in {1/2, 2/1}{
	 	\path[bend left] (v\fromcopy-\i-4-\lastoffset) to
			node[vertex, pos=0.1, fill=\color] (v\fromcopy-\i-5-0) {}
			node[vertex, pos=0.2, fill=\color] (v\fromcopy-\i-5-1) {}
			node[vertex, pos=0.3, fill=\color] (v\fromcopy-\i-5-2) {}
			node[vertex, pos=0.4, fill=\color] (v\fromcopy-\i-5-3) {}
			node[vertex, pos=0.5, fill=\color] (v\fromcopy-\i-5-4) {}
			node[vertex, pos=0.6, fill=\color] (v\fromcopy-\i-5-5) {}
			node[vertex, pos=0.7, fill=\color] (v\fromcopy-\i-5-6) {}
			node[vertex, pos=0.8, fill=\color] (v\fromcopy-\i-5-7) {}
			node[vertex, pos=0.9, fill=\color] (v\fromcopy-\i-5-8) {}
		(v\tocopy-\i-1-0);
		\foreach \fromoffset [evaluate=\fromoffset as \tooffset using int(\fromoffset+1)] in {0,...,7}{
			\draw[\edgestyle] (v\fromcopy-\i-5-\fromoffset) -- (v\fromcopy-\i-5-\tooffset);
		}
		\draw[\edgestyle] (v\fromcopy-\i-4-\lastoffset) -- (v\fromcopy-\i-5-0);
		\draw[\edgestyle] (v\fromcopy-\i-5-8) -- (v\tocopy-\i-1-0);
	}
}

\foreach \copy in {1,2}{
\foreach [evaluate=\block as \nextblock using {int(mod(\block, 5) + 1)}] \block in {1, ...,4}{

		\draw (v\copy-6-\block-0) --node[vertex, pos=0.5, fill=red!80!black]  {}  (v\copy-7-\block-0);
		\draw (v\copy-6-\block-0) -- (v\copy-7-\block-1);
		\draw (v\copy-6-\block-2) -- (v\copy-7-\block-1);
		\draw (v\copy-6-\block-2) -- (v\copy-7-\block-3);
		\draw (v\copy-6-\block-2) -- (v\copy-7-\block-4);
		\draw (v\copy-6-\block-5) -- (v\copy-7-\block-5);
		\draw (v\copy-6-\block-5) -- (v\copy-7-\block-6);

	        \begin{scope}[edge]
		\draw (v\copy-7-\block-0) --   (v\copy-8-\block-1);
		\draw (v\copy-7-\block-2) --   (v\copy-8-\block-1);
		\draw (v\copy-7-\block-2) --   (v\copy-8-\block-3);
		\draw (v\copy-7-\block-5) --   (v\copy-8-\block-4);
		\draw (v\copy-7-\block-6) --   (v\copy-8-\block-6);
		\draw[draw=none] (v\copy-7-\block-6) --  node (x) [vertex, pos=0.5, fill=red!80!black]  {} (v\copy-8-\nextblock-0);
		\draw (x) -- (v\copy-8-\nextblock-0);
		\end{scope}
		\draw (v\copy-7-\block-6) -- (x);

                 \begin{scope}[edge]
                 \draw (v\copy-9-\block-1) -- (v\copy-8-\block-1);
                 \draw (v\copy-9-\block-3) -- (v\copy-8-\block-2);
                 \draw (v\copy-9-\block-3) -- (v\copy-8-\block-3);
                 \draw (v\copy-9-\block-4) -- (v\copy-8-\block-3);
                 \draw (v\copy-9-\block-4) -- (v\copy-8-\block-4);
                 \draw (v\copy-9-\block-6) -- (v\copy-8-\block-5);
                 \draw (v\copy-9-\block-7) -- (v\copy-8-\block-6);
                 \draw (v\copy-9-\block-7) -- (v\copy-8-\block-7);
                 \draw (v\copy-9-\block-8) -- (v\copy-8-\block-7);
                 \draw (v\copy-9-\block-8) -- (v\copy-8-\nextblock-0);
        
                 \draw(v\copy-9-\block-0) -- (v\copy-10-\block-0);
                 \draw(v\copy-9-\block-1) -- (v\copy-10-\block-1);
                 \draw(v\copy-9-\block-1) -- (v\copy-10-\block-2);
                 \draw(v\copy-9-\block-3) -- (v\copy-10-\block-3);
                 \draw(v\copy-9-\block-4) -- (v\copy-10-\block-3);
                 \draw(v\copy-9-\block-5) -- (v\copy-10-\block-3);
                 \draw(v\copy-9-\block-6) -- (v\copy-10-\block-5);
                 \draw(v\copy-9-\block-7) -- (v\copy-10-\block-7);
                 \draw(v\copy-9-\block-7) -- (v\copy-10-\block-8);
                 \draw(v\copy-9-\block-8) -- (v\copy-10-\block-9);
                 \draw[draw=none](v\copy-9-\block-8) -- node[vertex, pos=0.4, fill=red!50!blue] (x)  {} (v\copy-10-\nextblock-0);
                 \draw (v\copy-9-\block-8) -- (x);
                 \end{scope}
                 \draw (x) -- (v\copy-10-\nextblock-0);
}
}

\foreach \copy/\other/\nextblock in {1/2/1,2/1/1}{
\foreach  \block in {5}{
    		\draw (v\copy-6-\block-0) --node[vertex, pos=0.5, fill=red!80!black]  {}  (v\copy-7-\block-0);
		\draw (v\copy-6-\block-0) -- (v\copy-7-\block-1);
		\draw (v\copy-6-\block-2) -- (v\copy-7-\block-1);
		\draw (v\copy-6-\block-2) -- (v\copy-7-\block-3);
		\draw (v\copy-6-\block-3) -- (v\copy-7-\block-4);
		\draw (v\copy-6-\block-5) -- (v\copy-7-\block-5);
		\draw (v\copy-6-\block-5) -- (v\copy-7-\block-6);
	         \draw (v\copy-6-\block-7) -- (v\copy-7-\block-7);
		\draw (v\copy-6-\block-8) -- (v\copy-7-\block-7);

		\begin{scope}[edge]
		\draw (v\copy-7-\block-0) --   (v\copy-8-\block-1);
		\draw (v\copy-7-\block-2) --   (v\copy-8-\block-1);
		\draw (v\copy-7-\block-2) --   (v\copy-8-\block-3);
		\draw (v\copy-7-\block-5) --   (v\copy-8-\block-4);
		\draw (v\copy-7-\block-6) --   (v\copy-8-\block-6);
		\draw (v\copy-7-\block-7) --   (v\copy-8-\block-7);
		\draw (v\copy-7-\block-7) --   (v\copy-8-\block-8);
		\draw (v\copy-7-\block-8) --   (v\other-8-\nextblock-0);
                 
                 \draw (v\copy-9-\block-1) -- (v\copy-8-\block-1);
                 \draw (v\copy-9-\block-3) -- (v\copy-8-\block-2);
                 \draw (v\copy-9-\block-3) -- (v\copy-8-\block-3);
                 \draw (v\copy-9-\block-4) -- (v\copy-8-\block-3);
                 \draw (v\copy-9-\block-4) -- (v\copy-8-\block-4);
                 \draw (v\copy-9-\block-6) -- (v\copy-8-\block-5);
                 \draw (v\copy-9-\block-7) -- (v\copy-8-\block-6);
                 \draw (v\copy-9-\block-7) -- (v\copy-8-\block-7);
                 \draw (v\copy-9-\block-8) -- (v\copy-8-\block-7);
                 \draw (v\copy-9-\block-8) -- (v\other-8-\nextblock-0);
        
                 \draw(v\copy-9-\block-0) -- (v\copy-10-\block-0);
                 \draw(v\copy-9-\block-1) -- (v\copy-10-\block-1);
                 \draw(v\copy-9-\block-1) -- (v\copy-10-\block-2);
                 \draw(v\copy-9-\block-3) -- (v\copy-10-\block-3);
                 \draw(v\copy-9-\block-4) -- (v\copy-10-\block-3);
                 \draw(v\copy-9-\block-5) -- (v\copy-10-\block-3);
                 \draw(v\copy-9-\block-6) -- (v\copy-10-\block-5);
                 \draw(v\copy-9-\block-7) -- (v\copy-10-\block-7);
                 \draw(v\copy-9-\block-7) -- (v\copy-10-\block-8);
                 \draw(v\copy-9-\block-8) -- (v\copy-10-\block-8);
                 \draw[draw=none] (v\copy-9-\block-8) -- node[vertex, pos=0.3, fill=red!50!blue] (x) {} (v\other-10-\nextblock-0);
                 \draw (v\copy-9-\block-8) -- (x);
                 \end{scope}
                 \draw (x) -- (v\other-10-\nextblock-0);
}
}

\foreach \copy/\copyoffset in {
	1/0,
	2/18%
}{
	\foreach \i/\j in {%
		32/1,
		35/2,
		1/3,
		2/6,
		3/7,
		5/9,
		6/11,
		6/14,
		8/15,
		10/17,
		11/17,
		12/18,
		13/20,
		15/21,
		18/23,
		19/26,
		20/28,
		22/28,
		22/30,
		24/30%
	}{
		\pgfmathsetmacro{\innerblock}{int(Mod(ceil((\i+\copyoffset)/5)-1, 7)+1)}
		\pgfmathsetmacro{\inneroffset}{int(mod(\i+\copyoffset-1, 5))}
		\pgfmathsetmacro{\outerblock}{int(ceil(\j/6))}
		\pgfmathsetmacro{\outeroffset}{int(mod(\j-1, 6))}
		\draw (v\copy-5-\innerblock-\inneroffset) -- (v\copy-6-\outerblock-\outeroffset);
	}
}

\node[vertex,blue] (r) at (-10,0) {};
\node[above=4pt] at (r) {\large$r$};
\draw (r) -- (v1-10-5-2);
\draw (r) -- (v1-10-5-4);
\draw (r) -- (v1-10-5-6);
\draw (r) -- (v1-10-5-8);

\draw (v1-6-5-5) -- (v2-5-2-4);
\draw (v1-6-5-6) -- (v2-5-3-0);
\draw (v2-6-5-5) -- (v1-5-6-4);
\draw (v2-6-5-5) -- (v1-5-6-2);
\draw (v2-6-5-6) -- (v1-5-7-0);
\draw (v1-5-5-3) -- (v2-5-2-1);

\begin{scope}[yshift=9mm]
\node[blue, right] at (-11,14.5) {$U_0 \setminus U_1$};
\node[blue!50!red,right] at (-11,13.8) {$U_0 \cap U_1$};
   \node[red!80!black, right] at (-11,13.1) {$U_1\setminus (U_0\cup U_2)$};
\node[orange,right] at (-11,12.4) {$U_1 \cap U_2$};
\node[yellow!70!black, right] at (-11,11.7) {$U_2\setminus U_1$};
\end{scope}
\end{tikzpicture}
\end{center}
\caption{\label{fig:safe_cover}
Example construction of a $k$-vertex-safe cover for $k=2$, $a^\ast =1$, and $M=3$ in the proof of \Cref{theorem:compute_k_vertex_safe_cover_with_vertex_weights}.
Edges in $E_{\mathcal{U}}$ are drawn bold.
}
\end{figure}
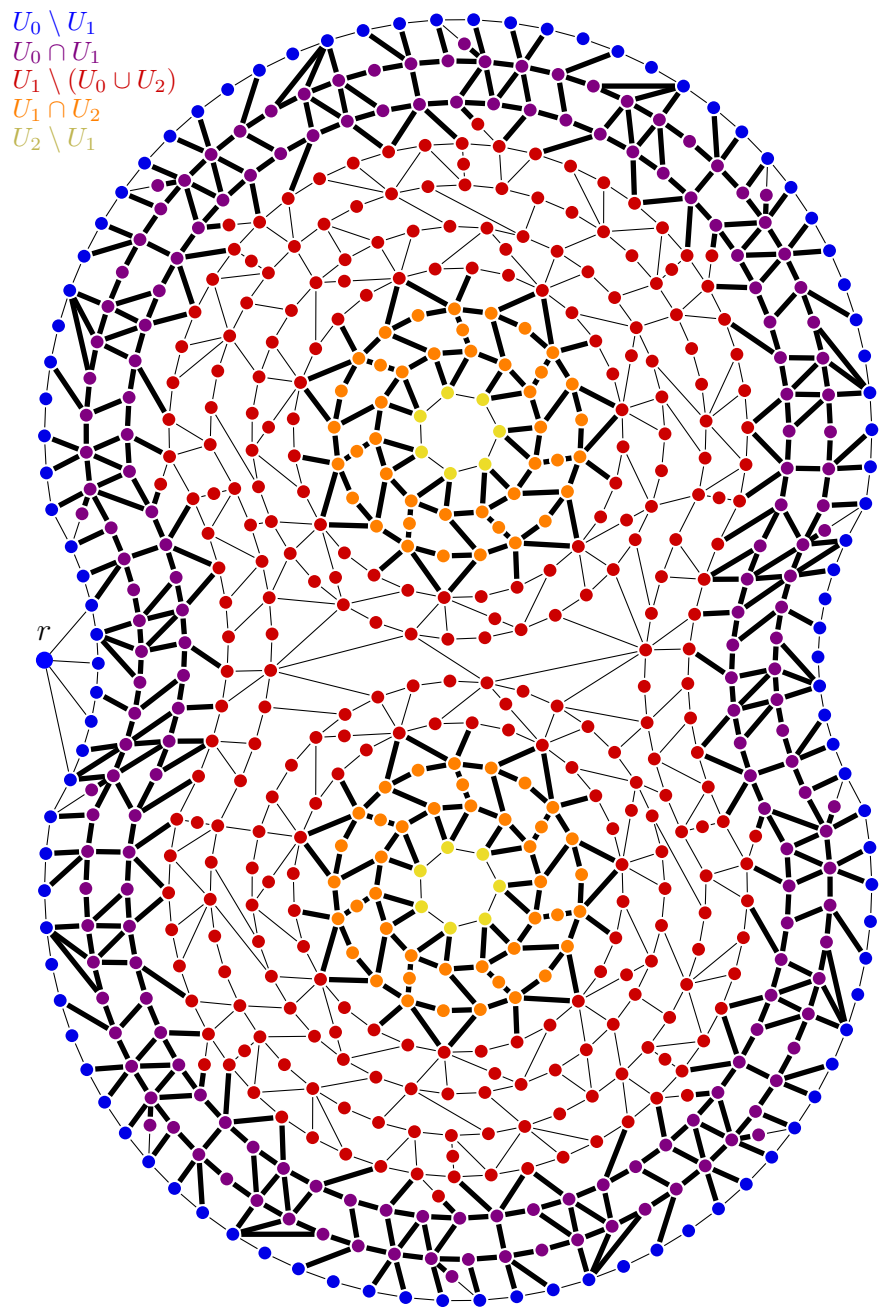

\begin{proof}[Proof of \Cref{theorem:compute_k_vertex_safe_cover_with_vertex_weights}] Let $G=(V,E)$ be a planar connected graph, equipped with vertex weights $w\colon V\rightarrow\mathbb{R}_{\ge 0}$ and edge costs $c\colon E\rightarrow\mathbb{R}_{\ge 0}$. Let further $\delta\in (0,1)$ and define $M\coloneqq\lceil\frac{3}{\delta}\rceil$.

We partition the vertex set into rings $R_j \coloneqq V_{[kj,\ k (j+1))}$ of width $k$. 
Every set $U_i$ in our $k$-vertex-safe cover will be the union of (up to) $M+1$ consecutive rings $R_j$, where every $M$-th ring will be included in two of the sets $U_i$, and will thus belong to $V_{\mathcal{U}}$.
In order to ensure \cref{theorem:compute_k_vertex_safe_cover_with_vertex_weights}~\ref{k-vertex_safe_edge_and_vertex_costs}, we will choose an appropriate offset $a^*\in \{0,\dots,M-1\}$ such that the first set $U_0$ is the union of the rings $R_1,\dots, R_{a^*}$, the next set $U_1$ is the union of the rings $R_{a^*}, \dots, R_{a^*+M}$, and so on.

For an offset $a\in \{0,\dots,M-1\}$, we consider the set of vertices in every $M$-th ring, starting with the ring $R_a$, that is,
\[
    W_a \coloneqq \bigcup_{t\in\mathbb{Z}_{\ge 0}} R_{a+tM}.
\] 
The sets $W_a$ with $a\in \{0,\dots, M-1\}$ form a partition of the vertex set $V$ into $M$ sets.
We further observe that $\Gamma(W_a)\subseteq W_{a-1}\cup W_{a}\cup W_{a+1}$, where $W_{-1}\coloneqq W_{M-1}$.
To see this, let $v\in R_{a+tM}$. For every neighbor $w$ of $v$, we have
\begin{align*}
2k\cdot (a-1+t\cdot M)&\le 2k\cdot (a+t\cdot M)-2\le\mathrm{dist}_{D_G}(r,v)-2\\
                      &\le \mathrm{dist}_{D_G}(r,w)\le \mathrm{dist}_{D_G}(r,v)+2\\
                      &< 2k\cdot (a+1+t\cdot M)+2\le 2k\cdot ((a+1)+1+t\cdot M),
\end{align*}
because $v$ and $w$ are both connected via an edge in $D_G$ to each of the incident faces of the edge $\{v,w\}$ of $G$. Hence, $w\in R_{a-1+tM}\cup R_{a+tM}\cup R_{a+1+tM}\subseteq W_{a-1}\cup W_{a}\cup W_{a+1}$. In particular, for a vertex $v\in V$, there exist at most three different offsets $a\in\{0,\dots,M-1\}$ such that $v\in W_{a}\cup\Gamma(W_{a})$.
Moreover, every edge has an endpoint in at most two different sets $W_{a}$.

Hence, we can in time $\mathcal{O}(|V|+|E|)$ compute an offset $a^*\in\{0,\dots,M-1\}$ such that
\begin{align}
w(W_{a^*}\cup \Gamma(W_{a^*}&))+c\big(\{e\in E\colon e\text{ has an endpoint in }W_{a^*}\}\big)\notag \\
        &\le\ \frac{3\cdot w(V)}{M}+\frac{2\cdot c(E)}{M} \le\ \delta\cdot (w(V)+c(E)).\label{eq:costs_E_U}
\end{align}
We define $U_0 \coloneqq \bigcup_{j=0}^{a^*} R_j$ and for $i\in \mathbb{Z}_{>0}$, we define
\[
U_i \coloneqq \bigcup_{j=a^*+(i-1)M}^{a^*+iM} R_j.
\]
In other words, $U_i = V_{[\alpha_i,\beta_i)}$ for  $\alpha_i\coloneqq \max\{0,k\cdot(a^*+(i-1)M)\}$ and  $\beta_i\coloneqq k\cdot (a^*+1+iM)$.
Let $I\coloneqq \{i\in\mathbb{Z}_{\ge 0}\colon U_i\ne \emptyset\}$. 
We can compute $I$ and the collection $(U_i)_{i\in I}$ in time $\mathcal{O}(|V|+|E|)$. 
At the end of the proof, we briefly discuss why this implies that we can also compute the graphs $G_i$ for $i\in I$ in time $\mathcal{O}(|V|+|E|)$.
Before doing so, we will show that $\mathcal{U}\coloneqq (U_i)_{i\in I}$ is a $k$-vertex-safe cover meeting properties \ref{k-vertex_safe_edge_and_vertex_costs}-\ref{item:noEdgesAcross_stronger} of \cref{theorem:compute_k_vertex_safe_cover_with_vertex_weights}.

For \ref{k-vertex_safe_edge_and_vertex_costs}, we note that $V_{\mathcal{U}}=W_{a ^*}$ and, as such, $w(V_{\mathcal{U}}\cup \Gamma(V_{\mathcal{U}}))+c(E_{\mathcal{U}})\le \delta\cdot (w(V)+c(E))$ by \eqref{eq:costs_E_U}.
As far as \ref{k-vertex_safe_stronger_tree_width} is concerned, \Cref{lemma:treewidth_G_i} tells us that for every $i\in I$, the graph $G_i$ has treewidth at most $3\cdot(\beta_i-\alpha_i)+5\le 3k(M+1)+5\le \frac{9k}{\delta}+12k+5\le \frac{26k}{\delta}$.
For \ref{item:inAtMostTwo_stronger}, we note that every vertex in $V_{\mathcal{U}}=W_{a^*}$ is contained in two consecutive $U_i$'s (namely, vertices in $R_{a^*+tM}$ are contained in $U_t$ and in $U_{t+1}$), whereas all other vertices appear in exactly one $U_i$.
Concerning \ref{item:noEdgesAcross_stronger}, recall that we established above that for a vertex $v\in R_j$, every neighbor of $v$ must be contained in $R_{j-1}\cup R_j\cup R_{j+1}$. Now, let $i,j\in I$ with $i < j$. If $j\ge i+2$, then the minimum index $a^*+(j-1)\cdot M$ of a ring in $U_j$ is by at least $M\ge 2$ larger than the maximum index $a^*+i\cdot M$ of a ring in $U_i$, meaning that no vertex of $U_i$ is adjacent to a vertex in $U_j$. On the other hand, if $j=i+1$, then $U_i\cap U_j=R_{a^*+iM}$ and the minimum index of a ring in $U_j\setminus U_i$ is $a^*+iM+1$, which is by $2$ larger than $a^*+iM-1$, which is the maximum index of a ring in $U_i\setminus U_j$.
Hence, no vertex of $U_i\setminus U_j$ is adjacent to a vertex in $U_j\setminus U_i$.

We now show that $\mathcal{U}$ is a $k$-vertex-safe cover.
Let $S\subseteq V$ be a connected cut. Then by \Cref{obs:cuts_dual_cycle}, we know that $\delta(S)$ corresponds to a cycle $C$ in the planar dual of $G$. 
Let $F_S$ be the set of faces that appear as vertices of this dual cycle, i.e., the set of faces incident to the edges in $\delta(S)$. 
We denote by $2\alpha+1$ and $2\beta+1$ the minimum and maximum among the values $\mathrm{dist}_{D_G}(r,f)$ with $f\in F_S$, respectively. 
Moreover, let $i\in\mathbb{Z}_{\ge 0}$ be maximum with $\alpha_i\le \alpha$. 
We show $|\Gamma_{E_\mathcal{U}}(S)|\ge k$ or $\delta(S)\subseteq E[U_i]$.

If $\beta<\beta_i-1$, then $\delta(S)\subseteq E[U_i]$ because the endpoints of every $e\in\delta(S)$ are incident with some face $f\in F_S$ and as such, their distances to $r$ can only range between $2\alpha\ge 2\alpha_i$ and $2 \beta+2<2 \beta_i$.
Now suppose $\beta\ge \beta_i-1$.
Then we choose faces $f_1,f_2\in F_S$ with $\mathrm{dist}_{D_G}(r,f_1)=2\alpha+1$ and $\mathrm{dist}_{D_G}(r,f_2)=2\beta+1$. 
The cycle $C$ consists of two edge-disjoint $f_1$-$f_2$ paths in the planar dual of~$G$. 
Let $P$ be one of these paths. 
For every edge $e\in E$ whose dual edge appears on $P$, its two incident faces share two vertices (the endpoints of $e$) and as such, their distances to $r$ in $D_G$ are either the same or differ by exactly two. 
In particular, on the path $P$, for every $\gamma\in [\alpha_{i+1}-1,\beta_i-2]\cap\mathbb{Z}\subseteq[\alpha,\beta-1]\cap\mathbb{Z}$, there is an edge $e_\gamma\in\delta(S)$ whose incident faces have distances $2\gamma+1$ and $2\gamma+3$ to the root $r$ in $D_G$. 
Then both endpoints of the edge $e_{\gamma}$ must have distance exactly $2\gamma+2$ to $r$ in $D_{G}$.
In particular, both endpoints of $e_{\gamma}$ are contained in $V_{[\alpha_{i+1}, \beta_{i})} = U_i\cap U_{i+1}\subseteq V_{\mathcal{U}}$.
Hence, because $e_{\gamma}\in \delta(S)$, the endpoint $w_{\gamma}\in V\setminus S$ is contained in $\Gamma_{E_{\mathcal{U}}}(S)$. 
As the vertices $w_{\gamma}$ for $\gamma\in [\alpha_{i+1}-1,\beta_i-2]\cap\mathbb{Z}$ are pairwise distinct (because their distances to $r$ in $D_G$ are distinct), we get $|\Gamma_{E_{\mathcal{U}}(S)}|\ge (\beta_i-2-(\alpha_{i+1}-1)+1)=\beta_i-\alpha_{i+1}=k$, as desired.

It remains to discuss that we can compute all graphs $G_i\coloneqq (V,E)/E[V\setminus U_i]$ for $i\in I$ in time $\mathcal{O}(|V|+|E|)$, once we know the sets $U_i$.
To this end, we first construct in linear time all graphs $\overline{G}_i\coloneqq G[U_i]$ for $i\in I$ by first starting with edge-free graphs $\overline{G}_i$ over the vertex sets $U_i$, and iterating over all edges $E$ and adding each edge $\{u,v\}\in E$ to all graphs $\overline{G}_i$ with $u,v\in U_i$.
For $i\in I$, to obtain the graph $G_i$ from $\overline{G}_i$, we need to add a vertex for each connected component of $G[V\setminus U_i]$ and connect it to all neighbors of this component in $U_i$.
We recall that $U_i\coloneqq V_{[\alpha_i,\beta_i)}$ and, by construction, there is no edge between $V_{<\alpha_i}$ and $V_{\geq \beta_i}$.
Hence, the connected components of $G[V\setminus U_i]$ are precisely the connected components of $G[V_{<\alpha_i}]$ and of $G[V_{\geq \beta_i}]$.
Moreover, by \Cref{lem:G_alpha_connected}, the graph $G[V_{<\alpha_i}]$ is connected and thus, there is only one connected component to consider here.
Moreover, among the vertices in $V_{<\alpha_i}$, only the vertices in $V_{\alpha_i-1}$ can have neighbors in $U_i$.
Thus, we can find all neighbors of the connected component $G[V_{<\alpha_i}]$ in $U_i$ by iterating over all vertices in $V_{\alpha_i-1}$.
As these vertex sets are disjoint for $i\in I$, this can be done for all $i\in I$ in total time $\mathcal{O}(|V|+|E|)$.
It remains to consider the connected components of $G[V_{\geq \beta_i}]$ and their neighbors in $U_i$.
An analogous argument to the $V_{<\alpha_i}$ case is not possible here, because $G[V_{\geq \beta_i}]$ may have multiple connected components. (For an example, consider the set $U_2 = V_{\geq 8}$ in \cref{fig:safe_cover}, which consists of two connected components.)

Nevertheless, also this can be done in linear time as follows.
We go through the indices $i\in I$ from largest to smallest and construct the connected components of $G[V_{\geq \beta_i}]$ (together with their neighbors in $U_i$) from the connected components of $G[V_{\geq \beta_{i+1}}]$ (which we have already computed in the previous iteration).
More precisely, instead of computing all vertices in each connected component of $G[V_{\geq \beta_i}]$, we only need to compute which vertices in $V_{\beta_i}$ belong to the same connected component of $G[V_{\geq \beta_i}]$.
This suffices to determine how the connected components of $G[V_{\geq \beta_i}]$ are connected to $U_i$, because among the vertices of $V_{\geq \beta_i}$, only the vertices in $V_{\beta_i}$ can have neighbors in $V_{< \beta_i}$.
Indeed, having computed this connected component structure for $G[V_{\geq \beta_{i+1}}]$, we can first construct the graph $G[V_{\geq \beta_i}]/G[V_{\geq \beta_{i+1}}]$, i.e., the graph obtained from $G[V_{\geq \beta_i}]$ by contracting all connected components of $G[V_{\geq \beta_{i+1}}]$. 
This can be done in time linear in $|V_{[\beta_i, \beta_{i+1}+1)}|$, by starting with the graph $G[V_{[\beta_i, \beta_{i+1})}]$ and then adding a vertex for each connected component of $G[V_{\geq \beta_{i+1}}]$ and connecting it to the neighbors of this component in $V_{[\beta_i, \beta_{i+1})}$.
Because the vertices in $V_{\beta_{i+1}}$ are the only ones among the vertices in $V_{\geq \beta_{i+1}}$ with neighbors in $V_{[\beta_i, \beta_{i+1})}$, only they are relevant for determining how the connected components of $G[V_{\geq \beta_{i+1}}]$ are connected to the other vertices of $G[V_{\geq \beta_i}]/G[V_{\geq \beta_{i+1}}]$.
Moreover, by the previous iteration, we know which vertices in $V_{\beta_{i+1}}$ belong to the same connected component of $G[V_{\geq \beta_{i+1}}]$.
Hence, adding a vertex per connected component of $G[V_{\geq \beta_{i+1}}]$ and connecting it to the neighbors of this component in $V_{[\beta_i, \beta_{i+1})}$ can be done in time linear in the number of edges incident to vertices in $V_{\beta_{i+1}}$.
Once $G[V_{\geq \beta_i}]/G[V_{\geq \beta_{i+1}}]$ is constructed, we can find its connected components, which correspond to the connected components of $G[V_{\geq \beta_{i}}]$, in time linear in its size.
Again, we only need to know which vertices of $V_{\beta_i}$ belong to the same connected component of $G[V_{\geq \beta_i}]$.
Finally, for each connected component of $G[V_{\geq \beta_i}]$, we can find its neighbors in $U_i$ by iterating over the edges incident to vertices in $V_{\beta_i}$, thus completing the construction of $G_i$.
Altogether, over all indices $i\in I$, this procedure takes therefore $\mathcal{O}(|V|+|E|)$ time, concluding the proof.

\end{proof}

\section{PTAS for planar $k$-WECSS and $k$-WVCSS with bounded cost ratio}\label{sec:WECSS_and_WVCSS}

In this section we provide the omitted details of the proofs of \Cref{thm:PTASs_k-WECSS_k-WVCSS_constant_k,cor:PTASs_k-WECSS}. We note that for constant $k$, one can check in linear time whether a planar graph is $k$-edge-connected/$k$-vertex-connected; see~\cite{Eppstein1999}. Hence, we may assume in the following that the instances of planar $k$-WECSS/$k$-WVCSS we work with are \emph{feasible}, i.e., the input graph is $k$-edge-connected/$k$-vertex-connected.

\subsection{Edge Connectivity}

This section contains the omitted parts of the proofs of \Cref{thm:PTASs_k-WECSS_k-WVCSS_constant_k,cor:PTASs_k-WECSS} for $k$-WECSS.
We restate the lemmas from Section~\ref{sec:summary_k_safe_covers} for convenience.

\GluingSolutions*
\begin{proof}
Let $\emptyset\neq T\subsetneq V$ and let $\emptyset\neq S\subsetneq V$ be a minimal cut in $G$ with $\delta_G(S)\subseteq \delta_G(T)$. Then $\delta_F(S)\subseteq \delta_F(T)$, so it suffices to show that $|\delta_F(S)|\ge k$. As $G$ is connected, the equivalence of connected cuts and minimal cuts in connected graphs implies that $S$ is a connected cut in $G$. In particular, \cref{def:k-edge-safe_cover}~\ref{def:k_edge_safe_1} or \cref{def:k-edge-safe_cover}~\ref{def:k_edge_safe_2} applies to $S$.
If there is some $i\in I$ such that $\delta_G(S)\subseteq E[U_i]$, then $S$ corresponds to a cut in $(V,F_i)/E[V\setminus U_i]$, implying $|\delta_G(S) \cap F_i\cap E[U_i]|=|\delta_G(S) \cap F_i|\ge k$. 
Otherwise, we have $|\delta_G(S) \cap E_\mathcal{U}|\ge k$.

\end{proof}

\CostCombinedSolution*
\begin{proof}
By \Cref{lemma:glue_solutions_together}, $F$ is a feasible solution to $k$-WECSS in $G$. Moreover, for each $i\in I$,
\[
T_i\ \coloneqq\ \OPT \setminus E[V\setminus U_i]\ \subseteq\ (\OPT\cap E[U_i])\cup (E(G_i)\setminus E[U_i])
\]
is a feasible solution to $k$-WECSS in $(G_i,c_i)$, implying $c_i(\OPT_i)\le c_i(T_i)=c((\OPT\cap E[U_i])\setminus E_{\mathcal{U}})$.
Hence,
 \begin{equation*}
     \sum_{i\in I}c_i(\OPT_i) \leq \sum_{i\in I} c((\OPT\cap E[U_i])\setminus E_{\mathcal{U}}) \le c(\OPT),
 \end{equation*}
because $E[U_i]\cap E[U_j]=E[U_i\cap U_j]\subseteq E_{\mathcal{U}}$ for every $i,j\in I$ with $i\neq j$. Finally,
\begin{align*}
    c(F)&= c(E_{\mathcal{U}}) + c(F\setminus E_\mathcal{U}) \le c(E_{\mathcal{U}}) + \sum_{i\in I} c((\OPT_i\cap E[U_i])\setminus E_{\mathcal{U}})\\
        &= c(E_{\mathcal{U}}) + \sum_{i\in I} c_i(\OPT_i) \le c(E_\mathcal{U}) + c(\OPT).
\end{align*}

\end{proof}

\begin{lemma}\label{lem:costs_of_all_edges}
Let $G=(V,E)$ be a planar $k$-edge-connected graph with at most $k$ parallel edges between any pair of vertices, and let $c\colon E\to \mathbb{R}_{>0}$.
Then
\begin{equation*}
c(E) \leq 6\cdot \Delta\cdot c(\OPT),
\end{equation*}
where $\OPT \subseteq E$ is an optimum $k$-WECSS solution in $(G,c)$, and $\Delta$ is the cost ratio of $c$.
\end{lemma}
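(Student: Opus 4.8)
The plan is a direct counting argument: bound the number of edges of $G$ from above using planarity and the bound on parallel edges, bound the number of edges of any feasible solution from below using the minimum-degree condition, and then multiply by $\max c$ and $\min c$ respectively. Write $n\coloneqq|V|$, and assume without loss of generality that $E\neq\emptyset$ (otherwise $c(E)=0$ and there is nothing to prove), so $n\ge 2$.

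First I would bound $|E|$ from above. The simple graph obtained from $G$ by keeping a single representative of each parallel class is planar, hence has fewer than $3n$ edges: this is $\binom{n}{2}<3n$ for $n\le 2$ and the standard estimate $3n-6<3n$ for $n\ge 3$. Since $G$ has at most $k$ parallel edges between any pair of vertices, multiplying by $k$ gives $|E|<3nk$. Next I would bound $|\OPT|$ from below. As $G$ is $k$-edge-connected, the instance is feasible, and any spanning $k$-edge-connected subgraph $(V,\OPT)$ has minimum degree at least $k$; summing degrees yields $2|\OPT|\ge kn$, i.e.\ $|\OPT|\ge \tfrac{kn}{2}$.

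Finally I would combine these with the trivial cost estimates $c(E)\le |E|\cdot\max_{e\in E}c(e)$ and $c(\OPT)\ge |\OPT|\cdot\min_{e\in E}c(e)$, together with $\max_{e\in E}c(e)=\Delta\cdot\min_{e\in E}c(e)$, to obtain
\[
c(E)\ <\ 3nk\cdot\max_{e\in E}c(e)\ =\ 6\cdot\tfrac{nk}{2}\cdot\Delta\cdot\min_{e\in E}c(e)\ \le\ 6\Delta\cdot c(\OPT),
\]
which is the claimed inequality (and even slightly stronger, with a strict inequality). There is no real obstacle here; the only points needing a moment of care are the small-$n$ cases of the planarity bound and checking that the strict inequality $|E|<3nk$ indeed suffices for the non-strict statement.
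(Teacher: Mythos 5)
Your proposal is correct and follows essentially the same argument as the paper: bound $|E|$ by $3k|V|$ via planarity of the underlying simple graph and the cap of $k$ parallel edges, bound $|\OPT|\ge \frac{k}{2}|V|$ via the minimum-degree condition, and compare costs through the ratio $\Delta$. The only difference is your extra care with the small-$n$ cases of the planarity bound, which the paper glosses over.
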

\begin{proof}
Let $c_{min}$ and $c_{max}$ be the minimum and maximum edge costs, respectively. 
In the graph $(V,\OPT)$, every vertex has degree at least $k$, so $c(\OPT)\ge \frac{k}{2}\cdot c_{min} \cdot |V|$. 
As $G$ is planar, the underlying simple graph of $G$ contains at most $3\cdot |V|-6$ many edges, yielding
\[c(E)\le c_{max}\cdot |E|\le c_{max}\cdot k\cdot 3\cdot |V|\le (6\cdot \Delta)\cdot \frac{k}{2}\cdot c_{min} \cdot |V| \le 6\cdot \Delta\cdot c(\OPT).\]

\end{proof}

Combining these results yields our PTAS for planar $k$-WECSS with fixed $k$ and bounded cost ratio, i.e., \Cref{thm:PTASs_k-WECSS_k-WVCSS_constant_k} for $k$-WECSS.

\begin{proof}[Proof of \Cref{thm:PTASs_k-WECSS_k-WVCSS_constant_k} for $k$-WECSS]
Let $G=(V,E)$ be a planar graph with edge costs $c\colon E \to \mathbb{R}_{\geq 0}$ of bounded cost ratio $\Delta$, and let $\varepsilon >0$.
As pointed out before, we can check in linear time whether $G$ is $k$-edge-connected~\cite{Eppstein1999}, so we will assume this in the following.
As discussed, we can assume without changing the optimal solution that $G$ has at most $k$ parallel edges between any pair of vertices.
This can be achieved by removing all edges that are not among the $k$ cheapest edges between any pair of vertices.

Using \Cref{theorem:compute_edge_safe_cover}, we compute in linear time a $k$-edge-safe cover $\mathcal{U}=(U_i)_{i\in I}$ of $G$, as well as the collection of graphs $G_i\coloneqq (V,E)/E[V\setminus U_i]$ for $i\in I$. 
We define edge costs $c_i$ on $G_i$ that we obtain from $c$ by setting the costs of all edges in $E_{\mathcal{U}}$ and in $E(G_i)\setminus E[U_i]$ to zero.
\Cref{theorem:compute_edge_safe_cover} guarantees that each $G_i$ has treewidth $\mathcal{O}(\frac{k}{\varepsilon})$, which is constant for fixed $k$.
Hence, we can use a linear time dynamic programming procedure (\Cref{thm:bounded_treewidth_kWECSS}) to compute an optimum $k$-WECSS solution $\OPT_i$ in each $(G_i,c_i)$.
By \Cref{theorem:compute_edge_safe_cover}~\ref{item:inAtMostTwo}, the sum of the sizes of all $G_i$ is linearly bounded in the size of $G$.
Hence, computing the cost functions $c_i$, as well as $k$-WECSS solutions for all $G_i$'s takes, altogether, linear time in the size of $G$.
Finally, we return $F\coloneqq E_{\mathcal{U}}\cup \bigcup_{i\in I} (\OPT_i\cap E[U_i])$.
By \Cref{lemma:cost_combined_solution}, $F$ is a $k$-WECSS solution in $G$ of cost $c(F)\le c(\OPT) + c(E_{\mathcal{U}})$.
Moreover, \Cref{theorem:compute_edge_safe_cover}~\ref{k-edge-safe_cover_costs} implies $c(E_{\mathcal{U}}) \leq \varepsilon \cdot c(E)$, and thus $c(F)\le (1+6\varepsilon \Delta)\cdot c(\OPT)$ by \Cref{lem:costs_of_all_edges}.
For any $\epsilon' > 0$, we can thus achieve a $(1+\epsilon')$-approximation by setting $\epsilon \coloneqq \frac{\epsilon'}{6\Delta}$.

\end{proof}

We further derive \cref{cor:PTASs_k-WECSS}.
\begin{proof}[Proof of \cref{cor:PTASs_k-WECSS}]
As we can check in polynomial time whether the instance is feasible, i.e., whether the input graph $G=(V,E)$ is $k$-edge-connected, we will assume $G$ to be $k$-edge-connected in the following.
We first observe how to obtain a $(1+\frac{6\Delta}{k})$-approximation for planar $k$-WECSS with cost ratio $\Delta$.
To this end, one can first compute an optimal vertex solution to the natural linear programming relaxation, and then simply round up all fractional values.
Because $G$ is planar and, among any set of parallel edges, there is at most one with fractional LP value, the total rounding cost is upper bounded by $3|V|\cdot c_{max}$.
Since an optimal solution needs at least $\frac{k}{2}|V|$ many edges, whose cost is thus at least $\frac{k}{2}|V|\cdot c_{min}$, the rounding cost is no more than $\frac{6\Delta}{k} \cdot c(\OPT)$, thus leading to the claimed $(1+\frac{6\Delta}{k})$-approximation for planar $k$-WECSS with cost ratio $\Delta$.%
\footnote{A $(1+\mathcal{O}(\frac{\Delta}{k}))$-approximation for general (not necessarily planar) $k$-WECSS with cost ratio $\Delta$ also follows from known techniques on $k$-ECSS~\cite{gabowApproximatingSmallestIt2009,gabowIteratedRoundingAlgorithms2012}.
	Moreover, the constant of the term $\frac{6\Delta}{k}$ in the approach we showed can be slightly improved by exploiting that vertex solutions to the LP are sparse.
}

Hence, if $k \geq \frac{6\Delta}{\varepsilon}$, we can run the above-mentioned $(1+\frac{6\Delta}{k})$-approximation to get a $(1+\varepsilon)$-approximation.
Otherwise, we have $k \leq \frac{6\Delta}{\varepsilon}$, i.e., $k$ is constant.
In this case, \cref{thm:PTASs_k-WECSS_k-WVCSS_constant_k} concludes the proof.
\end{proof}

\subsection{Vertex Connectivity}

In this section, we prove \Cref{thm:PTASs_k-WECSS_k-WVCSS_constant_k} for $k$-WVCSS.
First, we provide a useful characterization of $k$-vertex-connectivity. Recall that a graph $G$ is $k$-vertex-connected if and only if it has at least $k+1$ vertices and for every subset $X$ of the vertices of size at most $k-1$, $G-X$ is connected.
In the following, we will assume that the input graphs $G$ we deal with have at least $k+1$ vertices; this property can be checked in linear time.

\begin{lemma}\label{lem:characterization_k_vertex_connectivity}
 Let $G=(V,E)$ be a connected graph with at least $k+1$ vertices. 
 Then $G$ is $k$-vertex-connected if and only if $|\Gamma(S)|\ge \min\{k,|V\setminus S|\}$ for every connected cut $\emptyset\ne S\subsetneq V$.
\end{lemma}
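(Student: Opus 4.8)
Let $G=(V,E)$ be a connected graph with at least $k+1$ vertices. Then $G$ is $k$-vertex-connected if and only if $|\Gamma(S)|\ge \min\{k,|V\setminus S|\}$ for every connected cut $\emptyset\ne S\subsetneq V$.

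The plan is to prove the two implications separately; the forward direction is essentially immediate, while the reverse direction requires extracting a \emph{connected} cut (rather than an arbitrary one) from a small vertex separator, which is the only delicate point.

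For the forward direction, suppose $G$ is $k$-vertex-connected and let $S$ be a connected cut; in fact I will not use connectivity of $S$ here. Assume toward a contradiction that $|\Gamma(S)| < k$ and $|\Gamma(S)| < |V\setminus S|$, and set $X \coloneqq \Gamma(S)$, so that $|X| \le k-1$. Since $\Gamma(S) \subseteq V\setminus S$ contains every vertex outside $S$ with a neighbor in $S$, there is no edge between $S$ and $V\setminus(S\cup X)$ in $G$, hence none in $G-X$. Both sets are nonempty: $S\ne\emptyset$ by assumption, and $|V\setminus(S\cup X)| = |V\setminus S| - |X| > 0$. Therefore $G-X$ is disconnected with $|X|\le k-1$, contradicting $k$-vertex-connectivity. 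This yields $|\Gamma(S)| \ge \min\{k, |V\setminus S|\}$.

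For the reverse direction, suppose the stated condition holds for every connected cut, but $G$ is not $k$-vertex-connected. Since $|V|\ge k+1$, this forces a vertex cut of size at most $k-1$; choose one, say $X$, of minimum cardinality $\kappa \coloneqq |X|\le k-1$. Let $A$ be the vertex set of one connected component of $G-X$, and $B \coloneqq V\setminus(A\cup X)\ne\emptyset$. The key structural fact, standard for \emph{minimum} vertex cuts, is that every vertex of $X$ has a neighbor in every component of $G-X$; otherwise, deleting such a vertex from $X$ would still separate that component from the rest, contradicting minimality of $X$. Consequently $\Gamma(A)=X$, and moreover $G[V\setminus A]=G[X\cup B]$ is connected: fixing any $x\in X$, the vertex $x$ has a neighbor in every component of $G-X$ contained in $B$, so $\{x\}$ together with all these components is connected, and every other vertex of $X$ also attaches to one of these components, hence to this set. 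Thus $S\coloneqq A$ is a connected cut with $|\Gamma(S)| = |X| = \kappa < k$, and since $B\ne\emptyset$ we also have $|V\setminus S| = |X|+|B| > |X| = |\Gamma(S)|$. Therefore $\min\{k,|V\setminus S|\} > |\Gamma(S)|$, contradicting the hypothesis, and the lemma follows.

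The main obstacle is exactly the step in the reverse direction where one needs the separated component to form a connected cut: for a generic separating set $X$, a component $A$ of $G-X$ need not have $G[V\setminus A]$ connected. Passing to a \emph{minimum} vertex cut and invoking the fact that each of its vertices meets every component of $G-X$ is what fixes both the connectivity of $G[V\setminus A]$ and the tight identity $\Gamma(A)=X$; the remainder is routine bookkeeping with the definition of $\Gamma(\cdot)$ and the $\min\{k,|V\setminus S|\}$ term.
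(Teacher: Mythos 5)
Your proof is correct and follows essentially the same route as the paper: the forward direction by noting that $\Gamma(S)\subsetneq V\setminus S$ with $|\Gamma(S)|<k$ would give a small separator, and the reverse direction by taking a \emph{minimum} vertex cut $X$, using that every vertex of $X$ has a neighbor in each component of $G-X$ to get $\Gamma(A)=X$ and connectivity of $G[V\setminus A]$, exactly as in the paper (which phrases this as $\Gamma(V_i)=X$ for all components $V_i$). The only cosmetic difference is that you spell out the connectivity of $G[V\setminus A]$ via a fixed $x\in X$, which the paper leaves implicit.
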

\begin{proof}
If $G$ is $k$-vertex-connected, then for every connected cut $S$ with $\Gamma(S)\subsetneq V\setminus S$, we must have $|\Gamma(S)|\ge k$ because $G-\Gamma(S)$ is disconnected.

 Next, assume that $G$ is not $k$-vertex-connected and let $X\subseteq V$ be of minimum cardinality such that $G-X$ is disconnected. 
 Then, $G-X$ has at least $2$ connected components and because $G$ is not $k$-vertex-connected, $|X|\le k-1$. 
Moreover, because $G$ is connected, $|X|\geq 1$.
 Let $V_1,\dots,V_\ell$ be the vertex sets of the connected components of $G-X$. 
 Then $\Gamma(V_i)\subseteq X$ for each $i\in[\ell]$ and by minimality of $X$, we actually have $\Gamma(V_i)=X$ for each $i\in [\ell]$. 
 Hence, $G[V_1]$ and $G[V\setminus V_1]=G[X\cup V_2\cup\dots\cup V_\ell]$ are connected, $\emptyset\ne V_1\subsetneq V$, and $|\Gamma(V_1)|=|X|<\min\{k,|V\setminus V_1|\}$.   
 
\end{proof}

A $k$-vertex-connected graph will not necessarily remain $k$-vertex-connected when we contract some edges, which is in contrast to edge-connectivity.
This means that when we contract some part of our instance, the resulting instance might not even be feasible anymore.
In order to handle this issue, we introduce a different contraction operation, which preserves $k$-vertex-connectivity.

\begin{definition}[$k$-vertex-safe contraction]
Let $G=(V,E)$ be a graph, let $U\subsetneq V$ and let $F\subseteq E$. The $k$-vertex-safe contraction $(V,F)\cvs G[U]$ of $U$ is the graph that arises from $(V,F)$ by 
\begin{itemize}
\item contracting every connected component of $G[U]$ into a single vertex and then
\item replacing each of the contracted vertices by a clique of size $k$, where each vertex of the clique is connected to each neighbor of the original, contracted vertex.
\end{itemize}
\end{definition}

Note that this operation does not preserve planarity, but this will not be an issue for our approach, as we will still be able to bound the treewidth of the resulting graph.
First, we show that $k$-vertex-safe contractions indeed preserve $k$-vertex-connectivity.

\begin{lemma}\label{lemma:safe_contraction}
Let $G=(V,E)$ be a graph and let $F\subseteq E$ such that $(V,F)$ is $k$-vertex-connected. Let further $U\subsetneq V$ and $G'=(V',F')\coloneqq (V,F)\cvs G[U]$. Then $G'$ is $k$-vertex-connected.
\end{lemma}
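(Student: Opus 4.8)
I want to show that $G' = (V', F') := (V,F) \cvs G[U]$ is $k$-vertex-connected, given that $(V,F)$ is $k$-vertex-connected. By Lemma~\ref{lem:characterization_k_vertex_connectivity}, it suffices to check that $G'$ has at least $k+1$ vertices (which is immediate: $G'$ has at least $k$ vertices in any one clique replacing a component, plus at least one vertex of $V\setminus U$, since $U\subsetneq V$, so $|V'|\geq k+1$ as long as there is at least one contracted component; if $G[U]$ has no edges, then $G' = (V,F)$ and we are done) and that every connected cut $\emptyset \neq S \subsetneq V'$ satisfies $|\Gamma_{G'}(S)| \geq \min\{k, |V'\setminus S|\}$.

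**The main argument.** Let $S$ be a connected cut in $G'$. The key observation is that for each contracted component $C$ of $G[U]$, its replacement clique $K_C$ (of size $k$) is, together with the rest, arranged so that $K_C$ behaves like a highly connected ``blob''; in particular, since $K_C$ is a clique on $k$ vertices, if $S$ contains a vertex of $K_C$ but not all of $K_C$, then already $|\Gamma_{G'}(S) \cap K_C| \geq \min\{|K_C \setminus S|, |S \cap K_C|\cdot(\text{something})\} $, and a short case analysis shows $|\Gamma_{G'}(S)| \geq k$ unless $V'\setminus S$ is itself small. More cleanly: I would argue that we may assume each clique $K_C$ is entirely inside $S$ or entirely inside $V'\setminus S$, by the following ``uncrossing'' step. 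If some $K_C$ is split by $S$, pick the side, say $S$, containing a minority (or arbitrary nonempty proper subset) of $K_C$; I claim moving all of $K_C$ to one side does not increase the cut and keeps both sides connected — here I use that $K_C$ is a clique, so $G'[S]$ and $G'[V'\setminus S]$ stay connected after such a shift, and that the neighborhood $\Gamma_{G'}(S)$ can only shrink or the relevant inequality can only become easier. Once every $K_C$ is on one side of $S$, the cut $S$ of $G'$ corresponds to a cut $\widehat S$ of $(V,F)$: map each clique $K_C\subseteq S$ back to the vertex $c$ obtained from contracting $C$, and keep all $V\setminus U$ vertices as they are; this $\widehat S$ is a connected cut in $(V,F)$ (connectivity of $G'[S]$ transfers because contracting the cliques back to single vertices preserves connectivity), and crucially $|\Gamma_{G'}(S)| = |\Gamma_{(V,F)}(\widehat S)|$ and $|V' \setminus S| \geq |V \setminus \widehat S|$ — wait, I need to be careful about the sizes, since $|V'\setminus S|$ could be larger than $|V\setminus\widehat S|$; but that only makes $\min\{k, |V'\setminus S|\} \geq \min\{k, |V\setminus\widehat S|\}$ possibly larger, so I actually need the reverse bound, namely that $|\Gamma_{G'}(S)| \geq k$ whenever the corresponding $\widehat S$ has $|\Gamma(\widehat S)| \geq k$, and separately handle the case $|V\setminus\widehat S| < k$.

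**Handling the boundary case and the obstacle.** The subtle case is $\Gamma_{(V,F)}(\widehat S) \subsetneq V\setminus\widehat S$ with $|V\setminus\widehat S| < k$: here $k$-vertex-connectivity of $(V,F)$ only guarantees $|\Gamma(\widehat S)| \geq |V\setminus\widehat S|$, which might be less than $k$. But in $G'$, the set $V'\setminus S$ contains, for each contracted component $C$ with $K_C \subseteq V'\setminus S$, a full $k$-clique, so $|V'\setminus S| \geq k$ unless $V'\setminus S$ consists only of original (non-contracted) vertices — in which case $|V'\setminus S| = |V\setminus\widehat S| < k$ and we only need $|\Gamma_{G'}(S)| \geq |V'\setminus S| = |V\setminus\widehat S|$, which follows from the same inequality in $(V,F)$ since $\Gamma_{G'}(S) = \Gamma_{(V,F)}(\widehat S)$ in this sub-case. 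So the only genuine work is the uncrossing step: proving that we may assume no $K_C$ is split, while controlling both connectivity of the two sides and the neighborhood size. I expect \textbf{this uncrossing argument to be the main obstacle}, because splitting $S\cap K_C$ off could in principle disconnect $G'[S]$; the resolution is that $S$ necessarily contains neighbors of $C$ outside $K_C$ through which $G'[S]$ stays connected, or else $S = S\cap K_C$ is a subset of a single clique, a case handled directly (then $\Gamma_{G'}(S) \supseteq K_C\setminus S$ plus all of $\Gamma_{G'}(c)$, giving at least $\min\{k, |V'\setminus S|\}$ easily). I would write this out as: first dispose of the case $S$ or $V'\setminus S$ lies within a single clique; then perform the uncrossing; then translate to $(V,F)$ and invoke Lemma~\ref{lem:characterization_k_vertex_connectivity} in both directions.
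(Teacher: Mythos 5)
Your reduction to \cref{lem:characterization_k_vertex_connectivity} could in principle be made to work, but the step you yourself identify as the crux — the uncrossing — is not established, and the justification you offer for it is false. Moving a split clique $K_C$ entirely to one side of $S$ can destroy the connected-cut property, and your proposed resolution (``$S$ necessarily contains neighbors of $C$ outside $K_C$ through which $G'[S]$ stays connected'') is simply wrong: take, e.g., $k=3$, $(V,F)$ the $3$-cube, $U=\{000\}$, and $S=\{z_1,100,010\}$ where $z_1$ is one clique vertex; $S$ is a connected cut with the clique split, $S$ contains two external neighbors of the clique, yet $G'[S\setminus\{z_1\}]$ is disconnected, and symmetric examples show that shifting the clique \emph{into} $S$ can disconnect the complement, so neither direction of the shift is safe in general. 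Two further claims in your translation step are also false as stated: $\widehat S$ need not be a connected cut of $(V,F)$, because the contracted sets are components of $G[U]$, which need not induce connected subgraphs of $(V,F)$ (only $F\subseteq E$ is assumed), so ``un-contracting preserves connectivity'' fails; and $|\Gamma_{G'}(S)|=|\Gamma_{(V,F)}(\widehat S)|$ fails whenever some clique outside $S$ neighbors $S$ (then all $k$ of its vertices are neighbors of $S$, while arbitrarily many vertices of the corresponding component may be neighbors of $\widehat S$). These latter two issues are repairable — the direction of \cref{lem:characterization_k_vertex_connectivity} you invoke for $(V,F)$ holds for \emph{arbitrary} nonempty proper subsets, so connectivity of $\widehat S$ is not needed, and one can argue ``some outside clique neighbors $S$, hence $|\Gamma_{G'}(S)|\ge k$'' versus ``no outside clique neighbors $S$, hence the neighborhoods coincide'' — but the uncrossing as you describe it (preserving connected cuts on both sides) does not go through, and without it your argument has a genuine hole. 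A salvage would be to only ever absorb the outside part $B=K_C\setminus S$ into $S$, verify the counting inequality $|\Gamma_{G'}(S)|\ge|\Gamma_{G'}(S\cup B)|+|B|$, and then bound the final, possibly non-connected set directly in $(V,F)$; none of this is in your write-up.

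For comparison, the paper proves the lemma much more directly from the definition of $k$-vertex-connectivity: given $X\subseteq V'$ with $|X|\le k-1$, let $Y=X\cap(V\setminus U)$; then $(V,F)-Y$ is connected, each $k$-clique loses at most $k-1$ vertices and so retains at least one, and $G'-X$ arises from the connected graph $(V,F)-Y$ by contracting the components of $G[U]$ and replacing each by a nonempty clique — operations that preserve connectivity. This avoids cuts, neighborhoods, and any uncrossing altogether; the only place the size-$k$ cliques are needed is the trivial observation that $X$ cannot delete a clique entirely.
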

\begin{proof}
As $(V,F)$ is $k$-vertex-connected, $|V|\ge k+1$. 
Moreover, because $U\subsetneq V$, we have $|V'|\ge k+1$. 
To show that $G'$ is $k$-vertex-connected, we need to prove that for every $X\subseteq V'$ with $|X|\le k-1$, the graph $G'-X$ is connected. 

Let $X\subseteq V'$ with $|X|\le k-1$ and let $Y\subseteq X\cap (V\setminus U)$ be the set of ``non-clique-vertices'' in $X$. 
Then $|Y|\le k-1$ and as $(V,F)$ is $k$-vertex-connected, $(V,F)-Y$ is connected. 
Given that the cliques that we added for the connected components of $G[U]$ contain $k$ vertices each, at least one vertex per clique is not contained in $X$. 
Hence, $G'-X$ arises from the connected graph $(V,F)-Y$ by contracting the connected components of $G[U]$ and replacing each connected component by a nonempty clique (which may, however, contain fewer than $k$ vertices). 
As none of these operations destroys connectivity, $G'-X$ is indeed connected.

\end{proof}

Next, we observe that the treewidth of the graph resulting from a $k$-vertex-safe contraction cannot be much larger than the treewidth of the graph that we would obtain from the standard (not $k$-vertex-safe) contraction of edges.

\begin{lemma}\label{lemma:contraction_treewidth}
Let $G=(V,E)$ be a graph and let $U\subsetneq V$. Then 
\[
\mathrm{treewidth}(G\cvs G[U])\le k\cdot \mathrm{treewidth}(G/E[U])+k-1.
\]
\end{lemma}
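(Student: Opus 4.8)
The plan is to take a tree decomposition of $G/E[U]$ of minimum width and blow it up: wherever a bag contains one of the vertices that gets replaced by a $k$-clique in the $k$-vertex-safe contraction, we put that whole clique into the bag instead. Set $H\coloneqq G/E[U]$ and $H'\coloneqq G\cvs G[U]$, and let $\omega\coloneqq\mathrm{treewidth}(H)$. Recall that $H$ is exactly the graph obtained from $G$ by contracting each connected component of $G[U]$ into a single vertex; let $W\subseteq V(H)$ be the set of these ``super-vertices'', so that $H'$ is obtained from $H$ by deleting each $v\in W$ and inserting in its place a clique $K_v$ on $k$ fresh vertices, with every vertex of $K_v$ joined to every $H$-neighbour of $v$ (where, if such a neighbour lies in $W$, ``joined to it'' is read as ``joined to its clique''). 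Identifying $V(H)\setminus W$ with the corresponding vertices of $H'$ and writing $\pi\colon V(H')\to V(H)$ for the map that is the identity there and sends each $K_v$ to $v$, an edge $xy$ of $H'$ satisfies either (a) $x\neq y$ and $x,y\in K_v$ for some $v\in W$, or (b) $\pi(x)\pi(y)\in E(H)$.

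First I would fix a tree decomposition $(T,(B_t)_{t\in V(T)})$ of $H$ of width $\omega$, so $|B_t|\le\omega+1$ for every node $t$, and define for each $t$ the blown-up bag
\[
B'_t\ \coloneqq\ (B_t\setminus W)\ \cup\ \bigcup_{v\in B_t\cap W}K_v .
\]
Then $|B'_t|\le k\cdot|B_t|\le k(\omega+1)$. Hence, once we know that $(T,(B'_t)_{t\in V(T)})$ is a tree decomposition of $H'$, it has width at most $k(\omega+1)-1=k\omega+k-1$, which is precisely the bound claimed in the lemma. (Note that $H'$ need not be planar, but this is irrelevant, as only treewidth is at stake.)

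It then remains to check the three tree-decomposition axioms for $(T,(B'_t)_t)$. Coverage of vertices is immediate: a vertex $x\notin\bigcup_{v\in W}K_v$ lies in the same bags as before, and each $x\in K_v$ lies in exactly those $B'_t$ with $v\in B_t$, of which at least one exists since $v$ is a vertex of $H$. For the edges, an edge of type (a) inside $K_v$ is contained in any $B'_t$ with $v\in B_t$; for an edge $xy$ of type (b), choosing a node $t$ with $\pi(x),\pi(y)\in B_t$ (which exists because $\pi(x)\pi(y)\in E(H)$) we see that $B'_t$ contains $x$ and $y$, since it contains $\pi(x)$ itself if $\pi(x)\notin W$ and all of $K_{\pi(x)}$ otherwise, and likewise for $y$. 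Finally, for the connectedness axiom, note that $\{t: x\in B'_t\}$ equals $\{t:x\in B_t\}$ when $x\notin\bigcup_{v\in W}K_v$ and equals $\{t:v\in B_t\}$ when $x\in K_v$; in either case this is the set of nodes whose bag contains a fixed vertex of $H$, which induces a subtree of $T$. This proves that $(T,(B'_t)_t)$ is a tree decomposition of $H'$ and completes the argument.

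I do not expect a genuine obstacle here; the construction is routine tree-decomposition surgery. The only point deserving a little care is reading off the edge set of $H'$ correctly --- in particular what happens between two adjacent super-vertices, both of which are blown up --- but since the lemma only asks for an upper bound on the treewidth, it is harmless to work with the (largest) reading of the definition in which such a pair is joined by a complete bipartite graph between the two cliques, as done above.
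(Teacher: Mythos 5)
Your proposal is correct and is essentially the paper's own argument: the paper likewise takes a width-$\mathrm{treewidth}(G/E[U])$ tree decomposition and replaces each contracted vertex by its $k$ clique copies in every bag, bounding the new bag size by $k$ times the old one. You have simply spelled out the verification of the tree-decomposition axioms (and the handling of edges between two blown-up super-vertices), which the paper leaves implicit.
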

\begin{proof}
Given a tree decomposition for the graph $G/E[U]$ with maximum bag size $\mathrm{treewidth}(G / E[U])+1$, we obtain a tree decomposition for $G\cvs G[U]$ by replacing each contracted vertex with its $k$ copies. In doing so, the maximum bag size increases to at most $k\cdot\mathrm{treewidth}(G\cvs G[U])+k$.

\end{proof}

We now show that, analogously to the edge-connectivity setting, buying the edges in $E_{\mathcal{U}}$ decomposes the problem into independent parts.
To this end, we prove a statement analogous to \Cref{lemma:glue_solutions_together}:

\begin{lemma}\label{lemma:glue_solutions_together_vertex}
Let $G=(V,E)$ be a graph, and let $\mathcal{U}=(U_i)_{i\in I}$ be a $k$-vertex-safe cover of $G$. Let $(F_i)_{i\in I}$ be edge sets such that $(V,F_i)\cvs G[V\setminus U_i]$ is $k$-vertex-connected. 
Let $F\coloneqq E_\mathcal{U}\cup \bigcup_{i\in I} \big( F_i\cap E[U_i] \big)$. Then $(V,F)$ is $k$-vertex-connected.
\end{lemma}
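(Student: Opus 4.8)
The plan is to mirror the proof of \Cref{lemma:glue_solutions_together}, but using the characterization of $k$-vertex-connectivity from \Cref{lem:characterization_k_vertex_connectivity} in place of the cut-size condition, and being careful about the fact that $k$-vertex-safe contractions replace contracted components by cliques rather than single vertices. First I would fix an arbitrary vertex set $\emptyset\neq S\subsetneq V$ and aim to show $|\Gamma_F(S)|\ge\min\{k,|V\setminus S|\}$; by \Cref{lem:characterization_k_vertex_connectivity} this suffices (after also noting $|V|\ge k+1$, which holds since each $(V,F_i)\cvs G[V\setminus U_i]$ is $k$-vertex-connected and hence $|V\setminus U_i|+k\ge k+1$, so $V\neq\emptyset$; more carefully, one picks any $i$ and observes $|V|\ge|U_i|\ge 1$ together with the fact that $V\setminus U_i\subsetneq V$, and the clique replacement only adds vertices, so $|V|\ge k+1$ follows from the $k$-vertex-connectivity of the contracted graph having $\ge k+1$ vertices while contraction-and-clique-replacement can only decrease the count by merging—this needs a short argument, but it is routine).

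The key step is to reduce to a "nice" cut. As in \Cref{lemma:glue_solutions_together}, I would choose a minimal-in-$G$ witness: among all $\emptyset\neq S'\subsetneq V$ with $\Gamma_E(S')\subseteq\Gamma_E(S)$ and $V\setminus(S'\cup\Gamma_E(S'))\neq\emptyset$, pick one that is inclusion-minimal in an appropriate sense so that $S'$ becomes a connected cut of $G$; the standard fact is that a neighborhood-minimal vertex cut corresponds to a connected cut. Since $\Gamma_F(S)\supseteq$ (the relevant vertices of) $\Gamma_F(S')$ when $\Gamma_E(S')\subseteq\Gamma_E(S)$, it suffices to bound $|\Gamma_F(S')|$ from below. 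Now apply the $k$-vertex-safe cover property to the connected cut $S'$: either $|\Gamma_{E_{\mathcal{U}}}(S')|\ge k$, in which case every vertex of $\Gamma_{E_{\mathcal{U}}}(S')$ is incident to $S'$ via an edge of $E_{\mathcal{U}}\subseteq F$, giving $|\Gamma_F(S')|\ge k$ and we are done; or $\delta_E(S')\subseteq E[U_i]$ for some $i\in I$. In the latter case, $S'$ and $V\setminus S'$ are each a union of connected components of $G[V\setminus U_i]$ together with some vertices of $U_i$, so $S'$ survives as a genuine vertex set in the $k$-vertex-safe contraction $G_i'\coloneqq(V,F_i)\cvs G[V\setminus U_i]$; more precisely, the image $\widehat{S'}$ of $S'$ under the contraction is a proper nonempty vertex subset of $G_i'$ with $V'\setminus(\widehat{S'}\cup\Gamma_{G_i'}(\widehat{S'}))\neq\emptyset$ (because the witness $S'$ left $V\setminus(S'\cup\Gamma_E(S'))$ nonempty and no edge of $\delta_E(S')$ leaves $U_i$). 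Since $G_i'$ is $k$-vertex-connected, \Cref{lem:characterization_k_vertex_connectivity} gives $|\Gamma_{G_i'}(\widehat{S'})|\ge k$. Finally, I would pull this back: each neighbor of $\widehat{S'}$ in $G_i'$ either is an original vertex of $U_i$ adjacent to $S'$ via an edge of $F_i\cap E[U_i]\subseteq F$, or is a clique-copy of a contracted component $K$ of $G[V\setminus U_i]$ that is adjacent to $S'$; but an edge from a clique-copy of $K$ to $S'$ corresponds to an original edge of $F_i$ from $K\subseteq V\setminus U_i$ to $S'$, and since $\delta_E(S')\subseteq E[U_i]$ there are \emph{no} such original edges — so in fact all $k$ neighbors of $\widehat{S'}$ are honest vertices of $U_i$, each contributing a distinct vertex of $\Gamma_F(S')$. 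Hence $|\Gamma_F(S')|\ge k\ge\min\{k,|V\setminus S|\}$.

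The main obstacle I expect is the bookkeeping around the clique-replacement in the $k$-vertex-safe contraction: one has to be sure that (a) the cut $S'$ (with $\delta_E(S')\subseteq E[U_i]$) maps to a \emph{proper nonempty} cut $\widehat{S'}$ of $G_i'$ whose complement retains enough vertices for \Cref{lem:characterization_k_vertex_connectivity} to yield the full bound $k$ rather than a smaller $|V'\setminus\widehat{S'}|$, and (b) the $k$ neighbors produced in $G_i'$ genuinely pull back to $k$ \emph{distinct} vertices of $\Gamma_F(S')$ inside $U_i$ — which works precisely because $\delta_E(S')\subseteq E[U_i]$ forces every crossing edge (hence every boundary vertex) to live in $U_i$, so no neighbor is "hidden" inside a contracted component. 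A secondary subtlety is making the reduction to a connected-cut witness $S'$ while simultaneously preserving the non-emptiness of the "far side" $V\setminus(S'\cup\Gamma_E(S'))$; I would handle this by first reducing to the case $\Gamma_E(S)\subsetneq V\setminus S$ (if $\Gamma_E(S)=V\setminus S$ then, using $|\Gamma_F(S)|$ only needs to reach $|V\setminus S|$, a direct counting argument via connectivity of $(V,F)$ restricted appropriately suffices), and then taking a neighborhood-minimal such set, which is automatically a connected cut of $G$.
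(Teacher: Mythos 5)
Your core ``Case 2'' analysis (cover case $\delta(S')\subseteq E[U_i]$, far side nonempty) is essentially sound and matches the paper's use of the $k$-vertex-safe contraction; the observation that $\delta_E(S')\subseteq E[U_i]$ forces all $G'_i$-neighbours of the image of $S'$ to be honest $U_i$-vertices reachable by $F_i\cap E[U_i]$-edges is exactly the right mechanism. However, there are two genuine gaps. First, the framing: you try to prove $|\Gamma_F(S)|\ge\min\{k,|V\setminus S|\}$ for \emph{arbitrary} $S$ and reduce to a connected cut $S'$ of $G$, but the reduction's key claim is unjustified. From $\Gamma_E(S')\subseteq\Gamma_E(S)$ alone you cannot conclude $\Gamma_F(S')\subseteq\Gamma_F(S)$ (you would at least need $S'\subseteq S$, and you must simultaneously arrange that both $G[S']$ and $G[V\setminus S']$ are connected while preserving the nonempty far side -- none of this is established). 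The whole detour is unnecessary: applying \cref{lem:characterization_k_vertex_connectivity} to the graph $(V,F)$ itself means you only ever need connected cuts of $(V,F)$, which are automatically connected cuts of $G$ because $F\subseteq E$. But that route requires first showing $(V,F)$ is connected (and $|V|\ge k+1$); the paper does this via \cref{lemma:glue_solutions_together} applied to the induced $1$-edge-safe cover, and your proposal never addresses it (your blanket bound for all $S$ would imply connectivity, but that bound rests on the flawed reduction).

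Second, and more substantively, the case you defer -- $\Gamma_E(S)=V\setminus S$, and more generally the regime where the relevant minimum is $|V\setminus S|<k$ -- is precisely the delicate part of the lemma, and ``a direct counting argument via connectivity of $(V,F)$'' cannot settle it: connectivity only yields a single $F$-neighbour of $S$, whereas here one must show that \emph{every} vertex of $V\setminus S$ is an $F$-neighbour of $S$. This is where the paper's proof does its real work: when $\min\{k,|V(G'_i)\setminus S'|\}$ is attained by the second term, it shows $\Gamma_{G'_i}(S')=V(G'_i)\setminus S'$ implies $|\Gamma_{G'_i}(S')|=|\Gamma_{G_i}(S')|=|\Gamma_G(S)|$ and then bounds $|\Gamma_G(S)|$ using the $k$-vertex-connectivity of the input graph. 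Your far-side argument could in fact be adapted to this case without that extra hypothesis (if $\Gamma_G(S)=V\setminus S$ and cover case (ii) holds, then $V\setminus S\subseteq U_i$, every component of $G[V\setminus U_i]$ lies inside $S$, so $V(G'_i)\setminus\widehat{S}$ equals $V\setminus S$ and \cref{lem:characterization_k_vertex_connectivity} applied to $G'_i$ gives exactly the needed bound), but as written the proposal neither does this nor offers a workable substitute, so the step as proposed would fail.
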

\begin{proof}
We first show that $(V,F)$ is connected. To this end, we note that $\mathcal{U}$ is a $1$-edge-safe cover and moreover, the fact that $(V,F_i)\cvs G[V\setminus U_i]$ is $k$-vertex-connected for every $i\in I$ implies that $(V,F_i)/ E[V\setminus U_i]$ is connected (i.e., $1$-edge-connected) for every $i\in I$. By \cref{lemma:glue_solutions_together}, $(V,F)$ is connected.

By \Cref{lem:characterization_k_vertex_connectivity}, it suffices to show that $|\Gamma_{(V,F)}(S)|\ge \min\{k,|V\setminus S|\}$ for every connected cut $\emptyset\ne S\subsetneq V$ of $(V,F)$.
Let $S$ be such a connected cut. Then $S$ is also a connected cut of $G$.
If there is no $i\in I$ such that $\delta(S)\subseteq E[U_i]$, then 
\[|\Gamma_{(V,F)}(S)|\ge |\Gamma_{E_{\mathcal{U}}}(S)|\ge k\ge\min\{k,|V\setminus S|\}.\]
Otherwise, there is an $i\in I$ such that $\delta(S)\subseteq E[U_i]$. 
Consider the graphs $G'_i\coloneqq (V,F_i)\cvs G[V\setminus U_i]$ and $G_i\coloneqq G\cvs G[V\setminus U_i]$, and observe that $G'_i$ is a spanning subgraph of $G_i$ because $(V,F_i)$ is a spanning subgraph of $G$.
Note that $S$ corresponds to a connected cut $S'$ in the graph $G'_i$ with $\delta_{G'_i}(S')\subseteq F_i\cap E[U_i]$. 
Moreover, all neighbors of $S'$ in the graph $G_i$ and in the graph $G'_i$ are contained in $U_i$. 
Hence, 
\[
|\Gamma_{(V,F)}(S)|\ge |\Gamma_{G'_i}(S')|\ge \min\{k,|V(G'_i)\setminus S'|\},
\]
where the second inequality holds by \Cref{lem:characterization_k_vertex_connectivity} and because $G'_i$ is $k$-vertex-connected by assumption.
If the minimum is attained by $k$, we get $|\Gamma_{(V,F)}(S)|\ge k\ge\min\{k,|V\setminus S|\}$. In case the minimum is attained by $|V(G'_i)\setminus S'|$, we know that $\Gamma_{G'_i}(S') = V(G'_i)\setminus S'$.
Using that $G'_i$ is a spanning subgraph of $G_i$, this implies
\[
V(G'_i)\setminus S'= \Gamma_{G'_i}(S')\subseteq \Gamma_{G_i}(S')\subseteq V(G_i)\setminus S' = V(G'_i)\setminus S',
\] 
implying $ |\Gamma_{G'_i}(S')| = |\Gamma_{G_i}(S')|$.
We conclude
\[
|\Gamma_{(V,F)}(S)| \ge |\Gamma_{G'_i}(S')| = |\Gamma_{G_i}(S')| = |\Gamma_G(S)| \ge \min\{k,|V\setminus S|\},
\] 
where the first inequality and the second equality follow from $\delta(S)\subseteq E[U_i]$, and the last inequality follows from the fact that $G$ is $k$-vertex-connected.

\end{proof}

Moreover, again analogously to the edge-connectivity setting (see \Cref{lemma:cost_combined_solution}), we can bound the total cost of optimum solutions to our independent subinstances.

\begin{lemma}\label{lemma:cost_combined_solution_vertex}
Let $G=(V,E)$ be a $k$-vertex-connected graph with edge costs $c\colon E \to \mathbb{R}_{\geq 0}$, and let $\mathcal{U}=(U_i)_{i\in I}$ be a $k$-vertex-safe cover of $G$.
For $i\in I$, let $G_i\coloneqq (V,E)\cvs G[V\setminus U_i]$ and let $c_i$ be the cost function that assigns costs of zero to all edges that are incident to a clique vertex or contained in $E_{\mathcal{U}}$, and a cost of $c(e)$ to every other edge $e$. Let $\OPT_i$ be an optimum $k$-WVCSS solution for $(G_i,c_i)$.

Then, $F\coloneqq E_\mathcal{U}\cup \bigcup_{i\in I} \big(\OPT_i\cap E[U_i]\big)$ is a solution to the $k$-WVCSS in $(G,C)$ of cost 
\[
c(F) \leq c(\OPT) + c(E_{\mathcal{U}}),
\]
where $\OPT$ is an optimal $k$-WVCSS solution of $(G,c)$.
\end{lemma}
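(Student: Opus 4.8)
The plan is to adapt the proof of \Cref{lemma:cost_combined_solution} to the vertex-connectivity setting; the only genuinely new issue is translating between a solution $\OPT_i$ of $(G_i,c_i)$---a subgraph of $G_i$, whose vertex set contains the auxiliary clique vertices introduced by $\cvs$---and an edge set of $G$ to which \Cref{lemma:glue_solutions_together_vertex} applies.

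For feasibility, for each $i\in I$ I would set $F_i\coloneqq (\OPT_i\cap E[U_i])\cup\delta_G(U_i)$, i.e., I keep the real edges of $\OPT_i$ inside $U_i$ and add \emph{all} edges of $G$ crossing $U_i$. Because $F_i$ contains every crossing edge, contracting the connected components of $G[V\setminus U_i]$ in $(V,F_i)$ makes each super-vertex, arising from a component $C$, adjacent to exactly $\Gamma_G(C)\subseteq U_i$, and the subsequent blow-up re-creates precisely the clique edges and clique-incident edges of $G_i$; hence $(V,F_i)\cvs G[V\setminus U_i]$ has the same vertex set as $G_i$ and an edge set that contains $\OPT_i$, so it is $k$-vertex-connected. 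Since moreover $F_i\cap E[U_i]=\OPT_i\cap E[U_i]$, \Cref{lemma:glue_solutions_together_vertex} applied to $(F_i)_{i\in I}$ shows that $F=E_\mathcal{U}\cup\bigcup_{i\in I}(\OPT_i\cap E[U_i])$ is a feasible $k$-WVCSS solution in $(G,c)$.

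For the cost bound I would copy the accounting of \Cref{lemma:cost_combined_solution}. For $i\in I$, by \Cref{lemma:safe_contraction} the graph $(V,\OPT)\cvs G[V\setminus U_i]$ is $k$-vertex-connected, and its edge set is contained in $\widehat{T}_i\coloneqq(\OPT\cap E[U_i])\cup\{e\in E(G_i)\colon e\text{ incident to a clique vertex}\}$; thus $\widehat{T}_i$ is a feasible solution to $(G_i,c_i)$. As $c_i$ is supported on $E[U_i]\setminus E_\mathcal{U}$, this gives $c_i(\OPT_i)\le c_i(\widehat{T}_i)=c\big((\OPT\cap E[U_i])\setminus E_\mathcal{U}\big)$. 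Summing over $i$, and using $E[U_i]\cap E[U_j]=E[U_i\cap U_j]\subseteq E_\mathcal{U}$ for $i\neq j$ (so that the sets $(\OPT\cap E[U_i])\setminus E_\mathcal{U}$ are pairwise disjoint subsets of $\OPT$), yields $\sum_{i\in I}c_i(\OPT_i)\le c(\OPT)$. Finally, since $c\big((\OPT_i\cap E[U_i])\setminus E_\mathcal{U}\big)=c_i(\OPT_i)$, I conclude $c(F)=c(E_\mathcal{U})+c(F\setminus E_\mathcal{U})\le c(E_\mathcal{U})+\sum_{i\in I}c\big((\OPT_i\cap E[U_i])\setminus E_\mathcal{U}\big)\le c(E_\mathcal{U})+c(\OPT)$.

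The main obstacle, relative to the edge-connectivity case, is precisely the bookkeeping forced by the $\cvs$ operation: one must be careful that the clique vertices it introduces do not affect connectivity, i.e., that the chosen $F_i\subseteq E$ really makes $(V,F_i)\cvs G[V\setminus U_i]$ $k$-vertex-connected and that, symmetrically, an optimum solution of $G$ restricts to a feasible solution of $(G_i,c_i)$ after the $k$-vertex-safe contraction. Once these two correspondences are pinned down, the cost computation is verbatim that of \Cref{lemma:cost_combined_solution}, because it touches only real edges inside the sets $U_i$, on which $c$ and $c_i$ agree except on $E_\mathcal{U}$.
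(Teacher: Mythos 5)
Your proposal is correct and follows essentially the same approach as the paper: feasibility via \Cref{lemma:glue_solutions_together_vertex}, and the cost bound obtained by applying \Cref{lemma:safe_contraction} to $\OPT$ to show that $(\OPT\cap E[U_i])$ together with all clique-incident edges of $G_i$ is feasible for $(G_i,c_i)$, then mirroring the accounting of \Cref{lemma:cost_combined_solution}. The only difference is that you make explicit the choice $F_i=(\OPT_i\cap E[U_i])\cup\delta_G(U_i)$ needed to instantiate \Cref{lemma:glue_solutions_together_vertex}, a step the paper leaves implicit; this is a harmless and arguably helpful clarification.
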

\begin{proof}
By \Cref{lemma:glue_solutions_together_vertex}, $F$ is a feasible solution to $k$-WVCSS in $(G,c)$. \Cref{lemma:safe_contraction} tells us that $(V,\OPT)\cvs G[V\setminus U_i]$ is $k$-vertex-connected. In particular, the edges in $\OPT\cap E[U_i]$, together with all edges incident to clique vertices in $G_i$, form a feasible solution $F_i$ to the $k$-WVCSS instance $(G_i,c_i)$. This implies $c_i(\OPT_i)\le c_i(F_i)=c(\OPT\cap E[U_i]\setminus E_{\mathcal{U}})$.
Consequently,
 \begin{equation*}
     \sum_{i\in I}c_i(\OPT_i) \leq \sum_{i\in I} c(\OPT\cap E[U_i]\setminus E_{\mathcal{U}}) \le c(\OPT),
 \end{equation*}
because $E[U_i]\cap E[U_j] = E[U_i\cap U_j]\subseteq E_{\mathcal{U}}$ for $i\ne j\in I$. 
We conclude
\begin{align*}
c(F)\ =&\ c(F\setminus E_\mathcal{U})+c(E_\mathcal{U}) \ \le\  \sum_{i=1} c(\OPT_i\cap E[U_i]\setminus E_{\mathcal{U}})+c(E_\mathcal{U}) \\
 =&\  \sum_{i\in I} c_i(\OPT_i)+c(E_\mathcal{U})\ \le\ c(\OPT)+c(E_\mathcal{U}).
\end{align*}

\end{proof}

Finally, we obtain a PTAS for planar $k$-WVCSS for bounded cost ratio instances:

\begin{proof}[Proof of \Cref{thm:PTASs_k-WECSS_k-WVCSS_constant_k} for $k$-WVCSS]
Let $G=(V,E)$ be a planar $k$-vertex-connected graph with edge costs $c\colon E \to \mathbb{R}_{\geq 0}$ of bounded cost ratio, and let $\varepsilon >0$.
As we can check whether $G$ is $k$-vertex-connected in linear time~\cite{Eppstein1999}, we will assume that this is the case in the following.
We may further assume that $G$ is a simple graph because adding parallel edges cannot increase the vertex-connectivity of a graph.
If this assumption is not satisfied, we remove all but the cheapest edge between any pair of vertices, which we can do in linear time.

Using \Cref{theorem:compute_k_vertex_safe_cover}, we compute in linear time a $k$-vertex-safe cover $\mathcal{U}=(U_i)_{i\in I}$ of $G$, as well as each of the graphs $G_i\coloneqq (V,E) \cvs G[V\setminus U_i]$ for $i\in I$. Note that $G_i$ can be computed from $(V,E)/E[V\setminus U_i]$ in time linear in the size of $G_i$.We further define cost functions $c_i$ from $c$ by setting the costs of all edges in $E_{\mathcal{U}}$ and in $E(G_i)\setminus E[U_i]$ to zero.
\Cref{theorem:compute_k_vertex_safe_cover} and \Cref{lemma:contraction_treewidth} guarantee that each $G_i$ has treewidth $\mathcal{O}(\frac{k^2}{\varepsilon})$, which is constant for fixed $k$.
Hence, by \cref{thm:bounded_treewidth_kWECSS}, we can use a linear time dynamic programming procedure to compute an optimum $k$-WVCSS solution $\OPT_i$ in each $(G_i,c_i)$.
By \Cref{theorem:compute_k_vertex_safe_cover}~\ref{item:inAtMostTwo}, the sum of the sizes of all $G_i$ is linearly bounded in the size of $G$.
Hence, computing $k$-WVCSS solutions for all $G_i$'s takes, altogether, linear time in the size of $G$.
Finally, we return $F\coloneqq E_{\mathcal{U}}\cup \bigcup_{i\in I} (\OPT_i\cap E[U_i])$.
By \Cref{lemma:cost_combined_solution_vertex}, $F$ is a $k$-WVCSS solution in $G$ of cost $c(F)\le c(\OPT) + c(E_{\mathcal{U}})$.

By \Cref{theorem:compute_k_vertex_safe_cover}~\ref{k-vertex_safe_edge_costs} and \Cref{lem:costs_of_all_edges} (using that $k$-vertex-connectivity implies $k$-edge-connectivity), we have 
\[
c(E_{\mathcal{U}})\le \epsilon\cdot c(E)\le \epsilon\cdot 6\Delta \cdot c(\mathrm{OPT}).
\] 
For any $\epsilon' > 0$, we can thus achieve a $(1+\epsilon')$-approximation by setting $\epsilon \coloneqq \frac{\epsilon'}{6\Delta}$.

\end{proof}
\section{A PTAS for planar $k$-WCAP with bounded cost ratio}\label{sec:augmentation}
In this section, we expand on the ideas given in \cref{sec:augmentation_short} and provide a full proof of \cref{theorem:PTAS_WCAP}. The remainder of this section is organized as follows: In \cref{sec:snug_vertices_and_chains}, we provide the deferred proofs of \cref{lemma:chain_graph_paths,lemma:chain_graph_subgraph,lemma:only_cut_intersecting_snug_edge} and prove further structural results about snug paths and snug shores that will be crucial to develop our PTAS for planar $k$-WCAP with bounded cost ratio.
In \cref{sec:augmentation_safe_covers}, we formally introduce the notion of a $k$-augmentation-safe cover and discuss the interplay between $(k+1)$-edge-safe covers, snug path contractions, and $k$-augmentation-safe covers. \cref{sec:thinning} provides the tools that we need to thin out the link set and bound the cost of our solution. Finally, we prove \cref{theorem:PTAS_key_theorem} and \cref{theorem:PTAS_WCAP} in \cref{sec:PTAS_WCAP}.

For convenience, we will assume throughout this section that a given instance $(G,L,c)$ of (planar) $k$-WCAP is feasible, i.e., $G+L$ is $(k+1)$-edge-connected. Note that we can check for (in)feasibility in polynomial time.
\subsection{Snug vertices and snug paths\label{sec:snug_vertices_and_chains}}
For this section, let $k\in \mathbb{Z}_{> 0}$, let $G=(V,E)$ be a $k$-edge-connected graph, and let $r\in V$ be a fixed root.
We recap the definitions of snug vertices and snug shores from \cref{sec:augmentation_short}.
\begin{definition}[snug vertex]
	A vertex $v\in V\setminus\{r\}$ is called \emph{snug} if $\mathrm{deg}(v)\ge k+1$ and there exist $k$-cuts $S_1\subseteq S_2\subseteq V\setminus\{r\}$ with $S_2\setminus S_1=\{v\}$. We denote the set of snug vertices by $V_{\rm{snug}}$.
\end{definition}

\begin{definition}
	Let $u\in V$ be a snug vertex.
	A pair $(S_1, S_2)$ of $k$-cuts is called \emph{snug shores for $u$} if $S_2 = S_1\cup \{u\}$ and $u\not\in S_1$. 
\end{definition}
We begin with \cref{lemma:interaction_with_snug_shores} below, which tells us that no $k$-cut $S\subseteq V\setminus\{r\}$ can cross snug shores. It helps us to prove the uniqueness of snug shores (\cref{lemma:snug_shores_unique}). \cref{lemma:interaction_with_snug_shores} can further be used to establish structural properties of the chain graph and prove \cref{lemma:only_cut_intersecting_snug_edge}.
\begin{lemma}\label{lemma:interaction_with_snug_shores}
	Let $(S_1,S_2)$ be snug shores for $u\in V$ and let $S\subseteq V\setminus \{r\}$, $S\neq \emptyset$ be a $k$-cut.  Then $S\subseteq S_1$, $S_2\subseteq S$, or $S_2\cap S=\emptyset$.
\end{lemma}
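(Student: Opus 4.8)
The plan is to isolate a single forbidden configuration, prove it by an elementary edge count — this is where the degree requirement $\mathrm{deg}(u)\ge k+1$ from the definition of a snug vertex is used — and then derive the three alternatives by routine uncrossing. Throughout I use only two facts about the cut function $d(\cdot)=|\delta(\cdot)|$ of the $k$-edge-connected graph $G$: every nonempty proper subset $X\subsetneq V$ has $d(X)\ge k$, and $d$ is submodular, $d(X)+d(Y)\ge d(X\cap Y)+d(X\cup Y)$. I also use that $\{u\}$ is not a $k$-cut, that is, $d(\{u\})=\mathrm{deg}(u)\ge k+1$ (as usual $G$ is loopless, so $\mathrm{deg}(u)$ is the number of edges leaving $u$).

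\emph{Crux claim:} there is no $k$-cut $W$ with $S_1\subsetneq W$ and $u\notin W$. To prove it, set $T:=W\setminus S_1\ne\emptyset$; since $u\notin W\cup S_1$ we get $T\subseteq V\setminus S_2$, so $V$ is partitioned into $S_1$, $\{u\}$, $T$ and $R:=V\setminus(W\cup\{u\})$. Submodularity applied to $W$ and $S_2$, together with $W\cap S_2=S_1$ and $W\cup S_2=W\cup\{u\}$, gives $d(W\cup\{u\})\le d(W)+d(S_2)-d(S_1)=k$; moreover $R\ne\emptyset$, for $R=\emptyset$ would force $W=V\setminus\{u\}$ and hence $k=d(W)=d(\{u\})\ge k+1$, so $W\cup\{u\}$ is a proper nonempty subset and therefore $d(W\cup\{u\})=k$. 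Writing $e(X,Y)$ for the number of edges between disjoint sets $X,Y$ and expressing $d(S_1),d(S_2),d(W),d(W\cup\{u\})$ through the edges running between the four parts, the identities $d(S_1)=d(S_2)$ and $d(W)=d(W\cup\{u\})$ force $e(u,T)=0$ and $e(u,R)=e(u,S_1)$, hence $2\,e(u,S_1)=d(\{u\})=\mathrm{deg}(u)$. Then $d(T)$ is carried only by $e(S_1,T)$ and $e(T,R)$, and each of these equals $k-\mathrm{deg}(u)/2-e(S_1,R)\le k-\mathrm{deg}(u)/2$, so $d(T)\le 2k-\mathrm{deg}(u)\le k-1$, contradicting $d(T)\ge k$. (This is exactly what rules out such a $W$ for the snug chain of \cref{fig:snug_chain_1}.)

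Applying the crux claim to the pair $(V\setminus S_2,\,V\setminus S_1)$ — again a nested pair of $k$-cuts whose difference is $\{u\}$ — and passing to complements yields the symmetric statement: there is no $k$-cut $W'$ with $W'\subsetneq S_2$ and $u\in W'$. Now suppose, towards a contradiction, that the conclusion of the lemma fails for some $k$-cut $S$ with $r\notin S$, i.e.\ $S\not\subseteq S_1$, $S_2\not\subseteq S$ and $S\cap S_2\ne\emptyset$. If $u\in S$, then $S_1\not\subseteq S$; if in addition $S\subseteq S_2$ then $S\subsetneq S_2$ is a $k$-cut containing $u$, contradicting the symmetric statement, while if $S\not\subseteq S_2$ then $S$ and $S_2$ cross, all four corners being nonempty (note $r\in V\setminus(S\cup S_2)$), so submodularity makes $S\cap S_2$ a $k$-cut; it contains $u$ and is a proper subset of $S_2$ (equality would give $S_2\subseteq S$), again contradicting the symmetric statement. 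If instead $u\notin S$, then $S\cap S_2=S\cap S_1\ne\emptyset$ and $S_2\not\subseteq S$ holds automatically; if $S_1\subseteq S$ then $S_1\subsetneq S$ is a $k$-cut avoiding $u$, contradicting the crux claim, while if $S_1\not\subseteq S$ then $S$ and $S_1$ cross, so $S\cup S_1$ is a $k$-cut with $S_1\subsetneq S\cup S_1$ and $u\notin S\cup S_1$, again contradicting the crux claim. This covers all cases.

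The edge count inside the crux claim is the one genuine computation: the four cut-size expressions and the two difference identities must be set up carefully so that they pin down the edges incident to $u$. Everything else is standard uncrossing; the only care needed there is to check, in each case, that the set fed to the crux claim (or to its symmetric form) really is a $k$-cut with the required containment against $S_1$ (resp.\ $S_2$) and on the correct side of $u$, which is precisely where submodularity, combined with the fact that $\{u\}$ is not a $k$-cut, promotes a crossing pair to one whose corners are all $k$-cuts.
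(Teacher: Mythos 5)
Your proof is correct, and it takes a genuinely more elementary route than the paper's. Both arguments reduce, by uncrossing, to a degree contradiction with $\deg(u)\geq k+1$; the difference lies in how that contradiction is extracted. The paper's proof is two lines per case: submodularity applied to $S$ and $S_2$ (resp.\ $S_1$) produces the $k$-cut $Q$, and then \emph{posimodularity} (the inequality $d(A)+d(B)\geq d(A\setminus B)+d(B\setminus A)$, hidden in the paper's phrase ``so both must be $k$-cuts'') applied to $(S_1,Q)$ (resp.\ $(Q,S_2)$) pins $Q\setminus S_1=\{u\}$ (resp.\ $S_2\setminus Q=\{u\}$) to a $k$-cut, contradicting $\deg(u)\geq k+1$. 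You instead isolate a clean, reusable crux claim --- no $k$-cut nested strictly between $S_1$ and $V\setminus\{u\}$ --- and prove it by a four-part edge count that never appeals to posimodularity. That count is verifiably correct, but it is doing by hand what posimodularity does in one step: applied to $W$ and $S_2$, it gives $2k\geq d(W\setminus S_2)+d(S_2\setminus W)=d(W\setminus S_1)+\deg(u)\geq k+\deg(u)$, immediately contradicting $\deg(u)\geq k+1$. What your route buys is a self-contained, edge-level verification that uses only submodularity, plus a crisp intermediate statement (the crux claim and its complement) that cleanly organizes the case analysis; what it costs is length. The ensuing case split --- feeding the appropriate corner of the uncrossing into the crux claim or its complemented form, with the required nonemptiness checks using $r\notin S$ --- is handled carefully and parallels the paper's own uncrossing steps.
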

\begin{proof}
	If $u\in S$, but $S_2\not\subseteq S$, we have $S_1\setminus S\ne \emptyset$. As $u\in S\cap S_2$ and $r\notin S\cup S_2$, $Q\coloneqq S\cap S_2$ is a $k$-cut. We have $S_1\setminus Q\ne \emptyset$ and $Q\setminus S_1=\{u\}$, so both must be $k$-cuts. This implies $\deg(u)=k$, a contradiction.
	
	If $u\notin S$, but $S\not\subseteq S_1$ and $S_2\cap S\ne \emptyset$, then $S\setminus S_2\ne\emptyset$ and $S_1\cap S\ne \emptyset$. As $r\notin S\cup S_1$, $Q\coloneqq S\cup S_1$ is a $k$-cut. We have $Q\setminus S_2\ne \emptyset$ and $S_2\setminus Q=\{u\}$, so both must be $k$-cuts. But again, this yields $\deg(u)=k$.
    
\end{proof}
\begin{lemma}\label{lemma:snug_shores_unique}%
	Let $u$ be a snug vertex. Then there are unique snug shores for $u$.
\end{lemma}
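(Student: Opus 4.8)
The plan is to obtain uniqueness as a direct consequence of the uncrossing behaviour recorded in \cref{lemma:interaction_with_snug_shores}, applied twice with the two candidate shores interchanged. So suppose $(S_1,S_2)$ and $(S_1',S_2')$ are both snug shores for $u$. Recall that snug shores are $k$-cuts contained in $V\setminus\{r\}$ (this is also what \cref{lemma:interaction_with_snug_shores} implicitly uses for its second argument), with $S_2=S_1\cup\{u\}$ and $u\notin S_1$, and likewise $S_2'=S_1'\cup\{u\}$ and $u\notin S_1'$. It therefore suffices to prove $S_2=S_2'$, since then $S_1=S_2\setminus\{u\}=S_2'\setminus\{u\}=S_1'$.

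To show $S_2\subseteq S_2'$, I would apply \cref{lemma:interaction_with_snug_shores} to the snug shores $(S_1,S_2)$ and the $k$-cut $S\coloneqq S_2'$, which is a nonempty $k$-cut contained in $V\setminus\{r\}$ and hence an admissible choice. This yields one of the three alternatives $S_2'\subseteq S_1$, $\ S_2\subseteq S_2'$, or $S_2\cap S_2'=\emptyset$. The first is impossible because $u\in S_2'$ while $u\notin S_1$, and the third is impossible because $u\in S_2\cap S_2'$; hence $S_2\subseteq S_2'$. Running the symmetric argument — applying \cref{lemma:interaction_with_snug_shores} to the snug shores $(S_1',S_2')$ and the $k$-cut $S\coloneqq S_2$ — gives $S_2'\subseteq S_2$ in exactly the same way. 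Combining the two inclusions yields $S_2=S_2'$, and therefore $S_1=S_1'$, as claimed.

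I do not expect a real obstacle here once \cref{lemma:interaction_with_snug_shores} is available; the only point needing a moment's care is checking that $S_2'$ (respectively $S_2$) is a legitimate input to \cref{lemma:interaction_with_snug_shores}, i.e.\ a nonempty $k$-cut inside $V\setminus\{r\}$. This is exactly where the convention that snug shores avoid the root enters, and it is also what excludes the otherwise spurious complementary pair $(V\setminus S_2,V\setminus S_1)$, so without it the statement would simply be false. As an alternative route (not the one I would take, but a useful sanity check), one can instead uncross $S_2$ and $S_2'$ directly: since $u\in S_2\cap S_2'$ and $r\notin S_2\cup S_2'$, submodularity of the cut function in the $k$-edge-connected graph $G$ forces $S_2\cap S_2'$ to be a $k$-cut containing $u$; if $S_1\cap S_1'=\emptyset$ this makes $\{u\}$ a $k$-cut, contradicting $\deg(u)\ge k+1$, and otherwise \cref{lemma:interaction_with_snug_shores} again pins down $S_1'\subseteq S_1$ and $S_1\subseteq S_1'$.
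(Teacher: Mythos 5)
Your proof is correct and follows essentially the same route as the paper: both apply \cref{lemma:interaction_with_snug_shores} twice, once to $(S_1,S_2)$ with the $k$-cut $S_2'$ and once symmetrically, rule out the alternatives using $u\in S_2\cap S_2'$ and $u\notin S_1\cup S_1'$, and conclude $S_2=S_2'$ and hence $S_1=S_1'$. Your extra remarks on the root convention and the submodularity-based alternative are fine but not needed.
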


\begin{proof}
	Let $(S_1,S_2)$ and $(T_1,T_2)$ be snug shores for $u$. As $u\in S_2\cap T_2$, but $u\notin S_1\cup T_1$, \Cref{lemma:interaction_with_snug_shores} implies $S_2\subseteq T_2$ (when applied to $(S_1,S_2)$ and $T_2$) and $T_2\subseteq S_2$ (when applied to $(T_1,T_2)$ and $S_2$), so $S_2=T_2$.
	Because $T_1 = T_2 \setminus \{u\}$, as $(T_1,T_2)$ are snug shores for $u$, we also have $T_1=S_2\setminus \{u\}=S_1$.

\end{proof}

Recall the definition of the chain graph from \cref{sec:augmentation_short}.

\begin{definition}[chain graph]
	The \emph{chain graph} $G_{\rm{chain}}=(V_{\rm{snug}},A)$ contains an arc $(u,v)$ if and only if $u$ and $v$ are snug vertices with shores $(S^u_1,S^u_2)$ and $(S^v_1,S^v_2)$, respectively, with $S^u_2=S^v_1$.
\end{definition}
\cref{lemma:chain_graph_paths}, which we restate here for convenience, follows from \cref{prop:at_most_one_outgoing,prop:at_most_one_incoming,prop:no_directed_cycle}, which we state and prove next.
\lemmachaingraphpaths*

\begin{proposition}\label{prop:at_most_one_outgoing}
	For every $k$-cut $S\subseteq V\setminus \{r\}$, there is at most one vertex $v\in V\setminus S$ such that $(S,S\cup \{v\})$ form snug shores for $v$.
	In particular, each vertex has at most one outgoing arc in $G_{\rm{chain}}$. 
\end{proposition}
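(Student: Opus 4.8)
The plan is to argue by contradiction, routing everything through \Cref{lemma:interaction_with_snug_shores}, which already records that no $k$-cut inside $V\setminus\{r\}$ can cross a pair of snug shores. Suppose there were two \emph{distinct} vertices $v,w\in V\setminus S$ such that $(S,S\cup\{v\})$ are snug shores for $v$ and $(S,S\cup\{w\})$ are snug shores for $w$. Since snug shores only exist for snug vertices, both $v$ and $w$ lie in $V\setminus\{r\}$; combined with $S\subseteq V\setminus\{r\}$ this makes $S\cup\{w\}$ a nonempty $k$-cut contained in $V\setminus\{r\}$, so \Cref{lemma:interaction_with_snug_shores} is applicable to the snug shores $(S,S\cup\{v\})$ for $v$ and the $k$-cut $S\cup\{w\}$.

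That lemma leaves three cases, each of which I expect to dismiss in a single line: $S\cup\{w\}\subseteq S$ is impossible because $w\notin S$; $S\cup\{v\}\subseteq S\cup\{w\}$ would force $v\in S\cup\{w\}$, hence $v=w$ since $v\notin S$, a contradiction; and $(S\cup\{v\})\cap(S\cup\{w\})=\emptyset$ is impossible since this intersection equals $S$, which is nonempty as it is a $k$-cut. This establishes the first assertion. For the ``in particular'' part I would simply unfold the definition of $G_{\rm chain}$: an outgoing arc $(u,v)$ at a snug vertex $u$ with snug shores $(S^u_1,S^u_2)$ forces $S^v_1=S^u_2$, hence $(S^u_2,S^u_2\cup\{v\})$ are snug shores for $v$ with $v\in V\setminus S^u_2$; as $S^u_2$ is a $k$-cut contained in $V\setminus\{r\}$, the first part gives at most one such $v$, so at most one outgoing arc.

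I do not anticipate a genuine obstacle here: the only points requiring care are verifying the hypotheses of \Cref{lemma:interaction_with_snug_shores} — namely that $v,w\neq r$ (which holds because snug vertices are defined to lie in $V\setminus\{r\}$), that $S\cup\{w\}$ avoids $r$, and that $S\neq\emptyset$ (automatic for a $k$-cut). If one prefers a self-contained argument not appealing to \Cref{lemma:interaction_with_snug_shores}, an alternative is to apply submodularity of $|\delta(\cdot)|$ to $S\cup\{v\}$ and $S\cup\{w\}$ to deduce that $S\cup\{v,w\}$ is also a $k$-cut (using $k$-edge-connectivity of $G$ for the matching lower bound), and then invoke uniqueness of snug shores (\Cref{lemma:snug_shores_unique}) for $w$ applied to $(S,S\cup\{w\})$ and $(S\cup\{v\},S\cup\{v,w\})$ to force $S\cup\{v\}=S$, a contradiction; the first route is shorter, so I would present that one.
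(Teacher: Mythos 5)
Your proof is correct and follows essentially the same route as the paper: a contradiction via \Cref{lemma:interaction_with_snug_shores} applied to the snug shores $(S,S\cup\{v\})$ and the $k$-cut $S\cup\{w\}$, followed by unfolding the definition of $G_{\rm chain}$ for the ``in particular'' part. You merely make explicit the elimination of the lemma's three cases, which the paper leaves implicit.
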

\begin{proof}
	Assume towards a contradiction that $(S,S\cup\{v\})$ are the snug shores for $v$ and $(S,S\cup\{v'\})$ are the snug shores for $v'\ne v$.
  However, this contradicts \cref{lemma:interaction_with_snug_shores} for the snug shores $(S,S\cup \{v\})$ and the $k$-cut $S\cup\{v'\}$.
	Now, for a snug vertex $u$ with snug shores $(S_1,S_2)$, there is at most one vertex $v\in V\setminus S_2$ such that $(S_2,S_2\cup\{v\})$ form snug shores for $v$. Hence, $u$ has at most one outgoing arc in $G_{\rm{chain}}$.
\end{proof}
\begin{proposition}\label{prop:at_most_one_incoming}
	For every $k$-cut $S\subseteq V\setminus \{r\}$, there is at most one vertex $v\in S$ such that $(S\setminus \{v\},S)$ form snug shores for $v$.
	In particular, each vertex has at most one incoming arc in $G_{\rm{chain}}$. 
\end{proposition}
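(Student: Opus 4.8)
The plan is to dualize the proof of \cref{prop:at_most_one_outgoing}, using \cref{lemma:interaction_with_snug_shores} as essentially the only ingredient. For the first assertion I would argue by contradiction: suppose $S\subseteq V\setminus\{r\}$ is a $k$-cut and there are two \emph{distinct} vertices $v,v'\in S$ such that $(S\setminus\{v\},S)$ are snug shores for $v$ and $(S\setminus\{v'\},S)$ are snug shores for $v'$. Since $(S\setminus\{v'\},S)$ are snug shores, $S\setminus\{v'\}$ is a $k$-cut, it is nonempty, and it avoids the root because $r\notin S$. Hence \cref{lemma:interaction_with_snug_shores} applies to the snug shores $(S\setminus\{v\},S)$ for $v$ together with the $k$-cut $S\setminus\{v'\}$, yielding one of the alternatives $S\setminus\{v'\}\subseteq S\setminus\{v\}$, or $S\subseteq S\setminus\{v'\}$, or $S\cap(S\setminus\{v'\})=\emptyset$.

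I would then rule out all three. The middle alternative is impossible since $v'\in S$ but $v'\notin S\setminus\{v'\}$, and the last is impossible since $S\setminus\{v'\}$ is a nonempty subset of $S$. The first alternative forces $v\notin S\setminus\{v'\}$, which together with $v\in S$ gives $v=v'$, contradicting the choice of distinct $v,v'$. This establishes the first assertion.

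For the ``in particular'' part, I would fix a snug vertex $u$ with snug shores $(S_1^u,S_2^u)$ and note that an incoming arc $(w,u)$ of $G_{\rm{chain}}$ means, by definition of the chain graph, that $S_2^w=S_1^u$, where $(S_1^w,S_2^w)$ are the snug shores for $w$. Since $S_2^w=S_1^w\cup\{w\}$ with $w\notin S_1^w$, this forces $w\in S_1^u$ and $S_1^w=S_1^u\setminus\{w\}$; that is, $(S_1^u\setminus\{w\},S_1^u)$ are precisely the snug shores for $w$. Applying the first assertion with the $k$-cut $S\coloneqq S_1^u\subseteq V\setminus\{r\}$ then shows that at most one such $w$ can exist, so $u$ has at most one incoming arc.

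I do not anticipate a genuine obstacle here, since the argument is a direct mirror of \cref{prop:at_most_one_outgoing}. The only points that need mild care are checking that $S\setminus\{v'\}$ really satisfies the hypotheses of \cref{lemma:interaction_with_snug_shores} (a nonempty $k$-cut disjoint from $\{r\}$), and keeping straight which set plays the role of the snug-shores pair and which the role of the cut $S$ when invoking that lemma, since swapping the two roles changes the resulting three alternatives.
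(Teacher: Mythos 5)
Your proposal is correct and follows the paper's own argument: apply \cref{lemma:interaction_with_snug_shores} to the snug shores $(S\setminus\{v\},S)$ and the $k$-cut $S\setminus\{v'\}$, derive a contradiction, and then deduce the statement about incoming arcs by specializing to $S=S_1^u$. The only difference is that you spell out explicitly why all three alternatives of the lemma fail, whereas the paper leaves this as an immediate contradiction; the substance is identical.
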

\begin{proof}
	Assume towards a contradiction that $(S\setminus \{v\},S)$ are the snug shores for $v$ and $(S\setminus \{v'\},S)$ are the snug shores for $v'\ne v$.
  However, this contradicts \cref{lemma:interaction_with_snug_shores} for the snug shores $(S\setminus \{v\},S)$ and the $k$-cut $S\setminus \{v'\}$.
	Now, for a snug vertex $u$ with snug shores $(S_1,S_2)$, there is at most one vertex $v\in S_1$ such that $(S_1\setminus\{v\},S_1)$ are snug shores for $v$. Hence, $u$ has at most one incoming arc in $G_{\rm{chain}}$.
\end{proof}
\begin{proposition}\label{prop:no_directed_cycle}
	$G_{\rm{chain}}$ does not contain any directed cycle.   
\end{proposition}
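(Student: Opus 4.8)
The plan is to find a monovariant that strictly increases along every arc of $G_{\mathrm{chain}}$, which instantly precludes directed cycles. First I would record the two elementary facts that do all the work. (a) For any snug vertex $u$ with snug shores $(S_1^u,S_2^u)$, we have $S_2^u = S_1^u\cup\{u\}$ with $u\notin S_1^u$, hence $S_1^u\subsetneq S_2^u$. (b) By the definition of the chain graph, an arc $(u,v)\in A$ satisfies $S_2^u = S_1^v$. Chaining (a) and (b), every arc $(u,v)$ of $G_{\mathrm{chain}}$ satisfies $S_1^u\subsetneq S_2^u = S_1^v$, so the ``inner'' snug shore grows strictly with respect to set inclusion along each arc. (Equivalently, since $|S_2^u| = |S_1^u|+1$, the cardinality of the inner snug shore strictly increases along each arc.)

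Next I would suppose, towards a contradiction, that $G_{\mathrm{chain}}$ contains a directed cycle $v_1\to v_2\to\dots\to v_\ell\to v_1$ with $\ell\ge 1$. Applying the observation above to each of its arcs yields
\[
 S_1^{v_1}\ \subsetneq\ S_1^{v_2}\ \subsetneq\ \dots\ \subsetneq\ S_1^{v_\ell}\ \subsetneq\ S_1^{v_1},
\]
which is impossible. (In the cardinality version this reads $|S_1^{v_1}| = |S_1^{v_1}| + \ell$, again a contradiction since $\ell\ge 1$.) Hence $G_{\mathrm{chain}}$ is acyclic, as claimed.

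I do not expect a genuine obstacle here: the only thing to spot is the correct monovariant (the size, or the set, of the snug shores), after which the argument is a one-liner. The single edge case worth keeping in mind is a loop $(v,v)$, which would force $S_2^v = S_1^v$ against $S_1^v\subsetneq S_2^v$; this is already subsumed by taking $\ell = 1$ above. Together with \cref{prop:at_most_one_outgoing,prop:at_most_one_incoming}, this yields \cref{lemma:chain_graph_paths}.
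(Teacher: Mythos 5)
Your proof is correct and is essentially the paper's own argument: the paper likewise assumes a directed cycle and derives the contradictory strict chain of snug shores $S_1\subsetneq S_2\subsetneq\dots\subsetneq S_1$ from $S_2^{u}=S_1^{v}$ along each arc. Your phrasing via a monovariant (including the explicit loop case) is just a slightly more structured presentation of the same idea.
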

\begin{proof}
	Assume towards a contradiction that $u_1,(u_1,u_2),u_2,\dots,u_k,(u_k,u_1),u_1$ were a directed cycle. Then there are $k$-cuts $S_1,\dots,S_k$ such that $(S_i,S_{i+1})$ are the snug shores for $u_i$, where $S_{k+1}\coloneqq S_1$. But then $S_1\subsetneq S_2\subsetneq \dots\subsetneq S_k\subsetneq S_1$, a contradiction.
\end{proof}

We call the maximal paths in $G_{\rm{chain}}$ the \emph{snug paths of $G$} and denote the family of snug paths by $\mathcal{P}^G_{\rm{chain}}$. 
For a snug path $P=(u_0,\dots,u_t)$, the \emph{snug shores for $P$} are the $k$-cuts $(S_0,\dots,S_{t+1})$, where $(S_i,S_{i+1})$ are the snug shores for $u_i$ for $i\in\{0,\dots,t\}$. We call $(S_0,\dots,S_{t+1})$ a \emph{snug chain}.

\Cref{prop:at_most_one_outgoing,prop:at_most_one_incoming} imply \cref{lemma:shore_for_at_most_one_path}.
\begin{lemma}
	A $k$-cut $S$ is a snug shore for at most one snug path.\label{lemma:shore_for_at_most_one_path}
\end{lemma}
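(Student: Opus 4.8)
The plan is to derive the statement from the two uniqueness propositions \cref{prop:at_most_one_outgoing,prop:at_most_one_incoming} together with the structural fact (\cref{lemma:chain_graph_paths}) that $G_{\rm chain}$ is a vertex-disjoint union of directed paths. I would argue by contradiction: suppose a $k$-cut $S$ is a snug shore for two \emph{distinct} snug paths $P=(u_0,\dots,u_t)$ and $P'=(u'_0,\dots,u'_{t'})$, with snug chains $(S_0,\dots,S_{t+1})$ and $(S'_0,\dots,S'_{t'+1})$. Since each snug chain is strictly increasing, $S$ occurs exactly once in each, say $S=S_j=S'_{j'}$ with $j\in\{0,\dots,t+1\}$ and $j'\in\{0,\dots,t'+1\}$.

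The key observation is that a single vertex common to $P$ and $P'$ already forces $P=P'$, since $P$ and $P'$ are maximal paths inside the disjoint union of paths $G_{\rm chain}$. First I would dispose of the ``interior'' cases. If $j\ge 1$ and $j'\ge 1$, then $u_{j-1}$ and $u'_{j'-1}$ both have snug shores of the form $(S\setminus\{\cdot\},S)$, so \cref{prop:at_most_one_incoming} gives $u_{j-1}=u'_{j'-1}$ — a vertex on both paths, hence $P=P'$, contradiction. Symmetrically, if $j\le t$ and $j'\le t'$, then $u_j=u'_{j'}$ by \cref{prop:at_most_one_outgoing}, again a common vertex. Consequently, if $P\ne P'$, we can have neither $j,j'\ge 1$ simultaneously nor $j\le t,\ j'\le t'$ simultaneously; since $j=0$ and $j=t+1$ are incompatible (as $t\ge 0$), and likewise for $j'$, after possibly swapping the roles of $P$ and $P'$ we may assume $j=0$ (so $S$ is the \emph{first} shore of $P$) and $j'=t'+1$ (so $S$ is the \emph{last} shore of $P'$).

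Finally I would invoke the maximality of $P$ to conclude. Since $S=S_0$ is the inner snug shore of $u_0$ we have $S^{u_0}_1=S$, and since $S=S'_{t'+1}$ is the outer snug shore of $u'_{t'}$ (which exists because $t'\ge 0$) we have $S^{u'_{t'}}_2=S=S^{u_0}_1$. Thus either $u'_{t'}=u_0$, so $u_0$ lies on both paths and $P=P'$, or $(u'_{t'},u_0)$ is an arc of $G_{\rm chain}$ entering $u_0$, contradicting that $P$ is a maximal path beginning at $u_0$. In every case we reach a contradiction, proving the lemma. I do not anticipate a genuine obstacle; the only care required is the bookkeeping of the boundary indices $j\in\{0,t+1\}$, $j'\in\{0,t'+1\}$ and making sure the degenerate cases (single-vertex snug paths, or $u'_{t'}=u_0$) are covered by the argument above.
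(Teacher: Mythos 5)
Your proof is correct and takes essentially the same route as the paper: the paper derives \cref{lemma:shore_for_at_most_one_path} directly from \cref{prop:at_most_one_outgoing,prop:at_most_one_incoming}, and your argument is exactly that derivation spelled out, with the boundary positions handled via the vertex-disjointness and maximality of snug paths from \cref{lemma:chain_graph_paths}. No gaps.
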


As remarked in \cref{sec:augmentation_short}, we can efficiently compute the chain graph and all snug shores:

\begin{restatable}{lemma}{lemmacomputechaingraph}\label{lemma:compute_chain_graph}
	We can compute $G_{\rm{chain}}$, as well as the snug shores for every snug path, in polynomial time.
\end{restatable}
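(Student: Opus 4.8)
The plan is to reduce the whole computation to a single task: for each vertex, decide whether it is snug and, if so, exhibit its pair of snug shores (unique by \Cref{lemma:snug_shores_unique}). Once that is done, assembling $G_{\rm chain}$ and the snug shores of every snug path is pure bookkeeping.

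First I would isolate the following characterization, which is the heart of the argument: if $v$ is a snug vertex with snug shores $(S_1,S_2)$, then $S_2$ is the inclusion-wise minimal $k$-cut that contains $v$ but not $r$. To see this, let $S'$ be any $k$-cut with $v\in S'$ and $r\notin S'$; applying \Cref{lemma:interaction_with_snug_shores} to the snug shores $(S_1,S_2)$ and the cut $S'$, and using $v\in S_2\cap S'$ together with $v\notin S_1$, the alternatives $S'\subseteq S_1$ and $S_2\cap S'=\emptyset$ are both impossible, so $S_2\subseteq S'$. Since $G$ is $k$-edge-connected, every cut separating $v$ from $r$ has size at least $k$, so a $k$-cut that separates $v$ from $r$ is precisely a minimum $v$-$r$ cut; hence $S_2$ is the unique minimal minimum $v$-$r$ cut, a set that a single maximum-flow computation produces (the vertices reachable from $v$ in the residual graph of a maximum $v$-$r$ flow).

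This gives the algorithm. For each $v\in V\setminus\{r\}$ with $\mathrm{deg}(v)\ge k+1$, compute a maximum $v$-$r$ flow and extract the minimal minimum $v$-$r$ cut $M$. If $|\delta(M)|>k$, then no $k$-cut separates $v$ from $r$, so $v$ is not snug. Otherwise $|\delta(M)|=k$; put $S_2\coloneqq M$, $S_1\coloneqq M\setminus\{v\}$ (which is nonempty since $\mathrm{deg}(v)>k=|\delta(M)|$ forces $M\ne\{v\}$), and check whether $|\delta(S_1)|=k$, a quantity read off from $|\delta(M)|$, $\mathrm{deg}(v)$, and the number of edges between $v$ and $M\setminus\{v\}$. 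If the check succeeds, then $S_1\subseteq S_2\subseteq V\setminus\{r\}$ are $k$-cuts with $S_2\setminus S_1=\{v\}$, so $v$ is snug and $(S_1,S_2)$ are its snug shores; if it fails, the characterization above shows $v$ cannot be snug, since any snug shores would have to have outer shore $S_2^v=M$ and inner shore $S_1^v=M\setminus\{v\}$. This correctly computes $V_{\rm snug}$ together with the map $v\mapsto(S_1^v,S_2^v)$ in polynomially many max-flow computations.

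Finally I would build $G_{\rm chain}$ straight from the definition: its vertex set is $V_{\rm snug}$, and for every ordered pair of snug vertices $u,v$ we insert the arc $(u,v)$ precisely when $S_2^u=S_1^v$, which is a single comparison of vertex sets. By \Cref{lemma:chain_graph_paths}, $G_{\rm chain}$ is a vertex-disjoint union of directed paths, so its maximal directed paths---the snug paths---are obtained in linear time by starting from vertices with no incoming arc and following outgoing arcs; and for a snug path $P=(u_0,\dots,u_t)$ the associated snug chain is $(S_1^{u_0},S_1^{u_1},\dots,S_1^{u_t},S_2^{u_t})$, where the equalities $S_2^{u_i}=S_1^{u_{i+1}}$ needed for consistency are exactly the arcs we inserted. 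Everything runs in polynomial time. The only delicate point is the characterization in the second paragraph---pinning down the outer snug shore as a standard minimal-minimum-cut object---so that the search collapses to routine max-flow computations; alternatively, one could enumerate all minimum cuts through a compact representation as in \cite{doi:10.1137/S0895480194271323} and extract the snug structure from there.
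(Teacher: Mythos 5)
Your proof is correct and takes a genuinely different, more constructive route than the paper's. The paper simply appeals to the fact that all minimum cuts of a graph can be enumerated in polynomial time (via the cactus or a compact representation as in the cited reference) and then reads off the snug vertices, shores, and arcs by inspection of that list. You instead pin down the outer snug shore intrinsically: by \Cref{lemma:interaction_with_snug_shores}, $S_2^v$ must be the inclusion-wise minimal $k$-cut containing $v$ but not $r$, which, in a $k$-edge-connected graph, is exactly the source-minimal minimum $v$-$r$ cut computable from one max-flow. A few things are worth noting. First, your characterization argument is sound: the two excluded alternatives of \Cref{lemma:interaction_with_snug_shores} are indeed ruled out by $v\in S_2\cap S'$ and $v\notin S_1$, and minimality of $M$ among source sides of minimum $v$-$r$ cuts is a standard residual-graph fact. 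Second, your handling of the failure case is correct and necessary: if $|\delta(M\setminus\{v\})|\ne k$ you cannot conclude non-snugness merely from $M$ failing to work, but rather from the uniqueness of the candidate forced by the characterization, and you say this. Third, the arithmetic $|\delta(S_1)|=|\delta(M)|-\deg(v)+2a$ (with $a$ the number of $v$-to-$M\setminus\{v\}$ edges) does give the claimed local computation, and $M\ne\{v\}$ follows from $\deg(v)\ge k+1>|\delta(M)|$. The trade-off between the two approaches is modest: the paper's enumeration route is shorter to state and reuses a black box it already cites elsewhere; your max-flow route is more elementary, avoids the cactus machinery, and makes the per-vertex cost transparent ($O(n)$ max-flows). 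Both are correct polynomial-time algorithms and both assemble $G_{\rm chain}$ and the snug chains identically from there.
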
 
\begin{proof}
	We can enumerate all minimum cuts of $G$ in polynomial time (see, e.g., \cite{doi:10.1137/S0895480194271323}). This allows us to determine the set of snug vertices, their snug shores, and the chain graph in polynomial time.
\end{proof}
Next, we prove \cref{lemma:chain_graph_subgraph}, which we restate here for convenience.
\lemmachaingraphsubgraph*

\begin{proof}
	Let $(u,v)$ be an arc in $G_{\rm{chain}}$ and let $(S_1^u,S_2^u)$ and $(S_1^v,S_2^v)$ with $S_2^u=S_1^v$ be the snug shores for $u$ and $v$, respectively. Assume towards a contradiction that $u$ and $v$ do not share an edge in $G$. Then $\delta_G(u)\subseteq \delta_G(S_1^u)\cup \delta_G(S_2^v)$ because for $e=\{u,w\}\in \delta_G(u)$, either $w\in S_1^u$ and $e\in \delta_G(S_1^u)$, or $w\in V\setminus (S_1^u\cup\{u,v\})=V\setminus S_2^v$, implying $e\in \delta_G(S_2^v)$. Similarly, $\delta_G(v)\subseteq \delta_G(S_1^u)\cup\delta_G(S_2^v)$. Hence,
	\[2\cdot(k+1)\le |\delta_G(u)|+|\delta_G(v)|=|\delta_G(u)\cup \delta_G(v)|\le |\delta_G(S_1^u)|+|\delta_G(S_2^v)|\le 2k,\] a contradiction.
\end{proof}
 In a slight abuse of notation, for a collection $\mathcal{Q}\subseteq \mathcal{P}^G_{\rm{chain}}$, we denote by $G/\mathcal{Q}$ the graph that arises from $G$ by contracting the undirected paths in $G$ corresponding to $\mathcal{Q}$. Given a set of links $L$ and links costs $c\colon L\rightarrow\mathbb{R}_{>0}$, we further write $L/\mathcal{Q}$ for the (multi)set of links arising from the contraction, $c/\mathcal{Q}$ for the induced cost function, $(G,L)/\mathcal{Q}\coloneqq (G/\mathcal{Q},L/\mathcal{Q})$, and $(G,L,c)/\mathcal{Q}\coloneqq (G/\mathcal{Q},L/\mathcal{Q},c/\mathcal{Q})$. By \cref{lemma:chain_graph_subgraph}, if $G+L$ is planar, then so is $(G+L)/\mathcal{Q}=G/\mathcal{Q}+L/\mathcal{Q}$.

We further establish \cref{lemma:only_cut_intersecting_snug_edge}, which tells us that contracting snug paths does not destroy any $k$-cuts except for snug shores.
\lemmaonlycutintersectingsnugedge*
\begin{proof}
	Let $S$ be a $k$-cut in $G$ with $(u,v)\in \delta^+(S)$.
  By \cref{lemma:interaction_with_snug_shores} applied to the snug shores $(S_1^u,S_2^u)$ for $u$ and the $k$-cut $S$, we get $S_2^u \subseteq S$.
 Similarly, by applying \cref{lemma:interaction_with_snug_shores} to the snug shores $(S_1^v,S_2^v)$ for $v$ and the $k$-cut $S$, we get $S\subseteq S_1^v = S_2^u$.
 Thus, $S=S_2^u$, as claimed.
\end{proof}

A key property that we will exploit to handle snug chains is that, as long as we make sure to cover every cut of a snug chain by some link, we can focus on the instance that arises by contraction of the corresponding snug path. 

\begin{definition}
	We say that a set $F$ of links \emph{covers} a snug path $P\in\mathcal{P}^G_{\rm{chain}}$ if for every snug shore $S$ of~$P$, we have $\delta_F(S)\ne \emptyset$. We say that $F$ covers a collection $\mathcal{Q}\subseteq \mathcal{P}^G_{\rm{chain}}$ if $F$ covers every snug path in $\mathcal{Q}$.
\end{definition}

\begin{lemma}\label{lemma:solution_contract_snug_chains}
	Let $\mathcal{Q}\subseteq \mathcal{P}^G_{\rm{chain}}$.
	Let $F\subseteq L$ cover $\mathcal{Q}$ and let $(V,E\dot{\cup} F)/\mathcal{Q}$ be $(k+1)$-edge-connected. Then $(V,E\dot{\cup} F)$ is $(k+1)$-edge-connected.
\end{lemma}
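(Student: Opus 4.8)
The plan is to argue directly about the cuts of $H := (V,E\dot{\cup} F)$. Since $G$ is connected (being $k$-edge-connected with $k\ge 1$) and $G$ is a spanning subgraph of $H$, the graph $H$ is connected, so it suffices to show that $|\delta_H(T)|\ge k+1$ for every $\emptyset\ne T\subsetneq V$. Fix such a $T$. Because $E$ and $F$ are disjoint, $|\delta_H(T)| = |\delta_G(T)| + |\delta_F(T)|$, and since $G$ is $k$-edge-connected, $|\delta_G(T)|\ge k$. Hence if $|\delta_G(T)|\ge k+1$, or if $\delta_F(T)\ne\emptyset$, we are immediately done. It therefore remains to derive a contradiction from the assumption that $T$ is a $k$-cut of $G$ with $\delta_F(T)=\emptyset$.

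The key step is to show that such a $T$ \emph{respects} the contraction of $\mathcal{Q}$, i.e., for every $P=(u_0,\dots,u_t)\in\mathcal{Q}$ we have $V(P)\subseteq T$ or $V(P)\cap T=\emptyset$. Suppose not: some $P$ has a vertex in $T$ and a vertex outside $T$, so $t\ge 1$. By \cref{lemma:chain_graph_subgraph}, $u_0u_1\cdots u_t$ is a path in $G$, so walking along it from a vertex in $T$ to one not in $T$, some edge $\{u_i,u_{i+1}\}$ lies in $\delta_G(T)$. This edge is the underlying undirected edge of the arc $(u_i,u_{i+1})$ of $G_{\mathrm{chain}}$, so by \cref{lemma:only_cut_intersecting_snug_edge}, and since $T$ is a $k$-cut of $G$, we must have $T=S$ or $T=V\setminus S$, where $S:=S^{u_i}_2=S^{u_{i+1}}_1$ is one of the snug shores of $P$. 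As $F$ covers $P$, we have $\delta_F(S)\ne\emptyset$; since $\delta_F(T)=\delta_F(V\setminus T)=\delta_F(S)$ in either case, this contradicts $\delta_F(T)=\emptyset$.

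Given that $T$ respects the contraction, let $\phi\colon V\to V(H/\mathcal{Q})$ be the contraction map and set $T':=\phi(T)$. I would then check that $T'$ is a nonempty, proper subset of $V(H/\mathcal{Q})$ — nonemptiness is immediate from $T\ne\emptyset$; for properness, any $w\notin T$ satisfies $\phi(w)\notin\phi(T)$, because $w=\phi^{-1}$-equivalent to some $v\in T$ would force $v,w$ into a common contracted path and hence $T$ to split that path — and that $\delta_{H/\mathcal{Q}}(T')=\delta_H(T)$ as edge multisets. The latter holds because the contraction only deletes loops: an edge of $H$ becomes a loop in $H/\mathcal{Q}$ exactly when both endpoints lie in a common contracted path, in which case, by the \emph{respects} property, it is internal to $T$ or to $V\setminus T$ and thus not in $\delta_H(T)$; every other edge crosses $T'$ if and only if it crosses $T$. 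Since $H/\mathcal{Q}$ is $(k+1)$-edge-connected and $T'$ is a nonempty proper cut, $|\delta_H(T)| = |\delta_{H/\mathcal{Q}}(T')|\ge k+1$, contradicting $|\delta_H(T)| = |\delta_G(T)| + |\delta_F(T)| = k$. This completes the proof.

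I expect the only non-routine step to be the middle one — ruling out that $T$ splits a contracted snug path — where \cref{lemma:only_cut_intersecting_snug_edge} (a chain-graph arc crosses only its own associated $k$-cut) together with the covering hypothesis is exactly what is needed; the rest is bookkeeping about how ordinary edge contraction interacts with cuts. The one point that requires a little care is verifying that $T'$ is a genuine (nonempty and proper) cut of $H/\mathcal{Q}$, since that is what licenses invoking $(k+1)$-edge-connectivity of $H/\mathcal{Q}$.
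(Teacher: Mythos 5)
Your proposal is correct and follows essentially the same route as the paper's proof: reduce to showing every $k$-cut of $G$ is crossed by $F$, handle cuts that split a path of $\mathcal{Q}$ via \cref{lemma:only_cut_intersecting_snug_edge} together with the covering hypothesis, and handle the remaining cuts by observing they survive the contraction, where $(k+1)$-edge-connectivity of $(V,E\dot{\cup}F)/\mathcal{Q}$ applies. The extra bookkeeping you include about how cuts respecting the contraction correspond to cuts of the contracted graph is exactly what the paper leaves implicit.
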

\begin{proof}
	We need to show that for every $k$-cut $S$, $\delta_F(S)\ne \emptyset$. This is clear if $S$ is a snug shore (or its complement) for a snug path in $\mathcal{Q}$. On the other hand, if $S$ is not of this form, then by \Cref{lemma:only_cut_intersecting_snug_edge}, $S$ does not cross any snug path in $\mathcal{Q}$, so $S$ corresponds to a $k$-cut in $G/\mathcal{Q}$.
\end{proof}
\cref{lemma:link_furthest_out} and \cref{lemma:link_deepest_in} will be useful to thin out the link set in \cref{sec:thinning}.
\begin{lemma}\label{lemma:link_furthest_out}
	Let $P=(u_0,\dots,u_t)$ be a snug path with snug shores $(S_0,\dots,S_{t+1})$. Let $v\in S_0$ and let $\ell=\{v,u_i\}$ and $\ell'=\{v,u_j\}$ be two links with $0\le i < j\le t$. Every $k$-cut in $G$ that is crossed by $\ell$ is also crossed by $\ell'$.
\end{lemma}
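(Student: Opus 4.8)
Fix a snug path $P = (u_0, \dots, u_t)$ with snug shores $(S_0, \dots, S_{t+1})$, so $S_\ell = S_0 \cup \{u_0, \dots, u_{\ell-1}\}$ for each $\ell$, and $u_0, \dots, u_t \notin S_0$ while $r \notin S_{t+1}$. Fix $v \in S_0$ and links $\ell = \{v, u_i\}$, $\ell' = \{v, u_j\}$ with $0 \le i < j \le t$. We must show that every $k$-cut $S$ of $G$ crossed by $\ell$ is also crossed by $\ell'$. Since $v \in S_0 \subseteq S_i$ and $u_i \in S_{i+1} \setminus S_i$, the link $\ell$ joins a vertex deep inside the chain to the vertex $u_i$ that ``snugly fits'' between $S_i$ and $S_{i+1}$; the point is that $\ell'$ reaches even farther out to $u_j$, so any cut separating $v$ from $u_i$ must also separate $v$ from $u_j$.

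**Main argument.**
The plan is to use \cref{lemma:interaction_with_snug_shores} repeatedly to pin down where the $k$-cut $S$ sits relative to the snug chain. Let $S$ be a $k$-cut of $G$ crossed by $\ell = \{v, u_i\}$; we may assume $r \notin S$ (otherwise replace $S$ by $V \setminus S$, which is crossed by the same links). So $S \subseteq V \setminus \{r\}$ and exactly one of $v, u_i$ lies in $S$. First I would apply \cref{lemma:interaction_with_snug_shores} to the snug shores $(S_i, S_{i+1})$ for $u_i$ and the $k$-cut $S$: this gives $S \subseteq S_i$, or $S_{i+1} \subseteq S$, or $S_{i+1} \cap S = \emptyset$. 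In the first case, $v, u_i \in S_i$... wait, $u_i \notin S_i$, so $S \subseteq S_i$ forces $u_i \notin S$ and hence $v \in S$ (since $\ell$ crosses $S$); in the third case $S_{i+1} \cap S = \emptyset$ gives $v, u_i \notin S$, contradicting that $\ell$ crosses $S$. Hence either $S \subseteq S_i$ with $v \in S$, $u_i \notin S$, or $S_{i+1} \subseteq S$ with $u_i \in S$, $v \in S_i \subseteq S_{i+1} \subseteq S$ — but then both endpoints of $\ell$ lie in $S$, again contradicting that $\ell$ crosses $S$. So the only surviving case is
\[
S \subseteq S_i, \qquad v \in S, \qquad u_i \notin S.
\]
Now I must show $\ell'$ crosses $S$, i.e. exactly one of $v, u_j$ lies in $S$. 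We already have $v \in S$, so it suffices to show $u_j \notin S$. Since $i < j \le t$, we have $u_j \notin S_{i+1}$ (as $S_{i+1} = S_0 \cup \{u_0, \dots, u_i\}$ and the $u$'s are distinct), hence $u_j \notin S_i \supseteq S$. Therefore $u_j \notin S$, so $\ell' = \{v, u_j\}$ has exactly one endpoint ($v$) in $S$, i.e. $\ell'$ crosses $S$.

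**The one place to be careful.**
The only subtle step is the case analysis coming out of \cref{lemma:interaction_with_snug_shores}; I expect the bookkeeping there to be the main (minor) obstacle, in particular correctly using the facts $u_i \notin S_i$, $v \in S_0 \subseteq S_i$, and the nesting $S_0 \subsetneq \dots \subsetneq S_{t+1}$ to rule out the cases that would contradict ``$\ell$ crosses $S$.'' Everything else is immediate from the definitions of snug shores and the chain structure, so no nontrivial calculation is needed.
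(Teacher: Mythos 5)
Your proof is correct. The only substantive difference from the paper is the choice of structural lemma: the paper invokes \cref{lemma:only_cut_intersecting_snug_edge} (no arc of the chain graph crosses any $k$-cut other than its own snug shore) and splits into the cases where $S$ or its complement is one of $S_{i+1},\dots,S_j$ versus not, whereas you go one level lower and apply \cref{lemma:interaction_with_snug_shores} directly to the snug shores $(S_i,S_{i+1})$ of $u_i$, which after eliminating the two contradictory branches localizes $S$ to a subset of $S_i$ containing $v$. Since \cref{lemma:only_cut_intersecting_snug_edge} is itself a consequence of \cref{lemma:interaction_with_snug_shores}, your route is slightly more elementary and avoids the explicit case split on whether $S$ coincides with one of the internal shores, but the two arguments are close in spirit and length; the paper's phrasing also yields the ``if and only if'' over the non-shore cuts, which is mildly more information than the lemma requires but is reused verbatim for the symmetric \cref{lemma:link_deepest_in}. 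One small stylistic note: the step ``$u_j\notin S_i\supseteq S$'' would read a bit more cleanly as ``$u_j\notin S_i$ and $S\subseteq S_i$, so $u_j\notin S$,'' but the mathematics is sound.
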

\begin{proof}
	By \Cref{lemma:only_cut_intersecting_snug_edge}, if $S$ is a $k$-cut with $\{S,V\setminus S\}\cap\{S_{i+1},\dots,S_j\}=\emptyset$, then $\ell$ covers $S$ if and only if $\ell'$ does because none of the edges on the $u_i$-$u_j$ subpath of $P$ crosses $S$, meaning that $u_i$ and $u_j$ are either both contained in $S$ or both contained in $V\setminus S$. Moreover, $\ell'$ crosses all of $S_{i+1},\dots,S_j$ and their complements.
\end{proof}
\begin{lemma}\label{lemma:link_deepest_in}
	Let $P=(u_0,\dots,u_t)$ be a snug path with snug shores $(S_0,\dots,S_{t+1})$. Let $v\in V\setminus S_{t+1}$ and let $\ell=\{v,u_i\}$ and $\ell'=\{v,u_j\}$ be two links with $0\le j < i\le t$. Every $k$-cut in $G$ that is crossed by $\ell$ is also crossed by $\ell'$.
\end{lemma}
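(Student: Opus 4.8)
The plan is to mirror the proof of \Cref{lemma:link_furthest_out}, with the roles of the two links interchanged (the link with the \emph{larger} index now plays the role of $\ell$). First I would recall the basic structure of a snug chain: since $(S_0,\dots,S_{t+1})$ are the snug shores of $P=(u_0,\dots,u_t)$, we have $S_{m+1}=S_m\cup\{u_m\}$ with $u_m\notin S_m$ for each $m$, so $S_0\subsetneq S_1\subsetneq\dots\subsetneq S_{t+1}$, and hence $u_m\in S_n$ precisely when $n\ge m+1$. Since the hypothesis here is $v\in V\setminus S_{t+1}$ (rather than $v\in S_0$ as in \Cref{lemma:link_furthest_out}), this yields $v\notin S_m$ for every $m\in\{0,\dots,t+1\}$, and also $u_i\notin S_m$ for every $m\le i$.

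Next, given a $k$-cut $S$ of $G$ crossed by $\ell=\{v,u_i\}$, I would distinguish two cases according to whether the cut $\{S,V\setminus S\}$ equals $\{S_m,V\setminus S_m\}$ for some index $m\in\{j+1,\dots,i\}$. If it does not, then I would invoke \Cref{lemma:chain_graph_subgraph} to view the arcs $(u_j,u_{j+1}),\dots,(u_{i-1},u_i)$ of $G_{\mathrm{chain}}$ as a $u_j$-$u_i$ path in $G$, and \Cref{lemma:only_cut_intersecting_snug_edge} to conclude that the arc $(u_m,u_{m+1})$ can only cross the $k$-cut $S_{m+1}$ (and its complement); since all of $S_{j+1},\dots,S_i$ are excluded by assumption, none of these path edges crosses $S$, so $u_j$ and $u_i$ lie on the same side of the cut $S$. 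Then $\ell=\{v,u_i\}$ crosses $S$ if and only if $\ell'=\{v,u_j\}$ crosses $S$, which gives the claim. In the remaining case, where $S$ or $V\setminus S$ equals some $S_m$ with $j+1\le m\le i$, both $v$ and $u_i$ lie outside $S_m$ by the first paragraph, so $\ell$ crosses neither $S_m$ nor its complement, contradicting the assumption that $\ell$ crosses $S$; hence this case is vacuous. Combining the two cases shows that every $k$-cut crossed by $\ell$ is crossed by $\ell'$.

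I do not expect a genuine obstacle here, as the argument is essentially symmetric to that of \Cref{lemma:link_furthest_out}. The only points that need care are bookkeeping: getting the direction right (here the link incident to the larger-index vertex $u_i$ is the one assumed to cross $S$, so the shores to control are the ``inner'' ones $S_{j+1},\dots,S_i$ rather than the ``outer'' ones), and checking that the hypothesis $v\notin S_{t+1}$ really does force $\ell$ to miss every middle shore --- which it does, since $v$ lies outside the entire chain while $u_i$ lies outside $S_m$ for all $m\le i$.
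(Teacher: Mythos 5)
Your proof is correct and follows essentially the same approach as the paper's, invoking \Cref{lemma:only_cut_intersecting_snug_edge} and splitting on whether $\{S,V\setminus S\}$ hits one of the inner shores $S_{j+1},\dots,S_i$. The only minor variation is in that second case: the paper observes that $\ell'$ crosses each $S_m$ directly (since $u_j\in S_m$ while $v\notin S_m$), whereas you observe that $\ell$ fails to cross any such $S_m$ (both $v$ and $u_i$ lie outside), rendering that case vacuous --- both resolutions are valid.
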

\begin{proof}
	By \Cref{lemma:only_cut_intersecting_snug_edge}, if $S$ is a $k$-cut with $\{S,V\setminus S\}\cap\{S_{j+1},\dots,S_i\}=\emptyset$, then $\ell$ covers $S$ if and only if $\ell'$ does because none of the edges on the $u_j$-$u_i$ subpath of $P$ crosses $S$, meaning that $u_i$ and $u_j$ are either both contained in $S$ or both contained in $V\setminus S$. Moreover, $\ell'$ crosses all of $S_{j+1},\dots,S_i$ and their complements.
\end{proof}
\cref{lemma:link_crosses_outer_shore,prop:minimum_cut_connected,lemma:outer_cuts} will be needed when constructing a $k$-augmentation-safe cover in \cref{sec:augmentation_safe_covers}.
\begin{lemma}\label{lemma:link_crosses_outer_shore}
Let $P=(u_0,\dots,u_t)$ be a snug path with snug shores $(S_0,\dots,S_{t+1})$. Let $\ell$ be a link that has at least one endpoint outside $P$ that crosses some snug shore of $P$. Then $\ell$ crosses $S_0$ or $S_{t+1}$.
\end{lemma}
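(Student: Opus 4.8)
The plan is to exploit the explicit structure of a snug chain. By definition of snug shores, $S_{i+1}=S_i\cup\{u_i\}$ with $u_i\notin S_i$ for each $i\in\{0,\dots,t\}$, so $S_0\subsetneq S_1\subsetneq\dots\subsetneq S_{t+1}$ and the vertex set of $P$ is exactly $S_{t+1}\setminus S_0=\{u_0,\dots,u_t\}$. Hence $V$ splits into the three pairwise disjoint pieces $S_0$, $\{u_0,\dots,u_t\}$, and $V\setminus S_{t+1}$, and for every $m\in\{0,\dots,t+1\}$ we have $S_m=S_0\cup\{u_0,\dots,u_{m-1}\}$ and $V\setminus S_m=(V\setminus S_{t+1})\cup\{u_m,\dots,u_t\}$ (where the listed index sets may be empty). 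If $\ell$ already crosses $S_0$ or $S_{t+1}$ there is nothing to prove, so I would assume it crosses some $S_m$ with $1\le m\le t$.

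Write $\ell=\{a,b\}$ with $a\in S_m$ and $b\in V\setminus S_m$. By hypothesis at least one endpoint lies outside $P$, that is, it is not among $u_0,\dots,u_t$. First I would treat the case $a\notin\{u_0,\dots,u_t\}$: since $S_m=S_0\cup\{u_0,\dots,u_{m-1}\}$, the endpoint $a$ must then lie in $S_0$; on the other hand $b\in V\setminus S_m\subseteq V\setminus S_0$ because $S_0\subseteq S_m$. Thus $a\in S_0$ and $b\notin S_0$, so $\ell$ crosses $S_0$. The case $b\notin\{u_0,\dots,u_t\}$ is symmetric: since $V\setminus S_m=(V\setminus S_{t+1})\cup\{u_m,\dots,u_t\}$, the endpoint $b$ must lie in $V\setminus S_{t+1}$, while $a\in S_m\subseteq S_{t+1}$; hence $\ell$ crosses $S_{t+1}$. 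Since one of the two cases always applies, the claim follows.

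I do not expect a real obstacle here. The argument is purely set-theoretic once one has pinned down the chain structure $S_{i+1}=S_i\cup\{u_i\}$; the only care needed is to track, from that structure, which of the three pieces of $V$ each endpoint of $\ell$ can possibly belong to. In particular, the hypothesis that some endpoint lies outside $P$ is exactly what forces that endpoint into $S_0$ or into $V\setminus S_{t+1}$, which is the crux of the argument.
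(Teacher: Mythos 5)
Your proof is correct and follows essentially the same route as the paper's: both use only the chain structure $S_0\subsetneq S_1\subsetneq\dots\subsetneq S_{t+1}$ with $S_{t+1}\setminus S_0=\{u_0,\dots,u_t\}$, so that the endpoint of $\ell$ lying outside $P$ is forced into $S_0$ (giving a crossing of $S_0$) or into $V\setminus S_{t+1}$ (giving a crossing of $S_{t+1}$). The only cosmetic difference is that the paper cases on which of the two pieces the outside endpoint lies in, while you case on which endpoint is the outside one; the underlying set-theoretic argument is identical.
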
 
\begin{proof}
Let $\ell=\{u,v\}$ with $v\in V\setminus \{u_0,\dots,u_t\}$. If $v\in S_0$, then we must have $u\in V\setminus S_0$ and $\ell$ crosses $S_0$. If $v\in V\setminus S_{t+1}$, then we must have $u\in S_{t+1}$ and $\ell$ crosses $S_{t+1}$.
\end{proof}
\begin{proposition}\label{prop:minimum_cut_connected}
	Let $S$ be a $k$-cut in $G$. Then $S$ is connected.
\end{proposition}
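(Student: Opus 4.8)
The plan is to argue by contradiction, using that $G$ is itself $k$-edge-connected together with the additivity of the edge-boundary function over the connected components of $G[S]$. Suppose that some $k$-cut $S$ of $G$ is not connected. Since $\delta_G(S)=\delta_G(V\setminus S)$, the complement $V\setminus S$ is again a $k$-cut, so after possibly replacing $S$ by $V\setminus S$ I may assume that it is $G[S]$ (rather than $G[V\setminus S]$) that is disconnected.

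Then I would pick the vertex set $S_1$ of one connected component of $G[S]$ and set $S_2\coloneqq S\setminus S_1$. Both $S_1$ and $S_2$ are nonempty, because $G[S]$ is disconnected, and both are proper subsets of $V$, because $\emptyset\ne S\subsetneq V$. The crucial observation is that, as $S_1$ is a full connected component of $G[S]$, no edge of $G$ joins $S_1$ to $S_2$; hence every edge leaving $S_1$ and every edge leaving $S_2$ in fact leaves $S$, and conversely every edge leaving $S$ leaves exactly one of $S_1,S_2$. This yields the disjoint decomposition $\delta_G(S)=\delta_G(S_1)\mathbin{\dot{\cup}}\delta_G(S_2)$. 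Invoking $k$-edge-connectivity of $G$ on the two nonempty proper vertex sets $S_1$ and $S_2$ gives $|\delta_G(S_1)|\ge k$ and $|\delta_G(S_2)|\ge k$, so $k=|\delta_G(S)|\ge 2k$, which contradicts $k\ge 1$. Therefore $S$ must be connected.

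I do not expect a genuine obstacle here; the only points that need a bit of care are checking that both $S_1$ and $S_2$ are nonempty proper subsets of $V$, so that $k$-edge-connectivity applies to each, and justifying the additivity $\delta_G(S)=\delta_G(S_1)\mathbin{\dot{\cup}}\delta_G(S_2)$, which is exactly where the choice of $S_1$ as a union of connected components of $G[S]$ is used. Equivalently, one can phrase the argument via the characterization recalled in \cref{sec:summary_k_safe_covers} that a cut in a connected graph is connected if and only if its edge boundary is inclusion-minimal: if $S$ were a disconnected $k$-cut, then $\delta_G(S_1)\subsetneq\delta_G(S)$, hence $|\delta_G(S_1)|<k$, contradicting $k$-edge-connectivity of $G$.
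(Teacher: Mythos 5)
Your proof is correct and amounts to essentially the same argument as the paper's: the paper notes that a $k$-cut in a $k$-edge-connected graph is a minimal cut and cites the equivalence of minimal and connected cuts, while you directly unpack that equivalence via the decomposition $\delta_G(S)=\delta_G(S_1)\,\dot{\cup}\,\delta_G(S_2)$ over a component $S_1$ of $G[S]$, giving $k\ge 2k$ — and your closing remark is precisely the paper's phrasing. The only micro-detail in that alternative phrasing is that properness of the inclusion $\delta_G(S_1)\subsetneq\delta_G(S)$ uses $\delta_G(S_2)\neq\emptyset$, which holds since $G$ is connected; your main argument needs no such care.
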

\begin{proof}
	As $S$ is a $k$-cut and $G$ is $k$-edge-connected, $S$ must be a minimal cut, i.e., there is no cut $T\subsetneq V, T\neq \emptyset$ with $\delta(T)\subsetneq \delta(S)$. The equivalence of minimal cuts and connected cuts yields the desired statement.
\end{proof}
\begin{lemma}\label{lemma:outer_cuts}
	Let $P=(u_0,\dots,u_t)$ be a snug path with snug shores $(S_0,\dots,S_{t+1})$. Then $S_0$ and $S_{t+1}$ are connected $k$-cuts in $G$ and no snug path crosses $S_0$ or $S_{t+1}$.
\end{lemma}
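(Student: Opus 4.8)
The plan is to separate the statement into its two assertions. The first assertion is essentially immediate: $S_0$ and $S_{t+1}$ are $k$-cuts by definition of snug shores, since $(S_0,S_1)$ are the snug shores for $u_0$ and $(S_t,S_{t+1})$ the snug shores for $u_t$, and every pair of snug shores consists of $k$-cuts. That they are moreover connected then follows directly from \cref{prop:minimum_cut_connected}, which says that in the $k$-edge-connected graph $G$ every $k$-cut is connected. So all the content is in the second assertion.

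For the second assertion I would argue by contradiction: suppose some snug path $P'$ crosses $S_0$. Viewing $P'$ as a path in $G$ via \cref{lemma:chain_graph_subgraph}, some edge $\{u',v'\}$ of $P'$ lies in $\delta_G(S_0)$, and this edge is the underlying edge of an arc $(u',v')$ of $G_{\mathrm{chain}}$. By \cref{lemma:only_cut_intersecting_snug_edge}, the only $k$-cuts crossed by this arc are $S_2^{u'}=S_1^{v'}$ and its complement $V\setminus S_2^{u'}$. Since snug shores never contain the root $r$ while $V\setminus S_2^{u'}$ does, and $r\notin S_0$, the complement cannot equal $S_0$; hence $S_0=S_2^{u'}=S_1^{v'}$, i.e., $S_0$ is one of the snug shores of $P'$. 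But $S_0$ is also a snug shore of $P$, so \cref{lemma:shore_for_at_most_one_path} forces $P'=P$.

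It then remains to rule out that $P$ itself crosses $S_0$. The snug shores of $P$ satisfy $S_0\subsetneq S_1\subsetneq\dots\subsetneq S_{t+1}$, because $S_{i+1}=S_i\cup\{u_i\}$ with $u_i\notin S_i$; in particular they are pairwise distinct. Each edge $\{u_{i-1},u_i\}$ of $P$ is the underlying edge of the arc $(u_{i-1},u_i)$, whose only crossed $k$-cuts, again by \cref{lemma:only_cut_intersecting_snug_edge}, are $S_i$ and $V\setminus S_i$; and neither can equal $S_0$ (for $S_i$ by distinctness, for $V\setminus S_i$ by the root argument as above). Hence no edge of $P$ lies in $\delta_G(S_0)$, so $P$ does not cross $S_0$ — a contradiction. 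The argument that no snug path crosses $S_{t+1}$ is entirely symmetric, using $S_{t+1}=S_2^{u_t}$ in place of $S_0=S_1^{u_0}$ throughout.

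The only potential obstacle is careful bookkeeping about orientations and complements: one must make sure that ``crossing a cut'' is insensitive to replacing the cut with its complement, and that the convention that snug shores avoid the root $r$ is precisely what eliminates the complementary alternative in each application of \cref{lemma:only_cut_intersecting_snug_edge}. Apart from that, the proof is a direct combination of \cref{prop:minimum_cut_connected}, \cref{lemma:chain_graph_subgraph}, \cref{lemma:only_cut_intersecting_snug_edge}, and \cref{lemma:shore_for_at_most_one_path}.
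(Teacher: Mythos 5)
Your proof is correct and takes essentially the same route as the paper: connectedness of $S_0$ and $S_{t+1}$ via \cref{prop:minimum_cut_connected}, and the non-crossing claim via \cref{lemma:only_cut_intersecting_snug_edge} combined with the fact that snug shores avoid the root $r$. The only cosmetic difference is that you finish by invoking \cref{lemma:shore_for_at_most_one_path} and then checking directly that $P$ cannot cross its own outer shores, whereas the paper appeals to maximality of snug paths — two equivalent ways of closing the same argument.
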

\begin{proof}
	$S_0$ and $S_{t+1}$ are connected cuts by \cref{prop:minimum_cut_connected}. The second part of the statement follows by maximality of snug paths and \cref{lemma:only_cut_intersecting_snug_edge}, using that no snug shore contains $r$.
\end{proof}
\subsection{Augmentation-safe covers}\label{sec:augmentation_safe_covers}
In this section, we formally introduce the notion of a $k$-augmentation-safe cover alluded to in \cref{sec:augmentation_short}.
\begin{definition}\label{def:augmentation_safe_cover}
	Let $G$ be a $k$-edge-connected graph and $F\subseteq \binom{V}{2}$ be a (multi)set of links. We call a collection $\mathcal{U}=(U_i)_{i\in I}$ of nonempty subsets of $V$ covering $V$ a \emph{$k$-augmentation-safe cover of $(G,F)$} if for every $k$-cut $S\subseteq V$ of $G$
	\begin{enumerate}[label=(\roman*)]
		\item \label{def:augmentation_safe:1} $\delta_{E\dot{\cup} F} (S)\subseteq (E\dot{\cup} F)[U_i]$ for some $i\in I$, or
		\item \label{def:augmentation_safe:2}$\delta_{F_{\mathcal{U}}}(S)\ne \emptyset$, where $F_{\mathcal{U}}$ denotes the links in $F$ that are incident to a vertex in $V_{\mathcal{U}}\coloneqq \bigcup_{i, j\in I: i \neq j} U_i\cap U_j$.
	\end{enumerate}
\end{definition}

\cref{lemma:glue_solutions_together_augmentation} tell us how, given a $k$-augmentation-safe cover, we can combine solutions to the subinstances to a solution to the original instance. It is analogous to \cref{lemma:glue_solutions_together}, except that we use $F_i$ instead of $F_i\cap L[U_i]$ to construct the final solution (because this suffices to prove \cref{theorem:PTAS_key_theorem}).
\begin{lemma}\label{lemma:glue_solutions_together_augmentation}
	Let $(G=(V,E),L,c)$ be an instance of $k$-WCAP and let $\mathcal{U}=(U_i)_{i\in I}$ be a $k$-augmentation-safe cover of $(G,L)$. 
	
	Let $(F_i)_{i\in I}$ be link sets such that $(G+F_i)/(E\dot{\cup} L)[V\setminus U_i]$ is $(k+1)$-edge-connected for $i\in I$.
	Let $F\coloneqq L_{\mathcal{U}}\cup \bigcup_{i\in I} F_i$.
	 Then $G+F$ is $(k+1)$-edge-connected.
\end{lemma}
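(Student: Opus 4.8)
The plan is to imitate the proof of \cref{lemma:glue_solutions_together}: first reduce the $(k+1)$-edge-connectivity of $G+F$ to a statement about the $k$-cuts of $G$, and then handle each such cut using the two alternatives in the definition of a $k$-augmentation-safe cover.

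For the reduction, let $\emptyset\ne T\subsetneq V$ be an arbitrary cut. Since $G$ is $k$-edge-connected, $|\delta_E(T)|\ge k$, so $|\delta_{E\dot{\cup} F}(T)| = |\delta_E(T)| + |\delta_F(T)|\ge k+1$ unless $|\delta_E(T)| = k$ and $\delta_F(T) = \emptyset$. Hence it suffices to show that every $k$-cut $S$ of $G$ satisfies $\delta_F(S)\ne\emptyset$.

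So fix a $k$-cut $S$ of $G$. Since $\mathcal{U}$ is a $k$-augmentation-safe cover of $(G,L)$, either $\delta_{L_{\mathcal{U}}}(S)\ne\emptyset$, or there is $i\in I$ with $\delta_{E\dot{\cup} L}(S)\subseteq (E\dot{\cup} L)[U_i]$. In the first case, $L_{\mathcal{U}}\subseteq F$ immediately yields $\delta_F(S)\ne\emptyset$. So assume the second case, and set $G_i'\coloneqq (G+F_i)/(E\dot{\cup} L)[V\setminus U_i]$, which is $(k+1)$-edge-connected by hypothesis. I would argue that $S$ passes unscathed through this contraction. Indeed, every edge or link of $(E\dot{\cup} L)[V\setminus U_i]$ has both endpoints in $V\setminus U_i$, hence does not cross $S$ --- a crossing element would lie in $\delta_{E\dot{\cup} L}(S)\subseteq (E\dot{\cup} L)[U_i]$ and would then have both endpoints in $U_i$, a contradiction. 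Therefore each connected component of the subgraph $(V,(E\dot{\cup} L)[V\setminus U_i])$ lies entirely in $S$ or entirely in $V\setminus S$, so $S$ descends to a proper nonempty cut $S'$ of $G_i'$; moreover no element of $\delta_{E\dot{\cup} F_i}(S)$ becomes a loop and distinct such elements remain distinct after contracting, so $\delta_{G_i'}(S')$ is in bijection with $\delta_{E\dot{\cup} F_i}(S) = \delta_E(S)\,\dot{\cup}\,\delta_{F_i}(S)$.

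Putting this together, $k+1\le |\delta_{G_i'}(S')| = |\delta_E(S)| + |\delta_{F_i}(S)|$, and since $S$ is a $k$-cut of $G$ we have $|\delta_E(S)| = k$, so $|\delta_{F_i}(S)|\ge 1$; as $F_i\subseteq F$, this gives $\delta_F(S)\ne\emptyset$, completing the proof. The step that needs the most care --- the ``main obstacle'', though it is bookkeeping rather than a genuine difficulty --- is verifying that $S$ remains a well-defined proper cut of $G_i'$ and that the contraction neither loses nor merges any of the edges/links crossing $S$; this is exactly where the full hypothesis $\delta_{E\dot{\cup} L}(S)\subseteq (E\dot{\cup} L)[U_i]$ is used, rather than merely $\delta_E(S)\subseteq E[U_i]$.
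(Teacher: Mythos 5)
Your proof is correct and follows the same route as the paper's own argument: reduce $(k+1)$-edge-connectivity of $G+F$ to the statement that $\delta_F(S)\neq\emptyset$ for every $k$-cut $S$ of $G$, then dispatch the two cases of \Cref{def:augmentation_safe_cover}, in case~\ref{def:augmentation_safe:1} using that $S$ descends to a $k$-cut in the contracted graph so that $(k+1)$-edge-connectivity of $(G+F_i)/(E\dot\cup L)[V\setminus U_i]$ forces $\delta_{F_i}(S)\neq\emptyset$. The only difference is that you spell out the bookkeeping (that the contracted subgraph does not cross $S$, hence $S$ survives the contraction and the crossing edges/links are preserved bijectively) which the paper leaves implicit.
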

\begin{proof}
	It suffices to prove that for every $k$-cut $S$ of $G$, $\delta_F(S)\ne \emptyset$. If there is $i\in I$ with $\delta_{E\dot{\cup} L} (S)\subseteq (E\dot{\cup} L)[U_i]$, then $S$ corresponds to a $k$-cut in $G/(E\dot{\cup} L)[V\setminus U_i]$, so $\delta_{F_i}(S)\ne \emptyset$.
	On the other hand, if case \ref{def:augmentation_safe:2} from \cref{def:augmentation_safe_cover} applies, then $\delta_F(S)\supseteq \delta_{L_{\mathcal{U}}}(S)\ne \emptyset$.
\end{proof}
\cref{lemma:edge_safe_is_augmentation_safe} shows that $(k+1)$-edge-safe covers are $k$-augmentation-safe covers.
Thus we can obtain a $k$-augmentation-safe cover by constructing a $(k+1)$-edge-safe cover.
Nevertheless, it is helpful to use the notion of $k$-augmentation-safe covers, as it cleanly highlights the properties we need in this context.
\begin{lemma}\label{lemma:edge_safe_is_augmentation_safe}
Let $(G=(V,E),L,c)$ be an instance of $k$-WCAP and let $\mathcal{U}=(U_i)_{i\in I}$ be a $(k+1)$-edge-safe cover of $G+L$. Then $\mathcal{U}$ is a $k$-augmentation-safe cover of $(G,L)$. 
\end{lemma}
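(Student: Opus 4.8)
The plan is to fix an arbitrary $k$-cut $S\subseteq V$ of $G$ and verify that one of the two conditions in \cref{def:augmentation_safe_cover} holds for it. The crucial first step is to observe that $S$ is also a connected cut of $G+L$, so that the defining property of a $(k+1)$-edge-safe cover of $G+L$ becomes applicable to $S$. Indeed, since $G$ is $k$-edge-connected and $|\delta_G(S)|=k$, the cut $S$ is a minimum cut of $G$, hence minimal, and by the equivalence between minimal cuts and connected cuts in connected graphs (the same fact used in the proof of \cref{prop:minimum_cut_connected}), both $G[S]$ and $G[V\setminus S]$ are connected. Passing from $G$ to $G+L$ only adds edges, so $(G+L)[S]$ and $(G+L)[V\setminus S]$ remain connected; thus $S$ is a connected cut of $G+L$.

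Next I would invoke that $\mathcal{U}$ is a $(k+1)$-edge-safe cover of $G+L$, applied to the connected cut $S$ (\cref{def:k-edge-safe_cover}). This splits into two cases. If \cref{def:k-edge-safe_cover}~\ref{def:k_edge_safe_2} holds, i.e.\ $\delta_{E\dot{\cup} L}(S)\subseteq (E\dot{\cup} L)[U_i]$ for some $i\in I$, then condition~\ref{def:augmentation_safe:1} of \cref{def:augmentation_safe_cover} holds verbatim and we are done. Otherwise \cref{def:k-edge-safe_cover}~\ref{def:k_edge_safe_1} holds: $|\delta_{G+L}(S)\cap E_{\mathcal{U}}|\ge k+1$, where in $G+L$ the set $E_{\mathcal{U}}$ consists of all edges and links of $G+L$ incident to $V_{\mathcal{U}}=\bigcup_{i\neq j}U_i\cap U_j$. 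Here a one-line counting argument finishes the proof: write $\delta_{G+L}(S)=\delta_G(S)\,\dot{\cup}\,\delta_L(S)$ and note that the $E$-part of $E_{\mathcal{U}}\cap\delta_{G+L}(S)$ is contained in $\delta_G(S)$, which has only $k$ elements, while its $L$-part is exactly $\delta_{L_{\mathcal{U}}}(S)$. Since $k+1$ elements cannot all lie in a set of size $k$, at least one element of $\delta_{G+L}(S)\cap E_{\mathcal{U}}$ is a link in $L_{\mathcal{U}}$, so $\delta_{L_{\mathcal{U}}}(S)\neq\emptyset$, which is condition~\ref{def:augmentation_safe:2} of \cref{def:augmentation_safe_cover}.

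I do not expect a genuine obstacle here; the argument is short. The only points requiring care are the bookkeeping that distinguishes graph edges of $G$ from links of $L$ inside $E_{\mathcal{U}}$ (so that the ``$+1$'' in $k+1$ is precisely what is needed to force a link into $\delta_{L_{\mathcal{U}}}(S)$), and making sure the connectivity of the shores of $S$ carries over from $G$ to $G+L$, which is what legitimizes applying the $(k+1)$-edge-safe cover property to $S$.
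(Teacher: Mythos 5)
Your proposal is correct and follows essentially the same route as the paper: both note that a $k$-cut of $G$ is a connected cut of $G$ (via minimality, as in \cref{prop:minimum_cut_connected}) and hence of $G+L$, then apply the $(k+1)$-edge-safe property and, in the counting case, observe that since $|\delta_E(S)|=k$ the bound $\ge k+1$ forces a link of $L_{\mathcal{U}}$ to cross $S$. You just spell out a bit more explicitly the decomposition of $\delta_{G+L}(S)\cap E_{\mathcal{U}}$ into edge and link parts, which the paper's proof compresses into the line $|\delta_{E_{\mathcal{U}}}(S)|+|\delta_{L_\mathcal{U}}(S)|\ge k+1$.
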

\begin{proof}
Let $S$ be a $k$-cut of $G$. By \cref{prop:minimum_cut_connected}, $S$ is connected in $G$, and, hence, also in $G+L$. As $\mathcal{U}$ is a $(k+1)$-edge-safe cover of $G+L$, we either have $\delta_{E\dot{\cup} L}(S)\subseteq (E\dot{\cup}L)[U_i]$ for some $i\in I$, or $|\delta_{E_{\mathcal{U}}}(S)|+|\delta_{L_\mathcal{U}}(S)|\ge k+1$. As $S$ is a $k$-cut in $G$, we must have $\delta_{L_{\mathcal{U}}}(S)\ne \emptyset$ in the latter case.
\end{proof}
For technical reasons, we will require that the $k$-augmentation-safe covers we work with satisfy another condition, which we call being \emph{well-separated}. By \cref{theorem:compute_k_vertex_safe_cover_with_vertex_weights} and \cref{lemma:edge_safe_is_augmentation_safe}, we can, in polynomial time, compute $k$-augmentation-safe covers with this property.
\begin{definition}
Let $\mathcal{U}=(U_i)_{i\in I}$ be a $k$-augmentation-safe cover of $(G,F)$, where $G$ is a $k$-edge-connected graph and $F$ is a set of links. We call $\mathcal{U}$ \emph{well-separated} if for every $i,j\in I$, $i\ne j$, there is no edge or link with one endpoint in $U_i\setminus U_j$ and the other endpoint in $U_j\setminus U_i$.
\end{definition}
\subsubsection{Lifting and contracting covers}
When contracting and uncontracting snug paths, we will transform a $k$-augmen\-ta\-tion-safe cover in the contracted instance into a $k$-augmentation-safe cover in the uncontracted instance, and vice versa. For this, we require the following notation.
\begin{definition}
Let $G$ be a $k$-edge-connected graph and let $\mathcal{Q}\subseteq \mathcal{P}^G_{\rm{chain}}$.
\begin{enumerate}[(i)]
	\item Let $\mathcal{U}'=(U'_i)_{i\in I}$ be a collection of subsets of $V(G/\mathcal{Q})$. The \emph{lift of $\mathcal{U}'$ to $G$} is the collection $\mathcal{U}=(U_i)_{i\in I}$, where $U_i$ is the preimage of $U'_i$ under the map $V(G)\rightarrow V(G/\mathcal{Q})$.
	\item Let $\mathcal{U}=(U_i)_{i\in I}$ be a collection of subsets of $V(G)$. The \emph{contraction of $\mathcal{U}$} is the collection $\mathcal{U}/\mathcal{Q}=(U'_i)_{i\in I}$, where $U'_i$ is the image of $U_i$ under the map $V(G)\rightarrow V(G/\mathcal{Q})$.
\end{enumerate}
\end{definition}
\cref{lemma:lift_cuts_not_crossed_by_snug_path_1} below tells us that lifting a $k$-augmentation-safe cover preserves the defining property for all $k$-cuts that are not crossed by any snug path. By \cref{lemma:only_cut_intersecting_snug_edge}, the only $k$-cuts that a snug path crosses are its ``inner'' snug shores (and their complements).
\begin{lemma}\label{lemma:lift_cuts_not_crossed_by_snug_path_1}
Let $G$ be a $k$-edge-connected graph and let $F$ be a set of links. Let $\mathcal{U}'=(U'_i)_{i\in I}$ be a $k$-augmentation-safe cover for $(G,F)/\mathcal{P}^G_{\rm{chain}}$ and let $\mathcal{U}=(U_i)_{i\in I}$ be the lift of $\mathcal{U}'$ to $G$. Then for every $k$-cut $S$ of $G$ that is not crossed by $\mathcal{P}^G_{\rm{chain}}$, one of \cref{def:augmentation_safe_cover}~\ref{def:augmentation_safe:1} or \cref{def:augmentation_safe_cover}~\ref{def:augmentation_safe:2} applies. 
\end{lemma}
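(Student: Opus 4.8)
The plan is to push the $k$-cut $S$ of $G$ down to a $k$-cut $S'$ of the contracted instance $(G,F)/\mathcal{P}^G_{\mathrm{chain}}$, invoke the assumed augmentation-safeness of $\mathcal{U}'$ there, and pull the conclusion back up along the contraction map. Write $\mathcal{Q}:=\mathcal{P}^G_{\mathrm{chain}}$ and let $\pi\colon V(G)\to V(G/\mathcal{Q})$ be the contraction map, so that $U_i=\pi^{-1}(U_i')$ by definition of the lift. Recall from \cref{lemma:chain_graph_paths,lemma:chain_graph_subgraph} that the snug paths are pairwise vertex-disjoint and that each is (the vertex set of) a connected subgraph of $G$.

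\textbf{Descending $S$.} Fix a $k$-cut $S$ of $G$ that is not crossed by $\mathcal{Q}$. Since each snug path $P$ is connected and no edge of $P$ lies in $\delta_G(S)$, the vertex set of $P$ is contained entirely in $S$ or entirely in $V\setminus S$; hence $S=\pi^{-1}(\pi(S))$, and $S':=\pi(S)$ is a nonempty proper subset of $V(G/\mathcal{Q})$. Moreover, no edge of $\delta_G(S)$ is internal to a contracted path (such an edge would exhibit a snug path crossing $S$), so the contraction restricts to a bijection between $\delta_G(S)$ and $\delta_{G/\mathcal{Q}}(S')$; in particular $|\delta_{G/\mathcal{Q}}(S')|=k$, i.e.\ $S'$ is a $k$-cut of $G/\mathcal{Q}$. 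This descent step is really the only part of the argument with any content; everything else is bookkeeping along $\pi$.

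\textbf{Applying the hypothesis and lifting back.} Since $\mathcal{U}'$ is a $k$-augmentation-safe cover of $(G,F)/\mathcal{Q}=(G/\mathcal{Q},F/\mathcal{Q})$ and $S'$ is a $k$-cut of $G/\mathcal{Q}$, one of the two alternatives of \cref{def:augmentation_safe_cover} holds for $S'$. If $\delta_{(F/\mathcal{Q})_{\mathcal{U}'}}(S')\neq\emptyset$, pick such a link and write it as $\pi(\ell)$ for some $\ell\in F$; its endpoints lie in $S'$ and $V(G/\mathcal{Q})\setminus S'$, so (using $S=\pi^{-1}(S')$ and surjectivity of $\pi$) the endpoints of $\ell$ lie in $S$ and $V\setminus S$, i.e.\ $\ell\in\delta_F(S)$. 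An endpoint of $\pi(\ell)$ in $U_i'\cap U_j'$ with $i\neq j$ pulls back to an endpoint of $\ell$ in $\pi^{-1}(U_i')\cap\pi^{-1}(U_j')=U_i\cap U_j\subseteq V_{\mathcal{U}}$, so $\ell\in F_{\mathcal{U}}$ and \cref{def:augmentation_safe_cover}~\ref{def:augmentation_safe:2} holds for $S$.

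\textbf{The containment case.} Otherwise there is $i\in I$ with $\delta_{E(G/\mathcal{Q})\dot{\cup}(F/\mathcal{Q})}(S')\subseteq(E(G/\mathcal{Q})\dot{\cup}(F/\mathcal{Q}))[U_i']$, and I claim \cref{def:augmentation_safe_cover}~\ref{def:augmentation_safe:1} holds for $S$ with this same $i$. Let $e=\{a,b\}\in\delta_{E\dot{\cup} F}(S)$ with $a\in S$, $b\in V\setminus S$. Because $S$ is not crossed by $\mathcal{Q}$, the endpoints $a,b$ do not lie on a common snug path, so $\pi(a)\neq\pi(b)$ and $e$ maps to a genuine edge (resp.\ link) $\pi(e)=\{\pi(a),\pi(b)\}$ of $G/\mathcal{Q}$ crossing $S'$; by the case assumption $\pi(a),\pi(b)\in U_i'$, hence $a,b\in\pi^{-1}(U_i')=U_i$ and $e\in(E\dot{\cup} F)[U_i]$. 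As $e$ was arbitrary, $\delta_{E\dot{\cup} F}(S)\subseteq(E\dot{\cup} F)[U_i]$, finishing the proof. The one point to watch throughout is keeping edges and links straight under contraction and correctly translating the doubly-covered vertex set $V_{\mathcal{U}}$ through $\pi$; this is routine but is where an imprecise argument could go wrong.
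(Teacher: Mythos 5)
Your proof is correct and takes essentially the same approach as the paper's: descend $S$ to a $k$-cut $S' = \pi(S)$ in $(G,F)/\mathcal{P}^G_{\mathrm{chain}}$ (using that every snug path, being a connected subgraph of $G$ by \cref{lemma:chain_graph_subgraph}, lies wholly on one side of $S$), apply the hypothesis to $S'$, and pull back. The paper's own proof is just a one-line version of this, so your writeup is simply a fuller verification of the same bookkeeping.
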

\begin{proof}
Let $S$ be a $k$-cut in $G$ that is not crossed by $\mathcal{P}^G_{\rm{chain}}$. Then $S$ corresponds to a $k$-cut $T$ in $G/\mathcal{P}^G_{\rm{chain}}$. Now \cref{def:augmentation_safe_cover}~\ref{def:augmentation_safe:1} or \cref{def:augmentation_safe_cover}~\ref{def:augmentation_safe:2}  holds for $T$ and $\mathcal{U}'$, which implies that \cref{def:augmentation_safe_cover}~\ref{def:augmentation_safe:1} or \cref{def:augmentation_safe_cover}~\ref{def:augmentation_safe:2} holds for $S$ and $\mathcal{U}$. 
\end{proof}
We continue by investigating the interactions between lifted covers and snug shores. 
\cref{lemma:interaction_of_snug_shores_with_cover} tells us we can lift a well-separated $k$-augmentation-safe cover for $(G,F)/\mathcal{P}^G_{\rm{chain}}$ as long as we do not uncontract the snug paths that are ``close to the boundaries of the pieces of the cover'' (i.e., those for which \ref{item:u_P_in_V_U_or_neighborhood} of \cref{lemma:interaction_of_snug_shores_with_cover} applies). The proof of \cref{lemma:interaction_of_snug_shores_with_cover} is deferred to the end of the section.
\begin{lemma}\label{lemma:interaction_of_snug_shores_with_cover}
	Let $G$ be a $k$-edge-connected graph, let $F$ be a set of links and let $\mathcal{U}'=(U'_i)_{i\in I}$ be a well-separated $k$-augmentation-safe cover of $(G,F)/\mathcal{P}^G_{\rm{chain}}$. Denote the lift of $\mathcal{U}'$ to $G$ by $\mathcal{U}=(U_i)_{i\in I}$. Let $P\in\mathcal{P}^G_{\rm{chain}}$ and denote the contracted vertex corresponding to $P$ in $G/\mathcal{P}^G_{\rm{chain}}$ by $u_P$. Then one of the following holds:
	\begin{enumerate}
		\item \label{item:u_P_in_V_U_or_neighborhood} $u_P\in V'_{\mathcal{U'}}\cup \Gamma_{(G+F)/\mathcal{P}^G_{\rm{chain}}}(V'_{\mathcal{U'}})$, where $V'_{\mathcal{U'}}\coloneqq \bigcup_{i,j \in I: i\neq j} U'_i\cap U'_j$.
		\item \label{item:all_snug_cuts_in_U_i} There is a unique $i\in I$ such that $u_P\in U'_i$. Moreover, we have $\delta_{E\dot{\cup} F}(S)\subseteq (E\dot{\cup}F)[U_i]$ for every snug shore $S$ of $P$.
		\item \label{item:all_snug_cuts_covered} We have $\delta_{F_\mathcal{U}}(S)\ne \emptyset$ for every snug shore $S$ of $P$.
	\end{enumerate}
\end{lemma}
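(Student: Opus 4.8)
The plan is to assume that alternative~\ref{item:u_P_in_V_U_or_neighborhood} fails and to derive one of \ref{item:all_snug_cuts_in_U_i} or \ref{item:all_snug_cuts_covered}, where the choice between these two is dictated entirely by the two ``outer'' snug shores of $P$. Write $P=(u_0,\dots,u_t)$ with snug shores $(S_0,\dots,S_{t+1})$, so that $V(P)=S_{t+1}\setminus S_0$ and $V=S_0\,\dot{\cup}\,V(P)\,\dot{\cup}\,(V\setminus S_{t+1})$. Since \ref{item:u_P_in_V_U_or_neighborhood} fails, $u_P\notin V'_{\mathcal{U'}}$, so there is a \emph{unique} $i\in I$ with $u_P\in U'_i$, and $u_P$ has no neighbor in $V'_{\mathcal{U'}}$ in $(G+F)/\mathcal{P}^G_{\mathrm{chain}}$; combining this with well-separatedness forces every edge or link of $(G+F)/\mathcal{P}^G_{\mathrm{chain}}$ incident to $u_P$ to have its other endpoint in $U'_i$, and hence (lifting) every edge or link of $G+F$ incident to $V(P)$ has both endpoints in $U_i$ (recall $U_i$ is the preimage of $U'_i$, so $V(P)\subseteq U_i$). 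By \cref{lemma:outer_cuts} no snug path crosses $S_0$ or $S_{t+1}$, so these two $k$-cuts descend to $k$-cuts $T_0,T_{t+1}$ of $(G+F)/\mathcal{P}^G_{\mathrm{chain}}$ with $u_P\notin T_0$ and $u_P\in T_{t+1}$; moreover an edge or link crosses $S_0$ (resp.\ $S_{t+1}$) in $G+F$ exactly when its image crosses $T_0$ (resp.\ $T_{t+1}$), and ``both endpoints in $U_i$'' corresponds to ``both endpoints in $U'_i$''.

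Next I would apply $k$-augmentation-safety of $\mathcal{U}'$ to $T_0$ and to $T_{t+1}$ and split into two cases. If \cref{def:augmentation_safe_cover}~\ref{def:augmentation_safe:2} holds for $T_0$ (the case of $T_{t+1}$ is symmetric), take the witnessing link $\ell'$: neither of its endpoints is $u_P$, since the one in $V'_{\mathcal{U'}}$ is not (as $u_P\notin V'_{\mathcal{U'}}$) and if the other were $u_P$ then $u_P$ would be a neighbor of $V'_{\mathcal{U'}}$, contradicting the failure of~\ref{item:u_P_in_V_U_or_neighborhood}. Hence the lift $\ell\subseteq L$ of $\ell'$ avoids $V(P)$, has an endpoint in $V_{\mathcal{U}}$ (the preimage of $V'_{\mathcal{U'}}$), and crosses $S_0$; its two endpoints then lie in $S_0$ and in $V\setminus S_{t+1}$, so $\ell$ crosses \emph{every} $S_m$ with $0\le m\le t+1$, giving \ref{item:all_snug_cuts_covered}. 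In the complementary case, \cref{def:augmentation_safe_cover}~\ref{def:augmentation_safe:1} holds for both $T_0$ and $T_{t+1}$, say $\delta(T_0)\subseteq (E'\dot{\cup}F')[U'_{j_0}]$ and $\delta(T_{t+1})\subseteq (E'\dot{\cup}F')[U'_{j_1}]$; I would first argue $j_0=j_1=i$. If $\delta_G(S_0)$ contained no edge incident to $V(P)$, then every edge of $\delta_G(S_0)$ would join $S_0$ to $V\setminus S_{t+1}$ and thus cross $S_{t+1}$, giving $\delta_G(S_0)\subseteq\delta_G(S_{t+1})$ and, as both cuts have size $k$, $\delta_G(S_0)=\delta_G(S_{t+1})$; but then $\delta_G(V(P))=\delta_G(S_{t+1}\setminus S_0)=\emptyset$, contradicting connectivity of $G$ together with $\emptyset\neq V(P)\neq V$. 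So some edge of $\delta_G(S_0)$ meets $V(P)$; it descends to an edge of $\delta(T_0)$ incident to $u_P$, forcing $u_P\in U'_{j_0}$ and hence $j_0=i$ by uniqueness, and the same argument with $S_{t+1}$ in place of $S_0$ gives $j_1=i$. Lifting, $\delta_{E\dot{\cup}F}(S_0),\delta_{E\dot{\cup}F}(S_{t+1})\subseteq (E\dot{\cup}F)[U_i]$.

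To finish \ref{item:all_snug_cuts_in_U_i} in that case, I would handle the inner shores $S_j$ for $1\le j\le t$: for $e\in\delta_{E\dot{\cup}F}(S_j)$, if $e$ is incident to $V(P)$ then both of its endpoints lie in $U_i$ by the first paragraph; and if $e$ avoids $V(P)$ then, since $S_j\setminus V(P)=S_0$ and $(V\setminus S_j)\setminus V(P)=V\setminus S_{t+1}$, the edge $e$ joins $S_0$ to $V\setminus S_{t+1}$, hence also crosses $S_0$ and so lies in $(E\dot{\cup}F)[U_i]$ by the previous paragraph. I expect the main obstacle to be exactly this last point: an edge or link may cross an inner snug shore without touching $P$, so the defining property of a $k$-augmentation-safe cover cannot be read off from $P$ alone; the resolution is the observation that any such ``$P$-bypassing'' edge is forced through both outer shores $S_0$ and $S_{t+1}$, which reduces the entire analysis to the behavior of $\mathcal{U}'$ at $T_0$ and $T_{t+1}$. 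A secondary point requiring care is verifying the clean correspondence, via \cref{lemma:outer_cuts}, between cuts/edges at $S_0,S_{t+1}$ in $G+F$ and at $T_0,T_{t+1}$ after contracting $\mathcal{P}^G_{\mathrm{chain}}$.
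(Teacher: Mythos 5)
Your proof is correct and takes essentially the same route as the paper's: assuming alternative~\ref{item:u_P_in_V_U_or_neighborhood} fails, you deduce from well-separatedness that every edge/link touching $V(P)$ has both endpoints in $U_i$, apply $k$-augmentation-safety of $\mathcal{U}'$ to the two outer shores $S_0,S_{t+1}$ (which descend to $k$-cuts by \cref{lemma:outer_cuts}), and propagate the resulting condition to the inner shores via the observation that any edge/link crossing an inner shore either touches $V(P)$ or crosses both outer shores. The differences are cosmetic: you examine both outer shores and pin down $i'=i$ directly from connectivity of $G$ (via $\delta_G(V(P))\neq\emptyset$), whereas the paper uses connectivity of $(G+F)/\mathcal{P}^G_{\mathrm{chain}}$ to single out one outer shore WLOG and then propagates via the symmetric-difference identity $\delta(S_j)\Delta\delta(S_0)=\delta(\{u_0,\dots,u_{j-1}\})$.
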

As outlined in \cref{sec:augmentation_short}, our goal is to lift a $k$-augmentation-safe cover of $(G,F)/\mathcal{P}^G_{\rm{chain}}$ (for some ``cheap'' link set $F$) to a $k$-augmentation-safe cover such that all snug paths that remain contracted can be cheaply covered, and such that all snug paths that we uncontract form snug paths in the pieces they appear in. To achieve this property, we will keep the snug paths satisfying properties \ref{item:u_P_in_V_U_or_neighborhood} or \ref{item:all_snug_cuts_covered} from \cref{lemma:interaction_of_snug_shores_with_cover} contracted. We will use \cref{theorem:compute_k_vertex_safe_cover_with_vertex_weights} with appropriate vertex weights for the snug paths to ensure that we can cheaply cover the paths for which \ref{item:u_P_in_V_U_or_neighborhood} applies. The paths to which \ref{item:all_snug_cuts_covered} applies can be covered by $F_{\mathcal{U}}$.
 
To state our choice of the link set $F$, we need to introduce the following notation.
\begin{definition}\label{def:L_snug}
Given an instance $(G,L,c)$ of $k$-WCAP, we define $L_{\rm{snug}}\coloneqq L\cap\binom{V_{\rm{snug}}}{2}$ to be the collection of links with two snug endpoints. We further define the collection $L^\circ_{\rm{snug}}\subseteq L_{\rm{snug}}$, which, for every pair of snug paths $P$ and $Q$ such that there exists a link with one endpoint in $P$ and one endpoint in $Q$, contains a cheapest one (breaking ties arbitrarily).
\end{definition}
\begin{proposition}\label{prop:L_snug_edge_connectivity}
Let $(G,L,c)$ be an instance of $k$-WCAP. \\Then $(G+(L\setminus L_{\rm{snug}} \cup L^\circ_{\rm{snug}}))/\mathcal{P}^G_{\rm{chain}}$ is $(k+1)$-edge-connected.
\end{proposition}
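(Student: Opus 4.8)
The plan is to show that $(G + (L\setminus L_{\rm{snug}} \cup L^\circ_{\rm{snug}}))/\mathcal{P}^G_{\rm{chain}}$ is $(k+1)$-edge-connected by verifying that every $k$-cut of $G/\mathcal{P}^G_{\rm{chain}}$ is crossed by at least one link of $(L\setminus L_{\rm{snug}}) \cup L^\circ_{\rm{snug}}$. First I would recall that, since the instance $(G,L,c)$ is feasible, $G+L$ is $(k+1)$-edge-connected, so every $k$-cut $S$ of $G$ satisfies $\delta_L(S)\neq\emptyset$; equivalently, $S$ is covered by some link $\ell\in L$. By \cref{lemma:only_cut_intersecting_snug_edge}, contracting $\mathcal{P}^G_{\rm{chain}}$ destroys no $k$-cut of $G$ except the inner snug shores of snug paths (and their complements), and by definition of a snug chain those destroyed cuts are precisely the $k$-cuts that lie strictly ``between'' the two outer snug shores of a path. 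Hence the $k$-cuts of $G/\mathcal{P}^G_{\rm{chain}}$ correspond bijectively to the $k$-cuts $S$ of $G$ that are not crossed by any snug path, i.e., not crossed by any arc of $G_{\rm{chain}}$.

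So fix such a $k$-cut $S$ of $G$ and pick $\ell = \{u,v\} \in L$ with $\delta_{\{\ell\}}(S)\neq\emptyset$. If $\ell \notin L_{\rm{snug}}$, then $\ell \in L\setminus L_{\rm{snug}} \subseteq (L\setminus L_{\rm{snug}})\cup L^\circ_{\rm{snug}}$ and we are done. Otherwise both $u$ and $v$ are snug vertices. I would then argue that $u$ and $v$ lie in \emph{distinct} snug paths $P$ and $Q$: if they were on the same snug path, then since $S$ crosses $\ell$, the two endpoints $u,v$ would be separated by $S$, which would force $S$ to cross an edge of that snug path (the undirected path in $G$ is, by \cref{lemma:chain_graph_subgraph}, a subgraph of $G$, and along it $u$ and $v$ are joined), contradicting that $S$ is not crossed by any snug path. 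Here one needs the more careful version of this argument: $S$ crossing $\ell$ means $u,v$ are on opposite sides of $S$, and since the endpoints of $P$ along the path $u_0,\dots,u_t$ have the property (by \cref{lemma:only_cut_intersecting_snug_edge}) that $S$ can only cross the path at an inner snug shore, but $S$ is assumed uncrossed by $\mathcal{P}^G_{\rm{chain}}$, hence $u$ and $v$ must be on the same side — contradiction. So $P\neq Q$. Now, by \cref{def:L_snug}, $L^\circ_{\rm{snug}}$ contains a cheapest link between $P$ and $Q$; call it $\ell'$. Both $\ell$ and $\ell'$ have one endpoint in $P$ and one in $Q$, so by two applications of \cref{lemma:link_furthest_out} or \cref{lemma:link_deepest_in} (sliding the endpoint along $P$, then sliding the endpoint along $Q$), every $k$-cut of $G$ crossed by $\ell$ that is not one of the relevant snug shores is also crossed by $\ell'$; since $S$ is crossed by $\ell$ and is not a snug shore of $P$ or of $Q$, $S$ is crossed by $\ell'\in L^\circ_{\rm{snug}}$.

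The main obstacle is making the ``$u$ and $v$ lie in distinct snug paths, and $\ell'$ still covers $S$'' step fully rigorous: one must carefully track which snug shores of $P$ and of $Q$ could fail to be covered when replacing $\ell$ by $\ell'$, and confirm that $S$ — being a $k$-cut uncrossed by $\mathcal{P}^G_{\rm{chain}}$, hence surviving the contraction — is never one of those exceptional snug shores. Concretely, $S$ uncrossed by $\mathcal{P}^G_{\rm{chain}}$ together with \cref{lemma:only_cut_intersecting_snug_edge} implies $S\notin\{S_1^w : w\in V(P)\cup V(Q)\}$ up to complementation, and the sliding lemmas (\cref{lemma:link_furthest_out}, \cref{lemma:link_deepest_in}) only lose coverage of exactly those cuts. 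Once this bookkeeping is in place, every $k$-cut of $G/\mathcal{P}^G_{\rm{chain}}$ is covered by $(L\setminus L_{\rm{snug}})\cup L^\circ_{\rm{snug}}$, which is the claim.
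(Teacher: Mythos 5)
Your overall approach is correct and is, in substance, the same as the paper's: both arguments hinge on the observation that every link in $L_{\rm{snug}}\setminus L^\circ_{\rm{snug}}$ either becomes a loop upon contracting $\mathcal{P}^G_{\rm{chain}}$ or becomes parallel to a link in $L^\circ_{\rm{snug}}$, so the contracted graph differs from $(G+L)/\mathcal{P}^G_{\rm{chain}}$ only by dropping redundant parallel copies. The paper phrases this once, globally; you unfold it cut by cut, which is also fine.

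However, the ``main obstacle'' you identify at the end is not an obstacle at all, and your attempt to resolve it via \cref{lemma:link_furthest_out} and \cref{lemma:link_deepest_in} misapplies those lemmas. They require a \emph{fixed} endpoint $v$ lying in $S_0$ or $V\setminus S_{t+1}$ and a slide of the snug endpoint in a \emph{specific direction}; the cheapest link $\ell'\in L^\circ_{\rm{snug}}$ between $P$ and $Q$ need not relate to $\ell$ in that way (it may move both endpoints, and in either direction). What you actually need is trivial and needs neither lemma: you have already established that $S$ is not crossed by any snug path, so, since each $P$ and $Q$ is a connected path in $G$ (\cref{lemma:chain_graph_subgraph}), the whole of $V(P)$ lies on one side of $S$ and the whole of $V(Q)$ on the other (because $\ell$ crosses $S$). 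Therefore \emph{every} link with one endpoint in $V(P)$ and one in $V(Q)$ crosses $S$, in particular $\ell'$. There is no bookkeeping of exceptional snug shores to be done. Replacing your final step with this one-line observation (and dropping the appeal to the sliding lemmas) makes the proof both correct and noticeably shorter.
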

\begin{proof}
$(G+(L\setminus L_{\rm{snug}} \cup L^\circ_{\rm{snug}}))/\mathcal{P}^G_{\rm{chain}}$ arises from $(G+L)/\mathcal{P}^G_{\rm{chain}}$ by removing all but one copy of certain parallel links (recall that we always delete loops arising from edge contractions). The facts that $G$ is $k$-edge-connected and we assumed $k$-WCAP instances to be feasible conclude the proof.
\end{proof}
In order to guarantee that $F_{\mathcal{U}}$ is cheap, we will choose $F=L\setminus L_{\rm{snug}}\cup L^\circ_{\rm{snug}}$ (after pre-processing the instance to ensure that $L\setminus L_{\rm{snug}}$ is cheap). To solve $k$-WCAP within the pieces, we will, however, have to add additional links to ensure feasibility.
\begin{lemma}\label{lemma:augmentation_safe_cover_respecting_snug_paths}
	Let $(G,L,c)$ be an instance of $k$-WCAP.
	Let $\mathcal{U}'=(U'_i)_{i\in I}$ be a well-separated $k$-augmentation-safe cover of $(G,L\setminus L_{\rm{snug}}\cup L^\circ_{\rm{snug}})/\mathcal{P}^G_{\rm{chain}}$ and let $\mathcal{U}$ denote its lift to $G$. Let $\mathcal{Q}$ be the set of snug paths to which \ref{item:u_P_in_V_U_or_neighborhood} or \ref{item:all_snug_cuts_covered} from \cref{lemma:interaction_of_snug_shores_with_cover} applies (with $F=L\setminus L_{\rm{snug}}\cup L^\circ_{\rm{snug}}$). Finally, let $L'\supseteq L\setminus L_{\rm{snug}}\cup L^\circ_{\rm{snug}}$ be any link set. Then:
	\begin{enumerate}
		\item \label{item:augmentation_safe_cover_in_contracted_instance} $\mathcal{U}/\mathcal{Q}\eqqcolon (U^\mathcal{Q}_i)_{i\in I}$ is a $k$-augmentation-safe cover of $(G_\mathcal{Q}=(V_\mathcal{Q},E_{\mathcal{Q}}),L'_{\mathcal{Q}})\coloneqq (G,L')/\mathcal{Q}$.
		\item \label{item:snug_paths_survive} For every $P\in\mathcal{P}^G_{\rm{chain}}\setminus \mathcal{Q}$, there is a unique $i\in I$ such that $V(P)\subseteq U_i$. Moreover, $P$ is a subpath of a snug path in $G_{\mathcal{Q}}/(E_{\mathcal{Q}}\dot{\cup} L'_{\mathcal{Q}})[V_{\mathcal{Q}}\setminus U^\mathcal{Q}_i]$.
	\end{enumerate}
	\end{lemma}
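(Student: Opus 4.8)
Write $F \coloneqq (L\setminus L_{\rm{snug}})\cup L^\circ_{\rm{snug}}$, so that $\mathcal{U}'$ is a well-separated $k$-augmentation-safe cover of $(G,F)/\mathcal{P}^G_{\rm{chain}}$ and $\mathcal{U}$ is its lift to $G$; since it is a lift, $\mathcal{U}$ places each snug path of $G$ entirely inside a single $U_i$ or inside none of them. For any $P=(u_0,\dots,u_t)\in\mathcal{P}^G_{\rm{chain}}\setminus\mathcal{Q}$ with snug chain $(S_0,\dots,S_{t+1})$, option~\ref{item:all_snug_cuts_in_U_i} of \cref{lemma:interaction_of_snug_shores_with_cover} must apply (the other two options place $P$ in $\mathcal{Q}$), yielding a \emph{unique} index $i=i(P)$ with $u_P\in U'_i$ and $\delta_{E\dot{\cup}F}(S)\subseteq(E\dot{\cup}F)[U_{i}]$ for every snug shore $S$ of $P$. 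As $\delta_E(u_j)\subseteq\delta_E(S_j)\cup\delta_E(S_{j+1})$ and $|\delta_E(u_j)|\ge k+1>k=|\delta_E(S_j)|$, some edge at $u_j$ crosses the snug shore $S_{j+1}$ and hence lies in $E[U_i]$; thus $V(P)\subseteq U_i$, and $i$ is the unique such index because $V(P)\subseteq U_i\iff u_P\in U'_i$ while $u_P$ lies in exactly one $U'_i$ (it is not in $V'_{\mathcal{U}'}$, since $P\notin\mathcal{Q}$). The only delicate point below is that links with two snug endpoints (a subset of $L_{\rm{snug}}$, possibly meeting $L'$) are not controlled by any available structural statement, as contracting $\mathcal{P}^G_{\rm{chain}}$ keeps only a cheapest representative per pair of snug paths in $F$. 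I handle these via the observation $(\star)$: if $\ell=\{a,b\}\in L_{\rm{snug}}$ crosses a $k$-cut $\tilde S$ of $G$ that neither of the snug paths $P_a\ni a$, $P_b\ni b$ crosses, then $V(P_a)$ and $V(P_b)$ each lie on a single side of $\tilde S$, and $\ell$ crossing $\tilde S$ forces them to opposite sides, so $P_a\ne P_b$ and the representative link $\ell^\circ\in L^\circ_{\rm{snug}}\subseteq F$ between $P_a$ and $P_b$ also crosses $\tilde S$; thus, if in addition $\delta_F(\tilde S)\subseteq F[U_i]$, then $\ell^\circ\subseteq U_i$, whence $V(P_a)\cup V(P_b)\subseteq U_i$ by the lift property, so $a,b\in U_i$. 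As a corollary, for $P$ and $i=i(P)$ as above, every $\ell\in L_{\rm{snug}}$ crossing a snug shore of $P$ has both endpoints in $U_i$: if both its endpoints lie on $P$, use $V(P)\subseteq U_i$; otherwise $\ell$ crosses the outer shore $S_0$ or $S_{t+1}$ by \cref{lemma:link_crosses_outer_shore}, which no snug path crosses (\cref{lemma:outer_cuts}), so $(\star)$ applies.

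For part~\ref{item:snug_paths_survive}: uniqueness of $i$ with $V(P)\subseteq U_i$ is shown above; put $G^i\coloneqq G_\mathcal{Q}/(E_\mathcal{Q}\dot{\cup}L'_\mathcal{Q})[V_\mathcal{Q}\setminus U^\mathcal{Q}_i]$. Each $S_j$ is a snug shore of $P$ only (\cref{lemma:shore_for_at_most_one_path}), so by \cref{lemma:only_cut_intersecting_snug_edge} no snug path other than $P$ crosses $S_j$, and in particular no path of $\mathcal{Q}$ does, so $S_j$ descends to a $k$-cut of $G_\mathcal{Q}$. Moreover every edge and every $F$-link crossing $S_j$ lies in $U_i$ (by the displayed containment), and so does every $L_{\rm{snug}}$-link crossing $S_j$ (by the corollary), so no edge or link of $(E_\mathcal{Q}\dot{\cup}L'_\mathcal{Q})[V_\mathcal{Q}\setminus U^\mathcal{Q}_i]$ crosses the image of $S_j$; hence $S_j$ descends further to a $k$-cut of $G^i$, which is $k$-edge-connected as a contraction of $G$. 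Since $V(P)\subseteq U_i$ survives inside $U^\mathcal{Q}_i$, $\deg_{G^i}(u_j)=\deg_G(u_j)\ge k+1$, and the root of $G^i$ (lying in the part of $G^i$ containing $r$) is outside $S_{t+1}\supseteq S_j$, each $u_j$ is a snug vertex of $G^i$; by \cref{lemma:snug_shores_unique}, its snug shores in $G^i$ are the images of $(S_j,S_{j+1})$, so the equalities $S^{u_{j-1}}_2=S_j=S^{u_j}_1$ persist in $G^i$, making $(u_0,\dots,u_t)$ a directed path in $(G^i)_{\mathrm{chain}}$ and thus, by \cref{lemma:chain_graph_paths}, a subpath of a snug path of $G^i$.

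For part~\ref{item:augmentation_safe_cover_in_contracted_instance}: $\mathcal{U}/\mathcal{Q}$ is a collection of nonempty sets covering $V_\mathcal{Q}$, being the image of such a collection. Let $S$ be a $k$-cut of $G_\mathcal{Q}$; it lifts to a $k$-cut $\tilde S$ of $G$ that no path of $\mathcal{Q}$ crosses. I claim that either (a) there is $i\in I$ such that $\delta_{E\dot{\cup}F}(\tilde S)\subseteq(E\dot{\cup}F)[U_i]$ and every $L_{\rm{snug}}$-link crossing $\tilde S$ has both endpoints in $U_i$, or (b) $\delta_{F_\mathcal{U}}(\tilde S)\ne\emptyset$. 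Granting it: in case (a), every edge or $L'$-link crossing $\tilde S$ has both endpoints in $U_i$ (edges and $F$-links by the first half, the remaining $L'$-links, which lie in $L_{\rm{snug}}$, by the second half), so its image in $G_\mathcal{Q}$ lies in $(E_\mathcal{Q}\dot{\cup}L'_\mathcal{Q})[U^\mathcal{Q}_i]$ — i.e.\ \cref{def:augmentation_safe_cover}~\ref{def:augmentation_safe:1} for $S$; in case (b), the witnessing link of $F\subseteq L'$ survives in $L'_\mathcal{Q}$ (it crosses $\tilde S$) and remains incident to $\bigcup_{i\neq j}U^\mathcal{Q}_i\cap U^\mathcal{Q}_j$ — i.e.\ \cref{def:augmentation_safe_cover}~\ref{def:augmentation_safe:2}. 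To prove the claim, distinguish whether $\tilde S$ is crossed by some path of $\mathcal{P}^G_{\rm{chain}}$: if not, \cref{lemma:lift_cuts_not_crossed_by_snug_path_1} gives (b) or else $\delta_{E\dot{\cup}F}(\tilde S)\subseteq(E\dot{\cup}F)[U_i]$, in which case $(\star)$ (applicable because $\tilde S$ is crossed by no snug path) handles the $L_{\rm{snug}}$-links and gives (a); if instead $P'\in\mathcal{P}^G_{\rm{chain}}$ crosses $\tilde S$, then $\tilde S$ is an inner snug shore of $P'$ by \cref{lemma:only_cut_intersecting_snug_edge}, hence $P'\notin\mathcal{Q}$, and \cref{lemma:interaction_of_snug_shores_with_cover}~\ref{item:all_snug_cuts_in_U_i} for $P'$ together with the corollary for $P'$ (whose snug shores include $\tilde S$) gives (a) with $i=i(P')$.

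The main obstacle is exactly the bookkeeping for $L_{\rm{snug}}$: contracting $\mathcal{P}^G_{\rm{chain}}$ can identify the two endpoints of a link with two snug endpoints, and only a single cheapest copy per pair of snug paths is retained in $F$, so none of the hypotheses — $\mathcal{U}'$ being a $k$-augmentation-safe cover of $(G,F)/\mathcal{P}^G_{\rm{chain}}$, \cref{lemma:interaction_of_snug_shores_with_cover}, or \cref{lemma:lift_cuts_not_crossed_by_snug_path_1} — says anything directly about such links, even though they may lie in $L'$ and therefore matter both for \cref{def:augmentation_safe_cover}~\ref{def:augmentation_safe:1} and for showing that the snug shores $S_j$ survive the subinstance contraction. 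The observation $(\star)$ bridges this: since a snug path crosses only its own inner snug shores (\cref{lemma:only_cut_intersecting_snug_edge}) and no snug path crosses an outer shore (\cref{lemma:outer_cuts}), a link of $L_{\rm{snug}}$ crossing a $k$-cut that is not an inner shore of one of its endpoints' snug paths has a representative in $L^\circ_{\rm{snug}}\subseteq F$ crossing the \emph{same} cut, and the lift property transfers membership in any given $U_i$ from that representative to the link itself. Everything else is routine tracking of cuts, edges, and links through the contraction of $\mathcal{Q}$.
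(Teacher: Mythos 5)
Your proof is correct and follows essentially the same route as the paper's: the same case split on whether the lifted $k$-cut is crossed by a snug path, \cref{lemma:interaction_of_snug_shores_with_cover}~\ref{item:all_snug_cuts_in_U_i} for paths outside $\mathcal{Q}$, and the same $L^\circ_{\rm{snug}}$-representative trick (via \cref{lemma:link_crosses_outer_shore,lemma:outer_cuts}) to control links in $L_{\rm{snug}}$. The only difference is presentational: your observation $(\star)$ inlines what the paper factors into \cref{lemma:contraction,lemma:lift_cuts_not_crossed_by_snug_path_2}, and your part~\ref{item:snug_paths_survive} spells out in more detail (degree preservation, non-merging, root placement) what the paper states tersely.
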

\subsubsection{Proving \cref{lemma:interaction_of_snug_shores_with_cover,lemma:augmentation_safe_cover_respecting_snug_paths}}

\begin{proof}[Proof of \cref{lemma:interaction_of_snug_shores_with_cover}]
	As $\mathcal{U}'$ is well-separated, we know that
	\begin{equation}
		(E\dot{\cup} F)[U_i\setminus U_j,U_j\setminus U_i]=\emptyset \text{ for $i\ne j\in I$.}\label{eq:no_edges_between_different_sets}
	\end{equation}
	Assume that \ref{item:u_P_in_V_U_or_neighborhood} does not hold. Then there is a unique $i\in I$ such that $u_P\in U'_i$. Let the vertex sequence of $P$ be $(u_0,\dots,u_t)$ and let the snug shores of $P$ be $(S_0,\dots,S_{t+1})$. 
	As  \ref{item:u_P_in_V_U_or_neighborhood} does not hold, we have $u_j\in U_i\setminus (V_{\mathcal{U}}\cup \Gamma_{E\dot{\cup} F}(V_{\mathcal{U}}))$ for $j\in \{0,\dots,t\}$.
	In particular, 
	\begin{align}
		&\delta_{E\dot{\cup} F}(u_j)\subseteq (E\dot{\cup}F)[U_i] \text{ and }\label{eq:edges_u_j_1}\\
			&\delta_{E\dot{\cup} F}(u_j)\cap (E\dot{\cup} F)[U_{i'}]=\emptyset \text{ for each $i'\in I\setminus \{i\}$} \text{ and }\label{eq:edges_u_j_2}\\
			&\delta_{F_{\mathcal{U}}}(u_j)=\emptyset\label{eq:edges_u_j_3}
	\end{align} by \eqref{eq:no_edges_between_different_sets} and because $u_j$ is neither contained in nor adjacent to a vertex in $V_{\mathcal{U}}$. By \cref{lemma:outer_cuts}, we know that no snug path crosses the cuts $S_0$ or $S_{t+1}$. Hence, $S_0$ and $S_{t+1}$ correspond to cuts $S'_0$ and $S'_{t+1}$ in $(G+F)/\mathcal{P}^G_{\rm{chain}}$ with $S'_{t+1}=S'_0\cup \{u_P\}$. Moreover, \[\delta_{(E\dot{\cup}F)/\mathcal{P}^G_{\rm{chain}}}(u_P)\subseteq \delta_{(E\dot{\cup}F)/\mathcal{P}^G_{\rm{chain}}}(S'_0)\cup \delta_{(E\dot{\cup}F)/\mathcal{P}^G_{\rm{chain}}}(S'_{t+1}).\] As $(G+F)/\mathcal{P}^G_{\rm{chain}}$ is connected, we may assume that \begin{align}\delta_{(E\dot{\cup}F)/\mathcal{P}^G_{\rm{chain}}}(u_P)\cap \delta_{(E\dot{\cup}F)/\mathcal{P}^G_{\rm{chain}}}(S'_0)&\neq \emptyset\text{ and hence, }\notag\\\delta_{E\dot{\cup} F}(\{u_0,\dots,u_t\})\cap \delta_{E\dot{\cup} F}(S_0)&\neq \emptyset;\label{eq:edges_in_S_0_incident_to_snug_vertices}\end{align} the subsequent arguments work analogously if we replace $S'_0$ and $S_0$ by $S'_{t+1}$ and $S_{t+1}$, respectively.
	By \cref{lemma:lift_cuts_not_crossed_by_snug_path_1,lemma:outer_cuts}, we know that
	\begin{enumerate}[(a)]
		\item \label{item:S_0_in_U_i_prime} There is $i'\in I$ such that $\delta_{E\dot{\cup} F} (S_0)\subseteq (E\dot{\cup} F)[U_{i'}]$, or
		\item \label{item:S_0_covered} $\delta_{F_{\mathcal{U}}} (S_0)\ne \emptyset$.
	\end{enumerate}
	If \ref{item:S_0_in_U_i_prime} applies, then \eqref{eq:edges_u_j_2} and \eqref{eq:edges_in_S_0_incident_to_snug_vertices} imply $i=i'$.
	For $j\in \{0,\dots,t+1\}$, we have
	\[\delta_{E\dot{\cup} F}(S_j)\setminus \delta_{E\dot{\cup} F}(S_0)\subseteq \delta_{E\dot{\cup} F}(S_0\Delta S_j)=\delta_{E\dot{\cup} F}(\{u_0,\dots,u_{j-1}\})\subseteq (E\dot{\cup} F)[U_i]\] by \eqref{eq:edges_u_j_1}, where $S_0\Delta S_j=(S_0\setminus S_j)\cup (S_j\setminus S_0)$ denotes the symmetric difference of $S_0$ and $S_j$. Hence, $\delta_{E\dot{\cup} F}(S_j)\subseteq (E\dot{\cup} F)[U_i]$ for all $j\in\{0,\dots,t+1\}$, i.e., \ref{item:all_snug_cuts_in_U_i} holds.
	
	On the other hand, if \ref{item:S_0_covered} applies, then for $j\in\{0,\dots,t+1\}$, we have
	\[\delta_{F_{\mathcal{U}}}(S_j)\Delta \delta_{F_{\mathcal{U}}}(S_0)= \delta_{F_{\mathcal{U}}}(S_0\Delta S_j)=\delta_{{F_{\mathcal{U}}}}(\{u_0,\dots,u_{j-1}\})=\emptyset\] by \eqref{eq:edges_u_j_3}. This yields \ref{item:all_snug_cuts_covered}.
\end{proof}
For the proof of \cref{lemma:augmentation_safe_cover_respecting_snug_paths}, we first establish \cref{lemma:contraction}, which allows us to work with $\mathcal{U}$ instead of $\mathcal{U}/\mathcal{Q}$, and \cref{lemma:lift_cuts_not_crossed_by_snug_path_2}, which takes care of the $k$-cuts not crossed by snug paths.
\begin{lemma}\label{lemma:contraction}
	Let $G=(V,E)$ be a $k$-edge-connected graph, let $F$ be a set of links and let $\mathcal{Q}\subseteq \mathcal{P}^G_{\rm{chain}}$.
	Let $\mathcal{U}=(U_i)_{i\in I}$ be a collection of subsets of $V$ covering $V$ such that for every $k$-cut $S$ of $G$ that is not crossed by $\mathcal{Q}$, one of \cref{def:augmentation_safe_cover}~\ref{def:augmentation_safe:1} or \cref{def:augmentation_safe_cover}~\ref{def:augmentation_safe:2} holds. Then $\mathcal{U}/\mathcal{Q}$ is a $k$-augmentation-safe cover of $(G,F)/\mathcal{Q}$.
\end{lemma}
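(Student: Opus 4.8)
The plan is to verify directly that $\mathcal{U}/\mathcal{Q}=(U_i')_{i\in I}$ meets \cref{def:augmentation_safe_cover} for $(G,F)/\mathcal{Q}$. Write $\pi\colon V\to V(G/\mathcal{Q})$ for the contraction map, $G_\mathcal{Q}\coloneqq G/\mathcal{Q}$, $F_\mathcal{Q}\coloneqq F/\mathcal{Q}$, $E_\mathcal{Q}\coloneqq E(G_\mathcal{Q})$, and recall $U_i'=\pi(U_i)$. I would first record the bookkeeping facts: since $\pi$ is surjective, $\bigcup_{i\in I}U_i'=\pi(V)=V(G_\mathcal{Q})$, so the $U_i'$ cover $V(G_\mathcal{Q})$ and each $U_i'$ is nonempty (dropping any empty $U_i$ first, which affects neither the covering nor the hypothesis). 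The key observation is that $\pi$ interacts cleanly with cuts not crossed by $\mathcal{Q}$: if $S\subseteq V$ has the property that every path $P\in\mathcal{Q}$ satisfies $V(P)\subseteq S$ or $V(P)\cap S=\emptyset$, then no contracted edge lies in $\delta_{E\dot{\cup} F}(S)$ (a contracted edge has both endpoints on a single path $P\in\mathcal{Q}$, hence on one side of $S$), and similarly no element of $\delta_{E\dot{\cup} F}(S)$ becomes a loop under $\pi$; consequently $\pi$ maps $\delta_{E\dot{\cup} F}(S)$ bijectively onto $\delta_{E_\mathcal{Q}\dot{\cup} F_\mathcal{Q}}(\pi(S))$.

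The core step is then: take an arbitrary $k$-cut $T$ of $G_\mathcal{Q}$ and set $S\coloneqq \pi^{-1}(T)$. Because $\pi$ is surjective, $S$ is a nonempty proper subset of $V$; because $\pi$ contracts each $P\in\mathcal{Q}$ to a single vertex that is either in $T$ or not, we get $V(P)\subseteq S$ or $V(P)\cap S=\emptyset$ for every $P\in\mathcal{Q}$, so $S$ is not crossed by $\mathcal{Q}$. By the bijection above, $|\delta_G(S)|=|\delta_{G_\mathcal{Q}}(T)|=k$, so $S$ is a $k$-cut of $G$ not crossed by $\mathcal{Q}$. Hence, by hypothesis, one of \cref{def:augmentation_safe_cover}~\ref{def:augmentation_safe:1} or \ref{def:augmentation_safe:2} holds for $S$ with respect to $\mathcal{U}$.

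Finally I would push each case forward through $\pi$. If $\delta_{E\dot{\cup} F}(S)\subseteq (E\dot{\cup} F)[U_i]$ for some $i$, then every edge or link in $\delta_{E\dot{\cup} F}(S)$ has both endpoints in $U_i$, so its $\pi$-image has both endpoints in $\pi(U_i)=U_i'$; via the bijection this gives $\delta_{E_\mathcal{Q}\dot{\cup} F_\mathcal{Q}}(T)\subseteq (E_\mathcal{Q}\dot{\cup} F_\mathcal{Q})[U_i']$, i.e.\ \ref{def:augmentation_safe:1} for $T$. If instead there is a link $\ell\in F\cap \delta(S)$ incident to some $w\in U_i\cap U_j$ with $i\ne j$, then $\pi(\ell)\in F_\mathcal{Q}$ crosses $T$, $\pi(w)\in \pi(U_i)\cap\pi(U_j)=U_i'\cap U_j'$ lies in the overlap set of $\mathcal{U}/\mathcal{Q}$, and $\pi(\ell)$ is incident to $\pi(w)$; hence the link set of $F_\mathcal{Q}$ incident to the overlap set of $\mathcal{U}/\mathcal{Q}$ meets $\delta(T)$, which is \ref{def:augmentation_safe:2} for $T$. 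Since $T$ was an arbitrary $k$-cut of $G_\mathcal{Q}$, this shows $\mathcal{U}/\mathcal{Q}$ is a $k$-augmentation-safe cover of $(G,F)/\mathcal{Q}$.

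\textbf{Main obstacle.} The proof is essentially bookkeeping; the only point needing genuine care is confirming that the contraction does not disturb the cut structure, namely that $S=\pi^{-1}(T)$ is indeed a $k$-cut not crossed by $\mathcal{Q}$ and that $\delta_{E\dot{\cup} F}(S)$ and $\delta_{E_\mathcal{Q}\dot{\cup} F_\mathcal{Q}}(T)$ correspond bijectively under $\pi$ (no crossing edge/link collapses to a loop, using that every contracted edge lies within a single path of $\mathcal{Q}$ and hence on one side of $S$). Once this correspondence is in place, transferring ``$\subseteq U_i$'' to ``$\subseteq U_i'$'' and ``$w\in U_i\cap U_j$'' to ``$\pi(w)\in U_i'\cap U_j'$'' is immediate from $U_i'=\pi(U_i)$.
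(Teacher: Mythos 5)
Your proof is correct and follows essentially the same route as the paper: lift a $k$-cut of $G/\mathcal{Q}$ to a $k$-cut of $G$ not crossed by $\mathcal{Q}$, invoke the hypothesis, and push each of the two cases forward through the contraction map, noting that no crossing edge or link can become a loop. The only differences are notational (your explicit bijection between $\delta_{E\dot\cup F}(S)$ and $\delta_{E_\mathcal{Q}\dot\cup F_\mathcal{Q}}(T)$ is stated more implicitly in the paper), so no changes are needed.
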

\begin{proof}
	Let $S$ be a $k$-cut in $G/\mathcal{Q}$. Then $S$ corresponds to a $k$-cut $T$ in $G$ such that for every $Q\in\mathcal{Q}$, we have $V(Q)\subseteq T$ or $V(Q)\cap T=\emptyset$. Hence, one of \cref{def:augmentation_safe_cover}~\ref{def:augmentation_safe:1} or \cref{def:augmentation_safe_cover}~\ref{def:augmentation_safe:2} applies to $T$. In the first case, let $i\in I$ with $\delta_{E\dot{\cup} F}(T)\subseteq (E\dot{\cup}F)[U_i]$. As every edge or link crossing $S$ is the image of an edge or link crossing $T$, and $U'_i$ is the image of $U_i$, \cref{def:augmentation_safe_cover}~\ref{def:augmentation_safe:1} also applies to $S$ (for $\mathcal{U}/\mathcal{Q}$).
	
	Next, assume that \cref{def:augmentation_safe_cover}~\ref{def:augmentation_safe:2} applies to $T$. Let $\ell=\{v,w\}\in\delta_{F_{\mathcal{U}}}(T)$ and let $v\in V_{\mathcal{U}}$. Let $i,j\in I$, $i \ne j$ with $v\in U_i\cap U_j$. Let $v'$ be the image of $v$ in $V/\mathcal{Q}$ and let $\ell'\in F/\mathcal{Q}$ be the image of $\ell$. Note that $\ell'$ cannot have become a loop (and been removed) since $\ell$ crosses $T$, but no path in $\mathcal{Q}$ does. Then $\ell'\in \delta_{F/\mathcal{Q}}(S)$ and $v'\in U'_i\cap U'_j$, so $\ell'\in (F/\mathcal{Q})_{\mathcal{U'}}$. Hence, \cref{def:augmentation_safe_cover}~\ref{def:augmentation_safe:2} applies to $S$.
\end{proof}
\begin{lemma}\label{lemma:lift_cuts_not_crossed_by_snug_path_2}
	Let $(G,L)$ be an instance of $k$-CAP, let $\mathcal{U'}$ be a well-separated $k$-augmentation-safe cover of $(G,L\setminus L_{\rm{snug}} \cup L^\circ_{\rm{snug}})/\mathcal{P}^G_{\rm{chain}}$ and let $\mathcal{U}$ be the lift of $\mathcal{U'}$ to $G$. Let $L'\subseteq L$ be any link set with $L\setminus L_{\rm{snug}}\cup L^\circ_{\rm{snug}}\subseteq L'$. Then for any $k$-cut $S$ of $G$ that is not crossed by $\mathcal{P}^G_{\rm{chain}}$, one of \cref{def:augmentation_safe_cover}~\ref{def:augmentation_safe:1} or \cref{def:augmentation_safe_cover}~\ref{def:augmentation_safe:2} holds for the collection $\mathcal{U}$ and $F= L'$.
\end{lemma}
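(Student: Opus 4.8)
The plan is to derive the statement directly from \cref{lemma:lift_cuts_not_crossed_by_snug_path_1}, which already proves exactly this conclusion for the particular link set $F_0\coloneqq L\setminus L_{\rm{snug}}\cup L^\circ_{\rm{snug}}$ (note $F_0\subseteq L'\subseteq L$ by hypothesis), and then to check that passing from $F_0$ to the larger set $L'$ preserves each of the two alternatives of \cref{def:augmentation_safe_cover}. Concretely, I would fix a $k$-cut $S$ of $G$ that is not crossed by $\mathcal{P}^G_{\rm{chain}}$ and apply \cref{lemma:lift_cuts_not_crossed_by_snug_path_1} with $F=F_0$: either (a) $\delta_{E\dot{\cup} F_0}(S)\subseteq (E\dot{\cup} F_0)[U_i]$ for some $i\in I$, or (b) $\delta_{(F_0)_{\mathcal{U}}}(S)\ne\emptyset$.

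Case (b) is immediate: since $F_0\subseteq L'$ and the vertex set $V_{\mathcal{U}}$ depends only on the collection $\mathcal{U}$ (not on the link set), we get $(F_0)_{\mathcal{U}}\subseteq (L')_{\mathcal{U}}$ and hence $\delta_{(L')_{\mathcal{U}}}(S)\supseteq\delta_{(F_0)_{\mathcal{U}}}(S)\ne\emptyset$, so \cref{def:augmentation_safe_cover}~\ref{def:augmentation_safe:2} holds for $\mathcal{U}$ and $F=L'$.

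The real work is in case (a): I want to upgrade $\delta_{E\dot{\cup} F_0}(S)\subseteq (E\dot{\cup} F_0)[U_i]$ to $\delta_{E\dot{\cup} L'}(S)\subseteq (E\dot{\cup} L')[U_i]$. Edges of $G$ and links of $F_0$ crossing $S$ already lie in $(E\dot{\cup} F_0)[U_i]\subseteq (E\dot{\cup} L')[U_i]$, so it suffices to handle a link $\ell=\{u,v\}\in\delta_{L'}(S)$ with $\ell\notin F_0$. Such a link lies in $L\setminus F_0=L_{\rm{snug}}\setminus L^\circ_{\rm{snug}}$, so $u$ and $v$ are both snug; let $P_u,P_v\in\mathcal{P}^G_{\rm{chain}}$ be the snug paths through $u$ and $v$, which are vertex-disjoint whenever distinct by \cref{lemma:chain_graph_paths}. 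Since $S$ is not crossed by $\mathcal{P}^G_{\rm{chain}}$, each of $V(P_u)$ and $V(P_v)$ lies entirely on one side of $S$; as $\ell$ crosses $S$, this forces $P_u\ne P_v$ and, say, $V(P_u)\subseteq S$, $V(P_v)\cap S=\emptyset$. Now $\ell$ witnesses that there is a link between the snug paths $P_u$ and $P_v$, so by \cref{def:L_snug} the set $L^\circ_{\rm{snug}}\subseteq F_0$ contains a link $\ell^\circ=\{a,b\}$ with $a\in V(P_u)$ and $b\in V(P_v)$. Then $a\in S$ and $b\notin S$, so $\ell^\circ\in\delta_{E\dot{\cup} F_0}(S)\subseteq (E\dot{\cup} F_0)[U_i]$, i.e.\ $a,b\in U_i$. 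The observation that makes everything go through is that $U_i$ is the preimage of $U'_i$ under the contraction map $V(G)\to V(G/\mathcal{P}^G_{\rm{chain}})$, hence a union of fibers of that map; since $V(P_u)$ meets $U_i$ (at $a$), the entire fiber $V(P_u)$ is contained in $U_i$, and similarly $V(P_v)\subseteq U_i$. Therefore $u\in V(P_u)\subseteq U_i$ and $v\in V(P_v)\subseteq U_i$, so $\ell\in (E\dot{\cup} L')[U_i]$, completing case (a).

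I expect the only genuinely nontrivial ingredient to be this last paragraph: first recognizing that $L'\setminus F_0$ consists only of snug--snug links, then using the hypothesis that $S$ avoids all snug paths to conclude that such a crossing link must join two distinct snug paths sitting on opposite sides of $S$, and finally observing that a cheapest representative $\ell^\circ$ of that pair (which lives in $F_0$) then also crosses $S$ and is therefore trapped inside $U_i$ --- after which the fiber structure of the lift propagates membership in $U_i$ from the single vertex $a$ to the whole snug path $P_u$, and likewise for $P_v$. Everything else is routine bookkeeping, and, as in \cref{lemma:lift_cuts_not_crossed_by_snug_path_1}, well-separatedness of $\mathcal{U}'$ is not actually needed for this particular lemma.
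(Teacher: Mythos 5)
Your proof is correct and follows essentially the same route as the paper's: invoke \cref{lemma:lift_cuts_not_crossed_by_snug_path_1} for $F_0 = L\setminus L_{\rm{snug}}\cup L^\circ_{\rm{snug}}$, observe case~\ref{def:augmentation_safe:2} is monotone under enlarging the link set, and in case~\ref{def:augmentation_safe:1} handle each $\ell\in L_{\rm{snug}}\setminus L^\circ_{\rm{snug}}$ crossing $S$ by replacing it with the corresponding $\ell'\in L^\circ_{\rm{snug}}$, which also crosses $S$ (since $S$ avoids snug paths) and whose endpoints therefore lie in $U_i$, then transferring membership back to $\ell$ via the fiber structure of the lifted cover. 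Your closing remark that well-separatedness of $\mathcal{U}'$ is not actually used in this particular lemma is also accurate.
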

\begin{proof} By \cref{lemma:lift_cuts_not_crossed_by_snug_path_1}, we know that one of \cref{def:augmentation_safe_cover}~\ref{def:augmentation_safe:1} or \cref{def:augmentation_safe_cover}~\ref{def:augmentation_safe:2} holds for $S$ with respect to $\mathcal{U}$ and $F=L\setminus L_{\rm{snug}} \cup L^\circ_{\rm{snug}}$. If \ref{def:augmentation_safe:2} holds, then it also holds for $F=L'\supseteq L\setminus L_{\rm{snug}}\cup L^\circ_{\rm{snug}}$. Next, assume that \ref{def:augmentation_safe:1} holds for $S$. Let $i\in I$ such that 
	\begin{equation}\delta_{E\dot{\cup}(L\setminus L_{\rm{snug}}\cup L^\circ_{\rm{snug}})}(S)\subseteq(E\dot{\cup}(L\setminus L_{\rm{snug}}\cup L^\circ_{\rm{snug}}))[U_i].\label{eq:old_edges_and_links_are_good}\end{equation} It suffices to prove that for every link in $\ell\in (L_{\rm{snug}}\setminus L^\circ_{\rm{snug}})\cap \delta_L(S)$, both endpoints of $\ell$ are contained in $U_i$. 
	
	No link $\ell\in L_{\rm{snug}}\setminus L^\circ_{\rm{snug}}$ with both endpoints on the same snug path $P$ crosses $S$ because no snug path crosses $S$ by assumption. Next, consider a link $\ell\in L_{\rm{snug}}\setminus L^\circ_{\rm{snug}}$ with one endpoint in one snug path, say, $P$, and the other endpoint in another snug path, say, $Q$. Then $L^\circ_{\rm{snug}}$ contains a link $\ell'$ with one endpoint in $P$ and one endpoint in $Q$. As no snug path crosses $S$, $\ell'$ crosses $S$ if and only if $\ell$ does. Moreover, $\ell'$ has both endpoints in $U_i$ if and only if $\ell$ does. By \eqref{eq:old_edges_and_links_are_good}, $\ell'$, and, hence, $\ell$, has both endpoints in $U_i$.
\end{proof}
\begin{proof}[Proof of \cref{lemma:augmentation_safe_cover_respecting_snug_paths}]
We first prove \ref{item:augmentation_safe_cover_in_contracted_instance}. By \cref{lemma:contraction}, it suffices to show that for every $k$-cut $S$ of $G$ that is not crossed by $\mathcal{Q}$, \cref{def:augmentation_safe_cover}~\ref{def:augmentation_safe:1} or \cref{def:augmentation_safe_cover}~\ref{def:augmentation_safe:2} holds. For a $k$-cut $S$ that is not crossed by any snug path, this follows from \cref{lemma:lift_cuts_not_crossed_by_snug_path_2}.
Next, consider a $k$-cut $S$ that is crossed by a snug path in $P\in\mathcal{P}^G_{\rm{chain}}\setminus \mathcal{Q}$. Let $(S_0,\dots,S_{t+1})$ be the snug shores of $P$. By \cref{lemma:only_cut_intersecting_snug_edge} and since no snug shore contains $r$, we have $S\in\{S_1,\dots,S_t\}$. By \cref{lemma:interaction_of_snug_shores_with_cover}, we know that there exists a unique $i\in I$ such that $u_P$, the contracted vertex corresponding to $P$, is contained in $U'_i$. Moreover, \begin{equation}\delta_{E\dot{\cup} (L\setminus L_{\rm{snug}}\cup L^\circ_{\rm{{snug}}})}(S_j)\subseteq (E\dot{\cup} (L\setminus L_{\rm{snug}}\cup L^\circ_{\rm{{snug}}}))[U_i]\quad\forall j\in\{0,\dots,t+1\}. \label{eq:cut_good_for_old_links_and_edges}\end{equation} Consider a link $\ell\in L_{\rm{snug}}\setminus L^\circ_{\rm{snug}}$. If both endpoints of $\ell$ are contained in one snug path $Q$, then $\ell$ can cross $S$ only if $Q$ crosses $S$. By \cref{lemma:only_cut_intersecting_snug_edge}, the latter is only the case if $P=Q$ and in this case, $\ell\in L[U_i]$ because $u_P\in U'_i$.

Next, consider the case where the endpoints of $\ell$ are contained in different snug paths $Q_1$ and $Q_2$, and assume that $\ell$ crosses $S$. By \cref{lemma:link_crosses_outer_shore}, $\ell$ also crosses $S_0$ or $S_{t+1}$. By definition, $L^\circ_{\rm{snug}}$ contains a link $\ell'$ with one endpoint in $Q_1$ and one endpoint in $Q_2$. By \cref{lemma:only_cut_intersecting_snug_edge}, $\ell'$ also crosses $S_0$ or $S_{t+1}$ because neither $Q_1$ nor $Q_2$ crosses $S_0$ or $S_{t+1}$.
By \eqref{eq:cut_good_for_old_links_and_edges}, both endpoints of $\ell'$ are contained in $U_i$. In particular, the contracted vertices $u_{Q_1}$ and $u_{Q_2}$ corresponding to $Q_1$ and $Q_2$ are contained in $U'_i$. Hence, both endpoints of $\ell$ are contained in $U_i$ as well.

Finally, we prove \ref{item:snug_paths_survive}. Let $P\in\mathcal{P}^G_{\rm{chain}}\setminus \mathcal{Q}$. As we have seen in the proof of \ref{item:augmentation_safe_cover_in_contracted_instance}, there is a unique $i\in I$ such that $u_P\in U'_i$. We have $V(P)\subseteq U_i$ and $V(P)\cap U_j=\emptyset$ for $j\ne i$. Moreover, for every snug shore $S$ of $P$, we have $\delta_{E\dot{\cup} L'}(S)\subseteq (E\dot{\cup} L')[U_i]$, so $S$ corresponds to a $k$-cut in $G/(E\dot{\cup} L')[V\setminus U_i]$. By \cref{lemma:only_cut_intersecting_snug_edge}, we further know that no path in $\mathcal{Q}$ crosses $S$, so $S$ corresponds to a $k$-cut in $G_{\mathcal{Q}}/(E_{\mathcal{Q}}\dot{\cup} L'_{\mathcal{Q}})[V_{\mathcal{Q}}\setminus U^\mathcal{Q}_i]$. Moreover, as no edge or link crossing a snug shore of $P$ is contracted when constructing $G_{\mathcal{Q}}/(E_{\mathcal{Q}}\dot{\cup} L'_{\mathcal{Q}})[V_{\mathcal{Q}}\setminus U^\mathcal{Q}_i]$, none of the snug vertices of $P$ is merged with any other vertex. Hence, the snug chain corresponding to $P$ yields a subchain of a snug chain in $G_{\mathcal{Q}}/(E_\mathcal{Q}\dot{\cup} L'_{\mathcal{Q}})[V_\mathcal{Q}\setminus U^\mathcal{Q}_i]$, i.e., $P$ is a subpath of a snug path in $G_{\mathcal{Q}}/(E_\mathcal{Q}\dot{\cup} L'_{\mathcal{Q}})[V_\mathcal{Q}\setminus U^\mathcal{Q}_i]$.
\end{proof}
\subsection{Thinning the link set}\label{sec:thinning}
In this section, we explain how to thin the link set of an instance of planar $k$-WCAP with bounded cost ratio to obtain good bounds on the total cost of links we need to buy in the process of decomposing our instance into subinstances of bounded snug-treewidth.
All of these results crucially rely on \cref{lemma:bound_non_snug_vertices_and_snug_paths}, which we prove in this section, as well as the following proposition, which we already observed in \cref{sec:augmentation_short}.
\begin{proposition}\label{prop:lower_bound_opt_n_k}
Let $(G,L,c)$ be an instance of $k$-WCAP. Let $c_{min}$ denote the minimum cost of a link and let $n_k(G)$ denote the number of vertices of $G$ of degree $k$. Then	$c(\OPT)\ge \frac{n_k(G)}{2}\cdot c_{min}$, where $\OPT$ denotes an optimum solution to $(G,L,c)$.
\end{proposition}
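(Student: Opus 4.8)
The plan is to exploit that every vertex of degree exactly $k$ in $G$ induces a minimum cut of $G$, and that any feasible augmentation must cover all such cuts.

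First I would recall that, by the standing assumption of this section (and the definition of a $k$-WCAP instance), $G$ is $k$-edge-connected. Hence, for every vertex $v\in V$ with $\mathrm{deg}(v)=k$, the set $\{v\}$ is a $k$-cut of $G$: it has exactly $k$ crossing edges, and by $k$-edge-connectivity it cannot have fewer. Let $D\coloneqq\{v\in V\colon \mathrm{deg}(v)=k\}$, so that $|D|=n_k(G)$. Since $\OPT$ is an optimum (in particular feasible) solution, $G+\OPT$ is $(k+1)$-edge-connected, so every $k$-cut of $G$ is crossed by at least one link of $\OPT$; applying this to the cut $\{v\}$ for each $v\in D$ shows that $\OPT$ contains a link incident to $v$.

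From here the counting is routine: each $v\in D$ is an endpoint of some link in $\OPT$, so $\sum_{\ell\in \OPT}|\ell\cap D|\ge |D|$, while each link has at most two endpoints, giving $|\ell\cap D|\le 2$ and therefore $2\cdot|\OPT|\ge |D|=n_k(G)$. Finally, since every link costs at least $c_{min}$, we conclude $c(\OPT)\ge c_{min}\cdot |\OPT|\ge \tfrac{n_k(G)}{2}\cdot c_{min}$. I expect no genuine obstacle in this argument; the only point that deserves (a single line of) care is the observation that a degree-$k$ vertex yields a $k$-cut, which is immediate from the $k$-edge-connectivity of $G$.
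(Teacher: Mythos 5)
Your proof is correct and follows essentially the same approach as the paper: both use the fact that each degree-$k$ vertex $v$ yields a singleton $k$-cut $\{v\}$, that every such cut must be crossed by a link of $\OPT$ incident to $v$, and that each link can serve at most two such vertices. Your write-up merely spells out a few steps (the $k$-edge-connectivity guaranteeing $\{v\}$ is a minimum cut, and the final double-counting) that the paper leaves implicit.
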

\begin{proof}
	A link $\ell=\{u,v\}$ crosses a $k$-cut of the form $\{w\}$ only if $u=w$ or $v=w$. Hence, we need at least $\frac{n_k(G)}{2}$ links to cross all of the singleton $k$-cuts $\{w\}$, where $w$ is a vertex of degree $k$.
\end{proof}

We further prove \cref{lemma:reduction_to_sparse_instances}, which allows us to reduce to a setting where the total cost of non-snug links can be bounded in terms of the cost of an optimum solution.

Another key result that we prove in this section is \cref{lemma:link_sets_covering_snug_chains}, which helps us to cheaply cover the snug paths close to the boundary of the $k$-augmentation-safe cover we will compute (see \cref{lemma:interaction_of_snug_shores_with_cover}~\ref{item:u_P_in_V_U_or_neighborhood}).

\begin{lemma}\label{lemma:reduction_to_sparse_instances}
Let $\lambda\in (0,1)$ and $\Delta \ge 1$. Given an instance $(G,L,c)$ of planar $k$-WCAP with cost ratio $\Delta$ and such that $G$ is minimally $k$-edge-connected, we can, in polynomial time, compute a link set $L'$ with $L_{\rm{snug}}\subseteq L'\subseteq L$ satisfying the following properties:
\begin{enumerate}[(i)]
	\item \label{item:sparse_instance_cheap} $G+L'$ is $(k+1)$-edge-connected and $c(\OPT')\le (1+\lambda)\cdot c(\OPT)$, where $\OPT$ and $\OPT'$ denote optimum solutions to $(G,L,c)$ and $(G,L',c)$, respectively.
	\item \label{item:sparse_instance} $c(L'\setminus L_{\rm{snug}})\le 108\cdot \frac{\Delta^2}{\lambda}\cdot c(\OPT)$.
	
\end{enumerate}
\end{lemma}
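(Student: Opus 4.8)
The plan is to set $L'$ to be $L_{\rm{snug}}$ together with a carefully thinned family of non-snug links, using the domination statements \cref{lemma:link_furthest_out,lemma:link_deepest_in} so that the thinning costs only a $(1+\lambda)$-factor in the optimum. First I would record the quantitative backbone. Write $c_{\min}$ and $c_{\max}=\Delta\, c_{\min}$ for the extreme link costs and $n_k:=n_k(G)$. By \cref{prop:lower_bound_opt_n_k}, $c_{\min}\le \tfrac{2}{n_k}\,c(\OPT)$ and hence $c_{\max}\le \tfrac{2\Delta}{n_k}\,c(\OPT)$. By \cref{lemma:bound_non_snug_vertices_and_snug_paths}, $|V\setminus V_{\rm{snug}}|<4n_k$ and $|\mathcal{P}^G_{\rm{chain}}|<2n_k$; since every snug vertex lies on exactly one snug path (by \cref{lemma:chain_graph_paths}), the contracted graph $(G+L)/\mathcal{P}^G_{\rm{chain}}$ is planar (by \cref{lemma:chain_graph_subgraph} and planarity of $G+L$) and has fewer than $6n_k$ vertices, so its underlying simple graph has fewer than $18n_k$ edges.

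I would then classify the links via their images in $(G+L)/\mathcal{P}^G_{\rm{chain}}$. Links with both endpoints snug are exactly the links of $L_{\rm{snug}}$; these I keep entirely (no bound on $c(L_{\rm{snug}})$ is required). Every remaining link is \emph{non-snug} and maps to one of fewer than $18n_k$ \emph{bundles} of mutually parallel links, and I keep at most $\lceil\Delta/\lambda\rceil+1$ representatives of each bundle. A bundle joining two non-snug vertices is a single link (they stay distinct under the contraction and $L$ is simple), which I keep. A bundle joining a non-snug vertex $u$ to a contracted snug path $P=(u_0,\dots,u_t)$ with snug shores $(S_0,\dots,S_{t+1})$ consists of the links $\{u,u_i\}$ with $u_i\in V(P)$; since $u\notin V(P)$ and $S_0\subseteq S_{t+1}$, exactly one of $u\in S_0$, $u\in V\setminus S_{t+1}$ holds, and \cref{lemma:link_furthest_out} (respectively \cref{lemma:link_deepest_in}) then shows the links of the bundle are totally ordered by \emph{domination}, where $\ell$ dominates $\ell'$ if every $k$-cut of $G$ crossed by $\ell'$ is also crossed by $\ell$. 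Writing $g(i)$ for the minimum cost of a bundle link dominating $\{u,u_i\}$, I form $T_{u,P}$ by picking, for each cost level $L_j:=c_{\min}+j\lambda c_{\min}$ with $0\le j\le\lceil(\Delta-1)/\lambda\rceil$, the most-dominating bundle link of cost at most $L_j$ (if one exists). Then $|T_{u,P}|\le\lceil\Delta/\lambda\rceil+1$, and for \emph{every} bundle link $\{u,u_i\}$ there is $\ell\in T_{u,P}$ that dominates $\{u,u_i\}$ with $c(\ell)\le g(i)+\lambda c_{\min}\le c(\{u,u_i\})+\lambda c_{\min}$. The final $L'$ is the union of $L_{\rm{snug}}$ with all kept non-snug links, computable in polynomial time using \cref{lemma:compute_chain_graph}.

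Verifying the two properties would then be routine. Feasibility: any $k$-cut $S$ of $G$ is crossed by some link of $L$; if that link is in $L_{\rm{snug}}$ or joins two non-snug vertices it lies in $L'$, and if it is a link $\{u,u_i\}$ of a non-snug--to--contracted-path bundle then $T_{u,P}$ contains a dominator of it, which also crosses $S$; hence $G+L'$ is $(k+1)$-edge-connected. For $c(\OPT')\le(1+\lambda)c(\OPT)$: take an optimum $\OPT$ of $(G,L,c)$, and note domination lets us assume $\OPT$ uses at most one link from each non-snug--to--contracted-path bundle (a dominated second link of such a bundle could be removed, preserving feasibility and lowering cost). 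Replace each link $\{u,u_i\}\in\OPT$ lying in such a bundle by a dominator $\ell\in T_{u,P}$ with $c(\ell)\le c(\{u,u_i\})+\lambda c_{\min}$, and keep all other links of $\OPT$ (all of which lie in $L'$); by domination the result is feasible and costs at most $c(\OPT)+|\OPT|\cdot\lambda c_{\min}\le c(\OPT)+\tfrac{c(\OPT)}{c_{\min}}\cdot\lambda c_{\min}=(1+\lambda)c(\OPT)$. Finally $L'\setminus L_{\rm{snug}}$ has at most $18n_k(\lceil\Delta/\lambda\rceil+1)$ links of cost at most $c_{\max}\le\tfrac{2\Delta}{n_k}c(\OPT)$; using $\lceil\Delta/\lambda\rceil+1\le\tfrac{3\Delta}{\lambda}$ (valid since $\lambda<1\le\Delta$) this gives $c(L'\setminus L_{\rm{snug}})\le 18n_k\cdot\tfrac{3\Delta}{\lambda}\cdot\tfrac{2\Delta}{n_k}\,c(\OPT)=108\,\tfrac{\Delta^2}{\lambda}\,c(\OPT)$.

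The hard part is precisely the design of the per-bundle thinning: retaining just one dominating link per bundle would force replacements that can multiply the solution cost by $\Delta$ rather than $1+\lambda$, whereas retaining all bundle links would make $c(L'\setminus L_{\rm{snug}})$ uncontrollable. The ``cost staircase'' $T_{u,P}$ resolves this --- fine enough (granularity $\lambda c_{\min}$) that the cheapest dominator of any requirement is approximated to within an additive $\lambda c_{\min}$, yet coarse enough to have only $O(\Delta/\lambda)$ elements --- and it is exactly this trade-off that injects the factors $\tfrac1\lambda$ and a second $\Delta$ into the cost bound. The remaining ingredients --- the bundle count from \cref{lemma:bound_non_snug_vertices_and_snug_paths} and planarity, the total order of domination from \cref{lemma:link_furthest_out,lemma:link_deepest_in}, the reduction of $\OPT$ to one link per bundle, and the polynomial-time computation of snug paths and shores via \cref{lemma:compute_chain_graph} --- are all straightforward.
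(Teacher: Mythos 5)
Your proof is correct and follows the same route as the paper's: keep all of $L_{\rm{snug}}$, bound the number of bundles by $O(n_k(G))$ via planarity and \cref{lemma:bound_non_snug_vertices_and_snug_paths}, use \cref{lemma:link_furthest_out,lemma:link_deepest_in} to totally order each bundle by domination, retain $O(\Delta/\lambda)$ representatives per bundle via cost bucketing, and transform $\OPT$ into an $L'$-solution with per-link loss absorbed by $c_{\min}\le c(\ell)$. The only (cosmetic) difference is that the paper uses geometric cost classes $[c_{\min}(1+\lambda)^i,c_{\min}(1+\lambda)^{i+1})$ with a multiplicative per-link guarantee, whereas you use an arithmetic staircase with granularity $\lambda c_{\min}$ and an additive guarantee; both give at most $3\Delta/\lambda$ levels and the identical bound $108\,\Delta^2/\lambda\cdot c(\OPT)$.
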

\begin{lemma}\label{lemma:link_sets_covering_snug_chains}
Given an instance $(G,L,c)$ of planar $k$-WCAP with cost ratio $\Delta$ and such that $G$ is minimally $k$-edge-connected, we can, in polynomial time, compute link sets $(L_P)_{P\in\mathcal{P}^G_{\rm{chain}}}$ with the following properties:
	\begin{enumerate}[(i)]
		\item \label{item:sets_cover_paths} For $P\in\mathcal{P}^G_{\rm{chain}}$, $L_P$ covers $P$.
		\item \label{item:cost_bound_L_Ps} $\sum_{P\in\mathcal{P}^G_{\rm{chain}}} c(L_P)\le (8\Delta +1)\cdot c(\OPT)$.
	\end{enumerate}
\end{lemma}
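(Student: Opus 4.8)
The plan is to cover each snug path by two ``outer'' links together with a cheap set of ``internal'' links, and to charge the total to $c(\OPT)$ using the two lower bounds available: $c(\OPT)\ge\frac{n_k(G)}{2}\cdot c_{min}$ (\cref{prop:lower_bound_opt_n_k}) and $|\mathcal{P}^G_{\rm{chain}}|<2\cdot n_k(G)$ (\cref{lemma:bound_non_snug_vertices_and_snug_paths}), where $c_{min},c_{max}$ denote the minimum and maximum link cost. If $\mathcal{P}^G_{\rm{chain}}=\emptyset$ the statement is trivial, so assume otherwise; then $n_k(G)\ge 1$ and $c(\OPT)>0$. Using \cref{lemma:compute_chain_graph} we first compute $\mathcal{P}^G_{\rm{chain}}$ together with the snug shores of every snug path.

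Fix a snug path $P=(u_0,\dots,u_t)$ with snug shores $(S_0,\dots,S_{t+1})$, so $S_j=S_0\cup\{u_0,\dots,u_{j-1}\}$ and in particular $S_j\setminus S_i=\{u_i,\dots,u_{j-1}\}$ for $i<j$ and $S_{t+1}\setminus S_0=\{u_0,\dots,u_t\}$. For a link $\ell\in L$, since $S_0\subsetneq\dots\subsetneq S_{t+1}$ the value $|\ell\cap S_i|\in\{0,1,2\}$ is non-decreasing in $i$, so the set $I_\ell\coloneqq\{i\colon \ell\in\delta(S_i)\}$ of shores of $P$ crossed by $\ell$ is an interval $[i_1,i_2]$, with both endpoints of $\ell$ outside $S_{i_1-1}$ and inside $S_{i_2+1}$; hence if $1\le i_1$ and $i_2\le t$, then both endpoints of $\ell$ lie in $S_{i_2+1}\setminus S_{i_1-1}=\{u_{i_1-1},\dots,u_{i_2}\}\subseteq V(P)$, and I call such a link an \emph{internal chord of $P$}. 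By feasibility, $L$ contains a link crossing $S_0$ (which then crosses exactly $S_0,\dots,S_p$ with $p\coloneqq\max I_\ell$) and a link crossing $S_{t+1}$; let $\ell^{\rm{in}}_P$ be one crossing $S_0$ of largest reach $p_P\coloneqq\max I_{\ell^{\rm{in}}_P}$, and $\ell^{\rm{out}}_P$ one crossing $S_{t+1}$ of smallest reach $q_P\coloneqq\min I_{\ell^{\rm{out}}_P}$; both can be found in polynomial time. If $q_P\le p_P+1$, set $L_P\coloneqq\{\ell^{\rm{in}}_P,\ell^{\rm{out}}_P\}$, which crosses all of $S_0,\dots,S_{t+1}$. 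Otherwise, the key structural claim is that every shore $S_m$ with $p_P<m<q_P$ is crossed \emph{only} by internal chords of $P$: a link crossing $S_m$ and $S_0$ would have reach $\ge m>p_P$, contradicting maximality of $p_P$; one crossing $S_m$ and $S_{t+1}$ would have reach $\le m<q_P$, contradicting minimality of $q_P$; and a link crossing $S_m$ but neither $S_0$ nor $S_{t+1}$ has $1\le i_1\le m\le i_2\le t$, hence is an internal chord of $P$ by the above. Since, by feasibility, each such $S_m$ is crossed by some link of $L$, the crossing intervals of the internal chords of $P$ cover the whole gap $[p_P+1,q_P-1]$. I then compute, by the standard dynamic program for weighted interval covering, a minimum-cost subcollection $C_P$ of internal chords of $P$ whose crossing intervals cover $[p_P+1,q_P-1]$, and set $L_P\coloneqq\{\ell^{\rm{in}}_P,\ell^{\rm{out}}_P\}\cup C_P$. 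In either case $L_P$ crosses every $S_0,\dots,S_{t+1}$, i.e.\ $L_P$ covers $P$, which is \ref{item:sets_cover_paths}; the whole construction runs in polynomial time.

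For \ref{item:cost_bound_L_Ps}, the crucial point is that $\OPT$ is feasible as well, so it crosses every gap shore $S_m$ with $p_P<m<q_P$, and by the same argument those crossing links are internal chords of $P$. Hence the set of internal chords of $P$ contained in $\OPT$ covers $[p_P+1,q_P-1]$, so $c(C_P)\le c(\{\ell\in\OPT\colon \ell\ \text{is an internal chord of}\ P\})$. Since distinct snug paths are vertex-disjoint (\cref{lemma:chain_graph_paths}), a link of $\OPT$ is an internal chord of at most one $P$, so summing gives $\sum_{P}c(C_P)\le c(\OPT)$. Combining this with $c(\ell^{\rm{in}}_P)+c(\ell^{\rm{out}}_P)\le 2c_{max}$, the bound $|\mathcal{P}^G_{\rm{chain}}|<2n_k(G)$, the relation $c_{max}\le\Delta c_{min}$, and \cref{prop:lower_bound_opt_n_k} in the form $c_{min}n_k(G)\le 2c(\OPT)$, we get $\sum_{P}c(L_P)\le 2c_{max}\,|\mathcal{P}^G_{\rm{chain}}|+\sum_{P}c(C_P)<4c_{max}n_k(G)+c(\OPT)\le 4\Delta c_{min}n_k(G)+c(\OPT)\le 8\Delta c(\OPT)+c(\OPT)=(8\Delta+1)c(\OPT)$, as desired.

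The main obstacle is the second part: isolating the structural claim that the ``gap'' shores between the reaches of $\ell^{\rm{in}}_P$ and $\ell^{\rm{out}}_P$ can only be crossed by internal chords of $P$, and then — since the algorithm does not know $\OPT$ — replacing the naive choice ``take the internal chords of $P$ in $\OPT$'' by the polynomial-time interval-covering step while keeping the charge $c(C_P)\le c(\{\ell\in\OPT\colon\ell\ \text{is an internal chord of}\ P\})$ intact. I expect the bookkeeping around the telescoping identities $S_j\setminus S_i=\{u_i,\dots,u_{j-1}\}$ and the ``below/above'' behaviour of a link's crossing interval to be the only technical, but routine, part; the rest follows by combining the cited lemmas.
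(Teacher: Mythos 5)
Your proof is correct and takes essentially the same approach as the paper. The quantities $p_P$ and $q_P$ you introduce are exactly the paper's $a_P$ and $b_P$; your structural claim that the gap shores $S_m$ with $p_P<m<q_P$ can only be crossed by links with both endpoints in $V(P)$ is the same observation the paper uses to charge $c(L'_P)$ to the internal links of $\OPT$; and the final bookkeeping — $2c_{\max}\lvert\mathcal{P}^G_{\rm chain}\rvert < 4c_{\max}n_k(G)\le 8\Delta\,c(\OPT)$ for the outer links plus $\sum_P c(C_P)\le c(\OPT)$ via vertex-disjointness of snug paths — matches the paper's calculation verbatim.
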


\subsubsection{Proving \cref{lemma:bound_non_snug_vertices_and_snug_paths}}
Fix an instance $(G=(V,E),L,c)$ of planar $k$-WCAP, where $G$ is minimally $k$-edge-connected. Further fix a root $r\in V$.
To derive \cref{lemma:bound_non_snug_vertices_and_snug_paths}, we will analyze the structure of certain laminar families of $k$-cuts.
\begin{lemma}\label{lemma:laminar_family}
	There exists a laminar family $\mathcal{L}\subseteq 2^{V\setminus\{r\}}$ of $k$-cuts such that every edge crosses at least one cut in $\mathcal{L}$.
\end{lemma}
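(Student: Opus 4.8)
We need to show that in a minimally $k$-edge-connected graph $G$ with a fixed root $r$, there exists a laminar family $\mathcal{L}$ of $k$-cuts avoiding $r$ such that every edge of $G$ crosses at least one member of $\mathcal{L}$.

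The plan is to build $\mathcal{L}$ greedily while maintaining laminarity. Since $G$ is minimally $k$-edge-connected, every edge $e$ lies in some $k$-cut; by \cref{prop:minimum_cut_connected} every $k$-cut is connected, and by complementing we may assume it avoids $r$. So for each edge there is at least one candidate $k$-cut not containing $r$ that it crosses. I would process the edges one at a time (in any order), maintaining a laminar family $\mathcal{L}$ of $r$-avoiding $k$-cuts; whenever the current edge $e$ is not yet crossed by any cut in $\mathcal{L}$, I pick a $k$-cut $S_e \subseteq V\setminus\{r\}$ crossing $e$ and want to add it — but $S_e$ may cross some existing cut in $\mathcal{L}$. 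The key step is then an \emph{uncrossing} argument: if $S_e$ crosses some $T\in\mathcal{L}$, use submodularity of the cut function to replace the crossing pair by cuts that no longer cross. Concretely, if $S$ and $T$ are both $k$-cuts with $|\delta(S)|=|\delta(T)|=k$ and they cross, then $|\delta(S\cap T)|+|\delta(S\cup T)|\le 2k$ and $|\delta(S\setminus T)|+|\delta(T\setminus S)|\le 2k$; since $G$ is $k$-edge-connected each term is at least $k$, so in fact all four sets (those that are nonempty and proper, which holds here since $r$ is outside both $S$ and $T$) are again $k$-cuts avoiding $r$. Moreover $\delta(S)\cup\delta(T)\subseteq \delta(S\cap T)\cup\delta(S\cup T)$ and $\delta(S)\cup\delta(T)\subseteq \delta(S\setminus T)\cup\delta(T\setminus S)$, so whichever side our edge $e$ lands on, one of these two uncrossed pairs still covers it, and also still covers every edge previously covered by $S$ or $T$.

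With this uncrossing tool in hand, I would argue more carefully to preserve laminarity across \emph{all} of $\mathcal{L}$, not just a single crossing pair. The cleanest way is: do not insist on adding $S_e$ itself; instead take the family $\mathcal{L}\cup\{S_e\}$ and repeatedly uncross while the set of edges covered by the union of the family does not shrink — but one must be careful that uncrossing one pair does not create new crossings with third sets. The standard fix is to choose, among all $k$-cuts $S\subseteq V\setminus\{r\}$ crossing $e$, one that crosses as \emph{few} members of $\mathcal{L}$ as possible, or equivalently minimizes $\sum_{T\in\mathcal{L}}|S\cap T|$ (or a similar potential); then show that if $S$ still crosses some $T\in\mathcal{L}$, replacing $S$ by the appropriate side of the uncrossing (the one still covering $e$) strictly decreases this potential while keeping $S$ an $r$-avoiding $k$-cut crossing $e$ and — crucially — not increasing crossings with any other member of $\mathcal{L}$, using the sub-/supermodular inequalities $|U\cap A|+|U\cap B|\ge |U\cap(A\cap B)|+|U\cap(A\cup B)|$-type facts applied to the third set $U\in\mathcal{L}$. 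Iterating drives the potential to zero, at which point $S$ is laminar to all of $\mathcal{L}$ and we add it, now covering $e$ and keeping every previously covered edge covered. After processing all edges, every edge is covered, and $\mathcal{L}$ is a laminar family of $k$-cuts avoiding $r$, as required.

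\textbf{Main obstacle.} The delicate part is not the single-pair uncrossing (routine submodularity) but ensuring the uncrossing process \emph{terminates} and does not cascade: replacing a crossing cut can a priori introduce crossings with other members of $\mathcal{L}$. Getting the right monotone potential — one that decreases under each uncrossing step and controls all pairwise crossings simultaneously — is where the care lies. An alternative route that sidesteps cascading entirely is to avoid adding one cut at a time and instead invoke a known structural result that the minimal (equivalently, connected) $k$-cuts of a $k$-edge-connected graph have a laminar-like / cactus representation, and then extract from the cactus a laminar sub-collection hitting all edges; but assembling the elementary uncrossing argument as above is self-contained and likely what the paper intends.
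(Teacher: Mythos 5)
Your uncrossing route is viable and genuinely different from the paper's proof, which is a one-paragraph argument via a Gomory--Hu tree: take the cuts induced by the tree edges, keeping for each the shore not containing $r$; these are automatically laminar, and since $G$ is minimally $k$-edge-connected the minimum $v$-$w$ cut value for any edge $\{v,w\}\in E$ is exactly $k$, so by the Gomory--Hu property one of these tree cuts is a $k$-cut separating $v$ and $w$, i.e., crossed by the edge. That buys laminarity and coverage for free, with no uncrossing and no termination argument. Your approach is self-contained but needs two repairs to be complete. First, the inclusions $\delta(S)\cup\delta(T)\subseteq\delta(S\cap T)\cup\delta(S\cup T)$ and $\delta(S)\cup\delta(T)\subseteq\delta(S\setminus T)\cup\delta(T\setminus S)$ are not generic set identities (an edge between $S\setminus T$ and $T\setminus S$ violates the first, and one between $S\cap T$ and $V\setminus(S\cup T)$ violates the second); they do hold here, but only because your submodular and posimodular inequalities are tight, which forces that there are no edges between $S\setminus T$ and $T\setminus S$, nor between $S\cap T$ and $V\setminus(S\cup T)$ --- alternatively, for your purposes it suffices to note directly that every edge of $\delta(S)$ lies in at least one of the four boundaries $\delta(S\cap T)$, $\delta(S\cup T)$, $\delta(S\setminus T)$, $\delta(T\setminus S)$. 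Second, the cascading concern you flag resolves cleanly and does not need submodularity applied to third sets: among all $r$-avoiding $k$-cuts crossing the currently uncovered edge $e$, pick one, $S$, crossing the fewest members of $\mathcal{L}$; if it crosses some $T\in\mathcal{L}$, each of the four uncrossed sets is again an $r$-avoiding $k$-cut, none of them crosses $T$, and, using only the laminarity of $\mathcal{L}$ (a case analysis on whether $U\subseteq T$, $T\subseteq U$, or $U\cap T=\emptyset$), none of them crosses any $U\in\mathcal{L}$ that $S$ did not cross; replacing $S$ by whichever of the four covers $e$ thus strictly decreases the count, a contradiction, so the minimizer is laminar with $\mathcal{L}$ and can be added; since cuts are never removed from $\mathcal{L}$, previously covered edges stay covered. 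With these details your argument goes through; the paper's Gomory--Hu argument is simply much shorter.
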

\begin{proof}
	Let $T=(V,F)$ be a Gomory-Hu tree for $G$. Recall that a Gomory-Hu tree for $G$ is a tree on the vertex set $V$ such that for every edge $f=\{u,v\}\in F$, the vertex set of any of the two connected components of $T-f$ defines a minimum cut separating $u$ and $v$ in $G$. (See, e.g., \cite{korteCombinatorialOptimizationTheory2018}, for more information on Gomory-Hu trees.)
	Let $\mathcal{L}$ be the family of all such cuts that do not contain the root (to avoid including the same cut twice, once for each of its shores) and that correspond to $k$-cuts of $G$. Clearly, $\mathcal{L}\subseteq 2^{V\setminus\{r\}}$ is a laminar family of $k$-cuts. For $e=\{v,w\}\in E$, there is a $k$-cut containing $e$ because $G$ is minimally $k$-edge-connected. Hence, the minimum size of a cut in $G$ separating $v$ and $w$ is $k$. By definition of a Gomory-Hu tree, one of the cuts in $\mathcal{L}$ is a $k$-cut separating $v$ and $w$, i.e., it is crossed by $e$.
\end{proof}
In the following, fix a \emph{maximal} laminar family $\mathcal{L}\subseteq 2^{V\setminus\{r\}}$ of $k$-cuts in $G$ with the property that each edge crosses at least one of the $k$-cuts. We remark that we do not need to compute this family; we only need it for the analysis.
\begin{lemma}\label{lemma:snug:shores_in_L}
	Let $u$ be a snug vertex with snug shores $(S_1,S_2)$. Then $S_1,S_2\in\mathcal{L}$.
\end{lemma}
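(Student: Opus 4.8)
Let $u$ be a snug vertex with snug shores $(S_1, S_2)$. Then $S_1, S_2 \in \mathcal{L}$, where $\mathcal{L}$ is a fixed maximal laminar family of $k$-cuts in $G$ with the property that each edge crosses at least one of the $k$-cuts.

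The plan is to use the maximality of $\mathcal{L}$ together with \cref{lemma:interaction_with_snug_shores}. Recall that $\mathcal{L}\subseteq 2^{V\setminus\{r\}}$ is a \emph{maximal} laminar family of $k$-cuts of $G$ with the property that every edge crosses at least one of its members. Hence, to conclude $S_1,S_2\in\mathcal{L}$, it suffices to show that $\mathcal{L}\cup\{S_1\}$ and $\mathcal{L}\cup\{S_2\}$ are again families of this kind: if, say, $S_1\notin\mathcal{L}$, then $\mathcal{L}\cup\{S_1\}$ would be a strictly larger valid family, contradicting maximality. Two of the three defining conditions are immediate. First, $S_1$ and $S_2$ are $k$-cuts by definition of snug shores, and (via \cref{lemma:snug_shores_unique} together with the existence clause in the definition of a snug vertex) they satisfy $S_1\subseteq S_2\subseteq V\setminus\{r\}$, so $\mathcal{L}\cup\{S_i\}\subseteq 2^{V\setminus\{r\}}$ still consists of $k$-cuts. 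Second, enlarging $\mathcal{L}$ cannot destroy the property that every edge crosses at least one member. So the only thing left to verify is that $\mathcal{L}\cup\{S_1\}$ and $\mathcal{L}\cup\{S_2\}$ are laminar, i.e.\ that neither $S_1$ nor $S_2$ crosses any $T\in\mathcal{L}$.

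This is exactly where \cref{lemma:interaction_with_snug_shores} enters. Fix $T\in\mathcal{L}$; it is a nonempty $k$-cut contained in $V\setminus\{r\}$, so the lemma applies to the snug shores $(S_1,S_2)$ and the cut $S=T$, giving $T\subseteq S_1$, or $S_2\subseteq T$, or $S_2\cap T=\emptyset$. In the first case $T\subseteq S_1\subseteq S_2$, so $T$ is nested in both $S_1$ and $S_2$; in the second case $S_1\subseteq S_2\subseteq T$, so both $S_1$ and $S_2$ are nested in $T$; in the third case $S_1\cap T\subseteq S_2\cap T=\emptyset$, so both $S_1$ and $S_2$ are disjoint from $T$. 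In every case, neither $S_1$ nor $S_2$ crosses $T$. Moreover $S_i=V\setminus T$ is impossible, since $r$ lies neither in $S_i$ nor in $T$, so $\{S_i,T\}$ is a laminar pair in the strong sense. This shows $\mathcal{L}\cup\{S_1\}$ and $\mathcal{L}\cup\{S_2\}$ are laminar, completing the argument.

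I do not expect any real obstacle here: the entire combinatorial content is carried by \cref{lemma:interaction_with_snug_shores} and by maximality of $\mathcal{L}$. The only mild subtlety is bookkeeping around the ground set: one must confirm $S_1,S_2\subseteq V\setminus\{r\}$ so that the enlarged families still live in $2^{V\setminus\{r\}}$, which is why I would invoke uniqueness of snug shores rather than working with the bare definition of a snug-shore pair.
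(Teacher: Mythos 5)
Your proposal is correct and follows exactly the paper's argument: by \cref{lemma:interaction_with_snug_shores}, no cut in $\mathcal{L}$ crosses $S_1$ or $S_2$, so laminarity is preserved when adding them, and maximality of $\mathcal{L}$ forces $S_1,S_2\in\mathcal{L}$. You simply spell out the bookkeeping (the $k$-cut and $V\setminus\{r\}$ conditions, and that the edge-crossing property is monotone under enlargement) that the paper leaves implicit.
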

\begin{proof}
	By \cref{lemma:interaction_with_snug_shores}, no $k$-cut $S\in\mathcal{L}$ crosses $S_1$ or $S_2$. Hence, the desired statement follows by maximality of $\mathcal{L}$.
\end{proof}
We call the inclusionwise maximal proper subsets of $S\in\mathcal{L}$ contained in $\mathcal{L}$ the \emph{children of $S$}. If $S$ does not have any children, we call $S$ a \emph{leaf}.
\begin{lemma}\label{lemma:properties_laminar_family}\
	\begin{enumerate}[label=(\roman*)]
		\item Every leaf of $\mathcal{L}$ is a singleton, i.e., a set of size $1$.
		\item Let $S\in\mathcal{L}$ such that $S$ has exactly one child $S'$. Then  there is a snug vertex $v\in V\setminus \{r\}$ such that $(S',S)$ form snug shores for $v$.
	\end{enumerate}
\end{lemma}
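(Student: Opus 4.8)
The plan is to leverage the maximality of the laminar family $\mathcal{L}$: since enlarging $\mathcal{L}$ can only help with the requirement that every edge be crossed, any $k$-cut $T\subseteq V\setminus\{r\}$ that is laminar with every member of $\mathcal{L}$ must already belong to $\mathcal{L}$. In particular, for any vertex $v\ne r$ with $\deg(v)=k$, the singleton $\{v\}$ is a $k$-cut and is trivially laminar with every set, so $\{v\}\in\mathcal{L}$. I will also use repeatedly that, by \cref{prop:minimum_cut_connected}, every $k$-cut of $G$ induces a connected subgraph, and that any $T\in\mathcal{L}$ with $T\subsetneq S$ is contained in some child of $S$ (take an inclusionwise maximal element of $\{A\in\mathcal{L}: T\subseteq A\subsetneq S\}$, and note it is then maximal even among all proper subsets of $S$ in $\mathcal{L}$).

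For part~(i), suppose towards a contradiction that a leaf $S\in\mathcal{L}$ has $|S|\ge 2$. Then $G[S]$ is connected by \cref{prop:minimum_cut_connected}, so $E[S]\ne\emptyset$; pick $e\in E[S]$. By the defining property of $\mathcal{L}$, the edge $e$ crosses some $T\in\mathcal{L}$. Both endpoints of $e$ lie in $S$ while exactly one lies in $T$, so $S\cap T\ne\emptyset$ and $S\setminus T\ne\emptyset$; laminarity of $\{S,T\}$ then forces $T\subseteq S$, and $T\ne S$. Hence $S$ has a proper subset in $\mathcal{L}$, and thus a child, contradicting that it is a leaf. So $|S|=1$.

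For part~(ii), let $S\in\mathcal{L}$ have the unique child $S'$; in particular $S'\subsetneq S$, so $S\setminus S'\ne\emptyset$. I would establish three facts. First, $S\setminus S'$ is independent in $G$: an edge with both endpoints in $S\setminus S'$ lies in $E[S]$, hence (as in part~(i)) crosses some $T\in\mathcal{L}$ with $T\subsetneq S$; then $T$ lies inside a child of $S$, which must be $S'$, so $T\subseteq S'$ — but the edge has no endpoint in $S'$, a contradiction. Second, every $v\in S\setminus S'$ has $\deg(v)\ge k+1$: otherwise $\{v\}$ is a $k$-cut, hence $\{v\}\in\mathcal{L}$ by maximality, and since $v\notin S'$ this makes $\{v\}$ a proper subset of $S$ not contained in $S'$, so it lies inside a child of $S$ different from $S'$ — contradicting uniqueness of $S'$. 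Third, a counting bound: by independence, $|\delta(S\setminus S')|=\sum_{v\in S\setminus S'}\deg(v)$, while $\delta(S\setminus S')\subseteq\delta(S)\cup\delta(S')$ gives $|\delta(S\setminus S')|\le 2k$; combined with the second fact this yields $(k+1)\cdot|S\setminus S'|\le 2k<2(k+1)$, hence $|S\setminus S'|\le 1$.

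Putting this together, $S\setminus S'=\{v\}$ for a single vertex $v$, which by the second fact satisfies $\deg(v)\ge k+1$. Since $S'$ and $S=S'\cup\{v\}$ are $k$-cuts contained in $V\setminus\{r\}$ with $v\notin S'$, the vertex $v$ is snug and $(S',S)$ are its snug shores, which is exactly the claim. All steps are short; the only point requiring a little care is the counting in the third fact — correctly combining the inclusion $\delta(S\setminus S')\subseteq\delta(S)\cup\delta(S')$ with the degree-sum identity for an independent set — together with the bookkeeping relating ``child of $S$'' to ``proper subset of $S$ lying in $\mathcal{L}$''. I do not expect any genuine obstacle.
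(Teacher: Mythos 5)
Your proof is correct and uses essentially the same argument as the paper's: the key steps in part~(ii) — that every $v\in S\setminus S'$ has $\deg(v)\ge k+1$ by maximality of $\mathcal{L}$, that $\delta(v)\subseteq\delta(S)\cup\delta(S')$ forces $S\setminus S'$ to be independent, and the resulting count $(k+1)\cdot|S\setminus S'|\le 2k$ — match the paper's (terse) proof exactly. The only cosmetic difference is in part~(i), where you invoke \cref{prop:minimum_cut_connected} to produce an internal edge of $S$ and derive a contradiction, while the paper instead observes directly that $E[S]=\emptyset$ and compares $|\delta(S)|=k$ with $\sum_{v\in S}\deg(v)\ge k\cdot|S|$; both are fine.
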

\begin{proof}
	Let $S\in \mathcal{L}$ be a leaf. Then for every $v\in S$, we have $\delta(v)\subseteq \delta(S)$ because every edge in $G$ crosses some $k$-cut in $\mathcal{L}$. 
	As $S$ is a $k$-cut and every vertex in $G$ has degree at least $k$ (because $G$ is $k$-edge-connected), $S$ must be a singleton.
	
	Next, let $S$ be a set with exactly one child $S'$ and let $v\in S\setminus S'$. Then $\deg(v)\ge k+1$ because otherwise, we could have added $\{v\}$ to $\mathcal{L}$, contradicting its maximality. Moreover, we must have $\delta(v)\subseteq \delta(S)\cup \delta(S')$ because every edge in $G$ crosses some $k$-cut in $\mathcal{L}$. As both $S$ and $S'$ are $k$-cuts, $S\setminus S'$ contains exactly one vertex.
\end{proof}
We are now ready to prove \cref{lemma:number_snug_paths,lemma:non_snug_vertices}, which, together, yield \cref{lemma:bound_non_snug_vertices_and_snug_paths}.
\begin{lemma}\label{lemma:number_snug_paths}
	We have $|\mathcal{P}_{\rm{chain}}^G|< 2\cdot n_k(G)$.
\end{lemma}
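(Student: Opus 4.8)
The plan is to exploit the laminar forest structure of the maximal family $\mathcal{L}$ fixed above and to match each snug path with its innermost snug shore. First I would view $\mathcal{L}$ as a forest $F_{\mathcal{L}}$ with respect to the parent/child relation (its maximal elements being the roots), and classify each node of $F_{\mathcal{L}}$ as a \emph{leaf} (no child), a \emph{unary node} (exactly one child), or a \emph{branching node} (at least two children). By \cref{lemma:properties_laminar_family}(i) every leaf is a singleton $\{x\}$, and since $\{x\}$ is a $k$-cut of $G$ this forces $\deg_G(x)=k$; distinct leaves are distinct singletons, so the number of leaves of $\mathcal{L}$ is at most $n_k(G)$. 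Also note $\mathcal{L}\neq\emptyset$: assuming $G$ has an edge it has a $k$-cut by \cref{lemma:laminar_family}, hence $F_{\mathcal{L}}$ has at least one tree and at least one leaf, so $n_k(G)\ge 1$.

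Next I would define, for every snug path $P=(u_0,\dots,u_t)$ with snug chain $(S_0,\dots,S_{t+1})$, the assignment $P\mapsto S_0$ to its innermost snug shore. All the $S_j$ lie in $\mathcal{L}$ by \cref{lemma:snug:shores_in_L}, so this is a map from $\mathcal{P}^G_{\rm{chain}}$ into $\mathcal{L}$. It is injective: by \cref{lemma:shore_for_at_most_one_path} the $k$-cut $S_0$ is a snug shore of at most one snug path, and within that path it can only occur as the innermost shore because a snug chain is strictly increasing. The crucial point — which I expect to be the main obstacle — is to show that $S_0$ can never be a unary node of $F_{\mathcal{L}}$, so that the image of the map lies in $\{\text{leaves}\}\cup\{\text{branching nodes}\}$. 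Indeed, if $S_0$ had a unique child $S'$, then \cref{lemma:properties_laminar_family}(ii) would yield a snug vertex $v$ with snug shores $(S',S_0)$; since $u_0$ has snug shores $(S_0,S_1)$, we would have $S^v_2 = S_0 = S^{u_0}_1$, so $G_{\rm{chain}}$ contains the arc $(v,u_0)$, and $v\neq u_0$ because $v\in S_0$ whereas $u_0\notin S_0$. This contradicts the maximality of $P$, as $u_0$, being the first vertex of a maximal directed path in $G_{\rm{chain}}$ (which is a vertex-disjoint union of directed paths by \cref{lemma:chain_graph_paths}), has no incoming arc.

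Finally I would invoke the elementary forest fact that in any nonempty forest the number of nodes with at least two children is strictly smaller than the number of leaves: for a rooted tree this follows from $\sum_v (\#\text{children of }v)=\#\text{nodes}-1$ together with the split $\#\text{nodes}=\#\text{leaves}+\#\text{unary}+\#\text{branching}$, and summing over components gives the statement for forests. Combining this with the injection above,
\[
|\mathcal{P}^G_{\rm{chain}}| \ \le\ \#\{\text{leaves of }\mathcal{L}\} + \#\{\text{branching nodes of }\mathcal{L}\} \ <\ 2\cdot \#\{\text{leaves of }\mathcal{L}\} \ \le\ 2\cdot n_k(G),
\]
which is exactly the claim. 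The only delicate part is the exclusion of unary nodes from the image of $P\mapsto S_0$, which combines \cref{lemma:properties_laminar_family}(ii) with the maximality of snug paths in $G_{\rm{chain}}$; everything else is bookkeeping on the laminar forest $F_{\mathcal{L}}$.
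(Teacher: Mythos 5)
Your proof is correct and follows essentially the same route as the paper: map each snug path to its innermost shore $S_0\in\mathcal{L}$ (injective by \cref{lemma:shore_for_at_most_one_path}), observe it is a leaf or a branching node of the laminar forest, and count leaves against branching nodes, with leaves bounded by $n_k(G)$. Your explicit exclusion of unary nodes via \cref{lemma:properties_laminar_family}(ii) together with the maximality of snug paths is exactly the step the paper leaves terse, so this is the same argument with a detail filled in.
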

\begin{proof}
	For a snug path $P$ with snug shores $(S_0,\dots,S_{t+1})$, let $S_P\coloneqq S_0$. Note that $S_P\in\mathcal{L}$ by \Cref{lemma:snug:shores_in_L}. The sets $(S_P)_{P\in\mathcal{P}_{\rm{chain}}^G}$ are pairwise distinct by \Cref{lemma:shore_for_at_most_one_path}.
	Moreover, by \Cref{lemma:properties_laminar_family}, every set $S_P$ is either a leaf of $\mathcal{L}$, or has at least two children in $\mathcal{L}$. All leaves of $\mathcal{L}$ are singleton $k$-cuts by \Cref{lemma:properties_laminar_family}. Moreover, there are strictly fewer sets with at least two children than leaves. (This can be observed by looking at the tree representation of $\mathcal{L}$.) Hence, the claimed bound follows. 
\end{proof}
\begin{lemma}\label{lemma:non_snug_vertices}
	We have $|V\setminus V_{\rm{snug}}|<4\cdot n_k(G)$.
\end{lemma}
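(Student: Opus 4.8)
The plan is to charge every vertex of $G$ to a node of the rooted tree representing the fixed maximal laminar family $\mathcal{L}$ of $k$-cuts, and to bound, node by node, how many of the charged vertices can fail to be snug. First I would form the rooted tree $\mathcal{T}^+$ whose nodes are the sets of $\mathcal{L}$ together with one extra root $\rho$ (thought of as $V\setminus\{r\}$), with the parent relation given by inclusion, so that the children of $\rho$ are the inclusionwise maximal sets of $\mathcal{L}$; write $m_N$ for the number of children of a node $N$ (so $m_\rho\ge 1$, as $\mathcal{L}\ne\emptyset$). For $v\in V\setminus\{r\}$, charge $v$ to the smallest set of $\mathcal{L}$ containing it, or to $\rho$ if no set of $\mathcal{L}$ contains $v$, and let $A_N$ be the set of vertices charged to $N$. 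The sets $A_N$ partition $V\setminus\{r\}$, so $|V\setminus V_{\rm{snug}}|\le 1+\sum_N |A_N\setminus V_{\rm{snug}}|$, the $+1$ accounting for $r$ itself.

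Next I would bound $|A_N\setminus V_{\rm{snug}}|$ for each type of node. If $N=\{v\}$ is a leaf of $\mathcal{L}$ (a singleton by \cref{lemma:properties_laminar_family}), then $A_N=\{v\}$ and $\deg(v)=|\delta(\{v\})|=k$, since $\{v\}$ is a $k$-cut; across all leaves these are distinct vertices of degree $k$, so in total they contribute at most $n_k(G)$. If $N\in\mathcal{L}$ has exactly one child, then by \cref{lemma:properties_laminar_family} $A_N$ is a single snug vertex, contributing $0$. For the remaining nodes---namely $\rho$ and the sets $S\in\mathcal{L}$ with $m_S\ge 2$---I would prove $|A_S|\le m_S+1$ and $|A_\rho|\le m_\rho$, bounding all of $A_N$ regardless of snugness. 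The proof rests on two observations. First, $A_N$ induces no edge of $G$: an edge with both endpoints in $A_N$ would cross some $C\in\mathcal{L}$, but by laminarity $C$ either avoids $N$, contains $N$, or lies inside a child of $N$, and in each case both endpoints are on the same side of $C$. Second, for $S\in\mathcal{L}$, every edge incident to a vertex of $A_S$ either leaves $S$, hence lies in $\delta(S)$, or enters a child $S_i$ of $S$, hence lies in $\delta(S_i)$; for $\rho$ one checks in addition that no edge of $A_\rho$ reaches $r$ (such an edge would have to cross a $k$-cut in $\mathcal{L}$ containing one of its endpoints, impossible since $r$ lies in no set of $\mathcal{L}$ and neither does the $A_\rho$-endpoint), so those edges only enter maximal sets of $\mathcal{L}$. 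Summing degrees over $A_S$, using $|\delta(S)|=|\delta(S_i)|=k$ and $\deg\ge k$ throughout, gives $k|A_S|\le(m_S+1)k$; the analogous count for $\rho$ gives $k|A_\rho|\le m_\rho k$.

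Finally I would assemble everything. Writing $\ell$ for the number of leaves of $\mathcal{L}$, the bounds above give
\[
|V\setminus V_{\rm{snug}}|\ \le\ 1+\ell+\sum_{S\in\mathcal{L},\,m_S\ge 2}(m_S+1)+m_\rho.
\]
Since $m_S+1\le 3(m_S-1)$ whenever $m_S\ge 2$, while one-child nodes of $\mathcal{L}$ contribute $0$, the middle sum is at most $3\sum_{\text{internal }S\in\mathcal{L}}(m_S-1)$. Applying the elementary tree identity $\sum_{\text{internal }N}(m_N-1)=\ell-1$ to $\mathcal{T}^+$ and subtracting the root's contribution $m_\rho-1$ gives $\sum_{\text{internal }S\in\mathcal{L}}(m_S-1)=\ell-m_\rho$, so $|V\setminus V_{\rm{snug}}|\le 1+\ell+3(\ell-m_\rho)+m_\rho=1+4\ell-2m_\rho\le 4\ell-1$, using $m_\rho\ge 1$. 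Since $\ell\le n_k(G)$, this is strictly less than $4\,n_k(G)$.

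The step I expect to be the main obstacle is the bound $|A_S|\le m_S+1$ for branching nodes: it hinges on the two structural facts that $A_S$ spans no edge and that every edge touching $A_S$ is ``captured'' by one of the $k$-cuts $\delta(S),\delta(S_1),\dots,\delta(S_{m_S})$, which together make a clean degree count possible. Getting the constant exactly right---in particular handling the extra root $\rho$, which is not itself a $k$-cut, and arguing $|A_\rho|\le m_\rho$ rather than merely $m_\rho+1$---is what makes the final tally come out below $4\,n_k(G)$ rather than only at most $4\,n_k(G)$.
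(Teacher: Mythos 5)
Your proof is correct. I checked the key steps: the charging of each $v\in V\setminus\{r\}$ to the smallest containing set (or to the auxiliary root $\rho$) is well defined by laminarity; $A_N$ spans no edge because every edge crosses some $C\in\mathcal{L}$ while both endpoints of an $A_N$-internal edge lie on the same side of every such $C$; every edge touching $A_S$ indeed lands in $\delta(S)\cup\bigcup_i\delta(S_i)$ (and for $\rho$, in some $\delta(M_j)$ with $M_j$ maximal, the endpoint $r$ being excluded exactly as you argue); together with $\deg\ge k$ and $|\delta(S)|=|\delta(S_i)|=k$ this gives $|A_S|\le m_S+1$ and $|A_\rho|\le m_\rho$, and the tree identity plus $m_S+1\le 3(m_S-1)$ for $m_S\ge 2$ and $\ell\le n_k(G)$ yields the strict bound $4n_k(G)-1<4n_k(G)$.

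Your route differs from the paper's in its bookkeeping, though it rests on the same foundations: both proofs use the maximal laminar family $\mathcal{L}$ in which every edge crosses a cut, and its structural properties (\cref{lemma:properties_laminar_family}: leaves are degree-$k$ singletons, unique-child sets correspond to snug vertices), and both end with a leaves-versus-branching-nodes tree count. The paper performs one global double count of edge–cut incidences, using the quantity $|E_L|=\frac{2k-\deg(v_L)}{2}$ for one-child sets, to reach $|V\setminus V_{\rm{snug}}|\le 2\,|\mathcal{L}\setminus\mathcal{L}_s|$ and then bounds $|\mathcal{L}\setminus\mathcal{L}_s|<2\,n_k(G)$. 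You instead partition $V\setminus\{r\}$ into classes $A_N$ indexed by nodes of the laminar tree (augmented by $\rho$) and bound each class locally by a capacity argument against the cuts $\delta(S),\delta(S_1),\dots,\delta(S_{m_S})$. Your version is more local and avoids the half-integral incidence algebra, at the price of the extra care you correctly take with the root class $A_\rho$ and the vertex $r$; the paper's version is more compact once the global sum is set up, and its intermediate inequality $|V\setminus V_{\rm{snug}}|\le 2\,|\mathcal{L}\setminus\mathcal{L}_s|$ is a slightly stronger structural statement than what your charging scheme produces. Both arrive at the same constant $4$.
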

\begin{proof}
	Let $\mathcal{L}_s$ denote the collection of sets in $\mathcal{L}$ with exactly one child.
	For $L\in \mathcal{L}_s$, denote the unique child of $L$ by $\child(L)$ and let $v_L$ be the unique vertex in $L\setminus \child(L)$ (see \cref{lemma:properties_laminar_family}). We further denote by 
	\begin{equation*}
		E_L \coloneqq \delta(L) \cap \delta(\child(L))
	\end{equation*}
	all edges that cross both $L$ and its child.
	For each $e\in E$, we have
	\begin{equation}\label{eq:bound_crossing_edges_using_E_L}
		|\{L\in \mathcal{L} \colon e\in \delta(L)\}| \geq 1 + |\{L\in \mathcal{L}_s \colon e\in E_L\}|,
	\end{equation}
	which can be derived by observing that $e$ crosses each $L\in \mathcal{L}_s$ with $e\in E_L$, and, on top of that, crosses at least one more set, for example the smallest one in $\mathcal{L}$ that is crossed by $e$.
	(Recall that every edge crosses at least one set in $\mathcal{L}$.)
	Moreover, for $L\in\mathcal{L}_s$, we have
	\begin{equation}\label{eq:bound_E_L}
		\begin{aligned}
			|E_L| &= |\delta(L) \cap \delta(\child(L))| = \frac{1}{2}\left(|\delta(L)| + |\delta(\child(L))| - |\delta(v_L)|\right)
			= \frac{2k - \deg(v_L)}{2}.
		\end{aligned}
	\end{equation}
	
	By combining these relations, we get
	\begin{align*}
		k \cdot | \mathcal{L} | &= \sum_{L\in\mathcal{L}} |\delta(L)|=\sum_{e \in E} |\{L \in \mathcal{L} : e \in \delta(L)\}| \\
		&\geq |E| + \sum_{e \in E} |\{L \in \mathcal{L}_s : e \in E_L \}| && \text{by \eqref{eq:bound_crossing_edges_using_E_L}}\\
		&= |E| + \sum_{L \in \mathcal{L}_s} |E_L| \\
		&= |E| + \sum_{L \in \mathcal{L}_s} \frac{2k - \deg(v_L)}{2} && \text{by \eqref{eq:bound_E_L}}\\
		&= \frac{1}{2} \sum_{v \in V} \deg(v) + k \cdot |\mathcal{L}_s| - \frac{1}{2} \sum_{L \in \mathcal{L}_s} \deg(v_L) \\
		&= \frac{1}{2} \sum_{v \in V\setminus V_{\rm{snug}}} \underbrace{\deg(v)}_{\geq k} + \ k \cdot |\mathcal{L}_s| \\
		&\geq \frac{1}{2} k \cdot |V\setminus V_{\rm{snug}}| + k \cdot |\mathcal{L}_s|,
	\end{align*}
	where the fourth equality uses $|E|=\frac{1}{2}\sum_{v\in V}\deg(v)$ and the fifth one follows from $\{v_L \colon L \in \mathcal{L}_s\} = V_{\rm{snug}}$ by \cref{lemma:snug:shores_in_L,lemma:properties_laminar_family}.
	
	By reordering terms, we get
	\begin{equation}\label{eq:bound_non_snug_vertices}
		|V\setminus V_{\rm{snug}}| \leq 2 \cdot |\mathcal{L} \setminus \mathcal{L}_s|.
	\end{equation}
	
	Note that $\mathcal{L}\setminus \mathcal{L}_s$ is the number of sets in $\mathcal{L}$ that have either at least two children or are leaves.
	Moreover, the number of sets in $\mathcal{L}$ that have at least two children is strictly less than the number of leaves in $\mathcal{L}$.
	Using \cref{lemma:properties_laminar_family}, these observations imply
	\begin{equation*}
		|\mathcal{L}\setminus \mathcal{L}_s| < 2 \cdot |\{L\in \mathcal{L}\colon L \text{ is a leaf }\}|\le 2\cdot n_k(G),
	\end{equation*}
	and the desired result follows by combining this relation with \eqref{eq:bound_non_snug_vertices}.
\end{proof}
\begin{corollary}\label{cor:cost_bound_L_prime_snug}
We have $c(L^\circ_{\rm{snug}})\le 12\cdot \Delta\cdot c(\OPT)$, where $\OPT$ denotes an optimum solution to $(G,L,c)$ and $\Delta$ denotes the cost ratio of the instance.
\end{corollary}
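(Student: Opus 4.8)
The strategy is to first bound the \emph{number} of links in $L^\circ_{\rm{snug}}$ by a constant times $n_k(G)$ using planarity of the contracted instance, and then convert this cardinality bound into a cost bound via the cost ratio together with the lower bound $c(\OPT)\ge \frac{n_k(G)}{2}\cdot c_{min}$ from \cref{prop:lower_bound_opt_n_k}. Recall from \cref{def:L_snug} that $L^\circ_{\rm{snug}}$ contains, for every unordered pair of \emph{distinct} snug paths $\{P,Q\}$ joined by some link, exactly one (cheapest) such link; in particular every link in $L^\circ_{\rm{snug}}$ has its two endpoints on two distinct snug paths.

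For the cardinality bound, I would pass to the contracted graph $(G+L)/\mathcal{P}^G_{\rm{chain}}$, which is planar by \cref{lemma:chain_graph_subgraph} (each snug path is a path in $G$, so contracting $\mathcal{P}^G_{\rm{chain}}$ is a minor operation on the planar graph $G+L$). Consider the image of $L^\circ_{\rm{snug}}$ under the contraction map $V\rightarrow V((G+L)/\mathcal{P}^G_{\rm{chain}})$: since each such link joins two distinct snug paths, its image is a genuine (non-loop) edge of $(G+L)/\mathcal{P}^G_{\rm{chain}}$ between the corresponding two contracted vertices, and since $L^\circ_{\rm{snug}}$ contains at most one link per pair of snug paths, these images are pairwise distinct. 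Hence the links of $L^\circ_{\rm{snug}}$ correspond to the edge set of a \emph{simple} subgraph of $(G+L)/\mathcal{P}^G_{\rm{chain}}$ whose vertex set is contained in the set of contracted snug-path vertices. A simple planar graph has fewer than $3$ times as many edges as vertices, so $|L^\circ_{\rm{snug}}|\le 3\cdot |\mathcal{P}^G_{\rm{chain}}|$. Plugging in $|\mathcal{P}^G_{\rm{chain}}|<2\cdot n_k(G)$ from \cref{lemma:bound_non_snug_vertices_and_snug_paths} (equivalently \cref{lemma:number_snug_paths}) yields $|L^\circ_{\rm{snug}}|< 6\cdot n_k(G)$.

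To finish, let $c_{min}$ and $c_{max}$ denote the minimum and maximum link costs, so $\Delta=c_{max}/c_{min}$. Then
\[
c(L^\circ_{\rm{snug}})\ \le\ c_{max}\cdot |L^\circ_{\rm{snug}}|\ <\ 6\cdot c_{max}\cdot n_k(G)\ \le\ 6\cdot c_{max}\cdot \frac{2\cdot c(\OPT)}{c_{min}}\ =\ 12\cdot \Delta\cdot c(\OPT),
\]
where the third inequality uses \cref{prop:lower_bound_opt_n_k}. The only point requiring a little care — and the closest thing to an obstacle in this short argument — is verifying that the images of the links of $L^\circ_{\rm{snug}}$ really form a simple planar graph (no loops, no parallel edges); this is exactly why the definition of $L^\circ_{\rm{snug}}$ restricts to distinct pairs of snug paths and keeps only one link per pair, and why the planarity of $G+L$ is preserved under contraction of $\mathcal{P}^G_{\rm{chain}}$ by \cref{lemma:chain_graph_subgraph}.
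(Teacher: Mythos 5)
Your proposal is correct and follows essentially the same route as the paper: contract the snug paths, observe via \cref{lemma:chain_graph_subgraph} that the image of $L^\circ_{\rm{snug}}$ is a simple loopless planar graph on the contracted snug-path vertices, bound its number of edges by planarity and \cref{lemma:number_snug_paths} to get $|L^\circ_{\rm{snug}}|\le 6\cdot n_k(G)$, and convert to a cost bound using \cref{prop:lower_bound_opt_n_k} and the cost ratio. Your extra care about loops and parallel copies is exactly the (implicit) content of the paper's "loopless simple" claim, so nothing is missing.
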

\begin{proof}
By \cref{lemma:chain_graph_subgraph}, $(V_{\rm{snug}},L^\circ_{\rm{snug}})/\mathcal{P}^G_{\rm{chain}}$ is a loopless simple planar graph. As a simple planar graph with $n$ vertices can have at most $3n-6$ edges, \cref{lemma:number_snug_paths} yields $|L^\circ_{\rm{snug}}|\le 6\cdot n_k(G)$. 
Using \cref{prop:lower_bound_opt_n_k}, we obtain
\[c(L^\circ_{\rm{snug}})\le 6\cdot c_{max}\cdot n_k(G)\le 12\Delta\cdot c(\OPT).\]
\end{proof}
\subsubsection{Proof of \cref{lemma:reduction_to_sparse_instances}}
\begin{proof}[Proof of \cref{lemma:reduction_to_sparse_instances}]
Let $(G=(V,E),L,c)$ be an instance of planar $k$-WCAP such that $G$ is minimally $k$-edge-connected and let $r\in V$ be a fixed root. Recall that we assume all $k$-WCAP instances we deal with to be feasible, i.e., $G+L$ is $(k+1)$-edge-connected. Let $c_{min}$ and $c_{max}$ be the minimum and maximum cost of a link and let $\Delta = \frac{c_{max}}{c_{min}}$ be the cost ratio of the instance.
	Let $N\coloneqq \lfloor\log_{1+\lambda}(\Delta)\rfloor+1$. The \emph{cost class} of a link $\ell\in L$ is the integer $i\in\{0,\dots,N-1\}$ with $c(\ell)\in [c_{min}\cdot (1+\lambda)^i,c_{min}\cdot (1+\lambda)^{i+1})$. 
We construct $L'$ as follows:
\begin{itemize}
	\item Add all links in $L_{\rm{snug}}$ to $L'$.
	\item Let $v,w\in V\setminus V_{\rm{snug}}$. We keep a cheapest link connecting $v$ and $w$, if it exists.
	\item Let $v\in V\setminus V_{\rm{snug}}$ and let $P=(u_0,\dots,u_t)$ be a snug path with snug shores $(S_0,\dots,S_{t+1})$. If $v\in S_0$, then for every cost class for which there is a link $\ell=\{v,u_i\}$, we keep one where $i$ is maximum. If $v\in V\setminus S_{t+1}$, then for every cost class for which there is a link $\ell=\{v,u_i\}$, we keep one where $i$ is minimum.
\end{itemize} 
By \cref{lemma:link_deepest_in,lemma:link_furthest_out}, $G+L'$ is $(k+1)$-edge-connected. By \cref{prop:lower_bound_opt_n_k}, we have 
\begin{equation}c(\OPT)\ge c_{min}\cdot \frac{n_k(G)}{2}\label{eq:bound_opt_prime}.\end{equation}
To obtain an upper bound on $c(L'\setminus L_{\rm{snug}})$, we observe that by \cref{lemma:chain_graph_subgraph}, $(V,L'\setminus L_{\rm{snug}})/\mathcal{P}^G_{\rm{chain}}$ is a loopless planar graph with $|V\setminus V_{\rm{snug}}| + |\mathcal{P}_{\rm{chain}}^G|\le 6\cdot n_k(G)$ vertices (by \cref{lemma:bound_non_snug_vertices_and_snug_paths}) and at most $N$ parallel copies per edge. As a simple planar graph with $n$ vertices cannot have more than $3n-6$ edges, this yields $|L'\setminus L_{\rm{snug}}|<18\cdot N\cdot n_k(G)$. 
Hence,
 \begin{equation}c(L'\setminus L_{\rm{snug}})\le 18\cdot N\cdot n_k(G)\cdot c_{max}\stackrel{\eqref{eq:bound_opt_prime}}{\le} 36\cdot N\cdot \Delta\cdot c(\OPT).\label{eq:bound_non_snug_links}\end{equation}
Using $\lambda\in (0,1)$, we obtain
\[\ln(1+\lambda)=\int_{1}^{1+\lambda}\frac{1}{x}dx \ge \frac{\lambda}{2}.\]
Using $\Delta \ge 1$ and $\ln (\Delta)\le \Delta$, we bound
 \begin{equation}
 N\le \log_{1+\lambda}(\Delta)+1 =\frac{\ln(\Delta)}{\ln(1+\lambda)}+1\le \frac{2\Delta}{\lambda}+\Delta\le \frac{3\Delta}{\lambda}.\label{eq:bound_N}
 \end{equation}Plugging \eqref{eq:bound_N} into \eqref{eq:bound_non_snug_links} proves \ref{item:sparse_instance}.

For \ref{item:sparse_instance_cheap}, we explain how to transform $\OPT$ into a feasible solution $F\subseteq L'$ of cost at most $(1+\lambda)\cdot c(\OPT)$.
\begin{itemize}
	\item We add all links in $L_{\rm{snug}}\cap \OPT$ to $F$.
	\item Let $v,w\in V\setminus V_{\rm{snug}}$. If $\OPT$ contains a link connecting $v$ and $w$, add a cheapest one to $F$.
	\item Let $v\in V\setminus V_{\rm{snug}}$ and let $P=(u_0,\dots,u_t)$ be a snug path with snug shores $(S_0,\dots,S_{t+1})$. If $v\in S_0$, then for every cost class for which there is a link $\ell=\{v,u_i\}\in\OPT$, we add one where $i$ is maximum to $F$. If $v\in V\setminus S_{t+1}$, then for every cost class for which there is a link $\ell=\{v,u_i\}\in\OPT$, we add one where $i$ is minimum to $F$.
\end{itemize}
The cost bound is clear by construction. Feasibility follows from \Cref{lemma:link_furthest_out,lemma:link_deepest_in}.
\end{proof}
\subsubsection{Proof of \cref{lemma:link_sets_covering_snug_chains}}
\begin{proof}[Proof of \cref{lemma:link_sets_covering_snug_chains}]
	Let $P=(u_0,\dots,u_t)\in\mathcal{P}_{\rm{chain}}^G$ and let $(S_0,\dots,S_{t+1})$ be the snug shores of $P$. We define \begin{align*}a_P&\coloneqq
		\max\{i\in\{0,\dots,t+1\}\colon \text{there is } \{u,v\}\in L \text{ with } u\in S_0 \text{ and } v\in V\!\setminus\! S_i\};\\
		b_P&\coloneqq
		\min\{i\in\{0,\dots,t+1\}\colon \text{there is } \{u,v\}\in L \text{ with } u\in V\!\setminus\! S_{t+1} \text{ and } v\in S_i\}.
	\end{align*}
	Note that we assume $(G,L,c)$ to be feasible, so $\delta_L(S_0)\ne \emptyset$ and $\delta_L(S_{t+1})\ne \emptyset$, meaning that the maximum/minimum in the definition of $a_P$/$b_P$ is taken over a nonempty set.
	Select a link $\ell_1^P$ with one endpoint in $S_0$ and one endpoint in $V\setminus S_{a_P}$ and a link $\ell_2^P$ with one endpoint in $S_{b_P}$ and the other endpoint in $V\setminus S_{t+1}$. Observe that $\ell_1^P$ covers all of the cuts $S_0,\dots,S_{a_P}$, while $\ell_2^P$ covers all of the cuts $S_{b_P},\dots,S_{t+1}$.
	Let $L'_P$ be a cheapest set of links covering all of the cuts $S_i$ for $i\in [a_P+1,b_P-1]\cap \{0,\dots,t+1\}$. Note that since $S_0\subsetneq S_1\subsetneq \dots \subsetneq S_{t+1}$, finding such a link set reduces to an interval covering problem, which can be solved efficiently by standard techniques, e.g., dynamic programming.
	Define $L_P\coloneqq L'_P\cup \{\ell_1^P,\ell_2^P\}$. 
	
	\cref{lemma:link_sets_covering_snug_chains}~\ref{item:sets_cover_paths} is satisfied by construction. It remains to show \cref{lemma:link_sets_covering_snug_chains}~\ref{item:cost_bound_L_Ps}. Using \cref{prop:lower_bound_opt_n_k} and \cref{lemma:number_snug_paths}, we obtain
	\begin{equation}
		\sum_{P\in \mathcal{P}_{\rm{chain}}^G} c(\ell_1^P)+c(\ell_2^P)\le 2\cdot | \mathcal{P}_{\rm{chain}}^G|\cdot c_{max}< 4\cdot n_k(G)\cdot \Delta\cdot c_{min}\le 8\cdot \Delta \cdot c(\OPT). \label{eq:cost_bound_ell_1_2}
	\end{equation}
	To bound the total costs of the link sets $L'_P$, we let $F_P$ denote the set of links in $\OPT$ with both endpoints in $P$. We show that for a snug path $P=(u_0,\dots,u_t)\in \mathcal{P}_{\rm{chain}}^G$, $F_P$ covers all of the cuts $S_i$ with $i\in [a_P+1,b_P-1]\cap \{0,\dots,t+1\}$. By optimality of $L'_{P}$, this will imply
	\begin{equation}
		\sum_{P\in \mathcal{P}_{\rm{chain}}^G} c(L'_P)\le \sum_{P\in \mathcal{P}_{\rm{chain}}^G} c(F_P)\le c(\OPT)	\label{eq:cost_bound_internal_links}
	\end{equation}
	and combining \eqref{eq:cost_bound_ell_1_2} and \eqref{eq:cost_bound_internal_links} completes the proof.
	
	Let $P=(u_0,\dots,u_t)\in \mathcal{P}_{\rm{chain}}^G$, let $i\in [a_P+1,b_P-1]\cap \{0,\dots,t+1\}$ and let $\ell\in \OPT$ be a link that covers $S_i$. Then $\ell=\{u,v\}$ with $u\in S_i$ and $v\in V\setminus S_i$. If $v\notin \{u_{i},\dots,u_t\}$, then $v\in V\setminus S_{t+1}$, contradicting the minimality of $b_P$. Conversely, if $u\notin \{u_0,\dots,u_{i-1}\}$, then $u\in S_0$, contradicting the maximality of $a_P$. So $\ell\in F_P$. Hence, $F_P$ covers all of the cuts $S_i$ with $i\in [a_P+1,b_P-1]\cap \{0,\dots,t+1\}$.
\end{proof}
\subsection{The PTAS}\label{sec:PTAS_WCAP}
In this section, we prove \cref{theorem:PTAS_key_theorem}, which, combined with \cref{thm:WCAP_bounded_snug_treewidth}, implies \cref{theorem:PTAS_WCAP}.
\subsubsection{Proof of \cref{theorem:PTAS_key_theorem}}

For convenience, we restate \cref{theorem:PTAS_key_theorem} below before proving it.

\theoremPTASkeytheorem*
\begin{proof}
Recall that in \cref{sec:augmentation_short}, we made the assumption that a given instance of $k$-WCAP is feasible and $G$ is minimally $k$-edge-connected.

We apply \cref{lemma:reduction_to_sparse_instances} with $\lambda=\frac{\epsilon}{3}$ to compute $\overline{L}\subseteq L$ with $L_{\rm{snug}}\subseteq \overline{L}$ such that $G+\overline{L}$ is $(k+1)$-edge-connected, \begin{equation}
c(\overline{L}\setminus L_{\rm{snug}})\le\frac{324\cdot\Delta^2}{\epsilon}\cdot c(\OPT) \text{ and } c(\overline{\OPT})\le \left(1+\frac{\epsilon}{3}\right)\cdot c(\OPT),\label{eq:bound_non_snug_links_bar_L}
\end{equation} where $\OPT$ and $\overline{\OPT}$ denote optimum solutions to $(G,L,c)$ and $(G,\overline{L},c)$, respectively. Note that $(G,\overline{L},c)$ has cost ratio at most $\Delta$. Moreover, by construction, we have $\overline{L}_{\rm{snug}}=L_{\rm{snug}}$ and $\overline{L}^\circ_{\rm{snug}}=L^\circ_{\rm{snug}}$ (assuming we use the same tie-breaking rule to define these sets).

We apply \cref{lemma:link_sets_covering_snug_chains} to obtain link sets $(L_P)_{P\in\mathcal{P}^G_{\rm{chain}}}$ such that $L_P$ covers $P$ for $P\in\mathcal{P}^G_{\rm{chain}}$ and $\sum_{P\in\mathcal{P}^G_{\rm{chain}}} c(L_P)\le (8\Delta+1)\cdot c(\OPT)$.

We use these link sets to define vertex weights $w\colon V(G/\mathcal{P}^G_{\rm{chain}})\rightarrow\mathbb{R}_{\ge 0}$ by setting $w(u_P)=c(L_P)$ for $P\in\mathcal{P}^G_{\rm{chain}}$, where $u_P$ denotes the contracted vertex corresponding to $P$, and defining $w(v)=0$ for all remaining vertices.

We let $\tilde{L}\coloneqq\overline{L}\setminus L_{\rm{snug}}\cup L^\circ_{\rm{snug}}$ and note that by \cref{cor:cost_bound_L_prime_snug} and \eqref{eq:bound_non_snug_links_bar_L}, we have
\begin{equation}
	c(\tilde{L})\le \frac{336\cdot \Delta^2}{\epsilon}\cdot c(\OPT).\label{eq:bound_cost_tilde_L}
\end{equation}
We extend the cost function $c$ to $E$ by setting $c(e)=0$ for every $e\in E$, and apply \cref{theorem:compute_k_vertex_safe_cover_with_vertex_weights} with 
\[\delta\coloneqq \frac{\epsilon^2}{6\cdot 345\cdot \Delta^2}\] to compute a $(k+1)$-edge-safe cover $\mathcal{U}'=(U'_i)_{i\in I}$ in $(G+\tilde{L})/\mathcal{P}^G_{\rm{chain}}$.  By \cref{lemma:edge_safe_is_augmentation_safe} and \cref{theorem:compute_k_vertex_safe_cover_with_vertex_weights}~\ref{item:noEdgesAcross_stronger}, we have that $\mathcal{U}'$ is a well-separated $k$-augmentation-safe cover of $(G,\tilde{L})/\mathcal{P}^G_{\rm{chain}}$. Denote the lift of $\mathcal{U}'$ to $G$ by $\mathcal{U}^\star=(U^\star_i)_{i\in I}$. We define $\mathcal{Q}_1$ and $\mathcal{Q}_3$ to consist of all snug paths of $G$ for which \ref{item:u_P_in_V_U_or_neighborhood} or \ref{item:all_snug_cuts_covered} of \cref{lemma:interaction_of_snug_shores_with_cover} applies (for $F=\tilde{L}$), respectively, and let $\mathcal{Q}\coloneqq \mathcal{Q}_1\cup\mathcal{Q}_3$. We define $L^{\star\star}\coloneqq \bigcup_{P\in\mathcal{Q}_1} L_P\cup  \tilde{L}_{\mathcal{U}^\star}$. 
By \cref{theorem:compute_k_vertex_safe_cover_with_vertex_weights}~\ref{k-vertex_safe_edge_and_vertex_costs}, \cref{lemma:link_sets_covering_snug_chains} and \eqref{eq:bound_cost_tilde_L}, we have
\begin{align}
c(L^{\star\star})&=\sum_{P\in\mathcal{Q}_1} c(L_P)+c(\tilde{L}_{\mathcal{U}^\star})\le \delta\cdot\left(\sum_{P\in\mathcal{P}^G_{\rm{chain}}} c(L_P)+c(\tilde{L})\right)\notag\\
&\le \delta\cdot\left(8\Delta +1 + \frac{336\Delta^2}{\epsilon}\right)\cdot c(\OPT)\le \delta\cdot\frac{345\Delta^2}{\epsilon}\cdot c(\OPT)\notag\\
&=\frac{\epsilon}{6}\cdot c(\OPT),\label{eq:bound_L_double_star}
	\end{align} where we used $\Delta \ge 1$ and $\epsilon\in (0,1)$ for the third inequality. By definition, $L^{\star\star}$ covers all snug paths in $\mathcal{Q}$. Let $\mathcal{U}=(U_i)_{i\in I}$ be the lift of $\mathcal{U}'$ to $G/\mathcal{Q}$. By \cref{theorem:compute_k_vertex_safe_cover_with_vertex_weights}~\ref{item:inAtMostTwo_stronger}, we have $|I|\in\mathcal{O}(|V(G)|)$, establishing \cref{theorem:PTAS_key_theorem}~\ref{item:size_of_cover}.
	
	By \cref{lemma:augmentation_safe_cover_respecting_snug_paths} (applied to $(G,\overline{L},c)$), $\mathcal{U}$ is a $k$-augmentation-safe cover of $(G,L')/\mathcal{Q}$ for any $\tilde{L}\subseteq L'\subseteq \overline{L}$. We will define a link set $L^\star$ with $\tilde{L}\subseteq L^\star\subseteq\overline{L}$ that guarantees the existence of good $k$-WCAP solutions in the subinstances given by $\mathcal{U}$, without increasing their snug-treewidth. We obtain $L^\star$ as follows:
	\begin{itemize}
		\item Add all links in $\tilde{L}$ to $L^\star$.
		\item For a snug path $P\in\mathcal{P}^G_{\rm{chain}}\setminus \mathcal{Q}$, add all links with both endpoints in $P$ to $L^\star$.
		\item For a pair of snug paths $P_1$ and $P_2$ with $\{P_1,P_2\}\not\subseteq \mathcal{Q}$, we add all links with one endpoint in $P_1$ and the other endpoint in $P_2$ to $L^\star$.
	\end{itemize}
	\begin{claim}
	We have $L^\star_{\mathcal{U}^\star}=\tilde{L}_{\mathcal{U}^\star}$.
	\end{claim}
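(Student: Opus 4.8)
The plan is to prove the two inclusions of $L^\star_{\mathcal{U}^\star}=\tilde{L}_{\mathcal{U}^\star}$ separately. Since $\tilde{L}\subseteq L^\star$ by construction and restricting a link set to the links incident to $V_{\mathcal{U}^\star}\coloneqq\bigcup_{i\neq j}U^\star_i\cap U^\star_j$ is monotone, the inclusion $\tilde{L}_{\mathcal{U}^\star}\subseteq L^\star_{\mathcal{U}^\star}$ is immediate. The substance lies in the reverse inclusion, and I would reduce it to the statement that no link of $L^\star\setminus\tilde{L}$ is incident to a vertex in $V_{\mathcal{U}^\star}$.

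The first ingredient I would set up is the correspondence between the lift $\mathcal{U}^\star$ and the cover $\mathcal{U}'$: by definition of the lift, $U^\star_i$ is the preimage of $U'_i$ under the contraction map $V(G)\to V(G/\mathcal{P}^G_{\mathrm{chain}})$, so a vertex $v$ of $G$ lies in $V_{\mathcal{U}^\star}$ if and only if its image lies in $V'_{\mathcal{U}'}=\bigcup_{i\neq j}U'_i\cap U'_j$. Using $\mathcal{Q}\supseteq\mathcal{Q}_1$ together with the definition of $\mathcal{Q}_1$ as the set of snug paths for which condition \ref{item:u_P_in_V_U_or_neighborhood} of \cref{lemma:interaction_of_snug_shores_with_cover} holds (with $F=\tilde{L}$), I would then record the key fact: for every snug path $P\in\mathcal{P}^G_{\mathrm{chain}}\setminus\mathcal{Q}$ we have $u_P\notin V'_{\mathcal{U}'}\cup\Gamma_{(G+\tilde{L})/\mathcal{P}^G_{\mathrm{chain}}}(V'_{\mathcal{U}'})$; in particular, no vertex of $P$ belongs to $V_{\mathcal{U}^\star}$.

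Next I would unpack the definition of $L^\star$ to see that $L^\star\setminus\tilde{L}$ contains only links of two kinds: (1) links with both endpoints on one snug path $P\in\mathcal{P}^G_{\mathrm{chain}}\setminus\mathcal{Q}$, and (2) links joining two distinct snug paths $P_1,P_2$ with $\{P_1,P_2\}\not\subseteq\mathcal{Q}$. For kind (1) the key fact directly gives that neither endpoint lies in $V_{\mathcal{U}^\star}$. For kind (2), assume without loss of generality $P_1\notin\mathcal{Q}$; the key fact handles the $P_1$-endpoint, so only the $P_2$-endpoint remains. Here I would argue that since a link between $P_1$ and $P_2$ exists, $L^\circ_{\mathrm{snug}}\subseteq\tilde{L}$ contains such a link $\ell'$, whose image in $(G+\tilde{L})/\mathcal{P}^G_{\mathrm{chain}}$ is an edge joining $u_{P_1}$ and $u_{P_2}$ (a non-loop, as $P_1\neq P_2$); hence $u_{P_2}\in\Gamma_{(G+\tilde{L})/\mathcal{P}^G_{\mathrm{chain}}}(u_{P_1})$, and if $u_{P_2}$ were in $V'_{\mathcal{U}'}$ this would force $u_{P_1}\in\Gamma_{(G+\tilde{L})/\mathcal{P}^G_{\mathrm{chain}}}(V'_{\mathcal{U}'})$, contradicting the key fact for $P_1$. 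So $u_{P_2}\notin V'_{\mathcal{U}'}$, and the $P_2$-endpoint, too, lies outside $V_{\mathcal{U}^\star}$.

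The only point I expect to require care is kind (2): a priori $P_2$ could lie in $\mathcal{Q}$, so $u_{P_2}$ could be in $V'_{\mathcal{U}'}$, and excluding this is exactly where one must exploit the representative link $\ell'\in L^\circ_{\mathrm{snug}}$ and the neighborhood clause $\Gamma(V'_{\mathcal{U}'})$ in condition \ref{item:u_P_in_V_U_or_neighborhood} of \cref{lemma:interaction_of_snug_shores_with_cover}; everything else is bookkeeping about preimages under the contraction map. Combining the two cases yields that $L^\star\setminus\tilde{L}$ avoids $V_{\mathcal{U}^\star}$, hence $L^\star_{\mathcal{U}^\star}=\tilde{L}_{\mathcal{U}^\star}$.
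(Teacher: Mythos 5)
Your proof is correct and takes essentially the same approach as the paper: in both cases one argues from $P_1 \notin \mathcal{Q}_1$ that $u_{P_1}$ lies outside $V'_{\mathcal{U}'} \cup \Gamma_{(G+\tilde{L})/\mathcal{P}^G_{\rm{chain}}}(V'_{\mathcal{U}'})$, and uses this to show no vertex of $P_1$ or $P_2$ lies in $V_{\mathcal{U}^\star}$. You spell out one step the paper leaves more implicit — that a link in $L^\star$ joining $P_1$ and $P_2$ forces a parallel link $\ell' \in L^\circ_{\rm{snug}} \subseteq \tilde{L}$ between them, so $u_{P_1}$ and $u_{P_2}$ are adjacent in $(G+\tilde{L})/\mathcal{P}^G_{\rm{chain}}$, which is what rules out $u_{P_2} \in V'_{\mathcal{U}'}$ — but this is exactly the reasoning the paper relies on.
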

	\begin{proof}[Proof of claim]
	For a snug path $P\in\mathcal{P}^G_{\rm{chain}}\setminus \mathcal{Q}$, we have $V(P)\cap V_{\mathcal{U}^\star}=\emptyset$, so no link with both endpoints in $P$ is contained in $L^\star_{\mathcal{U}^\star}$. Next, let $P_1$ and $P_2$ be two snug paths with $\{P_1,P_2\}\not\subseteq \mathcal{Q}$; assume w.l.o.g.\ $P_1\notin \mathcal{Q}$. Then in particular, $P_1\notin\mathcal{Q}_1$, so no vertex of $P_1$ is contained in or adjacent to $V_{\mathcal{U}^\star}$. Hence, no link connecting $P_1$ and $P_2$ is contained in $L^\star_{\mathcal{U}^\star}$.
	\end{proof}
	\begin{claim}
	$(G+L^\star)/\mathcal{Q}$ is $(k+1)$-edge-connected.
	\end{claim}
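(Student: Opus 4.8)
The plan is to show that every $k$-cut of $G/\mathcal{Q}$ is crossed by a link of $L^\star/\mathcal{Q}$. Since contracting edges cannot decrease cut sizes, $G/\mathcal{Q}$ is $k$-edge-connected, so every cut of $G/\mathcal{Q}$ that is not a $k$-cut already has at least $k+1$ edges; hence covering all $k$-cuts by a link suffices for $(k+1)$-edge-connectivity of $(G+L^\star)/\mathcal{Q}$. The $k$-cuts of $G/\mathcal{Q}$ correspond exactly to the $k$-cuts $S$ of $G$ that are not crossed by any path in $\mathcal{Q}$, and any link of $L^\star$ crossing such an $S$ cannot have become a loop in $G/\mathcal{Q}$ (its endpoints lie on opposite sides of $S$, hence not on a common path of $\mathcal{Q}$). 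So the goal reduces to: \emph{for every $k$-cut $S$ of $G$ not crossed by any path in $\mathcal{Q}$, some link of $L^\star$ crosses $S$}.

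First I would handle the case where $S$ is crossed by no snug path at all. Then $S$ descends to a $k$-cut of $G/\mathcal{P}^G_{\rm{chain}}$; since $(G+\tilde L)/\mathcal{P}^G_{\rm{chain}}$ is $(k+1)$-edge-connected — which follows from \cref{prop:L_snug_edge_connectivity} applied to $(G,\overline L,c)$, using $\overline L_{\rm{snug}}=L_{\rm{snug}}$, $\overline L^\circ_{\rm{snug}}=L^\circ_{\rm{snug}}$, and that $G+\overline L$ is $(k+1)$-edge-connected — some link of $\tilde L\subseteq L^\star$ crosses $S$.

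The remaining case is that $S$ is crossed by some snug path $P$, which forces $P\notin\mathcal{Q}$. By \cref{lemma:only_cut_intersecting_snug_edge}, up to replacing $S$ by $V\setminus S$, the cut $S$ is one of the arc cuts of $P$ (an inner snug shore of $P$); combining \cref{lemma:only_cut_intersecting_snug_edge,lemma:shore_for_at_most_one_path} with the fact that no snug shore contains $r$, $P$ is moreover the \emph{only} snug path crossing $S$. Pick a link $\ell=\{u,v\}\in\overline L$ crossing $S$, which exists since $G+\overline L$ is $(k+1)$-edge-connected. If at most one of $u,v$ lies on a snug path, then $\ell\notin L_{\rm{snug}}$, so $\ell\in\overline L\setminus L_{\rm{snug}}\subseteq\tilde L\subseteq L^\star$. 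Otherwise $u$ lies on a snug path $Q_1$ and $v$ on a snug path $Q_2$. If $Q_1=Q_2$, this common path crosses $S$ and so equals $P$; then $\ell$ is an internal link of $P\notin\mathcal{Q}$, hence $\ell\in L^\star$. If $Q_1\ne Q_2$ and $P\in\{Q_1,Q_2\}$, then $\{Q_1,Q_2\}\not\subseteq\mathcal{Q}$, so $\ell\in L^\star$ by construction. Finally, if $Q_1\ne Q_2$ and $P\notin\{Q_1,Q_2\}$, then neither $Q_1$ nor $Q_2$ crosses $S$, so $V(Q_1)$ and $V(Q_2)$ lie entirely on opposite sides of $S$; the link $\ell'\in L^\circ_{\rm{snug}}$ joining $Q_1$ and $Q_2$ (which exists by \cref{def:L_snug}) therefore also crosses $S$, and $\ell'\in L^\circ_{\rm{snug}}\subseteq\tilde L\subseteq L^\star$.

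I expect the crux to be the last subcase, where $\ell$ joins two distinct snug paths that both lie in $\mathcal{Q}$: then $\ell$ itself need not be in $L^\star$, and one must argue that its $L^\circ_{\rm{snug}}$-representative $\ell'$ still crosses the specific cut $S$. This works precisely because $S$ is an arc cut of the third path $P$ and is therefore crossed by neither $Q_1$ nor $Q_2$, so both representatives between $Q_1$ and $Q_2$ separate the two paths in the same way. The remaining work is bookkeeping: confirming that the $k$-cuts of $G/\mathcal{Q}$ are exactly the claimed ones, and that in each case the produced link genuinely survives the contraction of $\mathcal{Q}$.
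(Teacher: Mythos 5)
Your argument is correct and complete, but it takes a more hands-on route than the paper. The paper's proof is a three-line observation: $(G+L^\star)/\mathcal{Q}$ is obtained from $(G+\overline L)/\mathcal{Q}$ by deleting only links that become parallel to links in $L^\circ_{\rm{snug}}\subseteq L^\star$ (or become loops), because $\overline L\setminus L^\star\subseteq L_{\rm{snug}}\setminus L^\circ_{\rm{snug}}$ and the construction of $L^\star$ only omits links whose endpoints lie on one or two paths of $\mathcal{Q}$; combined with $G/\mathcal{Q}$ being $k$-edge-connected and $(G+\overline L)/\mathcal{Q}$ being $(k+1)$-edge-connected, removing parallel copies of links cannot uncover a $k$-cut. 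Your proof reaches the same conclusion by going through each $k$-cut of $G/\mathcal{Q}$, classifying it by which snug path (if any) crosses it, and exhibiting a link in $L^\star$ covering it. This is a correct unrolling of both the paper's parallel-copy observation and of \cref{prop:L_snug_edge_connectivity}. The key subcase you identify --- a cut that is an inner snug shore of some $P\notin\mathcal{Q}$, covered in $\overline L$ only by a link joining two other snug paths $Q_1,Q_2\in\mathcal{Q}$ --- is indeed where the $L^\circ_{\rm{snug}}$-representative must be shown to cross the same cut; your use of the uniqueness of the crossing snug path (\cref{lemma:only_cut_intersecting_snug_edge,lemma:shore_for_at_most_one_path}, plus $r\notin$ any snug shore) to argue that $Q_1$ and $Q_2$ each lie entirely on one side of $S$ is exactly right. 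The trade-off: your argument is self-contained and transparent about why coverage survives, while the paper's argument is shorter and more modular, as the parallel-copy relation is reused later (e.g., in the snug-treewidth claim and in the feasibility of the $F_i$).
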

	\begin{proof}[Proof of claim]
	Our choice of $L^\star$, $L^\circ_{\rm{snug}}\subseteq\tilde{L}\subseteq L^\star$ and $\overline{L}\setminus \tilde{L}=L_{\rm{snug}}\setminus L^\circ_{\rm{snug}}$ imply that $(G+L^\star)/\mathcal{Q}$ arises from $(G+\overline{L})/\mathcal{Q}$ by removing parallel copies of links (recall that edge contractions, for us, automatically delete loops). As $G$ is $k$-edge-connected and $G+\overline{L}$ is $(k+1)$-edge-connected, the claim follows.
	 \end{proof}
		For the remainder of the proof, we introduce some notation. As in the statement of the theorem, we let \begin{align*}(G_\mathcal{Q}=(V_\mathcal{Q},E_{\mathcal{Q}}),L^\star_\mathcal{Q},c_\mathcal{Q})&\coloneqq (G,L^\star,c)/\mathcal{Q}\text{ and }\\
		(G^i_\mathcal{Q},L^i_\mathcal{Q},c^i_\mathcal{Q})&\coloneqq (G_\mathcal{Q},L^\star_\mathcal{Q},c_\mathcal{Q})/(E_\mathcal{Q}\dot{\cup}L^\star_\mathcal{Q})[V_\mathcal{Q}\setminus U_i].\end{align*} In addition, we define \[(G_\mathcal{P}=(V_\mathcal{P},E_{\mathcal{P}}),\tilde{L}_\mathcal{P},c_\mathcal{P})\coloneqq (G,\tilde{L},c)/\mathcal{P}^G_{\rm{chain}}.\]
	\begin{claim}
	For every $i\in I$, $(G^i_\mathcal{Q},L^i_\mathcal{Q},c^i_\mathcal{Q})$ has snug-treewidth $\mathcal{O}(\frac{k\Delta^2}{\epsilon^2})$, i.e., \cref{theorem:PTAS_key_theorem}~\ref{item:bounded_snug_treewidth_pieces} holds.
	\end{claim}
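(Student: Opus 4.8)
The plan is to relate $(G^i_{\mathcal{Q}}+L^i_{\mathcal{Q}})/\mathcal{P}^{G^i_{\mathcal{Q}}}_{\rm{chain}}$ to the bounded‑treewidth piece that \cref{theorem:compute_k_vertex_safe_cover_with_vertex_weights} produces, observing that contracting $\mathcal{P}^{G^i_{\mathcal{Q}}}_{\rm{chain}}$ essentially undoes the uncontractions we performed. Write $H_i\coloneqq\big((G+\tilde{L})/\mathcal{P}^G_{\rm{chain}}\big)/(E_{\mathcal{P}}\dot{\cup}\tilde{L}_{\mathcal{P}})[V_{\mathcal{P}}\setminus U'_i]$ for the $i$‑th piece obtained from our application of \cref{theorem:compute_k_vertex_safe_cover_with_vertex_weights} (with connectivity parameter $k+1$); that theorem guarantees $\mathrm{treewidth}(H_i)\le\frac{26(k+1)}{\delta}$, which is $\mathcal{O}\!\big(\frac{k\Delta^2}{\epsilon^2}\big)$ by our choice $\delta=\frac{\epsilon^2}{6\cdot 345\cdot\Delta^2}$, matching the claimed bound. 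Since treewidth is minor‑monotone and insensitive to parallel edges, it suffices to show that $(G^i_{\mathcal{Q}}+L^i_{\mathcal{Q}})/\mathcal{P}^{G^i_{\mathcal{Q}}}_{\rm{chain}}$ is a minor of a graph obtained from $H_i$ by adding parallel edges.

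First I would use \cref{lemma:augmentation_safe_cover_respecting_snug_paths}~\ref{item:snug_paths_survive} (applied to $(G,\overline{L},c)$ with $L'=L^\star$, so that the ``$L\setminus L_{\rm{snug}}\cup L^\circ_{\rm{snug}}$'' of that lemma is exactly $\tilde{L}=\overline{L}\setminus L_{\rm{snug}}\cup L^\circ_{\rm{snug}}$): every $P\in\mathcal{P}^G_{\rm{chain}}\setminus\mathcal{Q}$ with $V(P)\subseteq U^\star_i$ is a subpath of a snug path of $G^i_{\mathcal{Q}}$, and, by the uniqueness of the index in that item, the remaining $P\in\mathcal{P}^G_{\rm{chain}}\setminus\mathcal{Q}$ satisfy $V(P)\cap U^\star_i=\emptyset$ and hence sit inside a connected component that is contracted in forming $G^i_{\mathcal{Q}}$ from $G/\mathcal{Q}$; all $P\in\mathcal{Q}$ are contracted as well, directly or (if lying outside $U^\star_i$) inside such a component. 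Hence forming $(G^i_{\mathcal{Q}}+L^i_{\mathcal{Q}})/\mathcal{P}^{G^i_{\mathcal{Q}}}_{\rm{chain}}$ contracts, in particular, the images of all of $\mathcal{P}^G_{\rm{chain}}$ together with all of $V\setminus U^\star_i$. Reordering the contractions — which is harmless — we may first contract all of $\mathcal{P}^G_{\rm{chain}}$ in $G+L^\star$ and then contract the image of $V\setminus U^\star_i$, which equals $V_{\mathcal{P}}\setminus U'_i$; this produces $\widetilde{H}_i\coloneqq\big((G+L^\star)/\mathcal{P}^G_{\rm{chain}}\big)/(E_{\mathcal{P}}\dot{\cup}L^\star_{\mathcal{P}})[V_{\mathcal{P}}\setminus U'_i]$, and the still‑outstanding contractions act on connected sets of $\widetilde{H}_i$, so $(G^i_{\mathcal{Q}}+L^i_{\mathcal{Q}})/\mathcal{P}^{G^i_{\mathcal{Q}}}_{\rm{chain}}$ is a minor of $\widetilde{H}_i$.

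It then remains to compare $\widetilde{H}_i$ with $H_i$: they are built identically except that $\widetilde{H}_i$ starts from the larger link set $L^\star$ rather than $\tilde{L}$. The links in $L^\star\setminus\tilde{L}$ are either (a) links with both endpoints on a single $P\in\mathcal{P}^G_{\rm{chain}}\setminus\mathcal{Q}$, which become loops and are deleted once $\mathcal{P}^G_{\rm{chain}}$ is contracted, or (b) links joining two distinct snug paths $P_1,P_2$ with $\{P_1,P_2\}\not\subseteq\mathcal{Q}$; for the latter, \cref{def:L_snug} ensures $L^\circ_{\rm{snug}}\subseteq\tilde{L}$ already contains a link joining $P_1$ and $P_2$, so after contracting all snug paths the edge produced by the $L^\star\setminus\tilde{L}$ link is parallel to one already present. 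Thus $(G+L^\star)/\mathcal{P}^G_{\rm{chain}}$ and $(G+\tilde{L})/\mathcal{P}^G_{\rm{chain}}$ have the same underlying simple graph, in particular the same connected components of the subgraph induced on $V_{\mathcal{P}}\setminus U'_i$, so the outside‑contraction merges the same vertex sets; hence $\widetilde{H}_i$ is $H_i$ with parallel edges added and $\mathrm{treewidth}(\widetilde{H}_i)=\mathrm{treewidth}(H_i)$. Combining the steps, $\mathrm{snugtw}(G^i_{\mathcal{Q}},L^i_{\mathcal{Q}})=\mathrm{treewidth}\big((G^i_{\mathcal{Q}}+L^i_{\mathcal{Q}})/\mathcal{P}^{G^i_{\mathcal{Q}}}_{\rm{chain}}\big)\le\mathrm{treewidth}(H_i)\le\frac{26(k+1)}{\delta}=\mathcal{O}\!\big(\frac{k\Delta^2}{\epsilon^2}\big)$.

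I expect the bookkeeping around the contractions to be the main obstacle: one must justify carefully that the four nested contraction operations — those defining $G^i_{\mathcal{Q}}$, the contraction of $\mathcal{Q}$, the contraction of the outside components, and the contraction of $\mathcal{P}^{G^i_{\mathcal{Q}}}_{\rm{chain}}$ — can be reshuffled, that the reshuffling lands precisely on $\widetilde{H}_i$, and that the extra links of $L^\star\setminus\tilde{L}$ create no new adjacency after snug paths are contracted; both points hinge on \cref{lemma:augmentation_safe_cover_respecting_snug_paths}~\ref{item:snug_paths_survive} and on the definitions of $L^\circ_{\rm{snug}}$ and $L^\star$.
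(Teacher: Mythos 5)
Your proof is correct and follows essentially the same route as the paper: you identify the key piece $H_i$ (what the paper calls $\widetilde{G}^i_\mathcal{P}$) produced by \cref{theorem:compute_k_vertex_safe_cover_with_vertex_weights}, and show that contracting all snug paths of $G^i_\mathcal{Q}$ in $G^i_\mathcal{Q}+L^i_\mathcal{Q}$ yields a graph that differs from $H_i$ only by further contractions and parallel links, using \cref{lemma:augmentation_safe_cover_respecting_snug_paths}~\ref{item:snug_paths_survive}, \cref{lemma:chain_graph_subgraph}, and the definition of $L^\circ_{\rm{snug}}$ exactly as the paper does. Your intermediate graph $\widetilde{H}_i$ (built from $L^\star$ rather than $\tilde L$) makes the bookkeeping slightly more explicit but encodes the same relationship the paper establishes directly.
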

	\begin{proof}[Proof of claim]
	By \cref{theorem:compute_k_vertex_safe_cover_with_vertex_weights}~\ref{k-vertex_safe_stronger_tree_width}, \[\widetilde{G}^i_\mathcal{P}\coloneqq (G_\mathcal{P}+\tilde{L}_\mathcal{P})/(E_\mathcal{P}\dot{\cup}\tilde{L}_\mathcal{P})[V_\mathcal{P}\setminus U'_i]\] has treewidth at most $\frac{26k}{\delta}=\frac{53820k\Delta^2}{\epsilon^2}$. We will show that the graph $H^i_{\mathcal{Q}}$ that we obtain from $G^i_\mathcal{Q}+L^i_{\mathcal{Q}}$ by contracting all snug paths of $G^i_{\mathcal{Q}}$ arises from $\widetilde{G}^i_\mathcal{P}$ by contracting edges and links and adding parallel copies of links (again, recall that for us, edge contractions delete arising loops). As none of these operations increases the treewidth, this will imply the desired bound on the snug-treewidth of $(G^i_\mathcal{Q},L^i_\mathcal{Q},c^i_\mathcal{Q})$.

	$\widetilde{G}^i_\mathcal{P}$ arises from $G+\tilde{L}$ by
	\begin{itemize}
		\item contracting all snug paths, and
		\item contracting all edges and links in $(E\dot{\cup}\tilde{L})[V\setminus U^\star_i]$.
	\end{itemize} 
	Moreover, $H^i_{\mathcal{Q}}$ arises from $G+L^\star$ by 
	\begin{itemize}
		\item contracting all snug paths in $\mathcal{Q}$,
		\item contracting all edges and links in $(E\dot{\cup}L^\star)[V\setminus U^\star_i]$, which includes the edges of all snug paths $P\in \mathcal{P}^G_{\rm{chain}}\setminus \mathcal{Q}$ with $V(P)\subseteq V\setminus U^\star_i$ by \cref{lemma:chain_graph_subgraph} and the links in $\tilde{L}[V\setminus U^\star_i]$ because $\tilde{L}\subseteq L^\star$, and
		\item contracting all snug paths of $G^i_\mathcal{Q}$.
	\end{itemize}
	As $\mathcal{U}^\star$ arises by lifting the cover $\mathcal{U}'$ of $(G_{\mathcal{P}},\tilde{L}_\mathcal{P})$, every path $P\in \mathcal{P}^G_{\rm{chain}}$ satisfies $V(P)\subseteq U^\star_i$ or $V(P)\subseteq V\setminus U^\star_i$. By \cref{lemma:augmentation_safe_cover_respecting_snug_paths}, every $P\in \mathcal{P}^G_{\rm{chain}}\setminus \mathcal{Q}$ with $V(P)\subseteq U^\star_i$ corresponds to a subpath of a snug path in $G^i_\mathcal{Q}$. Hence, we obtain $H^i_{\mathcal{Q}}$ from $\tilde{G}^i_{\mathcal{P}}$ by 
	\begin{itemize}
		\item contracting edges and links and
		\item adding the (images under the contractions of the) links in $L^\star\setminus \tilde{L}$.
	\end{itemize}
	We have $L^\star\setminus \tilde{L}\subseteq L_{\rm{snug}}$ and $L^\circ_{\rm{snug}}\subseteq \tilde{L}$. Moreover, upon the contraction of all paths in $\mathcal{P}^G_{\rm{chain}}$, every link in $L_{\rm{snug}}$ either becomes a loop and is deleted, or it becomes parallel to a link in $L^\circ_{\rm{snug}}$. Hence, $H^i_{\mathcal{Q}}$ arises from $\widetilde{G}^i_{\mathcal{P}}$ by contracting edges and links and adding parallel copies of links as claimed.
	 \end{proof}
	It remains to prove \cref{theorem:PTAS_key_theorem}~\ref{item:combine_to_good_solution}. By \cref{lemma:solution_contract_snug_chains,lemma:glue_solutions_together_augmentation} and since $L^\star_{\mathcal{U}^\star}=\tilde{L}_{\mathcal{U}^\star}\subseteq L^{\star\star}$, we know that the solution $F$ we construct in \ref{item:combine_to_good_solution} is feasible. To prove the cost bound, we construct link sets $(F_i)_{i\in I}$ with $F_i\subseteq L^\star$ such that $F_i$ corresponds to a feasible solution for $(G^i_\mathcal{Q},L^i_\mathcal{Q},c^i_\mathcal{Q})$ for every $i\in I$, and such that 
	\[\sum_{i\in I} c(F_i)+c(L^{\star\star})\le (1+\epsilon)\cdot c(\OPT).\]
	Fix $i\in I$. We construct $F_i$ by using the optimum solution $\overline{\OPT}$ to $(G,\overline{L},c)$ as follows.
	\begin{itemize}
		\item Add all links in $\overline{\OPT}\setminus L_{\rm{snug}}$ with one endpoint in $U^\star_i$ to $F_i$.
		\item Add all links in $\overline{\OPT}\cap L_{\rm{snug}}$ that are incident to a vertex in $U^\star_i$ and that are incident to a snug path $P\in\mathcal{P}^G_{\rm{chain}}\setminus\mathcal{Q}$.
		\item For every link $\ell\in\overline{\OPT}\cap L_{\rm{snug}}$ that connects two distinct snug paths $Q_1,Q_2\in\mathcal{Q}$ with $\{u_{Q_1},u_{Q_2}\}\cap U_i\neq \emptyset$, add the link $\ell'\in L^\circ_{\rm{snug}}$ connecting $Q_1$ and $Q_2$ to $F_i$. Here $u_{Q_1}$ and $u_{Q_2}$ denote the contracted vertices corresponding to $Q_1$ and $Q_2$.
	\end{itemize}
	\begin{claim}
	$(G_\mathcal{Q}+F_i/\mathcal{Q})/(E_\mathcal{Q}\dot{\cup} L^\star_\mathcal{Q})[V_\mathcal{Q}\setminus U_i]$  is $(k+1)$-edge-connected, i.e., $F_i$ corresponds to a feasible solution to $(G^i_\mathcal{Q},L^i_\mathcal{Q},c^i_\mathcal{Q})$.
	\end{claim}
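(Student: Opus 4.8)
The plan is as follows. Since $G$ is $k$-edge-connected and both edge and link contractions only identify vertices, the graph $G_\mathcal{Q}=G/\mathcal{Q}$ is $k$-edge-connected, and hence so is its quotient $G^i_\mathcal{Q}=G_\mathcal{Q}/(E_\mathcal{Q}\dot{\cup}L^\star_\mathcal{Q})[V_\mathcal{Q}\setminus U_i]$; therefore it suffices to show that $\delta_{F_i/\mathcal{Q}}(S)\neq\emptyset$ for every $k$-cut $S$ of $G^i_\mathcal{Q}$. Fix such an $S$ and pull it back through the two contractions to a vertex set $\bar S\subseteq V(G)$. Because none of the contracted edges or links crosses the preimage of $S$, the set $\bar S$ is a $k$-cut of $G$ that is not crossed by any path in $\mathcal{Q}$, satisfies $\delta_E(\bar S)\subseteq E[U^\star_i]$, and crosses no link of $L^\star$ that has both endpoints in $V\setminus U^\star_i$; moreover every link of $L^\star$ crossing $\bar S$ maps to a link of $L^i_\mathcal{Q}$ crossing $S$ (it cannot become a loop, since $\bar S$ separates its endpoints and respects all pieces of $\mathcal{U}^\star$ as well as all paths in $\mathcal{Q}$). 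As $G+\overline{\OPT}$ is $(k+1)$-edge-connected and $\bar S$ is a $k$-cut of $G$, there is a link $\ell\in\overline{\OPT}$ crossing $\bar S$, and it remains to exhibit a link of $F_i$ crossing $\bar S$.

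From $\ell$ I would produce a link $\ell'\in L^\star$ that also crosses $\bar S$: if $\ell\notin L_{\rm{snug}}$, take $\ell'=\ell$ (then $\ell\in\overline{L}\setminus L_{\rm{snug}}\subseteq\tilde L\subseteq L^\star$); if both endpoints of $\ell$ lie on a common snug path $P$, take $\ell'=\ell$, which lies in $L^\star$ by construction because $\bar S$ is not crossed by any path in $\mathcal{Q}$, so $P\notin\mathcal{Q}$; if $\ell$ connects two distinct snug paths $Q_1,Q_2$, take $\ell'=\ell$ when $\{Q_1,Q_2\}\not\subseteq\mathcal{Q}$, and otherwise take $\ell'$ to be the $L^\circ_{\rm{snug}}$-representative of the pair $(Q_1,Q_2)$, which by \cref{lemma:only_cut_intersecting_snug_edge} (and since $\bar S$ is crossed by neither $Q_1$ nor $Q_2$) crosses exactly the $k$-cuts that $\ell$ crosses among those not crossed by $Q_1$ or $Q_2$, hence crosses $\bar S$. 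Now $\ell'\in L^\star$ crosses $\bar S$, so $\ell'$ cannot have both endpoints in $V\setminus U^\star_i$; thus $\ell'$ has an endpoint in $U^\star_i$, and I would check that one of the three items in the construction of $F_i$ places $\ell'$ there: if $\ell'$ is non-snug it equals $\ell\in\overline{\OPT}$ with an endpoint in $U^\star_i$ (first item); if $\ell'$ is incident to a snug path $P\notin\mathcal{Q}$ then, since $\mathcal{U}^\star$ is a lift and hence each snug path lies inside or outside each $U^\star_i$, we get $V(P)\subseteq U^\star_i$, and $\ell'=\ell\in\overline{\OPT}$ is incident to $U^\star_i$ and to $P\notin\mathcal{Q}$ (second item); and if both snug paths carrying $\ell'$ lie in $\mathcal{Q}$, the endpoint of $\ell'$ in $U^\star_i$ lies on one of them, say $Q_1$, so $V(Q_1)\subseteq U^\star_i$, hence $u_{Q_1}\in U_i$, and $\ell$ connects $Q_1,Q_2\in\mathcal{Q}$ with $\{u_{Q_1},u_{Q_2}\}\cap U_i\neq\emptyset$ (third item). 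In every case $\delta_{F_i}(\bar S)\neq\emptyset$, hence $\delta_{F_i/\mathcal{Q}}(S)\neq\emptyset$.

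The main obstacle I expect is the pull-back bookkeeping in the first paragraph, namely verifying that a $k$-cut of $G^i_\mathcal{Q}$ lifts to a $k$-cut $\bar S$ of $G$ with exactly the three stated properties, and then exploiting the last of them to rule out the only genuinely dangerous configuration: that \emph{every} link of $\overline{\OPT}$ crossing $\bar S$ lies entirely outside $U^\star_i$ (and so is not caught by any of the three rules). The point is that each such potentially-problematic link, or its $L^\circ_{\rm{snug}}$-substitute, belongs to $L^\star$ — the non-snug ones through $\tilde L$, the within-a-path ones through the second construction rule of $L^\star$, and the between-two-paths ones through $L^\circ_{\rm{snug}}\subseteq\tilde L$ together with \cref{lemma:only_cut_intersecting_snug_edge} — and therefore, if it lay entirely in $V\setminus U^\star_i$, it would have been contracted when forming $G^i_\mathcal{Q}$, collapsing the two sides of $S$ and contradicting that $S$ is a cut of $G^i_\mathcal{Q}$. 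Once $\bar S$ and the structural facts about snug chains (\cref{lemma:only_cut_intersecting_snug_edge,lemma:link_crosses_outer_shore,lemma:augmentation_safe_cover_respecting_snug_paths}) are in place, the remaining steps — that $\ell'$ still crosses $\bar S$ and that the construction rules for $F_i$ fire — are routine.
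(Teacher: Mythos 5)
Your proof is correct and rests on essentially the same ingredients as the paper's argument — the $(k+1)$-edge-connectivity of $G+\overline{\OPT}$, the observation that contracted $\mathcal{Q}$-paths and contracted edges/links never cross the relevant cut, and the same matching of the crossing $\overline{\OPT}$-links (or their $L^\circ_{\rm{snug}}$-substitutes) against the three construction rules of $F_i$ — the only real difference being presentational: the paper argues globally that the $F_i$-graph arises from the contraction of $G+\overline{\OPT}$ by deleting loops and deleting/exchanging parallel copies of links, while you verify coverage cut by cut via the pullback $\bar S$. Two immaterial slips: the assertion $\delta_E(\bar S)\subseteq E[U^\star_i]$ is too strong (a crossing edge need only have one endpoint in $U^\star_i$, the other may lie outside) but is never used; and in your mixed case the $U^\star_i$-endpoint of $\ell$ need not lie on the non-$\mathcal{Q}$ path $P$, so $V(P)\subseteq U^\star_i$ can fail — however, the second rule defining $F_i$ only requires incidence to some vertex of $U^\star_i$ and to some snug path outside $\mathcal{Q}$, so the conclusion $\ell\in F_i$ still stands.
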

	\begin{proof}[Proof of claim]
		We note that for $i\in I$ and $\ell\in\overline{\OPT}$ without an endpoint in $U^\star_i$, the endpoints of $\ell$ are identified when contracting $\mathcal{Q}$ and $(E_\mathcal{Q}\dot{\cup} L^\star_\mathcal{Q})[V_\mathcal{Q}\setminus U_i]$: As $\ell\in\overline{\OPT}\subseteq \overline{L}$, we either have $\ell\in L^\star$, or $\ell\in L_{\rm{snug}}$. In the latter case, we know that the snug paths $P_1$ and $P_2$ containing the endpoints of $\ell$ satisfy $V(P_1)\cup V(P_2)\subseteq V\setminus U^\star_i$. (Note that we may have $P_1=P_2$.) By \cref{lemma:chain_graph_subgraph}, $P_1$ and $P_2$ are contracted when contracting $\mathcal{Q}$ and $E_{\mathcal{Q}}[V_\mathcal{Q}\setminus U_i]$. This implies that upon contracting $\mathcal{Q}$ and $E_{\mathcal{Q}}[V_\mathcal{Q}\setminus U_i]$, the endpoints of $\ell$ are identified, or $\ell$ becomes parallel to a link in $(L^\circ_{\rm{snug}}/\mathcal{Q})[V_\mathcal{Q}\setminus U_i]\subseteq L^\star_{\mathcal{Q}}[V_\mathcal{Q}\setminus U_i]$.

	Hence, we can conclude that $(G_\mathcal{Q}+F_i/\mathcal{Q})/(E_\mathcal{Q}\dot{\cup} L^\star_\mathcal{Q})[V_\mathcal{Q}\setminus U_i]$ arises from $(G_\mathcal{Q}+\overline{\OPT}/\mathcal{Q})/(E_\mathcal{Q}\dot{\cup} L^\star_\mathcal{Q})[V_\mathcal{Q}\setminus U_i]$ by deleting/exchanging parallel copies of links (recall that loops arising from edge contractions are always deleted). Hence, $(G_\mathcal{Q}+F_i/\mathcal{Q})/(E_\mathcal{Q}\dot{\cup} L^\star_\mathcal{Q})[V_\mathcal{Q}\setminus U_i]$ is $(k+1)$-edge-connected.	
	 \end{proof}

	In order to bound $\sum_{i\in I} c(F_i)$, we need to bound the number of times links in $\overline{\OPT}$ are considered when constructing the solutions $(F_i)_{i\in I}$.
	\begin{claim}
	Let $\ell\in\overline{\OPT}\setminus L_{\rm{snug}}$. Then there are at most four indices $i\in I$ such that $\ell$ has an endpoint in $U^\star_i$. Moreover, if $\ell$ has an endpoint in more than one $U^\star_i$, then $\ell\in\tilde{L}_{\mathcal{U}^\star}$.
	\end{claim}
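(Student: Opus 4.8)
The plan is to reduce the statement about the lifted cover $\mathcal{U}^\star$ to the corresponding ``separation'' properties of the cover $\mathcal{U}'$ in the contracted graph $(G+\tilde{L})/\mathcal{P}^G_{\rm{chain}}$ that are guaranteed by \cref{theorem:compute_k_vertex_safe_cover_with_vertex_weights}, namely \ref{item:inAtMostTwo_stronger} (every vertex lies in at most two of the sets $U'_i$) and \ref{item:noEdgesAcross_stronger} (no edge has one endpoint in $U'_i\setminus U'_j$ and one in $U'_j\setminus U'_i$), both of which apply since $\mathcal{U}'$ was computed by that theorem on exactly this graph.

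First I would record the elementary membership facts. Since $\ell\in\overline{\OPT}\setminus L_{\rm{snug}}$ and $\overline{L}\setminus L_{\rm{snug}}\subseteq \tilde{L}$, we have $\ell\in\tilde{L}$; and since $\ell\notin L_{\rm{snug}}=L\cap\binom{V_{\rm{snug}}}{2}$, at least one endpoint $u$ of $\ell$ is not snug, hence lies on no snug path. Therefore, under the contraction map $\pi\colon V(G)\rightarrow V(G/\mathcal{P}^G_{\rm{chain}})$ we have $\pi(u)=u$, and the image $\ell'\coloneqq\{\pi(u),\pi(v)\}$ of $\ell=\{u,v\}$ is a \emph{genuine edge} of $(G+\tilde{L})/\mathcal{P}^G_{\rm{chain}}$, not a loop (if $v$ lay on a snug path $P$, then $\pi(v)=u_P\neq u=\pi(u)$ since $u\notin V(P)$; otherwise $\pi(v)=v\neq u$). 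Because $\mathcal{U}^\star$ is by definition the lift of $\mathcal{U}'$ to $G$, i.e.\ $U^\star_i=\pi^{-1}(U'_i)$, a vertex $x$ lies in $U^\star_i$ iff $\pi(x)\in U'_i$; in particular $\ell$ has an endpoint in $U^\star_i$ if and only if $\ell'$ has an endpoint in $U'_i$.

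For the first assertion, I would invoke \cref{theorem:compute_k_vertex_safe_cover_with_vertex_weights}~\ref{item:inAtMostTwo_stronger}: each of the two endpoints of $\ell'$ lies in at most two of the sets $U'_i$, so $\ell'$, and hence $\ell$, has an endpoint in at most four sets. For the second assertion, suppose $\ell$ has an endpoint in $U^\star_{i_1}$ and an endpoint in $U^\star_{i_2}$ with $i_1\neq i_2$. If a single endpoint $x\in\{u,v\}$ of $\ell$ lies in both, then $x\in U^\star_{i_1}\cap U^\star_{i_2}\subseteq V_{\mathcal{U}^\star}$, so $\ell$ is incident to $V_{\mathcal{U}^\star}$. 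Otherwise, after relabelling we may assume $u\in U^\star_{i_1}\setminus U^\star_{i_2}$ and $v\in U^\star_{i_2}\setminus U^\star_{i_1}$, i.e.\ $\pi(u)\in U'_{i_1}\setminus U'_{i_2}$ and $\pi(v)\in U'_{i_2}\setminus U'_{i_1}$, contradicting \cref{theorem:compute_k_vertex_safe_cover_with_vertex_weights}~\ref{item:noEdgesAcross_stronger} applied to the edge $\ell'$. Thus $\ell$ is incident to a vertex of $V_{\mathcal{U}^\star}$, and as $\ell\in\tilde{L}$ this gives $\ell\in\tilde{L}_{\mathcal{U}^\star}$.

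The only delicate point is the loop issue in the second paragraph: before appealing to property \ref{item:noEdgesAcross_stronger} for $\ell'$, one must be sure that $\ell$ does not degenerate into a loop upon contracting $\mathcal{P}^G_{\rm{chain}}$, which is precisely where the hypothesis $\ell\notin L_{\rm{snug}}$ (forcing a non-snug endpoint) is used. Everything else is a routine translation between $G$ and $(G+\tilde{L})/\mathcal{P}^G_{\rm{chain}}$ along the lift.
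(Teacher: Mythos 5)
Your proof is correct and takes essentially the same approach as the paper's: both parts follow from \cref{theorem:compute_k_vertex_safe_cover_with_vertex_weights}~\ref{item:inAtMostTwo_stronger} and \ref{item:noEdgesAcross_stronger}, transported from the contracted cover $\mathcal{U}'$ to its lift $\mathcal{U}^\star$. You are somewhat more explicit than the paper about the delicate point---namely that $\ell\notin L_{\rm{snug}}$ forces the image of $\ell$ under the contraction map to be a genuine (non-loop) link, so that \ref{item:noEdgesAcross_stronger} actually applies to it---whereas the paper handles this by tersely asserting that well-separatedness of $\mathcal{U}'$ carries over to $\mathcal{U}^\star$.
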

	\begin{proof}[Proof of claim]
	By \cref{theorem:compute_k_vertex_safe_cover_with_vertex_weights}~\ref{item:inAtMostTwo_stronger}, every vertex of $G/\mathcal{P}^G_{\rm{chain}}$ is contained in at most two distinct $U'_i$s. Hence, each endpoint of $\ell$ is contained in at most two distinct $U^\star_i$s.
	
	By \cref{theorem:compute_k_vertex_safe_cover_with_vertex_weights}~\ref{item:noEdgesAcross_stronger}, $\mathcal{U}'$, and, hence, $\mathcal{U}^\star$, is well-separated. Hence, if $\ell$ has endpoints in at least two distinct $U^\star_i$s, then some endpoint of $\ell$ must be contained in $\bigcup_{i,j\in I: i\ne j} (U^\star_i\cap U^\star_j)$. As $\overline{\OPT}\setminus L_{\rm{snug}}\subseteq \overline{L}\setminus L_{\rm{snug}}=\tilde{L}\setminus L_{\rm{snug}}$, we obtain $\ell\in\tilde{L}_{\mathcal{U}^\star}$.
	\end{proof}
	\begin{claim}
	Let $\ell\in\overline{\OPT}\cap L_{\rm{snug}}$ be incident to $P\in \mathcal{P}^G_{\rm{chain}}\setminus \mathcal{Q}$. Then there is a unique $i\in I$ such that $\ell$ is incident to a vertex in $U^\star_i$.
	\end{claim}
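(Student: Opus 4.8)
The plan is to exploit that $P\notin\mathcal{Q}\supseteq\mathcal{Q}_1$, so that \cref{lemma:interaction_of_snug_shores_with_cover}~\ref{item:u_P_in_V_U_or_neighborhood} fails for $P$ (with $F=\tilde{L}$). Retracing the branch ``\ref{item:u_P_in_V_U_or_neighborhood} does not hold'' in the proof of that lemma, this yields a unique index $i\in I$ with $u_P\in U'_i$ and, via \eqref{eq:edges_u_j_1} and \eqref{eq:edges_u_j_3} (stated there for $F=\tilde{L}$ and the lift $\mathcal{U}^\star$), that every vertex $x$ of $P$ satisfies $\delta_{E\dot{\cup}\tilde{L}}(x)\subseteq (E\dot{\cup}\tilde{L})[U^\star_i]$ and $\delta_{\tilde{L}_{\mathcal{U}^\star}}(x)=\emptyset$. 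Since $\mathcal{U}^\star$ is the lift of $\mathcal{U}'$ and each snug path collapses to a single vertex under $G\to G/\mathcal{P}^G_{\rm{chain}}$, uniqueness of $i$ moreover gives $V(P)\subseteq U^\star_i$ and $V(P)\cap U^\star_{i'}=\emptyset$ for all $i'\neq i$; in particular $V(P)\cap V_{\mathcal{U}^\star}=\emptyset$. So $i$ is the only index whose set contains the endpoint of $\ell$ lying on $P$, and it remains to control the other endpoint.

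Write $\ell=\{u,v\}$ with $u\in V(P)$; since $\ell\in L_{\rm{snug}}$, the vertex $v$ is snug and hence lies on a (unique) snug path $Q$. If $Q=P$, then both endpoints of $\ell$ lie in $V(P)\subseteq U^\star_i$ and in no other $U^\star_{i'}$, and we are done. Otherwise, I would pass to a representative: by \cref{def:L_snug}, since $\ell$ joins $P$ and $Q$, the set $L^\circ_{\rm{snug}}\subseteq\tilde{L}$ contains a link $\ell'=\{u',v'\}$ with $u'\in V(P)$ and $v'\in V(Q)$ (take $\ell'=\ell$ if $\ell\in L^\circ_{\rm{snug}}$). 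Applying $\delta_{E\dot{\cup}\tilde{L}}(u')\subseteq (E\dot{\cup}\tilde{L})[U^\star_i]$ to the link $\ell'\in\tilde{L}$ forces $v'\in U^\star_i$, while $\delta_{\tilde{L}_{\mathcal{U}^\star}}(u')=\emptyset$ forces $\ell'\notin\tilde{L}_{\mathcal{U}^\star}$, hence $v'\notin V_{\mathcal{U}^\star}$; together, $v'$ lies in $U^\star_i$ and in no other set of the cover. Because $U^\star_i$ is the preimage of $U'_i$ and $V(Q)$ maps to a single vertex, $v'\in V(Q)\cap U^\star_i$ propagates to $V(Q)\subseteq U^\star_i$, and $V(Q)\cap U^\star_{i'}=\emptyset$ for $i'\neq i$ (otherwise $V(Q)\subseteq U^\star_i\cap U^\star_{i'}\subseteq V_{\mathcal{U}^\star}$, contradicting $v'\notin V_{\mathcal{U}^\star}$). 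Thus $v\in V(Q)$ lies in $U^\star_i$ only, and combined with $u\in V(P)$ this shows $i$ is the unique index with $\ell$ incident to a vertex of $U^\star_i$.

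The step I expect to be the main obstacle is the reduction to the representative link $\ell'\in L^\circ_{\rm{snug}}$ when $\ell\notin L^\circ_{\rm{snug}}$: one must check that $\ell'$ is genuinely available (it is, precisely because $\ell$ itself certifies that a link joins $P$ and $Q$), that $\ell'\in\tilde{L}$ so that the membership facts \eqref{eq:edges_u_j_1} and \eqref{eq:edges_u_j_3} for vertices of $P$ apply to it, and---most delicately---that the conclusion ``$v'$ sits in a single $U^\star_i$'' transfers from $\ell'$ to $\ell$, which relies on the fact that, for a cover obtained by lifting along the full snug-path contraction, every snug path is entirely contained in one set of the cover. Everything else is direct bookkeeping of membership relations already packaged in the proof of \cref{lemma:interaction_of_snug_shores_with_cover}.
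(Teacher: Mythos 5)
Your proof is correct and takes essentially the same route as the paper's. Both start from the fact that $P\notin\mathcal{Q}\supseteq\mathcal{Q}_1$ forces $u_P$ to lie in a unique $U'_i$ and to be non-adjacent to $V'_{\mathcal{U}'}$, then pass to the representative link $\ell'\in L^\circ_{\rm{snug}}\subseteq\tilde{L}$ joining $P$ and $Q$; the paper closes by invoking well-separatedness of $\mathcal{U}'$ directly on the contracted parallel link, while you instead reuse the intermediate facts \eqref{eq:edges_u_j_1} and \eqref{eq:edges_u_j_3} from inside the proof of \cref{lemma:interaction_of_snug_shores_with_cover} — but those are themselves consequences of well-separatedness plus non-adjacency, so the underlying mechanism is identical, and the transfer from $\ell'$ back to $\ell$ via ``$V(Q)$ collapses to a single vertex of $\mathcal{U}'$'' matches the paper's reasoning.
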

	\begin{proof}
	By definition of $\mathcal{Q}$, the contracted vertex $u_P$ corresponding to $P$ is neither contained in nor adjacent to $\bigcup_{i,j\in I: i\ne j} (U'_i\cap U'_j)$ in $(G+\tilde{L})/\mathcal{P}^G_{\rm{chain}}$. If both endpoints of $\ell$ are contained in $P$, then $\ell$ is only incident to the unique $U^\star_i$ containing all vertices of $P$. Otherwise, as $L^\circ_{\rm{snug}}\subseteq \tilde{L}$, the graph $(G+\tilde{L})/\mathcal{P}^G_{\rm{chain}}$ contains a link parallel to $\ell$. Again, using that $\mathcal{U}'$ is well-separated, this implies that the other endpoint of $\ell$ is only contained in the unique $U^\star_i$ containing all vertices of $P$, and in no other $U^\star_j$.
	 \end{proof}
	\begin{claim}
		Let $\ell\in \overline{\OPT}\cap L_{\rm{snug}}$ connect two snug paths $Q_1,Q_2\in \mathcal{Q}$, and let $\ell'\in L^\circ_{\rm{snug}}$ be the link connecting $Q_1$ and $Q_2$. Then $c(\ell')\le c(\ell)$. Moreover, there are at most $4$ $U_i$s such that $\{u_{Q_1},u_{Q_2}\}\cap U_i\ne \emptyset$, and if there is more than one, then $\ell'\in \tilde{L}_{\mathcal{U}^\star}$.
	\end{claim}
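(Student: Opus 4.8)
The first statement I would get immediately from the definition of $L^\circ_{\rm{snug}}$ in \cref{def:L_snug}: the link $\ell\in\overline{\OPT}\cap L_{\rm{snug}}\subseteq L$ witnesses that there is a link of $L$ with one endpoint on $Q_1$ and the other on $Q_2$, so $L^\circ_{\rm{snug}}$ contains a cheapest such link, which is exactly $\ell'$; hence $c(\ell')\le c(\ell)$, and $\ell'$ has one endpoint in $V(Q_1)$ and one in $V(Q_2)$.

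For the two remaining statements, the plan is to first record how membership in the cover transfers between the three graphs $G$, $G/\mathcal{Q}$, and $G/\mathcal{P}^G_{\rm{chain}}$. Since $\mathcal{Q}\subseteq\mathcal{P}^G_{\rm{chain}}$, the contraction map $\phi\colon V(G)\to V(G/\mathcal{P}^G_{\rm{chain}})$ factors through $V(G/\mathcal{Q})$, and it sends all of $V(Q_j)$ — as well as the contracted vertex $u_{Q_j}$ of $G/\mathcal{Q}$ — to the single vertex $u_{Q_j}$ of $G/\mathcal{P}^G_{\rm{chain}}$. As $U_i$ and $U^\star_i$ are, by definition, the preimages of $U'_i$ under the respective contraction maps, and every snug path lies entirely inside or entirely outside each $U^\star_i$, this yields, for each $i\in I$ and $j\in\{1,2\}$,
\[
u_{Q_j}\in U_i\iff u_{Q_j}\in U'_i\iff V(Q_j)\subseteq U^\star_i,\qquad u_{Q_j}\notin U_i\iff V(Q_j)\cap U^\star_i=\emptyset.
\]
Then \cref{theorem:compute_k_vertex_safe_cover_with_vertex_weights}~\ref{item:inAtMostTwo_stronger} bounds by two the number of $U'_i$ containing any fixed vertex of $G/\mathcal{P}^G_{\rm{chain}}$, so each of $u_{Q_1},u_{Q_2}$ lies in at most two of the $U_i$, and at most four $U_i$ meet $\{u_{Q_1},u_{Q_2}\}$, giving the second statement.

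For the third statement I would suppose two distinct indices $i\ne j$ both meet $\{u_{Q_1},u_{Q_2}\}$ and split into two cases. If some $u_{Q_\ell}$ with $\ell\in\{1,2\}$ lies in two distinct sets $U_a,U_b$ of the cover, then by the displayed equivalences $V(Q_\ell)\subseteq U^\star_a\cap U^\star_b\subseteq V_{\mathcal{U}^\star}$; since $\ell'$ is incident to $V(Q_\ell)$ and $\ell'\in L^\circ_{\rm{snug}}\subseteq\tilde{L}$, this places $\ell'$ in $\tilde{L}_{\mathcal{U}^\star}$, as required. Otherwise each of $u_{Q_1},u_{Q_2}$ lies in exactly one set, and since both $i$ and $j$ meet $\{u_{Q_1},u_{Q_2}\}$, after relabeling we get $u_{Q_1}\in U_i\setminus U_j$ and $u_{Q_2}\in U_j\setminus U_i$; the equivalences then give $V(Q_1)\subseteq U^\star_i\setminus U^\star_j$ and $V(Q_2)\subseteq U^\star_j\setminus U^\star_i$, so $\ell'$ would be a link of $G$ with one endpoint in $U^\star_i\setminus U^\star_j$ and one in $U^\star_j\setminus U^\star_i$, which is impossible since $\mathcal{U}^\star$ is well-separated (by \cref{theorem:compute_k_vertex_safe_cover_with_vertex_weights}~\ref{item:noEdgesAcross_stronger}, lifting preserving well-separatedness). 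Hence the first case applies and $\ell'\in\tilde{L}_{\mathcal{U}^\star}$.

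No step here is genuinely hard: the argument is a routine combination of the bounded-overlap and well-separatedness guarantees of \cref{theorem:compute_k_vertex_safe_cover_with_vertex_weights}, the definition of $L^\circ_{\rm{snug}}$, and the observation that each snug path lies wholly inside or wholly outside every lifted set $U^\star_i$. The only point I would be careful about is keeping the three levels of contraction $G\to G/\mathcal{Q}\to G/\mathcal{P}^G_{\rm{chain}}$ straight when reading off the equivalences above, so that the well-separatedness of $\mathcal{U}^\star$ can legitimately be applied to the two endpoints of $\ell'$.
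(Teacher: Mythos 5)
Your proof is correct and matches the paper's (very terse) argument step for step: the first statement from the definition of $L^\circ_{\rm{snug}}$, the second from \cref{theorem:compute_k_vertex_safe_cover_with_vertex_weights}~\ref{item:inAtMostTwo_stronger}, and the third from well-separatedness of $\mathcal{U}'$ (equivalently of its lift $\mathcal{U}^\star$). The only difference is that you spell out the lifting equivalences and the two-case split that the paper leaves implicit; the underlying argument is the same.
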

	\begin{proof}
	$c(\ell')\le c(\ell)$ follows by definition of $L^\circ_{\rm{snug}}$. The fact that there are at most $4$ $U_i$s such that $\{u_{Q_1},u_{Q_2}\}\cap U_i\ne \emptyset$ follows from \cref{theorem:compute_k_vertex_safe_cover_with_vertex_weights}~\ref{item:inAtMostTwo_stronger}. The last point follows from the fact that $\mathcal{U}'$ and $\mathcal{U}$ are well-separated. 
	 \end{proof}
Hence, we obtain

\begin{align*}
\sum_{i\in I} c(F_i) +c(L^{\star\star})&\le c(\overline{\OPT})+3\cdot c(\tilde{L}_{\mathcal{U}^\star})+c(L^{\star\star})\stackrel{\eqref{eq:bound_L_double_star}}{\le} c(\overline{\OPT})+4\cdot c(L^{\star\star})\\
&\stackrel{\eqref{eq:bound_non_snug_links_bar_L},\eqref{eq:bound_L_double_star}}{\le} \left(1+\frac{\epsilon}{3}\right)\cdot c(\OPT)+\frac{4\epsilon}{6}\cdot c(\OPT)=(1+\epsilon)\cdot c(\OPT).
\end{align*}
\end{proof}	
\subsubsection{Proof of \cref{theorem:PTAS_WCAP}}
\begin{proof}[Proof of \cref{theorem:PTAS_WCAP}]
Let $\epsilon\in (0,1)$ and $\Delta \ge 1$.	Given an instance $(G,L,c)$ of planar $k$-WCAP with cost ratio at most $\Delta$, we apply \cref{theorem:PTAS_key_theorem} to compute link sets $L^\star$ and $L^{\star\star}$, $\mathcal{Q}\subseteq \mathcal{P}^G_{\rm{chain}}$ and a $k$-augmentation-safe cover $\mathcal{U}=(U_i)_{i\in I}$ for $(G,L^\star)/\mathcal{Q}$ as stated in the theorem. Using \cref{thm:WCAP_bounded_snug_treewidth}, we, for every $i\in I$, compute $\OPT_i\subseteq L^\star$ such that $\OPT_i/\mathcal{Q}$ is an optimum solution to the instance $(G^i_\mathcal{Q},L^i_\mathcal{Q},c^i_\mathcal{Q})$ defined in \cref{theorem:PTAS_key_theorem}. By \cref{theorem:PTAS_key_theorem}, $F\coloneqq \bigcup_{i\in I}\OPT_i\cup L^{\star\star}$ constitutes a $(1+\epsilon)$-approximate solution to the $k$-WCAP in $(G,L,c)$.
 \end{proof}

\subsection{Proof of \cref{thm:WCAP_bounded_snug_treewidth}\label{sec:bounded_snug_treewidth}}

We now provide a dynamic programming algorithm that shows \Cref{thm:WCAP_bounded_snug_treewidth}, which we repeat below for convenience.

\wcapBoundedSnugTreewidth*

Fix an instance $(G=(V,E),L,c)$ of $k$-WCAP (with root $r\in V$) that has snug-treewidth at most $\tau$.
Our dynamic program (DP) works with a tree decomposition of width $\tau$ of the graph $(V,E\cupp L)$ after contracting all snug paths $\mathcal{P}^G_{\rm{chain}}$. 
Such a decomposition exists because $(G,L,c)$ is a $k$-WCAP instance of snug-treewidth at most $\tau$.
Moreover, it is known that such a tree decomposition can be computed in linear time for constant $\tau$~\cite{bodlaenderLinearTimeAlgorithmFinding1996}.
Let $H=(W,E_H)$ be the graph obtained from $G$ after contracting the snug paths and let $L_H$ be the links obtained from $L$ after contracting the snug paths.
Hence, $E_H\subseteq E$ and $L_H\subseteq L$ are all edges and links, respectively, that do not have both endpoints in the same snug path.
To avoid confusion, we will call the vertices in $H$ \emph{nodes}.
A node $w\in W$ that corresponds to a snug path $P\in \mathcal{P}^G_{\rm{chain}}$ in $G$ is simply represented by the vertices of the snug path, i.e., $w=V(P)$.

The tree decomposition of $(W, E_H \cupp L_H)$ of width $\tau$ is a tree $(\mathcal{X},\mathcal{E})$ with the following properties.
The vertices $\mathcal{X} = \{X_i \colon i\in I\}\subseteq 2^W$ of this tree are subsets of nodes and are called \emph{bags}.
They have size at most $\tau+1$, i.e., $|X|\leq \tau+1$ for $X \in \mathcal{X}$, because the tree decomposition has width $\tau$.
Moreover, the following properties hold:
\begin{enumerate}
  \item Each node $w\in W$ is contained in at least one bag $X\in \mathcal{X}$.
  \item For each edge or link $\{w_1,w_2\}\in E_H\cup L_H$, there is at least one bag $X\in \mathcal{X}$ with $w_1\in X$ and $w_2 \in X$.
  \item For each node $w\in W$, the set of bags containing $w$ induces a subtree of $(\mathcal{X},\mathcal{E})$.
\end{enumerate}

Moreover, to simplify the presentation, we will work with what is often called a \emph{nice} tree decomposition, which has some additional properties.
Namely, there is a root bag $X_\rho$ in the tree decomposition, which has degree one in $(\mathcal{X},\mathcal{E})$.
This root bag together with the tree of the tree decomposition naturally defines a parent-child relation on the bags, where the parent of a non-root bag is the unique bag incident with it in $(\mathcal{X},\mathcal{E})$ that is closer to the root in the tree.
Additionally, each bag $X_i$ in the tree decomposition is either:
\begin{enumerate}[label=(\alph*)]
  \item a \emph{leaf bag} $X_i \in \mathcal{X}$, which has no children and contains precisely one node, i.e., $|X_i| = 1$;
  \item an \emph{introduce bag} $X_i \in \mathcal{X}$, which has precisely one child $X_j$ and its bag has one additional node compared to $X_j$, i.e., $X_i = X_j \cup \{w\}$ for some $w\in W\setminus X_j$;
  \item a \emph{forget bag} $X_i \in \mathcal{X}$, which has precisely one child $X_j$ and its bag has one node less than $X_j$, i.e., $X_i = X_j \setminus \{w\}$ for some $w\in X_j$;
  \item a \emph{join bag} $X_i \in \mathcal{X}$, which has two children $X_{i_1}$ and $X_{i_2}$ and $X_i = X_{i_1} = X_{i_2}$.
\end{enumerate}
It is known that any tree decomposition can be transformed in linear time into a nice tree decomposition of the same width and with an increase in size of at most a constant factor~\cite{kloksTreewidthComputationsApproximations1994}.

Moreover, one can assume that the root bag $X_\rho$ contains only the root $r$ of $G$ (with respect to which the snug paths are defined).
Indeed, \textcite{kloksTreewidthComputationsApproximations1994} shows how to transform any tree decomposition into a nice tree decomposition by a recursive procedure.
This procedure leads to a root bag with $\tau+1$ vertices, where $\tau$ is the treewidth.
Additionally, for any single fixed vertex $v\in V$, one can make sure that the root bag $X_\rho$ contains $v$.%
\footnote{This follows because \cite{kloksTreewidthComputationsApproximations1994} uses a step-by-step procedure based on a perfect elimination order of a graph over $V$.
The last $\tau+1$ vertices of this perfect elimination order will be the root bag of the resulting nice tree decomposition.
Because perfect elimination orders can be built greedily, and there are always at least two simplicial vertices to choose from when building them (as long as there are at least $2$ vertices left), one can choose $v$ to be among the $\tau+1$ last vertices.
}
Finally, to make sure that $X_\rho = \{r\}$, one can add a sequence of forget bags on top of the root bag obtained from the above procedure, forgetting all nodes except for $r$, and defining $X_\rho = \{r\}$ to be the new root bag.

For a bag $X_i\in \mathcal{X}$, let $V(X_i)\subseteq V$ be the set of all vertices that are contained in $X_i$, either as their own node, or as part of a snug path in $X_i$.
In other words, $V(X_i)$ is the set of all vertices in $X_i$ after uncontracting the snug paths in $X_i$.

As it is common for dynamic programs on tree decompositions, we will compute key properties for subtrees of the tree decomposition.
To this end, we use the following notation for $i\in I$:
\begin{align*}
  V_i^{\downarrow} &= \left(\bigcup_{j\in I \text{ is a descendant of }i} V(X_j)\right)\setminus V(X_i), \\
  L_i^{\downarrow} &= \left\{\ell\in L \colon \text{at least one endpoint of $\ell$ is in $V_i^{\downarrow}$}\right\},\\
  E_i^{\downarrow} &= \left\{e\in E \colon \text{at least one endpoint of $e$ is in $V_i^{\downarrow}$}\right\}.
\end{align*}

A key challenge in designing a dynamic program for our setting is the special type of nodes that correspond to snug paths.
One issue when contracting a snug path $P$ is that there are certain $k$-cuts in $G$ that do not exist anymore in $H$, namely those that cut through $P$, i.e., they contain some vertices of $P$ but not all of them.
Fortunately, snug paths are highly structured.
Namely, the only such $k$-cuts are the snug shores of $P$ that cross $P$, which we also call the \emph{internal snug shores} of $P$, and their complements (see \Cref{lemma:only_cut_intersecting_snug_edge}).
Moreover, these snug shores are all identical outside the snug path $P$.
More precisely, there is a set $W_P\subseteq W$ of nodes in $H$ so that the snug shores of $P=(u_1,\dots, u_q)$ are precisely the cuts $V(W_P), V(W_P)\cup \{u_1\}, V(W_P)\cup \{u_1,u_2\},\dots, V(W_P)\cup V(P)$.
(For a set of nodes $Y$, the set $V(Y)$ denotes the collection of vertices contained in them.)
To simplify notation, we define $V_P^j \coloneqq V(W_P) \cup \{u_1,\dots,u_j\}$ for $j\in \{0,\dots,q\}$.
In particular, $V_P^0 = V(W_P)$ and $V_P^q = V(W_P)\cup V(P)$.

The $k$-cuts $V(W_P)$ and $V(W_P)\cup V(P)$ are the only ones among the snug shores of $P$ that do not cross the snug path, and are therefore also present in the graph $H$.
Thus, in our dynamic program, we will keep track of which of the other shores, which cross $P$ (recall that we call them internal shores), are already covered by picked links.
Again, the structure of snug chains helps us to make sure that there is little to keep track of.
More precisely, any link with at least one endpoint outside $V(P)$ will cover either a prefix or suffix of the $k$-cuts $V_P^0, V_P^1,\dots, V_P^q$ (this prefix or suffix may also be all of these $k$-cuts).
Thus, there are only quadratically (in $q$) many things to keep track of per snug path in $X_i$.

\medskip

In the following, we formalize the precise information we update through the dynamic program.
To simplify the presentation, we assume that all nodes of $H$ correspond to contracted snug paths, except for the root $r$, which we used to define snug paths. Recall that $r$ is the unique vertex contained in the root bag $X_\rho$.
Nodes that do not correspond to snug paths are easier to handle, and allow for shortcutting steps in the dynamic program compared to when dealing with snug paths.
Also, often one can think of a node $w\in W$ that does not correspond to a snug path as a snug path of size one, i.e., $w=\{v\}$.
This way, we do not have to distinguish between nodes that correspond to snug paths and those that do not when describing the dynamic program, which simplifies the presentation.

When considering a bag $X_i\in \mathcal{X}$, we are interested in how a choice of links $F\subseteq L_i^{\downarrow}$ affects cuts that involve the nodes in $X_i$.
Then, as usual in dynamic programs, we will only keep track of this information, and, among all link sets that have the same impact on cuts involving the nodes in $X_i$, we will pick a link set that is cheapest.

\subsubsection{Relevant DP information for a link set}

To build up intuition and to understand what information we keep track of in our DP, consider a link set $F\subseteq L_i^{\downarrow}$ that we would like to extend to a solution to the $k$-WCAP instance $(G,L,c)$ by adding links from $L\setminus L_i^{\downarrow}$.
We now discuss what information we need from $F$ for extending it to a full solution.
This information reveals the type of DP entries we need to compute.
First, for $F$ to even admit an extension that is a $k$-WCAP solution, it has to cover all $k$-cuts in $G$ that are fully contained in $V_i^{\downarrow}$, because such cuts are not crossed by links in $L\setminus L_i^{\downarrow}$.

Moreover, to keep track of the interaction of $F$ with cuts involving vertices in $V(X_i)$, we want to save, for each node $w\in X_i$, the prefix and suffix of the $k$-cuts $V_w^1, V_w^2,\dots,V_w^{|w|-1}$ that are covered by $F$ (recall that $w\subseteq V$ corresponds to a snug path, or $|w|=1$ and this set of cuts is empty).
To represent such prefix/suffix information, we define, for any $q\in \mathbb{Z}_{\geq 1}$,
\begin{equation*}
  \mathcal{Q}(q) \coloneqq \Big\{\{1,\dots, q_1\}\cup \{q_2,\dots, q-1\} \colon q_1\in \{0,\dots, q-1\}, q_2\in \{1,\dots, q\}\Big\}.
\end{equation*}
Hence, the prefix and suffix of the $k$-cuts $V_w^1, V_w^2,\dots,V_w^{|w|-1}$ that are covered by $F$ can be represented by a set $Q^F_i(w) \in \mathcal{Q}(|w|)$, which contains the indices $j\in \{1,\dots,q-1\}$ of the cuts $V_w^j$ that are covered by $F$.
Thus, the tuple $(Q^F_i(w))_{w\in X_i}$ describes how $F$ interacts with all internal shores of snug paths in $X_i$.
The family of possible such tuples is therefore given by
\begin{equation*}
  \mathcal{Q}_i \coloneqq \prod_{w\in X_i} \mathcal{Q}(|w|).
\end{equation*}

Additionally, we also want to keep track of how $F$ interacts with cuts that involve vertices in $V(X_i)$ and that do not cross through any snug path.
To this end, we want to keep track of the following information for each nonempty $Z\subseteq X_i$:
\begin{equation*}
  \varphi_i^F(Z) \coloneqq \min\Big\{|\delta_{E_i^{\downarrow}}(S)|+|\delta_F(S)| \colon S \subseteq V_i^{\downarrow}\cup V(X_i) \text{ with }S\cap V(X_i) = V(Z) \Big\}.
\end{equation*}
In words, $\varphi_i^F(Z)$ is the smallest number of crossing edges plus links with at least one endpoint in $V_i^{\downarrow}$ of any cut in $V_i^{\downarrow}\cup V(X_i)$ that coincides with $V(Z)$ on $V(X_i)$.
One can think of these as the ``worst (or lightest) cuts'' for the current link set $F$.
We remark that we have to be a bit careful because towards the end of the DP, once $V_i^\downarrow\cup V(X_i)=V$, the vertex set $S$ attaining the minimum for $Z=V(X_i)$ might be $S=V$ and of course, for this set, we will never have any crossing edges or links. However, we point out that this issue can indeed only occur when $V_i^\downarrow\cup V(X_i)=V$ and $Z=V(X_i)$. But in this setting, for any cut $S$ with $V(X_i)\subseteq S$, its complement is fully contained in $V_i^\downarrow$ and has, thus, already been taken care of. As such, once $V_i^\downarrow\cup V(X_i)=V$, we may simply ignore the DP entries for $Z=V(X_i)$.

As we will observe in the following, knowing $(Q_i^F(w))_{w\in X_i}$  and $(\varphi_i^F(Z))_{Z\subseteq X_i: Z\neq \emptyset}$ is sufficient to determine how $F$ can be extended to a solution of the $k$-WCAP instance $(G,L,c)$.

Note that the tuple $(\varphi_i^F(Z))_{Z\subseteq X_i: Z\neq \emptyset}$ is an element of
\begin{equation*}
  \Pi_i \coloneqq \{0,\dots, |E|+|L|\}^{2^{|X_i|}\setminus \{\emptyset\}}.
\end{equation*}

For $\pi \in \Pi_i$, we say that a link set $F\subseteq L_i^{\downarrow}$ is \emph{compatible} with $\pi$ if $F$ covers all $k$-cuts contained in $V^{\downarrow}_i$ and $\pi = (\varphi_i^F(Z))_{Z\subseteq X_i: Z\neq \emptyset}$.
Analogously, for $Q\in \mathcal{Q}_i$, we say that a link set $F\subseteq L_i^{\downarrow}$ is \emph{compatible} with $Q$ if $F$ covers all $k$-cuts contained in $V^{\downarrow}_i$ and $Q = (Q^F_i(w))_{w\in X_i}$.
Finally, we say that $F$ is compatible with $(Q,\pi)$ if it is compatible with both $Q$ and $\pi$.

\subsubsection{Formal definition of DP entries}

The goal of our dynamic program is to compute, starting from the (non-root) leaves up to the root bag, the following information.
Let $X_i\in \mathcal{X}$ be a bag in the tree decomposition $(\mathcal{X},\mathcal{E})$.
Then we want to know for each tuple $(Q,\pi)\in \mathcal{Q}_i\times \Pi_i$, whether there is a link set $F\subseteq L_i^{\downarrow}$ that is compatible with $(Q,\pi)$ and, if so, we want to save a minimum cost link set $F_{(Q,\pi)}$ compatible with $(Q,\pi)$.
We call $F_{(Q,\pi)}$ a link set that \emph{realizes} the DP entry $(Q,\pi)$.
We will denote by $\nu(Q,\pi) = c(F_{(Q,\pi)})$ the cost of such a minimum cost link set $F_{(Q,\pi)}$.
If no such link set exists, we set $\nu(Q,\pi) = \infty$.

Note that the number of DP entries for a bag $X_i\in \mathcal{X}$ is $|\mathcal{Q}_i| \cdot |\Pi_i|$.
This is at most
\begin{equation*}
  |\mathcal{Q}(n)|^{|X_i|} \cdot (|E|+|L|+1)^{2^{|X_i|}-1} \le n^{2(\tau+1)} \cdot (|E|+|L|+1)^{2^{\tau+1}},
\end{equation*}
where $n=|V|$, which is polynomially bounded in the input for constant snug-treewidth $\tau$.
This sanity check shows that the data we store in our DP is polynomially bounded in the input size, as desired.

\subsubsection{From DP entries to optimal solution}

Before discussing how to compute the DP entries, we first observe that these DP entries, once computed, allow for obtaining an optimal solution to the $k$-WCAP instance $(G,L,c)$.
To this end, we look at the root bag $X_\rho = \{r\}$ of the tree decomposition.
Here we consider all the link sets that realize the tuples in $\mathcal{Q}_\rho \times \Pi_\rho$.
Recall that any such realizing link set $F\subseteq L$ covers all $k$-cuts of $G$ that are fully contained in $V_\rho^{\downarrow} = V\setminus \{r\}$.
However, these are all $k$-cuts in $G$ (in the sense that for each $k$-cut $S$, either $S\subseteq V\setminus \{r\}$ or $V\setminus S\subseteq V\setminus\{r\}$). 
Hence, it suffices to return the cheapest of these realizing link sets.
(The only reason why there may be several realizing link sets is because they have a different number of links that are incident with $r$.)

Thus, it remains to discuss how to compute, for each bag $X_i\in \mathcal{X}$, a link set realizing $(Q,\pi)$ for each $Q\in \mathcal{Q}_i$ and $\pi\in \Pi_i$.

\subsubsection{Propagation of DP entries}

We now describe how to compute the DP entries for a bag $X_i\in \mathcal{X}$ from the DP entries of its children.
First note that it is easy to compute the DP entries for a leaf bag $X_i$, which contains only a single node and has no children.
We now discuss how to deal with the other three types of bags, namely introduce, forget, and join bags.

\subsubsection*{Introduce bags}

Let $X_i$ be an introduce bag with child $X_j$.
Then $X_i = X_j \cup \{w\}$ for some node $w\in W\setminus X_j$.
Because we propagate from children to parents, we assume that we already know the DP entries for $X_j$.
We claim that all that has to be done, is to consider all link sets that realize the DP entries of $X_j$ and keep them as the link sets that realize the DP entries of $X_i$.
To this end, we first note that $V_j^\downarrow=V_i^\downarrow$, so $L^{\downarrow}_i = L^{\downarrow}_j$ and $E^{\downarrow}_i=E^{\downarrow}_j$. In particular, the collection of candidate link sets, i.e., link sets $F\subseteq L^{\downarrow}_i = L^{\downarrow}_j$ that cover all $k$-cuts in $V_j^\downarrow=V_i^\downarrow$, is the same for both $i$ and $j$. Hence, it remains to argue how the additional entries of tuples $(Q,\pi)\in \mathcal{Q}_i\times\Pi_i$ can be derived from the entries for tuples $(Q',\pi')\in \mathcal{Q}_j\times \Pi_j$ that they extend. We first handle $\pi$, which is easier to deal with.
Because $w$ was just added to $X_j$ to obtain $X_i$, there are no edges in $E_i^\downarrow$ or links in $L_i^{\downarrow}$ that have an endpoint in $w$.
Indeed, by the properties of a tree decomposition, for an edge or link to connect a vertex $v$ in $w$ to a vertex $v'\in V_i^{\downarrow}$ in some node $w'$, there must be a bag that is a descendant of $X_i$ and contains both $w$ and $w'$.
However, $w$ does not appear in any descendant of $X_i$.
Thus, there cannot be any edge or link between $v$ and $v'$.
In particular, for $F\subseteq L_i^\downarrow$ and $Z\subseteq X_j$, we have
\[|\delta_{E_i^{\downarrow}}(S)|+|\delta_F(S)| = |\delta_{E_i^{\downarrow}}(S\cup w)|+|\delta_F(S\cup w)| \text{ for every } S\subseteq V_i^\downarrow\cup V(X_j) \text{ with } S\cap V(X_j)=V(Z).\]
Hence, $\varphi_i^F(Z\cup \{w\})=\varphi_i^F(Z)=\varphi_j^F(Z)$ for every (nonempty) $Z\subseteq X_j$ and $F\subseteq L_i^\downarrow=L_j^\downarrow$.
Next, we discuss how to compute the entry $Q\in \mathcal{Q}_i$ realized by some link set $F$, assuming we know the tuple $(Q',\pi')\in\mathcal{Q}_j\times \Pi_j$ it realizes.
As there is no link in $L_i^\downarrow$ incident to a vertex in $w$, link sets in $L_i^{\downarrow}$ will either cover all snug shores of the snug path $w$ or no snug shore of it at all.
When a link covers all snug shores of a snug path, we say that it \emph{covers the snug path}.

(We recall that $W_w\subseteq W$ are the nodes in $H$ so that the snug shores of $w$ are precisely the cuts $V(W_w), V(W_w)\cup \{u_1\}, V(W_w)\cup \{u_1,u_2\},\dots, V(W_w)\cup w$, where $w=\{u_1, u_2, \dots, u_q\}$.)

\begin{lemma}\label{lem:introduce_bag}
Let $F=F_{(Q,\pi)}$ be a link set realizing some DP entry $(Q,\pi)$ for $X_j$.
Then the following two statements are equivalent:
\begin{enumerate}
  \item $F$ covers the snug path corresponding to $w$.
  \item $W_w\cap X_j\ne \emptyset$ and $\pi(W_w\cap X_j) > |\delta_{E_j^{\downarrow}}(V(W_w))|$.
\end{enumerate}
\end{lemma}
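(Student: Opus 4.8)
The plan is to prove both implications by relating the value $\pi(W_w\cap X_j)=\varphi_j^F(W_w\cap X_j)$ to whether $F$ covers the snug path corresponding to $w$. Recall that $\varphi_j^F(Z)$ is, over all cuts $S\subseteq V_j^{\downarrow}\cup V(X_j)$ with $S\cap V(X_j)=V(Z)$, the minimum of $|\delta_{E_j^{\downarrow}}(S)|+|\delta_F(S)|$. The key geometric fact is that the snug shores $V(W_w), V_w^1,\dots,V_w^{q-1}, V(W_w)\cup w$ of the snug path $w=\{u_1,\dots,u_q\}$ all agree outside $w$, namely on the set $V(W_w)$, and that (by \cref{lemma:only_cut_intersecting_snug_edge}) the internal shores $V_w^1,\dots,V_w^{q-1}$ are the only $k$-cuts of $G$ that cross $w$. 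The ``outer'' shore $V(W_w)$ is a $k$-cut of $G$ present in $H$, so $\delta_G(V(W_w))$ consists of exactly $k$ edges of $E$; by minimality of $G$ these are not contracted, and since $w$ is introduced in $X_i$ and appears in no descendant of $X_i$, all of them have both endpoints in $V_j^{\downarrow}\cup V(X_j)\setminus w$, hence $|\delta_{E_j^{\downarrow}}(V(W_w))|+|\delta_F(V(W_w))|$ measures exactly the edges and links of $E_j^\downarrow\cup F$ crossing this outer shore, and no link of $F$ is incident to $w$.

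For the forward direction, suppose $F$ covers the snug path $w$, i.e., $F$ covers every internal shore $V_w^1,\dots,V_w^{q-1}$ (if $q=1$ there is nothing between, but then ``covering the snug path'' would be vacuous and $F$ still must cover the outer shores as part of the $k$-cuts in $V_j^\downarrow$ — I will check the $q=1$ degenerate case separately, where the statement reduces to $w$ being a singleton node and should be handled by convention). Since $F$ cannot touch $w$, any link of $F$ that crosses some internal shore $V_w^j$ must have one endpoint in $V(W_w)$ and the other outside $V(W_w)\cup w$, hence it also crosses the outer shore $V(W_w)$. So covering a nonempty internal shore forces $\delta_F(V(W_w))\neq\emptyset$, giving $|\delta_{E_j^\downarrow}(V(W_w))|+|\delta_F(V(W_w))| \geq k+1 > k = |\delta_{E_j^\downarrow}(V(W_w))|$ (using that $\delta_{E_j^\downarrow}(V(W_w))$ is exactly the $k$ edges of $\delta_G(V(W_w))$, all present in $E_j^\downarrow$). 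I then need that $W_w\cap X_j$ is nonempty and that the minimum defining $\varphi_j^F(W_w\cap X_j)$ is attained (or at least bounded below) by the cut $V(W_w)$: the first holds because an edge of $\delta_G(V(W_w))$ joins $V(W_w)$ to its complement and by the tree-decomposition properties both endpoints' nodes co-occur in a bag, forcing a node of $W_w$ into $X_j$ (this needs a short argument placing that co-occurring bag in the subtree below $X_j$, using that $w\notin X_j$ and connectivity of the shore nodes); the second holds because every cut $S$ with $S\cap V(X_j)=V(W_w\cap X_j)$ that is not equal to $V(W_w)$ on $V_j^\downarrow$ still has at least as many crossing edges/links — here I should use minimality of $G$ (every edge lies in a $k$-cut, so $V(W_w)$ is the ``tightest'' completion) together with the fact that $F$ covers all $k$-cuts in $V_j^\downarrow$, so $\varphi_j^F(W_w\cap X_j)\geq k+1$. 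Hence $\pi(W_w\cap X_j)>|\delta_{E_j^\downarrow}(V(W_w))|$.

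For the reverse direction, suppose $W_w\cap X_j\neq\emptyset$ and $\pi(W_w\cap X_j)=\varphi_j^F(W_w\cap X_j) > |\delta_{E_j^\downarrow}(V(W_w))| = k$. The cut $V(W_w)$ itself satisfies $V(W_w)\cap V(X_j) = V(W_w\cap X_j)$ (again because $w\notin X_j$ and the shore nodes in $X_j$ are exactly $W_w\cap X_j$), so it is one of the cuts competing in the minimum, giving $\varphi_j^F(W_w\cap X_j)\leq |\delta_{E_j^\downarrow}(V(W_w))|+|\delta_F(V(W_w))| = k + |\delta_F(V(W_w))|$. Combined with $\varphi_j^F(W_w\cap X_j)>k$, this forces $|\delta_F(V(W_w))|\geq 1$, i.e., some link $\ell\in F$ crosses the outer shore $V(W_w)$. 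Since $\ell$ is not incident to $w$ (no link in $L_j^\downarrow$ touches $w$), its endpoints lie in $V(W_w)$ and in the complement of $V(W_w)\cup w$; by \cref{lemma:link_furthest_out} and \cref{lemma:link_deepest_in} (applied to the snug path $w$ with one endpoint of $\ell$ fixed outside the path), $\ell$ then crosses \emph{every} internal shore $V_w^1,\dots,V_w^{q-1}$ as well. Therefore $F$ covers all internal shores of $w$, i.e., $F$ covers the snug path corresponding to $w$, completing the equivalence.

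The main obstacle I anticipate is the bookkeeping around the bag $W_w\cap X_j$: showing both that it is nonempty precisely when some $E_j^\downarrow$-edge or $F$-link crosses the outer shore, and that the cut $V(W_w)$ genuinely realizes (up to the additive $\delta_F$ term) the DP value $\varphi_j^F(W_w\cap X_j)$, rather than some other, ``lighter'' cut sneaking in below $k$ crossing edges. Both rely on a careful use of minimality of $G$ (so that $V(W_w)$ is $k$-tight and no smaller cut completing $V(W_w\cap X_j)$ exists inside $V_j^\downarrow$) together with the invariant that a realizing link set covers all $k$-cuts contained in $V_j^\downarrow$; I expect the cleanest route is to first prove the auxiliary claim ``$\delta_{E_j^\downarrow}(V(W_w)) = \delta_G(V(W_w))$ and $|{\cdot}|=k$, and $W_w\cap X_j = \{w'\in X_j : w'\subseteq V(W_w)\}$ determines $V(W_w)$ among cuts in $V_j^\downarrow\cup V(X_j)$ with the given trace on $X_j$'' as a standalone lemma, then the two implications fall out quickly.
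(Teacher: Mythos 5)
Your overall strategy (compare $\pi(W_w\cap X_j)$ with the edge-only value of the completion corresponding to the outer shore, and use that no link of $L_j^\downarrow$ touches $w$, so a link crossing one shore of $w$ crosses all of them) is the same as the paper's, and your reverse direction is essentially sound. But the forward direction has a genuine gap at exactly the point the paper has to work hardest. To conclude $\pi(W_w\cap X_j) > |\delta_{E_j^{\downarrow}}(V(W_w))|$ when $F$ covers the path, it is not enough that every competing cut $S=V(W_w\cap X_j)\cup \widetilde{S}$ with $\widetilde{S}\subseteq V_j^{\downarrow}$ has ``at least as many'' crossing edges/links: if some $\widetilde{S}\neq V(W_w\cap W_j^{\downarrow})$ achieved the \emph{same} edge count $|\delta_{E_j^{\downarrow}}(V(W_w))|$ while being crossed by no link of $F$, the minimum would equal the threshold and the implication would fail. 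What is needed is that $V(W_w\cap W_j^{\downarrow})$ is the \emph{unique} minimizer of $\min_{\widetilde{S}\subseteq V_j^{\downarrow}}|\delta_{E_j^{\downarrow}}(V(W_w\cap X_j)\cup\widetilde{S})|$ (\cref{claim:unique_minimizer_below} in the paper). This is proved by splitting $\delta_E$ into the parts inside and outside $E_j^{\downarrow}$ and showing that a second minimizer would produce two distinct $k$-cuts containing $u_1$ but not $u_2$, contradicting \cref{lemma:only_cut_intersecting_snug_edge}; i.e., the snug-path structure, not generic connectivity, is what rules out equally light completions. Your appeal to ``minimality of $G$'' does not supply this (and is not even an available hypothesis here: \cref{thm:WCAP_bounded_snug_treewidth} is applied to contracted subinstances that need not be minimally $k$-edge-connected), and the invariant that $F$ covers all $k$-cuts inside $V_j^{\downarrow}$ does not apply to these completions, since they meet $V(X_j)$.

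A second, more localized error: your auxiliary claim that $\delta_{E_j^{\downarrow}}(V(W_w))=\delta_G(V(W_w))$ with $|\delta_{E_j^{\downarrow}}(V(W_w))|=k$ is false. Since $\deg(u_1)\ge k+1$ while $|\delta(V(W_w)\cup\{u_1\})|=k$, there is always at least one edge between $V(W_w)$ and $u_1$; such an edge has no endpoint in $V_j^{\downarrow}$ (there are no edges between $w$ and $V_j^{\downarrow}$), so it lies outside $E_j^{\downarrow}$ and the threshold in the lemma is strictly smaller than $k$. The arithmetic in your two directions survives if you replace $k$ by $|\delta_{E_j^{\downarrow}}(V(W_w))|$ throughout (and, for the reverse direction, compare against the restricted cut $V(W_w\cap X_j)\cup V(W_w\cap W_j^{\downarrow})$ rather than $V(W_w)$ itself, which need not lie in $V_j^{\downarrow}\cup V(X_j)$), but this misidentification also steers your ``main obstacle'' toward cuts of size below $k$, whereas the real danger is equally light completions not crossed by $F$ — precisely the uniqueness statement you would still have to prove.
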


Before proving this lemma, let us interpret it.
In short, the lemma states that whether $F$ covers the snug path corresponding to $w$ is completely determined by the $\pi$ it realizes in $X_j$.
Thus, because the DP entries of $X_j$ already contain optimal/cheapest link sets for all $(Q,\pi) \in Q_j \times \Pi_j$ entries, these link sets will also be optimal for the DP entries of $X_i$.

It remains to prove the lemma.
To this end, we first show the following claims, where we denote by $W_j^{\downarrow}$ the set of all nodes in $W\setminus X_j$ in bags that are descendants of $X_j$.

\begin{claim}\label{claim:not_only_vertices_below}
We have $V(W_w\setminus W_j^{\downarrow})\ne \emptyset$.
\end{claim}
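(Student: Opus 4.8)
The plan is to establish the purely combinatorial fact that $W_w\not\subseteq W_j^{\downarrow}$; since every node of $H$ represents a nonempty set of vertices of $G$, this immediately gives $V(W_w\setminus W_j^{\downarrow})\neq\emptyset$. I would argue by contradiction, so suppose $W_w\subseteq W_j^{\downarrow}$.

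The first step is to produce an edge of $E_H$ joining the node $w$ to some node of $W_w$. Write $w=(u_1,\dots,u_q)$ and let $(S_0,\dots,S_q)$ with $S_j=V(W_w)\cup\{u_1,\dots,u_j\}$ be its snug shores, so that $S_0=V(W_w)$ and $S_1=S_0\cup\{u_1\}$ are $k$-cuts of $G$, and $u_1\in V_{\rm{snug}}$, hence $\deg(u_1)\geq k+1$. Since $\delta(S_1)$ is the symmetric difference of $\delta(S_0)$ and $\delta(\{u_1\})$, we obtain $k=|\delta(S_1)|=|\delta(S_0)|+\deg(u_1)-2a=k+\deg(u_1)-2a$, where $a$ denotes the number of edges joining $u_1$ to $S_0$; thus $a=\tfrac12\deg(u_1)\geq 1$, so $u_1$ has a neighbour $v\in S_0=V(W_w)$. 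By \cref{lemma:outer_cuts}, no snug path crosses $S_0$, so $v$ lies on some node $w'$ with $V(w')\subseteq S_0$, i.e.\ $w'\in W_w$; moreover $w'\neq w$ because $V(w)\cap S_0=\emptyset$. Since $u_1$ and $v$ do not lie on a common snug path, the edge $\{u_1,v\}$ of $G$ is an edge of $E_H$, so $\{w,w'\}\in E_H$.

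The second step is a standard use of the tree-decomposition axioms. As $\{w,w'\}\in E_H$, some bag $X_m$ contains both $w$ and $w'$. Because $w$ is introduced at the introduce bag $X_i=X_j\cup\{w\}$ with child $X_j$, and $w\notin X_j$, connectivity of the bags containing $w$ forces that no bag in the subtree rooted at $X_j$ contains $w$; in particular $X_m$ lies outside that subtree. On the other hand, $w'\in W_j^{\downarrow}$ means $w'\notin X_j$ while $w'$ occurs in some bag $X_d$ that is a proper descendant of $X_j$. The tree path from $X_m$ to $X_d$ must pass through $X_j$, since it enters the subtree rooted at $X_j$; as $X_m$ and $X_d$ both contain $w'$, connectivity of the bags containing $w'$ gives $w'\in X_j$, a contradiction. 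This proves $W_w\not\subseteq W_j^{\downarrow}$.

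The only delicate point is the first step, i.e.\ showing that $w$ is adjacent in $H$ to a node of $W_w$: the cut-counting identity is routine, but one has to make sure the edge $\{u_1,v\}$ really survives the contraction of the snug paths — this is where the facts that $S_0$ is not crossed by any snug path (\cref{lemma:outer_cuts}) and that $V(w)\cap S_0=\emptyset$ enter. Everything after that is a routine argument about connected subtrees of a tree decomposition.
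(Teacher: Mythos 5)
Your proof is correct and follows essentially the same approach as the paper's: both arguments use the snug-vertex degree bound $\deg(u_1)\geq k+1$ against $|\delta(V(W_w)\cup\{u_1\})|=k$ to produce a neighbor of $u_1$ inside $V(W_w)$, and then invoke the subtree-connectivity property of the tree decomposition (here specialized to the fact that $w$ never appears in a descendant of $X_j$) to conclude that this neighbor cannot lie in $V_j^{\downarrow}$. The paper simply cites the already-established observation that no edge of $E_i^{\downarrow}$ is incident to $w$, whereas you rederive it and detour through $E_H$-adjacency of nodes, but the substance is identical.
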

\begin{proof}
Let $w=\{u_1,\dots,u_q\}$ be the vertices in the snug path $(u_1,u_2,\dots, u_q)$ corresponding to $w$. By definition of a snug vertex, $|\delta_E(u_1)|\ge k+1>k=|\delta_E(V(W_w)\cup \{u_1\})|$, so there exists an edge from $V(W_w)$ to $u_1\in w$. As there are no edges from vertices in $V_j^\downarrow=V(W_j^\downarrow)$ to vertices in $w$, the claim follows.
\end{proof}
\begin{claim}\label{claim:unique_minimizer_below}
  $V(W_w \cap W_j^{\downarrow})$ is the unique minimizer of
  \begin{equation}\label{eq:ww_is_minimizer}
    \min_{S\subseteq V_j^{\downarrow}}|\delta_{E_j^{\downarrow}}(V(W_w\cap X_j)\cup S)|.
  \end{equation}
\end{claim}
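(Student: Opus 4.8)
\textbf{Proof plan for \Cref{claim:unique_minimizer_below}.}

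The plan is to exploit the highly rigid structure of a snug path together with the fact that $w$ was \emph{just} introduced in $X_i$, so that in the subtree below $X_j$ the only vertices of $V(W_w)$ that occur are precisely those in $V(W_w\cap W_j^{\downarrow})$, and these have \emph{no} edges of $E_j^{\downarrow}$ connecting them to $V(X_j)$ or to vertices of $W_j^{\downarrow}$ outside $W_w$ beyond what is already accounted for by the snug-shore cut structure. First I would recall that the snug shores of the snug path $w=(u_1,\dots,u_q)$ are exactly the cuts $V(W_w), V(W_w)\cup\{u_1\},\dots,V(W_w)\cup w$, and that by \Cref{lemma:only_cut_intersecting_snug_edge} the only $k$-cuts of $G$ crossed by an edge incident to $w$ are these snug shores (and their complements). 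Since every one of these is a $k$-cut, $|\delta_{E}(V(W_w))|=k$, and since $w$ appears in no bag below $X_j$, every edge of $E_j^{\downarrow}$ incident to $V(W_w)$ goes to another vertex of $V(W_w\cap W_j^{\downarrow})$ or else contributes to $\delta(V(W_w))$; in particular $\delta_{E_j^{\downarrow}}(V(W_w\cap X_j)\cup V(W_w\cap W_j^{\downarrow})) \subseteq \delta_E(V(W_w))$.

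Next I would show that $V(W_w\cap W_j^{\downarrow})$ attains the minimum in \eqref{eq:ww_is_minimizer}. Write $A\coloneqq V(W_w\cap X_j)$ and, for $S\subseteq V_j^{\downarrow}$, consider the cut $A\cup S$ in the graph $(V_j^{\downarrow}\cup V(X_j), E_j^{\downarrow})$; I want to compare $|\delta_{E_j^{\downarrow}}(A\cup S)|$ with its value at $S^\star\coloneqq V(W_w\cap W_j^{\downarrow})$. The key structural input is that $G$ is $k$-edge-connected, so every nonempty proper cut of $G$ has at least $k$ crossing edges, and the snug shore $V(W_w)=A\cup S^\star$ (together with a possible subset of $w$, which does not matter here since $w\notin W_j^{\downarrow}$) is a $k$-cut, i.e.\ a \emph{minimum} cut separating whatever it separates. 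Using submodularity of the cut function and the fact that $V(W_w)$ is a minimum cut, any $S\neq S^\star$ either drops a vertex of $S^\star$ (creating a smaller cut in $G$ than the minimum $k$-cut, contradiction, unless it is compensated — but then uncrossing with $V(W_w)$ produces a strictly smaller cut) or adds a vertex of $V_j^{\downarrow}\setminus S^\star$ (which has no edge to $A$ and, because it would merge distinct minimum-cut shores, strictly increases the boundary). I would make this precise by the standard uncrossing argument: if $|\delta_{E_j^{\downarrow}}(A\cup S)|\le |\delta_{E_j^{\downarrow}}(A\cup S^\star)|$ then, lifting back to $G$ and using that $A\cup S^\star$ is a $k$-cut, the set $(A\cup S)\cap(A\cup S^\star)=A\cup(S\cap S^\star)$ is also a $k$-cut, and then minimality of the snug shore chain for $w$ (\Cref{lemma:interaction_with_snug_shores}, \Cref{prop:at_most_one_incoming}) forces $S\cap S^\star=S^\star$; symmetrically $S\cup S^\star$ being a cut of value $\le k$ forces, via \Cref{claim:not_only_vertices_below} and the fact that vertices outside $W_w$ in $W_j^{\downarrow}$ have no edges to $A$, that $S\subseteq S^\star$. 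Hence $S=S^\star$, giving both that $S^\star$ is a minimizer and that it is the \emph{unique} one.

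I expect the main obstacle to be the uncrossing step: one has to be careful that we are minimizing the \emph{restricted} boundary $\delta_{E_j^{\downarrow}}$ in the graph on $V_j^{\downarrow}\cup V(X_j)$, not the full boundary $\delta_E$ in $G$, so the submodularity / uncrossing has to be applied to the induced subgraph and then transferred to $G$ by noting that edges of $E\setminus E_j^{\downarrow}$ incident to $A$ are exactly the same for $A\cup S$ and $A\cup S^\star$ (because such edges leave $V_j^{\downarrow}\cup V(X_j)$ and hence contribute the same constant to both cuts in $G$). Making that bookkeeping airtight, and checking the degenerate cases where $W_w\cap X_j=\emptyset$ or $W_w\cap W_j^{\downarrow}=\emptyset$ (in the latter case the minimizer is the empty set, which is fine), is where the care is needed; the rest is a routine consequence of the snug-chain structure established in \Cref{sec:snug_vertices_and_chains}.
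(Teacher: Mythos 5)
Your plan shares the right ingredients with the paper's proof — reducing the restricted-boundary minimization to the full cut function $\delta_E$ in $G$ by exploiting that the part of the cut outside $V_j^\downarrow$ is fixed, invoking $k$-edge-connectivity of $G$ for the lower bound, and appealing to uniqueness of snug shores for the strict uniqueness of the minimizer. However, there are two concrete errors and one gap in the route you propose.

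First, you assert ``$V(W_w)=A\cup S^\star$ (together with a possible subset of $w$, which does not matter\dots)''. This is not true in general: one has $V(W_w)=V(W_w\cap X_j)\cup V(W_w\cap W_j^{\downarrow})\cup V(W_w\setminus(X_j\cup W_j^{\downarrow}))$, and the third piece (nodes of $W_w$ that appear neither in $X_j$ nor below it) need not be empty. Because of this, $A\cup S^\star$ is \emph{not} a $k$-cut of $G$, and the uncrossing argument you want to run is not applied to two $k$-cuts of $G$. The paper handles this by the partition identity $|\delta_E(T)| = |\delta_{E_j^{\downarrow}}((T\cap V(X_j))\cup (T\cap V_j^{\downarrow}))| + |\delta_{E\setminus E_j^{\downarrow}}(T\setminus V_j^{\downarrow})|$, which isolates the (variable) restricted boundary from a term that is constant once $T\setminus V_j^\downarrow$ is frozen to $V(W_w\setminus W_j^\downarrow)$; this is exactly the ``lifting'' you gesture at, but you need to lift to $T_S\coloneqq V(W_w\setminus W_j^\downarrow)\cup S$, not to $A\cup S$. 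Second, you claim ``vertices outside $W_w$ in $W_j^\downarrow$ have no edges to $A$''; that is false — edges from $V_j^\downarrow\setminus V(W_w)$ to $A\subseteq V(W_w)$ are simply edges of $\delta_E(V(W_w))$, and there are up to $k$ of them.

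Beyond these errors, the uncrossing/submodularity step is not actually needed and is not what the paper does. Once you have the partition identity, the lower bound is immediate: any $S$ attaining a smaller value than $S^\star$ would yield a cut $T_S=V(W_w\setminus W_j^\downarrow)\cup S$ in $G$ with $|\delta_E(T_S)|<k$, contradicting $k$-edge-connectivity (nonemptiness of $T_S$ coming from \Cref{claim:not_only_vertices_below}, properness because $u_1\notin T_S$). For uniqueness, the paper does not uncross; it sets $\overline{S}\coloneqq S\cup V(W_w\setminus W_j^\downarrow)$, observes $|\delta_E(\overline{S})|=|\delta_E(V(W_w))|=k$, notes that adding $u_1$ changes both boundaries in the same way (because all neighbors of $u_1$ are outside $V_j^\downarrow$ and $\overline{S}$, $V(W_w)$ agree there), and concludes that $\overline{S}\cup\{u_1\}$ and $V(W_w)\cup\{u_1\}$ are two distinct $k$-cuts containing $u_1$ but not $u_2$, contradicting \Cref{lemma:only_cut_intersecting_snug_edge}. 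Your plan invokes \Cref{lemma:interaction_with_snug_shores} and \Cref{prop:at_most_one_incoming}, but those alone will not close the argument without first correctly constructing the second $k$-cut through $u_1$; the direct construction is cleaner than uncrossing here, and it is the step you would need to fill in to make the proof work.
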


\begin{proof}

  Note that for any $T\subseteq V$, we can partition the edges in $\delta_E(T)$ as follows into two parts, with one containing the edges in $E_j^{\downarrow}$ and one containing the edges in $E\setminus E_j^{\downarrow}$:
  \begin{equation}\label{eq:delta_partition}
    \begin{aligned}
      \delta_E(T) &= \delta_{E_j^{\downarrow}}(T) \cup \delta_{E\setminus E_j^{\downarrow}}(T)\\
                  &= \delta_{E_j^{\downarrow}}(T\cap (V_j^{\downarrow}\cup V(X_j))) \cup \delta_{E\setminus E_j^{\downarrow}}(T\setminus V_j^{\downarrow})\\
                  &= \delta_{E_j^{\downarrow}}((T\cap V(X_j))\cup (T\cap V_j^{\downarrow})) \cup \delta_{E\setminus E_j^{\downarrow}}(T\setminus V_j^{\downarrow}),
    \end{aligned}
  \end{equation}
  where the second equality holds due to the following.
  Because $E_j^{\downarrow}$ contains all edges that have at least one endpoint in $V_j^{\downarrow}$, the other endpoint must be in $V_j^{\downarrow}\cup V(X_j)$ because these are the only vertices that can appear (inside some node) in a common bag with a vertex of $V_j^{\downarrow}$. (This uses that, in a tree decomposition, the bags containing a node must be connected in the tree.)
  Hence, $\delta_{E_j^{\downarrow}}(T) = \delta_{E_j^{\downarrow}}(T\cap (V_j^{\downarrow} \cup V(X_j)))$.
  Moreover, $\delta_{E\setminus E_j^{\downarrow}}(T) =  \delta_{E\setminus E_j^{\downarrow}}(T\setminus V_j^{\downarrow})$ because the edges in $E\setminus E_j^{\downarrow}$ have both endpoints in $V\setminus V_j^{\downarrow}$.

  By counting the number of edges on each side of \Cref{eq:delta_partition}, we get
  \begin{equation}\label{eq:delta_partition_count}
    |\delta_E(T)| = |\delta_{E_j^{\downarrow}}((T\cap V(X_j))\cup (T\cap V_j^{\downarrow}))| + |\delta_{E\setminus E_j^{\downarrow}}(T\setminus V_j^{\downarrow})| \qquad \forall T\subseteq V.
  \end{equation}

  We now apply \Cref{eq:delta_partition_count} to $T = V(W_w)$, which is a $k$-cut because it is a shore of the snug path corresponding to $w$.
  Hence,
  \begin{equation}\label{eq:delta_partition_w}
    \begin{aligned}
      k &= |\delta_E(V(W_w))|\\
        &= |\delta_{E_j^{\downarrow}}(V(W_w\cap X_j)\cup V(W_w\cap W_j^{\downarrow}))| + |\delta_{E\setminus E_j^{\downarrow}}(V(W_w\setminus W_j^{\downarrow}))|.
    \end{aligned}
  \end{equation}

  To show that $V(W_w \cap W_j^{\downarrow})$ is a minimizer of \Cref{eq:ww_is_minimizer}, assume for the sake of deriving a contradiction that there is a set $S\subseteq V_j^{\downarrow}$ with
  \begin{equation*}
    |\delta_{E_j^{\downarrow}}(V(W_w\cap X_j)\cup S)| < |\delta_{E_j^{\downarrow}}(V(W_w\cap X_j)\cup V(W_w\cap W_j^{\downarrow}))|.
  \end{equation*}

  Further expanding \Cref{eq:delta_partition_w} using the above inequality, we obtain
  \begin{equation*}
    k > |\delta_{E_j^{\downarrow}}(V(W_w\cap X_j) \cup S)| + |\delta_{E\setminus E_j^{\downarrow}}(V(W_w\setminus W_j^{\downarrow}))|
    = |\delta_E(V(W_w\setminus W_j^{\downarrow})\cup S)|,
  \end{equation*}
  where the equality uses again \Cref{eq:delta_partition_count}, this time with $T=V(W_w\setminus W_j^{\downarrow})\cup S$.

  However, this implies that the cut $V(W_w\setminus W_j^{\downarrow})\cup S$ has strictly fewer than $k$ crossing edges in $G$, which is impossible because \[\emptyset\ne V(W_w\setminus W_j^{\downarrow})\subseteq V(W_w\setminus W_j^{\downarrow})\cup S\subseteq V\setminus\{u_1\}\subsetneq V\] and because $G$ is $k$-edge-connected.
  Hence, $V(W_w \cap W_j^{\downarrow})$ is indeed a minimizer of \Cref{eq:ww_is_minimizer}.
  We now discuss why it is the unique minimizer.

  For the sake of deriving a contradiction, assume that there is another minimizer $S\subseteq V_j^{\downarrow}$ of \Cref{eq:ww_is_minimizer} that is not equal to $V(W_w \cap W_j^{\downarrow})$.
  We now compare the set
  \begin{equation*}
    \overline{S} \coloneqq S \cup V(W_w\setminus W_j^{\downarrow})
  \end{equation*}
  to $V(W_w)$.
  Because both $S$ and $V(W_w\cap W_j^{\downarrow})$ are minimizers of \Cref{eq:ww_is_minimizer}, we have by \Cref{eq:delta_partition_count} that $\overline{S}$ and $V(W_w) = V(W_w\cap W_j^{\downarrow})\cup V(W_w\setminus W_j^{\downarrow})$ have the same number of edges crossing them, which is $k$ because $V(W_w)$ is a $k$-cut, i.e.,
  \begin{equation*}
    |\delta_{E}(\overline{S})| = |\delta_E(V(W_w))| = k.
  \end{equation*}
  Let $w=\{u_1,\dots,u_q\}$ be the vertices in the snug path $(u_1,\dots,u_q)$ that corresponds to $w$.
  Note that we have the following, where the second equality holds because $W_w\cup \{u_1\}$ is a snug shore of $w$:
  \begin{equation}\label{eq:alternative_shore}
    |\delta_E(\overline{S}\cup \{u_1\})| = |\delta_E(V(W_w)\cup \{u_1\})| = k.
  \end{equation}
  To see why (the first equality of) \Cref{eq:alternative_shore} holds, consider how the edge sets $\delta_E(\overline{S})$ and $\delta_E(V(W_w))$ change when adding $\{u_1\}$ to them.
  We claim that they change the same way, i.e., the edges leaving or entering the cut when adding $\{u_1\}$ are the same for both cuts. 
  Indeed, the neighbors of $u_1$ in $G$ are all contained in $V(W\setminus W_j^{\downarrow})$ because no vertex in $w$ is incident to an edge in $E_j^\downarrow$, and we have $\overline{S}\setminus V(W_j^{\downarrow}) = V(W_w \setminus W_j^{\downarrow})$ by construction. This proves \Cref{eq:alternative_shore}.
  It remains to observe that \Cref{eq:alternative_shore} immediately leads to a contradiction, because we now have two distinct $k$-cuts that contain $u_1$, but not  $u_2$, which contradicts \Cref{lemma:only_cut_intersecting_snug_edge}.

\end{proof}

We are now ready to prove \Cref{lem:introduce_bag}.

\begin{proof}[Proof of \Cref{lem:introduce_bag}]
First, consider the case where $W_w\cap X_j=\emptyset$. By \Cref{prop:minimum_cut_connected}, we know that $G[V(W_w)]$ is connected, and by \Cref{claim:not_only_vertices_below}, $V(W_w\setminus W_j^\downarrow)\ne \emptyset$. As, by the properties of a tree decomposition, there are no edges between $V(W\setminus (X_j\cup W_j^\downarrow))$ and $V(W_j^\downarrow)$, we have $V(W_w\cap W_j^\downarrow)=\emptyset$. Hence, $V(W_w)\subseteq V(W\setminus (X_j\cup W_j^\downarrow))$. But this means that no link in $L_j^\downarrow$ can be incident to $V(W_w)$, so no link set $F\subseteq L_j^\downarrow$ can cover the snug path corresponding to $w$.

Next, we assume that $W_w\cap X_j\ne \emptyset$. 
Recall that the $\pi(W_w \cap X_j)$ value realized by $F$ is given by
\begin{equation}\label{eq:pi_value_of_introduce_bag}
\begin{aligned}
  \pi(W_w \cap X_j) &= \varphi_j^F(W_w \cap X_j)\\
                    &= \min\Big\{|\delta_{E_j^{\downarrow}}(\overline{S})| + |\delta_F(\overline{S})| \colon \overline{S}\subseteq V_j^{\downarrow}\cup V(X_j)\\
                    &\hspace{20mm} \text{ with }\overline{S}\cap V(X_j) = V(W_w \cap X_j)\Big\}.
\end{aligned}
\end{equation}

Note that if we restrict ourselves to only the first term of the above objective, i.e., the term $|\delta_{E_j^{\downarrow}}(\overline{S})|$, then the minimization problem can be rewritten as
\begin{equation}\label{eq:pi_value_only_edges}
\begin{aligned}
   &\min\Big\{|\delta_{E_j^{\downarrow}}(\overline{S})| \colon \overline{S}\subseteq V_j^{\downarrow}\cup V(X_j) \text{ with }\overline{S}\cap V(X_j) = V(W_w \cap X_j)\Big\}\\
  =& 
\min\Big\{|\delta_{E_j^{\downarrow}}(V(W_w\cap X_j) \cup \widetilde{S})| \colon \widetilde{S}\subseteq V_j^{\downarrow} \Big\}
  ,
\end{aligned}
\end{equation}
where the equality follows by the substitution $\widetilde{S} = \overline{S}\setminus V(X_j)$, which can be done because the vertices $\overline{S}\cap V(X_j)$ are fixed to $V(W_w\cap X_j)$.

Hence, by \Cref{claim:unique_minimizer_below}, we know that $\widetilde{S} = V(W_w\cap W_j^{\downarrow})$ is the unique minimizer of the right-hand side of \Cref{eq:pi_value_only_edges}.
Thus, the unique minimizer of the left-hand side of \Cref{eq:pi_value_only_edges} is given by $\overline{S} = V(W_w\cap X_j) \cup V(W_w\cap W_j^{\downarrow})$,
and its value is
\begin{equation*}
  |\delta_{E_j^{\downarrow}}(V(W_w\cap X_j)\cup V(W_w \cap W_j^{\downarrow}))| = |\delta_{E_j^{\downarrow}}(V(W_w))|.
\end{equation*}

To summarize, the value of $\pi(W_w\cap X_j)$ (see \Cref{eq:pi_value_of_introduce_bag}) is always at least $|\delta_{E_j^{\downarrow}}(V(W_w))|$.
Moreover, the only candidate set $\overline{S}$ in the right-hand side of \Cref{eq:pi_value_of_introduce_bag} that can lead to the value $|\delta_{E_j}^{\downarrow}(V(W_w))|$ is $\overline{S} = V(W_w\cap X_j) \cup V(W_w\cap W_j^{\downarrow})$.
Thus, $|\pi(W_w\cap X_j)| \leq |\delta_{E_j^{\downarrow}}(V(W_w))|$ if and only if
\begin{equation*}
\delta_F(V(W_w\cap X_j) \cup V(W_w\cap W_j^{\downarrow}))=\emptyset.
\end{equation*}
The result now follows by observing that
\begin{equation*}
  \delta_F(V(W_w\cap X_j) \cup V(W_w\cap W_j^{\downarrow})) = \delta_F(V(W_w)),
\end{equation*}
and this set is empty if and only if $F$ does not cover the snug path corresponding to $w$.
(Recall that, because $F$ contains only links in $L_i^{\downarrow}$, it either covers all snug shores of the snug path corresponding to $w$ or none of them.)
Hence, $|\pi(W_w\cap X_j)| \leq |\delta_{E_j^{\downarrow}}(V(W_w))|$ if and only if $F$ does not cover the snug path corresponding to $w$.
\end{proof}

\subsubsection*{Forget bags}

Let $X_i$ be a forget bag with child $X_j$.
Then $X_i = X_j \setminus \{w\}$ for some node $w\in X_j$.
In this case, to construct link sets in $L_i^{\downarrow}$ that realize DP entries for $X_i$, we also have to consider links that connect vertices in $w$ (recall that $w$ represents the vertices of a snug path) to vertices in $X_i$, and also links between two vertices in $w$.
To this end, we would like to consider all possible extensions of realizing link sets in $X_j$ to link sets in $X_i$, to make sure that we have an optimal link set for each DP entry $(Q,\pi)$ in $X_i$.
Unfortunately, this is not possible computationally, because both $w$ and $X_i$ can have a large (linear in $|V|$) number of vertices, leading to exponentially many possible extensions of a link set $F$ realizing some DP entry of $X_j$.
However, we do not have to go through all possible extensions.
Consider a node $u\in X_i$.
The links from $w$ to $u$ are all equivalent in terms of the $k$-cuts they cover, except for the $k$-cuts corresponding to internal shores of the snug paths corresponding to $w$ and $u$.
By \Cref{lemma:link_furthest_out,lemma:link_deepest_in}, we need to consider at most two links between $w$ and $u$; namely one covering the largest prefix or suffix of its snug shores for each $u$ and $w$.
Hence, because $|X_i| = |X_j|-1 \leq \tau$ is constant, we need to add at most a constant number of links between $w$ and vertices in $X_i$ in total to $F$.
There are only polynomially many such extensions of $F$.
For each such extension, we also consider a further extension of it that covers the snug path corresponding to $w$ with a cheapest set of links having both endpoints in $w$.
This is an interval covering problem that can be solved in polynomial time.
Hence, we can go through all of these extensions for each realizing link set $F$ of a DP entry $(Q_j,\pi_j)$ for $X_j$ and keep the best one for each DP entry $(Q_i,\pi_i)$ for $X_i$. In doing so, we use the information provided by $Q_j(w)$ and $\pi_j(\{w\})$ to make sure that the extended link set we consider covers all $k$-cuts fully contained in $V_i^\downarrow=V_j^\downarrow\cup w$.

\subsubsection*{Join bags}

Let $X_i$ be a join bag with children $X_{i_1}$ and $X_{i_2}$.
Then $X_i = X_{i_1} = X_{i_2}$.
For each link set $F_1$ that realizes a DP entry $(Q_1,\pi_1)$ for $X_{i_1}$ and each link set $F_2$ that realizes a DP entry $(Q_2,\pi_2)$ for $X_{i_2}$, we can combine them to obtain a link set $F = F_1\cup F_2\in L_i^{\downarrow}$.
We consider all such combinations and keep the best one for each DP entry $(Q_i,\pi_i)$ for $X_i$.
This way we will find a realizing link set for each DP entry $(Q_i,\pi_i)$ for $X_i$.
Indeed, let $F\subseteq L_i^{\downarrow}$ be a link set of smallest cost that is compatible with the DP entry $(Q_i,\pi_i)$ for $X_i$.
Then, $F$ can be partitioned into two link sets $F_1 \coloneqq F\cap L_{i_1}^{\downarrow}$ and $F_2 \coloneqq F\cap L_{i_2}^\downarrow$, each of which is compatible with a DP entry $(Q_1,\pi_1)$ for $X_{i_1}$ and $(Q_2,\pi_2)$ for $X_{i_2}$, respectively. Note that as $k$-cuts in $G$ are connected (see \cref{prop:minimum_cut_connected}), every $k$-cut in $V_i^\downarrow=V_{i_1}^\downarrow\cup V_{i_2}^\downarrow$ is either a $k$-cut in $V_{i_1}^\downarrow$ or a $k$-cut in $V_{i_2}^\downarrow$ because by the properties of a tree decomposition, there are no edges between $V_{i_1}^\downarrow$ and $V_{i_2}^\downarrow$.
Because we have already computed the DP entries for $X_{i_1}$ and $X_{i_2}$, we have cheapest link sets $\overline{F}_1\subseteq L_{i_1}^{\downarrow}$ and $\overline{F}_2\subseteq L_{i_2}^{\downarrow}$ that are compatible with $(Q_1,\pi_1)$ and $(Q_2,\pi_2)$, respectively.
Hence, $c(\overline{F}_1) \leq c(F_1)$ and $c(\overline{F}_2) \leq c(F_2)$.
Moreover, $\overline{F}\coloneqq \overline{F}_1\cup \overline{F}_2$ is compatible with $(Q_i,\pi_i)$, because the DP entry in $X_i$ to which $\overline{F}$ corresponds is determined fully by the DP entries $(Q_1,\pi_1)$ and $(Q_2,\pi_2)$ of $\overline{F}_1$ and $\overline{F}_2$, respectively.
Hence, $\overline{F}$ is a link set compatible with $(Q_i,\pi_i)$ and has cost at most $c(F)$, and is therefore a cheapest such link set.
In other words, it realizes the DP entry $(Q_i,\pi_i)$ for $X_i$.

\FloatBarrier
\section{Hardness of planar $k$-connectivity augmentation}\label{sec:hardness}

To prove hardness of planar $k$-connectivity augmentation for $k \geq 2$, we reduce from the Linked Planar 3-SAT problem:

\begin{definition}[\cite{pilz2019planar}]
Let $G_{\phi} = (C\cup V, E_{\phi})$ be the incidence graph of a 3-SAT formula $\phi$, where $C$ is the set of clauses and $V$ is the set of variables of $\phi$.
Further, let $H$ be a Hamiltonian cycle of $C\cup V$ that first visits all elements of $C$ and then all elements of $V$.
Suppose that the union of $G_{\phi}$ and $H$ is planar.
The Linked Planar 3-SAT problem asks, given $\phi$ and $H$, whether $\phi$ is satisfiable.
\end{definition}

We use that Linked Planar 3-SAT is \NP-hard even on very structured instances:

\begin{theorem}[Theorem 10 in \cite{pilz2019planar}]\label{thm:linked_planar_sat}
The Linked Planar 3-SAT problem remains \NP-complete even if, for each clause, the edges corresponding to positive occurrences emanate to the interior of the cycle $H$, and the ones to negated occurrences to the exterior.
In addition, each variable occurs in at most three clauses.
\end{theorem}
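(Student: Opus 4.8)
The plan is to prove \Cref{thm:linked_planar_sat} by a polynomial-time reduction from Planar 3-SAT (Lichtenstein), or more conveniently from Planar Monotone 3-SAT (NP-hardness due to de Berg and Khosravi), which comes equipped with a \emph{rectilinear} planar drawing of the incidence graph: the variables sit on a horizontal line $\ell$ in some fixed order, each all-positive clause is a ``comb'' in the open upper half-plane and each all-negative clause is a comb in the open lower half-plane, and the combs on each side form a laminar (pairwise non-crossing) family. Starting from such an instance $\phi$, I would build an equisatisfiable instance $(\phi',H')$ of Linked Planar 3-SAT whose incidence graph, together with $H'$, admits a planar embedding with the prescribed structure. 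NP-membership is immediate, since a Linked Planar 3-SAT instance is just a SAT instance with extra certified structure, so all the content is in the construction and its verification.

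First I would reduce the number of occurrences of each variable to at most three by the standard splitting gadget: a variable $x$ occurring $d\ge 4$ times is replaced by fresh copies $x^{(1)},\dots,x^{(d)}$, one per occurrence, tied together by the cyclic implication chain $(\neg x^{(1)}\lor x^{(2)}),(\neg x^{(2)}\lor x^{(3)}),\dots,(\neg x^{(d)}\lor x^{(1)})$, which forces all copies to a common value (each $2$-clause can be padded to a $3$-clause by repeating a literal). Every copy then occurs in two implication clauses plus one original clause, i.e.\ three times, and replacing the degree-$d$ vertex of the planar incidence graph by a small cycle carrying these copies in the rotational order dictated by the embedding preserves planarity; the laminar comb structure is updated accordingly.

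The heart of the argument is to realise the Hamiltonian cycle $H'$ that visits all clauses first and then all variables, with positive occurrences routed to the interior and negative occurrences to the exterior. As the simple example of two nested combs already shows, one cannot merely ``bend'' the rectilinear drawing into $H'$: in a single page --- that is, on one side of a spine listing all clauses and then all variables --- nested combs are forced to conflict. Instead I would route every literal occurrence through a \emph{wire}: a chain of fresh equality-forced variables (built with the same cyclic-implication trick) leading from a standardised location on the clause-arc to a standardised location on the variable-arc. The original planar drawing prescribes how to lay out these wires without crossings: wires carrying a positive occurrence are drawn in the interior, wires carrying a negative occurrence in the exterior, and the laminar comb structure above and below $\ell$ guarantees that any two wires on the same side can be nested or placed side by side, hence drawn crossing-free. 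The cycle $H'$ then runs along the boundary of the construction, and one checks that every clause vertex (original clauses, the implication clauses of the variable splitting, and the implication clauses of the wires) lies on the clause-arc while every variable vertex lies on the variable-arc, that all new variables still occur at most three times, and that $\phi'$ is satisfiable iff $\phi$ is, since each chain of copies behaves as a single variable.

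The step I expect to be the main obstacle is exactly this last layout: one must fix a single linear order of all clause vertices and a single linear order of all variable vertices so that, simultaneously, the positive-occurrence edges form a non-crossing one-page family, the negative-occurrence edges form a non-crossing one-page family, and the implication clauses of the wires --- each of which carries one positive and one negative edge and hence contributes to both pages --- are placed consistently. Making the wire gadgets and their routing explicit enough that all three planarity conditions can be verified at once, while keeping the occurrence bound and the equisatisfiability transparent, is where the real work lies; the remainder of the reduction is bookkeeping.
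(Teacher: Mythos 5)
Note first that the paper does not prove this statement at all: it is imported verbatim as Theorem~10 of \cite{pilz2019planar}, so there is no in-paper argument for you to match; what you are attempting is a re-proof of Pilz's result from scratch. Judged on its own terms, your proposal has a genuine gap, and you name it yourself: the entire content of the theorem is the existence of a \emph{single} Hamiltonian cycle $H'$ that visits all clauses consecutively and then all variables consecutively, such that every positive-occurrence edge can be drawn in the interior and every negative-occurrence edge in the exterior without crossings. Your plan reduces this to ``fix a linear order of the clause vertices and a linear order of the variable vertices so that the positive edges form one non-crossing one-page family and the negative edges another'' and then defers that step as ``where the real work lies.'' But this is not bookkeeping left to the reader; it is precisely the statement being proved. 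A two-page (book) embedding with a prescribed spine order of this very restricted form does not follow from planarity of the monotone rectilinear drawing, and your own nested-combs remark shows that the obvious orders fail. Without an explicit construction of the orders (or of gadgets that force them) and a verification of non-crossing on both sides, no reduction has been exhibited.

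The difficulty is aggravated, not resolved, by your wire gadget. Each implication clause $(\neg x^{(i)}\lor x^{(i+1)})$ in a wire has one negative edge (which must attach to $H'$ from the exterior) and one positive edge (from the interior), so every wire weaves across the cycle at every one of its clause vertices; moreover all of these auxiliary clause vertices must be interleaved with the original clause vertices on the single clause arc of $H'$, and all of the auxiliary variables with the original variables on the single variable arc, simultaneously for all wires, without creating crossings on either page. You give no argument that such an interleaving exists, and it is exactly the delicate part of Pilz's construction (which builds the cycle together with the gadgets rather than trying to retrofit it onto a rectilinear monotone drawing). There are also smaller loose ends — padding the $2$-clauses $(\neg x^{(i)}\lor x^{(i+1)})$ to $3$-clauses by repeating a literal creates a clause with two incidences to the same variable (a multi-edge in the incidence graph, and it changes occurrence counts), and the equisatisfiability/occurrence-bound claims for the wires are only asserted — but these are fixable; the unconstructed cycle ordering is not, and as it stands the proposal is a plan rather than a proof.
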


For an instance of the Linked Planar 3-SAT problem, consisting of $\phi$ and $H$, let $H_{\phi}=(C\cup V, E_{\phi} \cup E(H))$ denote the union of $H$ and $G_{\phi}$.
See the left part of \Cref{fig:overall_hardness_construction} for an illustration of the planar graph $H_{\phi}$.
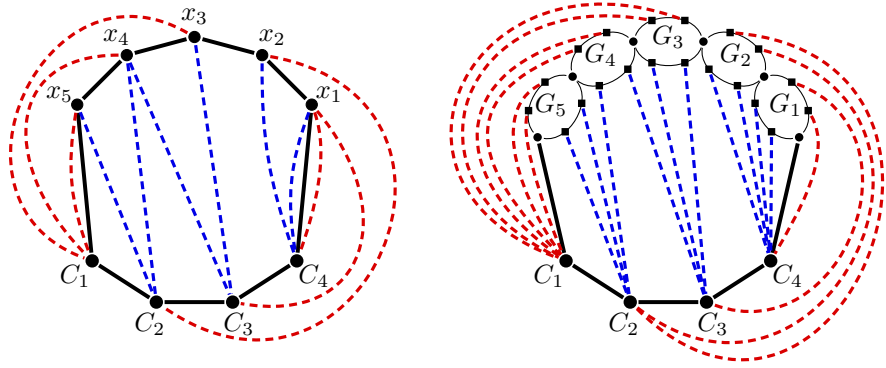
\begin{figure}
\begin{center}
\begin{tikzpicture}[
scale=0.6,
clause/.style={circle, thick,draw=black,fill=black,minimum size=2,inner sep=1.5pt, outer sep=1pt},
outer/.style={thick,draw=black,fill=black,inner sep=1pt, outer sep=1pt},
var/.style={draw=black,fill=black,circle,minimum size=2,inner sep=1.5pt, outer sep=1pt},
vertex/.style={thick,draw=white,fill=black,circle,minimum size=4,inner sep=1pt},
link/.style={very thick, densely dashed},
edge/.style={line width=1.5pt},
cut/.style={line width=2.5pt, opacity=0.5, green!70!black}
]

\useasboundingbox (-4.5,-4.3) rectangle (16,4);

\foreach \i in {1,...,5} {
  \pgfmathsetmacro\r{(\i)*(360/12)}
  \node[var] (q\i) at (\r:3) {};
  \node at (\r:3.45) {$x_{\i}$};
}
\foreach \i in {7,...,10} {
  \pgfmathsetmacro\r{(\i-0.25)*(360/11)}
  \node[clause] (q\i) at (\r:3) {};
   \pgfmathsetmacro\j{int(\i - 6)}
  \node at (\r:3.5) {$C_{\j}$};
}
\draw[edge] (q1) -- (q10);
\draw[edge] (q5) -- (q7);
\foreach \i in {1,...,4}{ 
  \pgfmathsetmacro\j{int(\i+1)}
  \draw[edge] (q\i) -- (q\j);
}
\foreach \i in {7,...,9}{ 
  \pgfmathsetmacro\j{int(\i+1)}
  \draw[edge] (q\i) -- (q\j);
}

\begin{scope}[link]
\begin{scope}[blue!90!black]
\draw (q5) -- (q8);
\draw (q4) -- (q8);
\draw (q4) -- (q9);
\draw (q3) -- (q9);

\draw (q2) to[bend right=10] (q10);
\draw (q1) to[bend right=15] (q10);
\end{scope}
\begin{scope}[red!85!black]
\draw[bend right=15] (q5) to (q7);
\draw[out=180,in=135,looseness=1.5] (q4) to (q7);
\draw[out=150,in=160,looseness=1.70] (q3) to (q7);

\draw[out=-10,in=-35,looseness=2.3] (q2) to (q8);
\draw[out=-70,in=70,looseness=0.8] (q1) to (q10);
\draw[out=-50,in=-10,looseness=1.6] (q1) to (q9);
\end{scope}
\end{scope}

\begin{scope}[shift={(10.5,0)}]
\foreach \i in {1,...,6} {
  \pgfmathsetmacro\r{(\i-0.5)*(360/12)}
  \node[vertex] (q\i) at (\r:3) {};
}
\foreach \i in {1,...,5} {
  \pgfmathsetmacro\r{(\i)*(360/12)}
  \node at (\r:3) {$G_{\i}$};
}
\draw[bend left=70] (q1) to node[pos=0.3, outer] (x1) {}  node[pos=0.7, outer] (y1) {} (q2);
\draw[bend right=70] (q1) to node[pos=0.3, outer] (z1) {}  node[pos=0.7, outer] (w1) {} (q2);
\draw[bend left=70] (q2) to node[pos=0.3, outer] (x2) {}  node[pos=0.7, outer] (y2) {} (q3);
\draw[bend right=70] (q2) to node[pos=0.3, outer] (z2) {}  node[pos=0.7, outer] (w2) {} (q3);
\draw[bend left=70] (q3) to node[pos=0.3, outer] (x3) {}  node[pos=0.7, outer] (y3) {} (q4);
\draw[bend right=70] (q3) to node[pos=0.3, outer] (z3) {}  node[pos=0.7, outer] (w3) {} (q4);
\draw[bend left=70] (q4) to node[pos=0.3, outer] (x4) {}  node[pos=0.7, outer] (y4) {} (q5);
\draw[bend right=70] (q4) to node[pos=0.3, outer] (z4) {}  node[pos=0.7, outer] (w4) {} (q5);
\draw[bend left=70] (q5) to node[pos=0.3, outer] (x5) {}  node[pos=0.7, outer] (y5) {} (q6);
\draw[bend right=70] (q5) to node[pos=0.3, outer] (z5) {}  node[pos=0.7, outer] (w5) {} (q6);

\foreach \i in {7,...,10} {
  \pgfmathsetmacro\r{(\i-0.25)*(360/11)}
  \node[clause] (q\i) at (\r:3) {};
   \pgfmathsetmacro\j{int(\i - 6)}
  \node at (\r:3.5) {$C_{\j}$};
}

\draw[edge] (q1) -- (q10);
\draw[edge] (q6) -- (q7);
\foreach \i in {7,...,9}{ 
  \pgfmathsetmacro\j{int(\i+1)}
  \draw[edge] (q\i) -- (q\j);
}

\begin{scope}[link]
\begin{scope}[blue!90!black]
\draw (x5) -- (q8);
\draw (y5) -- (q8);
\draw (y4) -- (q8);
\draw (x4) -- (q9);
\draw (x3) -- (q9);
\draw (y3) -- (q9);

\draw (x2) to (q10);
\draw (y2) to (q10);
\draw (x1) to (q10);
\draw (y1) to (q10);
\end{scope}
\begin{scope}[red!85!black]
\draw[bend right, out=290, in=-150, looseness=0.9] (z5) to (q7);
\draw[bend right, out=-30, in =190, looseness=0.5] (w5) to (q7);
\draw[out=200,in=135,looseness=1.6] (w4) to (q7);
\draw[out=190,in=145,looseness=1.7] (z4) to (q7);
\draw[out=170,in=155,looseness=1.9] (w3) to (q7);
\draw[out=160,in=160,looseness=2.1] (z3) to (q7);

\draw[out=-5,in=-45,looseness=2.5] (w2) to (q8);
\draw[out=-15,in=-38,looseness=2.1] (z2) to (q8);
\draw[out=-60,in=60,looseness=0.8] (z1) to (q10);
\draw[out=-20,in=-20,looseness=1.5] (w1) to (q9);
\end{scope}
\end{scope}

\end{scope}

\end{tikzpicture}
\end{center}
\caption{\label{fig:overall_hardness_construction}
Illustration of our hardness reduction with the graph $H_{\phi}$ on the left, and the corresponding instance of planar $k$-edge-connectivity augmentation on the right.
}
\end{figure}
We first prove the hardness of $k$-connectivity augmentation for $k=2$.
As we discuss later, this readily extends to every even $k \geq 2$.
We will then show the hardness for $k=3$, and that it extends to every odd $k \geq 3$.
First observe that we may assume without loss of generality that for every variable, both the corresponding negated literal and the positive literal appear at least once in some clause.
(Otherwise, the variable could be eliminated, maintaining all other properties required in Linked Planar 3-SAT.)
Because every variable occurs in at most three clauses, this implies that every literal appears in at most two clauses.

To describe our construction of an instance of planar $k$-edge-connectivity augmentation from a given instance of Linked Planar 3-SAT, we start with a backbone of the construction that is common to both cases $k=2$ and $k=3$.
Given an instance of Linked Planar 3-SAT, and $k\in \{2,3\}$, we construct an instance $(G,L)$ of planar $k$-edge-connectivity augmentation as follows.
Let $x_1,\dots, x_m$ be the variables of our 3-SAT instance, in the order in which they occur along the cycle $H$.
Starting with $G=H$ and $L=E_{\phi}$, we replace the $x_1$-$x_m$ path by gadgets $G_1,\dots G_m$. See the right part of \Cref{fig:overall_hardness_construction}.
Each gadget $G_i$ is given by the graph shown in \Cref{fig:variable_gadget_even}, where the two red links to the outside of the gadget replace the (up to) two links incident to $x_i$ that correspond to negated occurrences, and the two blue links to the outside of the gadget replace the (up to) two links incident to $x_i$ that correspond to positive occurrences.
(If there is just one occurrence, we simply create two copies of the corresponding edge.)
We identify the vertex $t_i$ of gadget $G_i$ with the vertex $s_{i+1}$ from the next gadget $G_{i+1}$, i.e., we have $t_{i}=s_{i+1}$ for $i=1,\dots, m-1$.
Note that the resulting graph $G$ together with the resulting link set $L$ is planar.

\begin{figure}
\begin{center}
\begin{tikzpicture}[
scale=0.7,
outer/.style={thick,draw=black,fill=black,minimum size=6,inner sep=2pt},
inner/.style={thick,draw=black,fill=white,minimum size=6,inner sep=2pt},
vertex/.style={thick,draw=white,fill=black,circle,minimum size=6,inner sep=2pt},
li/.style={very thick, dotted},
lo/.style={very thick, densely dashed},
edge/.style={line width=1.5pt},
cut/.style={line width=2.5pt, opacity=0.5, green!70!black}
]

\begin{scope}[rounded corners, cut]
\draw(1.7,-0.3) rectangle (4.3,2.3);
\draw(7.7,-0.3) rectangle (10.3,2.3);
\draw(4.7,0.3) rectangle (7.3,-2.3);
\draw(10.7,0.3) rectangle (13.3,-2.3);
\end{scope}

\begin{scope}[every node/.style={vertex}]
\node (s) at (0,0) {};
\node (v1) at (3,1) {};
\node (v2) at (9,1) {};
\node (v3) at (6,-1) {};
\node (v4) at (12,-1) {};
\node (t) at (15,0) {};
\end{scope}
\begin{scope}[every node/.style={inner}]
\node (a1) at (2,2) {};
\node (b1) at (4,2) {};
\node (c1) at (2,0) {};
\node (d1) at (4,0) {};

\node (a2) at (8,2) {};
\node (b2) at (10,2) {};
\node (c2) at (8,0) {};
\node (d2) at (10,0) {};

\node (a3) at (5,-2) {};
\node (b3) at (7,-2) {};
\node (c3) at (5,0) {};
\node (d3) at (7,0) {};

\node (a4) at (11,-2) {};
\node (b4) at (13,-2) {};
\node (c4) at (11,0) {};
\node (d4) at (13,0) {};

\node (s2) at (1,0) {};
\node (t2) at (14,0) {};
\end{scope}
\begin{scope}[every node/.style={outer}]
\node (o1) at (3,3) {};
\node (o2) at (9,3) {};
\node (o3) at (6,-3) {};
\node (o4) at (12,-3) {};
\end{scope}

\begin{scope}[edge]
\draw [bend left=45] (s) to (o1);
\draw (o1) -- (o2);
\draw [bend left=28] (o2) to (t);
\draw [bend left=45] (t) to (o4);
\draw (o4) -- (o3);
\draw [bend left=28] (o3) to (s);

\foreach \i in {1,2,3,4}{
  \draw[bend left] (v\i) to (a\i);
  \draw[bend right] (v\i) to (a\i);
  \draw[bend left] (v\i) to (b\i);
  \draw[bend right] (v\i) to (b\i);
  \draw (c\i) to (v\i);
  \draw (v\i) to (d\i);
}

\draw (s) -- (s2) -- (c1);
\draw (d1) -- (c3);
\draw (d3) -- (c2);
\draw (d2) -- (c4);
\draw (d4) -- (t2) -- (t);
\end{scope}

\begin{scope}[red!85!black]
\begin{scope}[li]
\draw (s2) -- (a1);
\draw (b1) -- (a2);
\draw (b2) -- (t2);
\draw (c1) -- (d1);
\draw (c2) -- (d2);
\draw (a3) -- (c3);
\draw (b3) -- (d3);
\draw (a4) -- (c4);
\draw (b4) -- (d4);
\end{scope}
\begin{scope}[lo]
\draw (v3) -- (o3);
\draw (v4) -- (o4);
\draw (o1) -- (3,4.3);
\draw (o2) -- (9,4.3);
\end{scope}
\node at (6,3.8) {$L_{\rm{false}}^{(i)}$};
\end{scope}
\begin{scope}[blue!90!black]
\begin{scope}[li]
\draw (s2) -- (a3);
\draw (b3) -- (a4);
\draw (b4) -- (t2);
\draw (c3) -- (d3);
\draw (c4) -- (d4);
\draw (a1) -- (c1);
\draw (b1) -- (d1);
\draw (a2) -- (c2);
\draw (b2) -- (d2);
\end{scope}
\begin{scope}[lo]
\draw (v1) -- (o1);
\draw (v2) -- (o2);
\draw (o3) -- (6,-4.3);
\draw (o4) -- (12,-4.3);
\end{scope}
\node at (9,-3.8) {$L_{\rm{true}}^{(i)}$};
\end{scope}

\node[left=2pt] at (s) {$t_i$};
\node[right=2pt] at (t) {$s_i$};

\end{tikzpicture}
\end{center}
\caption{\label{fig:variable_gadget_even}
The gadget $G_i$ for the case $k=2$.
The edges belonging to $G$ are shown as black solid lines.
The links are shown by dashed an dotted lines, where the dashed links are those links incident to outer vertices (filled squares) of degree $k=2$ in $G$, and dotted lines correspond to links incident to inner vertices (empty squares) of degree $2$ in $G$. 
We refer to the set of red links as $L^{(i)}_{\rm false}$ and to the set of blue links as $L^{(i)}_{\rm true}$.
We call the cuts indicated in green the \emph{consistency cuts} of the gadget $G_i$.
}
\end{figure}

Consider now the $k=2$ case.
Note that $G$ is a $2$-edge-connected graph.
Observe that the vertices of degree $2$ in $G$ are the clauses $C$ and the vertices shown as squares in the gadgets $G_i$.
We call the vertices of $G_i$ shown by filled squares, the \emph{outer} vertices of $G_i$, and we call the vertices shown as empty squares the \emph{inner} vertices of $G_i$.
Every gadget $G_i$ has $18$ inner vertices and $4$ outer vertices.

\begin{lemma}\label{lem:cardinality_hardness}
The instance $(G,L)$ of planar 2-connectivity augmentation has a solution of cardinality (at most) $13m$ if and only if the SAT-instance $\phi$ is satisfiable.
\end{lemma}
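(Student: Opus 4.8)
The plan is to establish the two directions of the equivalence by exhibiting, for each gadget $G_i$, exactly two ``canonical'' ways of covering all $k$-cuts that are local to $G_i$: one using the red link set $L^{(i)}_{\mathrm{false}}$ and one using the blue link set $L^{(i)}_{\mathrm{true}}$, each of cardinality exactly $13$. First I would catalogue the $2$-cuts of $G$. Since $G$ is $2$-edge-connected and every $2$-cut is a connected cut, a $2$-cut either separates a single degree-$2$ vertex, or ``cuts off'' a connected piece whose boundary consists of two edges. The degree-$2$ vertices are the clause vertices $C_j$ and the inner and outer square vertices of the gadgets. Covering the singleton cut $\{C_j\}$ forces at least one of the three literal-links at $C_j$ to be selected, which is exactly the clause-satisfaction condition. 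Covering the cuts around the square vertices inside a gadget $G_i$ forces, essentially, that one of $L^{(i)}_{\mathrm{false}}$, $L^{(i)}_{\mathrm{true}}$ be ``used'' in a consistent way; the \emph{consistency cuts} drawn in green are the ones that enforce that the choice made along the ``top'' pair of cells agrees with the choice made along the ``bottom'' pair, so that a gadget cannot simultaneously send out a positive-occurrence link and a negative-occurrence link. I would verify by inspection of \cref{fig:variable_gadget_even} that: (a) each of the two canonical link sets has exactly $13$ links and covers every $2$-cut of $G$ contained in (the edge set of) $G_i$ together with the two ``half'' cuts shared with the neighbouring gadgets; and (b) any link set covering all of these local cuts has at least $13$ links inside/incident to $G_i$, with equality only for the two canonical choices (up to irrelevant swaps of parallel links). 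This is the combinatorial heart of the argument and the step I expect to be the main obstacle, since it requires a careful but finite case check that no ``mixed'' selection of red and blue links can beat $13$ while still covering the consistency cuts.

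Given that local analysis, the forward direction is routine: from a satisfying assignment, for each variable $x_i$ pick the blue set $L^{(i)}_{\mathrm{true}}$ if $x_i$ is true and the red set $L^{(i)}_{\mathrm{false}}$ if $x_i$ is false; this is $13m$ links total, it covers all local $2$-cuts of every gadget by construction, and it covers every singleton clause cut $\{C_j\}$ because some literal in $C_j$ is satisfied and the corresponding link (which lies in the chosen set of that literal's variable-gadget) is present. One then checks there are no further $2$-cuts to worry about --- the only $2$-cuts of $G$ are the gadget-local ones, the clause singletons, and cuts entirely along the clause path $C_1,\dots$, all of which are covered --- and invokes the standard fact that covering all $2$-cuts makes $G+F$ $3$-edge-connected, i.e.\ a feasible solution to $2$-CAP.

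For the converse, suppose $F\subseteq L$ with $|F|\le 13m$ makes $G+F$ $3$-edge-connected. By the lower bound from the local analysis, $F$ must contain at least $13$ links incident to each gadget $G_i$, and since the gadgets partition the relevant link-incidences, $F$ has exactly $13$ per gadget and hence must be, within each gadget, one of the two canonical sets. Reading off ``true'' when $G_i$ uses $L^{(i)}_{\mathrm{true}}$ and ``false'' otherwise gives an assignment; the consistency cuts guarantee this is well-defined (the gadget cannot have used a blue outgoing link and a red outgoing link simultaneously), and feasibility of $F$ at the singleton cuts $\{C_j\}$ says each clause receives at least one satisfied literal-link, so $\phi$ is satisfied. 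I would close by noting that the total vertex/link count is linear in the size of the Linked Planar 3-SAT instance, so the reduction is polynomial, and that planarity of $G+L$ was arranged by the construction (the red links route to the exterior and the blue links to the interior of $H$, matching \cref{thm:linked_planar_sat}), completing the proof of \cref{lem:cardinality_hardness}.
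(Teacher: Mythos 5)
Your overall skeleton matches the paper's proof (two canonical $13$-link sets per gadget, forward direction by choosing $L^{(i)}_{\rm true}$ or $L^{(i)}_{\rm false}$ according to the assignment, reverse direction by reading off the gadget's choice and using the clause singleton cuts), but the combinatorial core is left as an unexecuted ``finite case check,'' and that is exactly where the content lies. The missing idea is a short counting argument, not a case analysis: every inner and outer square vertex has degree $2$ in $G$, hence must receive an incident link; links between inner vertices exist only inside a single gadget, and no link joins an outer vertex to any other degree-$2$ vertex. Therefore $|F|\ge \tfrac12\cdot 18m+4m=13m$, and at equality $F$ must consist of a perfect matching on the inner vertices plus exactly one link per outer vertex. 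The rigidity you then need is also cheaper than you suggest: within a gadget the dotted links admit only two perfect matchings on the inner vertices (the red and the blue one), and if, say, the blue matching is chosen, the consistency cuts at the two top cells can only be covered by the blue dashed links $v_1o_1$ and $v_2o_2$, which uses up the single link allowed at $o_1,o_2$ and hence excludes all red links from $G_i$ to clauses. That weaker statement is all the reverse direction needs.

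Beware also that your claim (b) as stated is false: equality does \emph{not} force one of the two canonical sets exactly. With the blue inner matching, the bottom consistency cuts are already covered by the dotted links $s_2a_3$ and $b_3a_4$, so the outer vertex $o_3$ may take the red dashed link $v_3o_3$ instead of its blue dashed link to a clause, still yielding a feasible $13m$-link solution. This does not break the reduction, because such a gadget simply sends fewer (but never differently colored) links to clauses, so the assignment is still well defined and every clause vertex, being a degree-$2$ vertex of $G$, still receives a link of the color chosen by some incident variable gadget. But as written, your argument both omits the lower-bound proof and asserts a rigidity statement that a correct case analysis would refute, so the proposal does not yet constitute a proof of \cref{lem:cardinality_hardness}.
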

\begin{proof}
First, we show that every feasible solution must have cardinality at least $13m$.
Let $U^i_{\rm inner}$ and $U^i_{\rm outer}$ be the sets of inner and outer vertices of the gadget $G_i$, respectively.
Let $U_{\rm inner} \coloneqq \bigcup_{i\in [m]} U^i_{\rm inner}$ and $U_{\rm outer} \coloneqq \bigcup_{i\in [m]} U^i_{\rm outer}$.
Because every vertex in $U \coloneqq U_{\rm inner} \cup U_{\rm outer}$ has degree $k=2$ in $G$, any feasible solution $F\subseteq L$ must contain a link incident to each vertex from $U$.
Because there is no link connecting any vertex from $U_{\rm outer}$ to any other vertex from $U$,  any feasible solution $F\subseteq L$ satisfies
\[
|F| \ \geq\ \tfrac{1}{2} \cdot \big|U_{\rm inner}\big| + \big|U_{\rm outer}\big| \ =\  \tfrac{1}{2} \cdot 18m + 4m \ =\ 13m.
\]
This argument also shows that any solution of cardinality $13m$ consists of exactly one link incident to each outer vertex, i.e., each vertex from $U_{\rm outer}$, and a perfect matching on the inner vertices, i.e., the vertices from $U_{\rm inner}$.

If the SAT instance $\phi$ is satisfiable, then a solution of size $13m$ exists. 
To see this, include for each $i\in [m]$ the links from $L^{(i)}_{\rm true}$ whenever the variable $x_i$ is set to true and include the links from $L^{(i)}_{\rm false}$ otherwise, i.e., when $x_i$ is set to false.

For the reverse direction, suppose that there exists a solution $F\subseteq L$ of cardinality $13m$.
We need to show that the SAT instance $\phi$ is satisfiable.
Recall that $F$ must contain a perfect matching on the inner vertices and observe that there are no links between inner vertices of different gadgets.
Within each gadget $G_i$, there are only two different perfect matchings on the inner vertices (the red and the blue dotted matching in \Cref{fig:variable_gadget_even}), one of which is contained in $L^{(i)}_{\rm true}$ and the other is contained in $L^{(i)}_{\rm false}$.
If $F$ contains the matching from $L^{(i)}_{\rm true}$, then in order to cover the consistency cuts of the gadget $G_i$ (the green cuts in \Cref{fig:variable_gadget_even}), it must also contain the two blue edges incident two the outer vertices at the top of \Cref{fig:variable_gadget_even}. 
Thus, because every outer vertex must be incident to exactly one link from $F$, the only links connecting a vertex from $G_i$ to a clause are links from $L^{(i)}_{\rm true}$.
Analogously, if $F$ contains the matching from $L^{(i)}_{\rm false}$, then the only links connecting a vertex from $G_i$ to a clause are links from $L^{(i)}_{\rm false}$.
Because every clause vertex has degree $2$ in $G$ and must therefore have at least on incident link from $L$, we obtain a satisfying truth assignment when setting the variable $x_i$ to true if and only if $L$ contains a matching on $U^i_{\rm inner}$ from $L^{(i)}_{\rm true}$.

\end{proof}

\Cref{lem:cardinality_hardness} together with \Cref{thm:linked_planar_sat} implies the hardness of planar $k$-edge-connectivity augmentaion for $k=2$.
For any even integer $k \geq 2$, we can simply consider the instance that contains $\tfrac{k}{2}$ copies of every edge from $G$.

For $k=3$,  we include $2$ copies of each of the edges $\{t_m, C_1\}$, $\{C_1, C_2\}$, \dots, $\{C_{l-1}, C_l\}$, $\{C_l, s_1\}$ (instead of just one copy as before).
Moreover, we replace every clause $C_j$ by the gadget shown in \Cref{fig:clause_gadget}, and we use a different gadget $G_i$ for the variables (see \Cref{fig:variable_gadget_odd}).

\begin{figure}
\begin{center}
\begin{tikzpicture}[
scale=1.3,
vertex/.style={thick,draw=white,fill=black,circle,minimum size=6,inner sep=2pt},
outer/.style={thick,draw=black,fill=black,minimum size=6,inner sep=2pt, outer sep=1pt},
li/.style={very thick, dotted},
lo/.style={very thick, densely dashed},
edge/.style={line width=1.5pt}
]

\begin{scope}[every node/.style = {vertex}]
\node (a) at (-1,0) {};
\node (b) at (1,0) {};
\end{scope}
\node[outer] (c) at (0,0) {};
\node[outer] (v) at (0,1) {};
\node[outer] (w) at (0,-1) {};

\node[below left=1pt and -6pt] at (a) {$a_j$};
\node[below right=1pt and -5pt] at (b) {$b_j$};
\node[below right=0pt and -1pt] at (c) {$c_j$};

\begin{scope}[edge]
\draw (b) -- (v);
\draw (b) -- (c);
\draw (b) -- (w);
\draw[bend left=10]  (a) to (v);
\draw[bend right=10] (a) to (v);
\draw[bend left=10]  (a) to (c);
\draw[bend right=10] (a) to (c);
\draw[bend left=10]  (a) to (w);
\draw[bend right=10] (a) to (w);

\draw (a) -- (-1.5, 0.2);
\draw (a) -- (-1.5, -0.2);

\draw (b) -- (1.5, 0.2);
\draw (b) -- (1.5, -0.2);
\end{scope}

\begin{scope}[lo]
\begin{scope}[blue!90!black]
\draw (v) -- (-0.2,1.7);
\draw (v) -- (0.2,1.7);
\end{scope}
\begin{scope}[red!85!black]
\draw (w) -- (-0.2,-1.7);
\draw (w) -- (0.2,-1.7);
\end{scope}
\end{scope}
\begin{scope}[li]
\begin{scope}[green!70!black]
\draw (v) -- (c) --(w);
\end{scope}
\end{scope}

\end{tikzpicture}
\hspace{20mm}
\begin{tikzpicture}[
scale=0.65,
vertex/.style={thick,draw=white,fill=black,circle,minimum size=3,inner sep=1pt},
outer/.style={thick,draw=black,fill=black,minimum size=3,inner sep=1pt, outer sep=1pt},
li/.style={very thick, dotted},
lo/.style={very thick, densely dashed},
edge/.style={line width=1.5pt}
]

\node[vertex] (s) at (0:4) {};
\node[right] at (s) {$s_1$};
\node[vertex] (t) at (180:4) {};
\node[left] at (t) {$t_m$};
\foreach \i in {3,6,10}{
  \pgfmathsetmacro\ra{180+(\i-2)*(360/22)}
  \pgfmathsetmacro\rc{180+(\i-1)*(360/22)}
  \pgfmathsetmacro\rb{180+(\i)*(360/22)}
  
  \node[outer] (c\i) at (\rc:4) {};
  \node[vertex] (a\i) at (\ra:4) {};
  \node[vertex] (b\i) at (\rb:4) {};
  \node[outer]  (v\i) at (\rc:3) {};
  \node[outer]  (w\i) at (\rc:5) {};
  
\begin{scope}[lo]
\begin{scope}[blue!90!black]
 \pgfmathsetmacro\rvone{\rc-2}
 \pgfmathsetmacro\rvtwo{\rc+2}
\draw (v\i) -- (\rvone:2.5);
\draw (v\i) -- (\rvtwo:2.5);
\end{scope}
\begin{scope}[red!85!black]
 \pgfmathsetmacro\rvone{\rc-2}
 \pgfmathsetmacro\rvtwo{\rc+2}
\draw (w\i) -- (\rvone:5.5);
\draw (w\i) -- (\rvtwo:5.5);
\end{scope}
\end{scope}
\begin{scope}[li]
\begin{scope}[green!70!black]
\draw (v\i) -- (c\i) --(w\i);
\end{scope}
\end{scope}
}

\begin{scope}[edge]

\foreach \i in {3,6,10}{
\draw (b\i) -- (v\i);
\draw (b\i) -- (c\i);
\draw (b\i) -- (w\i);
\draw[bend left=10]  (a\i) to (v\i);
\draw[bend right=10] (a\i) to (v\i);
\draw[bend left=10]  (a\i) to (c\i);
\draw[bend right=10] (a\i) to (c\i);
\draw[bend left=10]  (a\i) to (w\i);
\draw[bend right=10] (a\i) to (w\i);
}

\draw[bend left=10] (t) to (a3);
\draw[bend right=10] (t) to (a3);

\draw[bend left=10] (b3) to (a6);
\draw[bend right=10] (b3) to (a6);

\draw[bend left=10] (b10) to (s);
\draw[bend right=10] (b10) to (s);

  \pgfmathsetmacro\rl{180+(6.5)*(360/22)}
  \pgfmathsetmacro\rr{180+(7.5)*(360/22)}
  
 \draw (b6) -- (\rl:3.9);
 \draw (b6) -- (\rl:4.1);
 
 \draw (a10) -- (\rr:3.9);
 \draw (a10) -- (\rr:4.1);
 
  \pgfmathsetmacro\rp{180+(7)*(360/22)}
  \node at (\rp:4) { $\dots$};

\end{scope}

\end{tikzpicture}
\end{center}
\caption{\label{fig:clause_gadget}
The gadget by which we replace a clause $C_j$ (left) and an illustration of the graph resulting from the path with vertices $t_m, C_1, C_2, \dots, C_l, s_1$ (right) in the hardness reduction for $k=3$.
}
\end{figure}
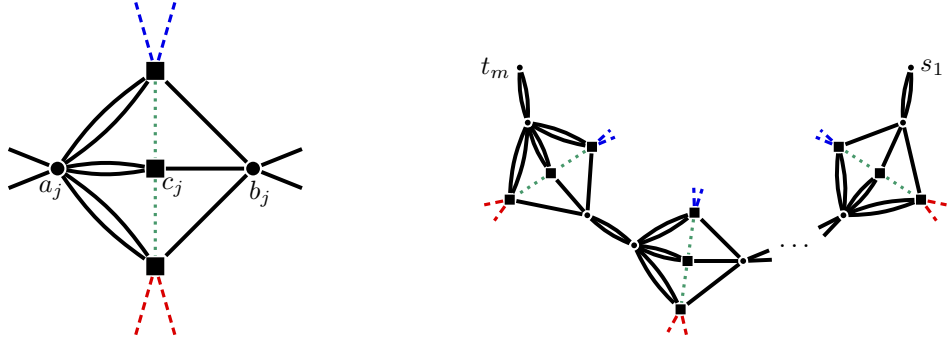

\begin{figure}
\begin{center}
\begin{tikzpicture}[
xscale=1.2, yscale=0.8,
outer/.style={thick,draw=black,fill=black,minimum size=6,inner sep=2pt},
inner/.style={thick,draw=black,fill=white,minimum size=6,inner sep=2pt, outer sep=1pt},
vertex/.style={thick,draw=white,fill=black,circle,minimum size=6,inner sep=2pt},
li/.style={very thick, dotted},
lo/.style={very thick, densely dashed},
edge/.style={line width=1.5pt},
cut/.style={line width=2.5pt, opacity=0.5, green!70!black}
]

\foreach \i in {1,...,22} {
  \pgfmathsetmacro\r{(\i-1)*(360/22)}
  \coordinate (c\i) at (\r:4) {};
}

\begin{scope}[rounded corners, cut]
\foreach \i in {3,9,14,20} {
 \pgfmathsetmacro\a{(\i-1)*(360/22)}
  \pgfmathsetmacro\b{(\i-2.3)*(360/22)}
 \pgfmathsetmacro\c{(\i-1.5)*(360/22)}
 \pgfmathsetmacro\d{(\i -0.5)*(360/22)}
  \pgfmathsetmacro\e{(\i + 0.5)*(360/22)}
  \pgfmathsetmacro\f{(\i + 1.3)*(360/22)}
  \draw (\a:1) -- (\b:4.2) -- (\c:4.3)  -- (\d: 4.3) -- (\e:4.3) -- (\f : 4.2)-- cycle;
  }
\end{scope}

\begin{scope}[every node/.style={vertex}]
\foreach \i in {3,9,14,20} {
  \pgfmathsetmacro\r{(\i-1)*(360/22)}
  \node (q\i) at (\r:1.5) {};
}
\node (v0) at (0,0) {};
\node (v12) at (c12) {};
\node (v3) at (c3) {};
\node (v9) at (c9) {};
\node (v14) at (c14) {};
\node (v20) at (c20) {};
\node (v1) at (c1) {};
\end{scope}
\begin{scope}[every node/.style={inner}]
\node (s2) at (180:3) {};
\node (t2) at (0:3) {};
\foreach \i in {3,9,14,20} {
  \pgfmathsetmacro\r{(\i-1)*(360/22)}
  \node (p\i) at (\r:3) {};
}

\node (v2) at (c2) {};
\node (v4) at (c4) {};
\node (v5) at (c5) {};
\node (v6) at (c6) {};
\node (v7) at (c7) {};
\node (v8) at (c8) {};
\node (v10) at (c10) {};
\node (v11) at (c11) {};
\node (v13) at (c13) {};
\node (v15) at (c15) {};
\node (v16) at (c16) {};
\node (v17) at (c17) {};
\node (v18) at (c18) {};
\node (v19) at (c19) {};
\node (v21) at (c21) {};
\node (v22) at (c22) {};
\end{scope}
\begin{scope}[every node/.style={outer}]
\node (o1) at (4,3.5) {};
\node (o2) at (-4,3.5) {};
\node (o3) at (-4,-3.5) {};
\node (o4) at (4,-3.5) {};
\end{scope}

\begin{scope}[edge]
\draw (v1) -- (v2) -- (v3) -- (v4) -- (v5) -- (v6) -- (v7) -- (v8) -- (v9) -- (v10) -- (v11) -- (v12)
-- (v13) -- (v14) -- (v15) -- (v16) -- (v17) -- (v18) -- (v19) -- (v20) -- (v21) -- (v22)  -- (v1);

\draw (t2) -- (v0) -- (s2);
\draw[bend left] (t2) to (v1);
\draw[bend right] (t2) to (v1);
\draw[bend left] (s2) to (v12);
\draw[bend right] (s2) to (v12);

\foreach \i in {3,9,14,20} {
  \draw (v0) -- (q\i) -- (p\i);
  \draw[bend left] (p\i) to (v\i);
  \draw[bend right] (p\i) to (v\i);
}
\draw (v2) -- (q3);
\draw (v4) -- (q3);
\draw (v5) -- (q3);
\draw (v6) -- (v0);
\draw (v7) -- (v0);
\draw (v8) -- (q9);
\draw (v10) -- (q9);
\draw (v11) -- (q9);
\draw (v13) -- (q14);
\draw (v15) -- (q14);
\draw (v16) -- (q14);
\draw (v17) -- (v0);
\draw (v18) -- (v0);
\draw (v19) -- (q20);
\draw (v21) -- (q20);
\draw (v22) -- (q20);

\draw[bend right=12] (v1) to (o1);
\draw[bend right=20] (v1) to (o1);
\draw[bend right=12] (o2) to (v12);
\draw[bend right=20] (o2) to (v12);
\draw[bend right=12] (v12) to (o3);
\draw[bend right=20] (v12) to (o3);
\draw[bend right=12] (o4) to (v1);
\draw[bend right=20] (o4) to (v1);
\draw[bend right] (o1) to (o2);
\draw[bend right] (o3) to (o4);
\end{scope}

\begin{scope}[red!85!black]
\node at (4.8,4.0) {$L_{\rm{false}}^{(i)}$};
\begin{scope}[li]
\draw (t2) -- (v2);
\draw (p3) -- (v4);
\draw[bend left] (v5) to (v6);
\draw[bend left] (v7) to (v8);
\draw (p9) -- (v10);
\draw (v11) -- (s2);

\draw (v13) -- (p14);
\draw[bend left] (v15) to (v16);
\draw[bend left] (v17) to (v18);
\draw (v19) -- (p20);
\draw[bend left] (v21) to (v22);
\end{scope}
\begin{scope}[lo]
\draw (o3) -- (v14);
\draw (o4) -- (v20);
\draw (o1) -- (4,4.8) {};
\draw (o2) -- (-4,4.8) {};
\end{scope}
\end{scope}

\begin{scope}[blue!90!black]
\node at (4.8,-4.0) {$L_{\rm{true}}^{(i)}$};
\begin{scope}[li]
\draw (v2) -- (p3);
\draw[bend left] (v4) to (v5);
\draw[bend left] (v6) to (v7);
\draw (v8) -- (p9);
\draw[bend left] (v10) to (v11);
\draw (s2) -- (v13);
\draw (p14) -- (v15);
\draw[bend left] (v16) to (v17);
\draw[bend left] (v18) to (v19);
\draw (p20) -- (v21);
\draw (v22) -- (t2);
\end{scope}
\begin{scope}[lo]
\draw (o1) -- (v3);
\draw (o2) -- (v9);

\draw (o3) -- (-4,-4.8) {};
\draw (o4) -- (4,-4.8) {};
\end{scope}
\end{scope}

\node[left=2pt] at (v12) {$t_i$};
\node[right=2pt] at (v1) {$s_i$};

\end{tikzpicture}
\end{center}
\caption{\label{fig:variable_gadget_odd}
The gadget $G_i$ for the case $k=3$.
The edges belonging to $G$ are shown as black solid lines.
The links are shown by dashed and dotted lines, where the dashed links are those links incident to outer vertices (filled squares) of degree $3$ in $G$, and dotted lines correspond to links incident to inner vertices (empty squares) of degree $3$ in $G$. 
We refer to the set of red links as $L^{(i)}_{\rm false}$ and to the set of blue links as $L^{(i)}_{\rm true}$.
We call the cuts indicated in green the \emph{consistency cuts} of the gadget $G_i$.
}
\end{figure}
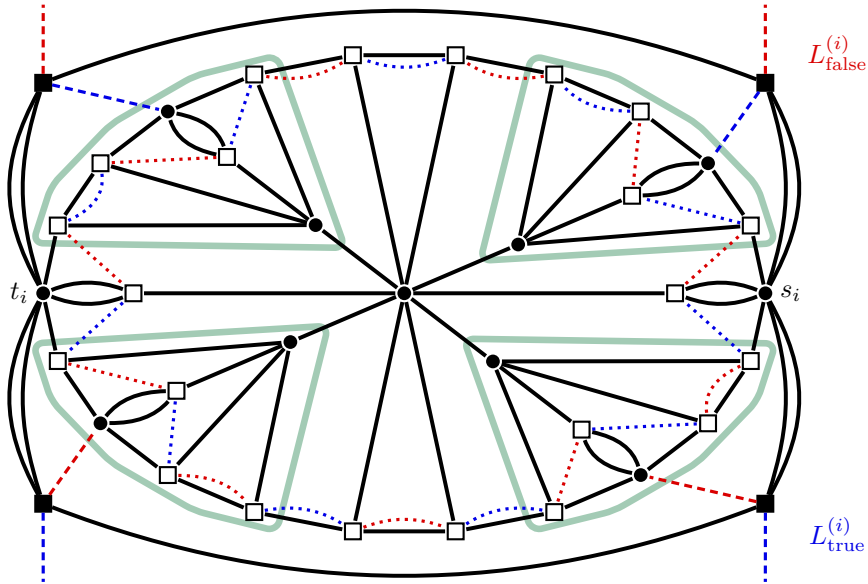

\begin{lemma}\label{lem:cardinality_hardness_k3}
The instance $(G,L)$ of planar 3-connectivity augmentation has a solution of cardinality (at most) $15m+ l$ if and only if the SAT-instance $\phi$ is satisfiable, where $m$ is the number of variables of $\phi$ and $l$ the number of clauses.
\end{lemma}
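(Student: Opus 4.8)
The plan is to mirror the structure of the proof of \Cref{lem:cardinality_hardness}, adapting the counting argument and the consistency/clause analysis to the $k=3$ gadgets. First I would establish the lower bound on the cardinality of any feasible solution by a degree-counting argument: every inner and outer vertex of a variable gadget has degree $3$ in $G$, as does the vertex $c_j$ (and $v_j,w_j$) in each clause gadget, so each such vertex must be incident to a link in any feasible $F\subseteq L$. Since outer vertices of the variable gadgets and the $v_j,w_j$ vertices of the clause gadgets are not connected by links to any other degree-$3$ vertex, each of them forces a \emph{private} link, while the inner vertices of the variable gadgets and the $c_j$ vertices can share links only within their own gadget. Counting the inner vertices ($22$ per variable gadget, forming two possible perfect matchings as in \Cref{fig:variable_gadget_odd}), the outer vertices ($4$ per variable gadget), and the clause-gadget degree-$3$ vertices (three per clause: $c_j$ forced to share, $v_j$ and $w_j$ each forcing a private link — or whatever the precise tally is from \Cref{fig:clause_gadget}), I would arrive at the bound $|F|\ge 15m+l$, and simultaneously observe that equality forces $F$ to consist of exactly one of the two ``rotation'' matchings on the inner vertices of each $G_i$, exactly one private link per outer vertex, and a tightly constrained choice inside each clause gadget.

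Next I would argue the ``only if'' direction (satisfiability $\Rightarrow$ small solution). Given a satisfying assignment, for each variable $x_i$ pick the matching $L^{(i)}_{\rm true}$ or $L^{(i)}_{\rm false}$ according to the truth value, together with the forced outgoing links to the clauses that this choice makes available; inside each clause gadget pick the links dictated by a true literal appearing in that clause (the clause gadget is designed so that covering its ``consistency cuts'' — the green cuts in \Cref{fig:clause_gadget} — is possible using exactly the minimal number of links whenever at least one incoming literal-link is present). Then I would verify feasibility: every $3$-cut of $G$ must be crossed. The $3$-cuts come in three flavors — those internal to a variable gadget (covered because we picked a consistent matching plus its outgoing links, in particular the consistency cuts are covered), those internal to a clause gadget (covered by the satisfied-literal links), and the ``backbone'' cuts along the cycle through $t_m,C_1,\dots,C_l,s_1$ and around the variable gadgets (covered because we doubled the relevant edges and because each gadget contributes its private outer links). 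This should be a routine but somewhat tedious cut enumeration.

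For the ``if'' direction (small solution $\Rightarrow$ satisfiable), I would start from a solution $F$ of size $15m+l$, invoke the tightness consequences from the first step: within each $G_i$, $F$ restricted to the inner vertices is exactly one of the two matchings, say the one in $L^{(i)}_{\rm true}$ w.l.o.g.; then the consistency cuts of $G_i$ force $F$ to also contain the two ``true'' outgoing links at that side of the gadget, and since each outer vertex carries exactly one link, the \emph{only} links of $F$ leaving $G_i$ toward clause gadgets are from $L^{(i)}_{\rm true}$. Dually for the false case. Setting $x_i$ true iff $F$ uses the $L^{(i)}_{\rm true}$ matching, I then check that each clause is satisfied: the clause gadget for $C_j$ has degree-$3$ vertices that must be incident to links of $F$, and the consistency cuts of the clause gadget (the green dotted structure in \Cref{fig:clause_gadget}) can only be covered if at least one of the incoming literal-links is present in $F$ — but an incoming literal-link for $C_j$ is in $F$ only if the corresponding variable gadget ``exported'' it, i.e., only if that literal is true under the assignment. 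Hence $C_j$ contains a true literal.

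The main obstacle I expect is the clause-gadget bookkeeping for $k=3$: unlike the $k=2$ case where clauses are just degree-$2$ vertices, here each clause is replaced by a nontrivial gadget, so I must carefully (i) count its forced links to get the ``$+l$'' term exactly right, (ii) identify its consistency cuts and show they are coverable precisely when a true literal is available and not otherwise, and (iii) handle the doubled backbone edges $\{t_m,C_1\},\{C_1,C_2\},\dots,\{C_l,s_1\}$ so that the global $3$-edge-connectivity is achieved without needing extra links beyond the claimed budget. Verifying that no ``cheating'' solution exists — e.g., one that covers a clause's consistency cuts using links internal to the clause gadget rather than an incoming literal-link, or one that mixes the two matchings across a variable gadget — is where the argument is most delicate, and it will hinge on the exact gadget structure in \Cref{fig:clause_gadget,fig:variable_gadget_odd}. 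Finally, combining \Cref{lem:cardinality_hardness_k3} with \Cref{thm:linked_planar_sat} yields \NP-hardness for $k=3$, and for any odd $k\ge 3$ one takes $\frac{k-3}{2}$ additional copies of every edge of $G$, which raises every degree by $k-3$ and every cut by $k-3$, reducing planar $k$-CAP to planar $3$-CAP on this instance.
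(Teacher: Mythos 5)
Your overall route is the same as the paper's (a degree-counting lower bound, tightness of a $15m+l$ solution, consistency cuts forcing the matching/outer-link choice, and extraction of an assignment, exactly as in \Cref{lem:cardinality_hardness}), but your accounting for the clause gadget is wrong in a way that breaks the key counting step. You claim that $v_j$ and $w_j$ are not joined by links to any other degree-$3$ vertex and hence each forces a \emph{private} link, while $c_j$ is ``forced to share''. The opposite is true: $v_j$ and $w_j$ are incident both to the internal links $\{c_j,v_j\},\{c_j,w_j\}$ (the green dotted lines in \Cref{fig:clause_gadget} are links, not cuts; the relevant $3$-cuts are the singletons $\{v_j\},\{c_j\},\{w_j\}$) and to the literal links coming from outer vertices of the variable gadgets, so they can be covered by links that are already counted; it is $c_j$, whose only incident links are the two internal ones, that contributes the extra $+1$ per clause. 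With your tally you would derive $|F|\ge 15m+2l$, contradicting the $15m+l$ solution you build from a satisfying assignment (eleven matching links and four outer links per variable gadget, plus a single link at $c_j$ per clause), so the stated equivalence at the threshold $15m+l$ would collapse under your accounting. The correct argument is: $11m$ from the inner matchings, $4m$ from the outer vertices (whose links never touch another inner/outer vertex), and $l$ from the $c_j$'s (whose links are disjoint from all of the former); then in a tight solution the unique $c_j$-link covers $c_j$ and one of $v_j,w_j$, forcing the other one to be covered by an incoming literal link, which by the consistency cuts of \Cref{fig:variable_gadget_odd} is present only when the corresponding literal is true. This is precisely the observation the paper isolates before saying the rest is analogous to \Cref{lem:cardinality_hardness}.

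A secondary point, outside the lemma itself but stated in your proposal: the extension to odd $k\ge 5$ by taking $\frac{k-3}{2}$ extra copies of \emph{every} edge of $G$ does not work. A cut of size $c$ grows by $c\cdot\frac{k-3}{2}$, not by $k-3$; already for $k=5$ every $3$-cut becomes a $6$-cut, the graph is $(k+1)$-edge-connected, and the augmentation instance is trivially solved by the empty link set. The paper instead duplicates only a $2$-edge-connected spanning subgraph $\tilde G$ chosen to contain exactly two edges in every $3$-cut, so that the $3$-cuts of $G$ become exactly the $k$-cuts of the new graph and no new minimum cuts appear.
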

\begin{proof}
Observe that within each gadget replacing a clause $C_j$, every solution $F\subseteq L$ needs to choose at least one link incident to the vertex $c_j$.
Moreover, if $L$ includes at least one link connecting the clause gadget to a variable gadget $G_i$, then one of the two links incident to $c_j$ is sufficient to cover the 3-cuts within the clause gadget.

Therefore, an analogous argument to the proof of \Cref{lem:cardinality_hardness} implies the claimed statement.

\end{proof}

\Cref{lem:cardinality_hardness_k3} together with \Cref{thm:linked_planar_sat} implies the hardness of planar $k$-edge-connectivity augmentaion for $k=3$.
For any odd integer $k \geq 3$, we add $\tfrac{k-3}{2}$ copies of the edges of some $2$-edge-connected spanning subgraph $\tilde{G}$ of $G$.
We choose the subgraph $\tilde{G}$ such that it contains exactly $2$ edges in every $3$-cut of $G$.
Then every cut in the resulting graph has size at least $3+ \tfrac{k-3}{2} \cdot 2 = k$ and a cut has size exactly $k$ if and only if the corresponding cut in $G$ has size $3$.
Thus, the minimum cuts are preserved, leading to an equivalent instance.
A possible choice of the graph $\tilde{G}$ is shown in \Cref{fig:boost_connectivity}, which shows the edges of $\tilde{G}$ in each variable gadget, and \Cref{fig:boost_connectivity_2}, which shows all other edges of $\tilde{G}$.

\begin{figure}
\begin{center}
\begin{tikzpicture}[
xscale=1.2, yscale=0.8,
outer/.style={thick,draw=black,fill=black,minimum size=6,inner sep=2pt},
inner/.style={thick,draw=black,fill=white,minimum size=6,inner sep=2pt, outer sep=1pt},
vertex/.style={thick,draw=white,fill=black,circle,minimum size=6,inner sep=2pt},
li/.style={very thick, dotted},
lo/.style={very thick, densely dashed},
edge/.style={line width=1.5pt},
cut/.style={line width=2.5pt, opacity=0.5, green!70!black}
]

\foreach \i in {1,...,22} {
  \pgfmathsetmacro\r{(\i-1)*(360/22)}
  \coordinate (c\i) at (\r:4) {};
}

\begin{scope}[rounded corners, cut]
\foreach \i in {3,9,14,20} {
 \pgfmathsetmacro\a{(\i-1)*(360/22)}
  \pgfmathsetmacro\b{(\i-2.3)*(360/22)}
 \pgfmathsetmacro\c{(\i-1.5)*(360/22)}
 \pgfmathsetmacro\d{(\i -0.5)*(360/22)}
  \pgfmathsetmacro\e{(\i + 0.5)*(360/22)}
  \pgfmathsetmacro\f{(\i + 1.3)*(360/22)}
  \draw (\a:1) -- (\b:4.2) -- (\c:4.3)  -- (\d: 4.3) -- (\e:4.3) -- (\f : 4.2)-- cycle;
  }
  
  \foreach \i in {3,9,14,20} {
 \pgfmathsetmacro\a{(\i-0.8)*(360/22)}
  \pgfmathsetmacro\b{(\i-1.2)*(360/22)}
  \draw (\a:2.8) -- (\b:2.8) -- (\b:4.15) -- (\a: 4.15) -- cycle;
  }
\end{scope}

\begin{scope}[every node/.style={vertex}]
\foreach \i in {3,9,14,20} {
  \pgfmathsetmacro\r{(\i-1)*(360/22)}
  \node (q\i) at (\r:1.5) {};
}
\node (v0) at (0,0) {};
\node (v12) at (c12) {};
\node (v3) at (c3) {};
\node (v9) at (c9) {};
\node (v14) at (c14) {};
\node (v20) at (c20) {};
\node (v1) at (c1) {};
\end{scope}
\begin{scope}[every node/.style={inner}]
\node (s2) at (180:3) {};
\node (t2) at (0:3) {};
\foreach \i in {3,9,14,20} {
  \pgfmathsetmacro\r{(\i-1)*(360/22)}
  \node (p\i) at (\r:3) {};
}

\node (v2) at (c2) {};
\node (v4) at (c4) {};
\node (v5) at (c5) {};
\node (v6) at (c6) {};
\node (v7) at (c7) {};
\node (v8) at (c8) {};
\node (v10) at (c10) {};
\node (v11) at (c11) {};
\node (v13) at (c13) {};
\node (v15) at (c15) {};
\node (v16) at (c16) {};
\node (v17) at (c17) {};
\node (v18) at (c18) {};
\node (v19) at (c19) {};
\node (v21) at (c21) {};
\node (v22) at (c22) {};
\end{scope}
\begin{scope}[every node/.style={outer}]
\node (o1) at (4,3.5) {};
\node (o2) at (-4,3.5) {};
\node (o3) at (-4,-3.5) {};
\node (o4) at (4,-3.5) {};
\end{scope}

\begin{scope}[edge]
\draw (v1) -- (v2) -- (v3);
\draw (v4) -- (v5) -- (v6) -- (v7) -- (v8) -- (v9);
\draw (v10) -- (v11) -- (v12) -- (v13) -- (v14);
\draw (v15) -- (v16) -- (v17) -- (v18) -- (v19);
\draw (v20) -- (v21) -- (v22)  -- (v1);

\draw (t2) -- (v0) -- (s2);
\draw (t2) to (v1);
\draw (s2) to (v12);

\foreach \i in {3,9,14,20} {
  \draw (q\i) -- (p\i);
  \draw (p\i) to (v\i);
}

\draw (v4) -- (q3);
\draw (v10) -- (q9);
\draw (v15) -- (q14);
\draw (v19) -- (q20);

\draw[bend right=20] (v1) to (o1);
\draw[bend right=20] (o2) to (v12);
\draw[bend right=20] (v12) to (o3);
\draw[bend right=20] (o4) to (v1);
\draw[bend right] (o1) to (o2);
\draw[bend right] (o3) to (o4);
\end{scope}

\node[left=2pt] at (v12) {$s_i$};
\node[right=2pt] at (v1) {$t_i$};

\end{tikzpicture}
\end{center}
\caption{\label{fig:boost_connectivity}
The set of edges of the gadget $G_i$ of which we take $h=\frac{k-3}{2}$ copies for odd $k\geq 3$.
The green cuts indicate all non-singleton cuts of size $3$ in the gadget $G_i$ we use for $k=3$.
(The vertices of degree $3$ are shown as squares.)
When adding $h$ copies of the shown edges, each of these cuts that had size $3$ in $G$, now has size $3+2h=k$, i.e, all minimum cuts are preserved.
}
\end{figure}
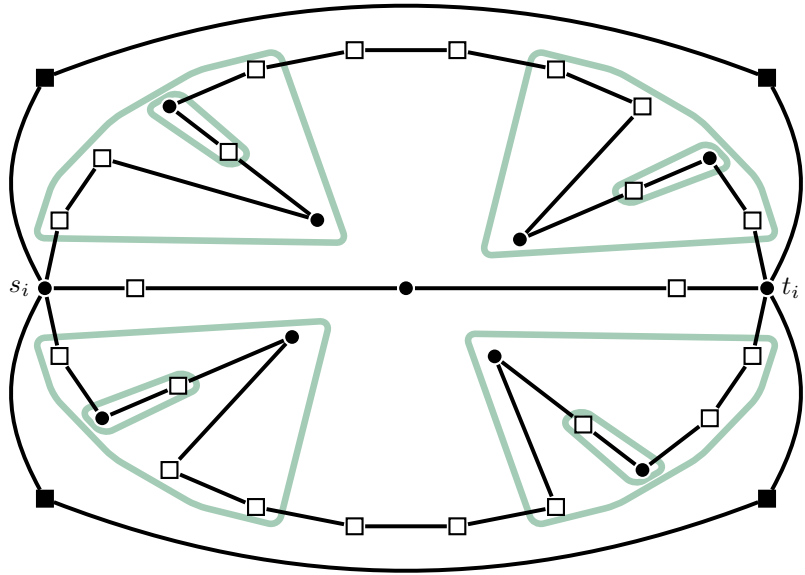

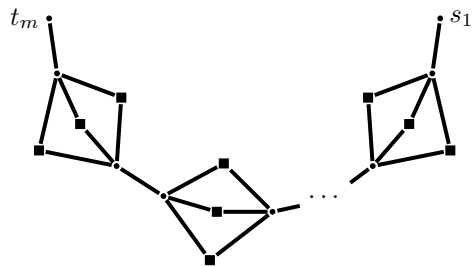
\begin{figure}
\begin{center}
\begin{tikzpicture}[
scale=0.65,
vertex/.style={thick,draw=white,fill=black,circle,minimum size=3,inner sep=1pt},
outer/.style={thick,draw=black,fill=black,minimum size=3,inner sep=1pt, outer sep=1pt},
li/.style={very thick, dotted},
lo/.style={very thick, densely dashed},
edge/.style={line width=1.5pt}
]

\node[vertex] (s) at (0:4) {};
\node[right] at (s) {$s_1$};
\node[vertex] (t) at (180:4) {};
\node[left] at (t) {$t_m$};
\foreach \i in {3,6,10}{
  \pgfmathsetmacro\ra{180+(\i-2)*(360/22)}
  \pgfmathsetmacro\rc{180+(\i-1)*(360/22)}
  \pgfmathsetmacro\rb{180+(\i)*(360/22)}
  
  \node[outer] (c\i) at (\rc:4) {};
  \node[vertex] (a\i) at (\ra:4) {};
  \node[vertex] (b\i) at (\rb:4) {};
  \node[outer]  (v\i) at (\rc:3) {};
  \node[outer]  (w\i) at (\rc:5) {};

}

\begin{scope}[edge]

\foreach \i in {3,6,10}{
\draw (b\i) -- (v\i);
\draw (b\i) -- (c\i);
\draw (b\i) -- (w\i);
\draw (a\i) to (v\i);
\draw (a\i) to (c\i);
\draw (a\i) to (w\i);
}

\draw (t) to (a3);
\draw(b3) to (a6);
\draw (b10) to (s);

  \pgfmathsetmacro\rl{180+(6.5)*(360/22)}
  \pgfmathsetmacro\rr{180+(7.5)*(360/22)}
  
 \draw (b6) -- (\rl:4);
 \draw (a10) -- (\rr:4);
 
  \pgfmathsetmacro\rp{180+(7)*(360/22)}
  \node at (\rp:4) { $\dots$};

\end{scope}

\end{tikzpicture}
\end{center}
\caption{\label{fig:boost_connectivity_2}
The edges of $\tilde{G}$ outside of the variable gadgets.
}
\end{figure}

This completes the proof of the hardness of $k$-edge-connectivity augmentation for each $k\geq 2$.
\FloatBarrier

\paragraph{Data availability statement required per UKRI open access policy:} No data are associated with this article. Data sharing is not applicable to this article.
\FloatBarrier
 \begingroup
 \sloppy
 \emergencystretch=1em
\printbibliography
\endgroup

\FloatBarrier

\newpage
\appendix

\section{Discussion on bidimensionality and ubiquity frameworks}\label{sec:bidimensionality_and_ubiquity}
In this section, we provide a more detailed explanation of why the theory of bidimensionality developed in \cite{demaineBidimensionalityNewConnections2005}, as well as the ubiquity framework introduced in \cite{cohen-addadApproximatingConnectivityDomination2016} and extended in \cite{fominApproximationSchemesWidth2019}, are not applicable in our setting.

\subsection{Bidimensionality}

The bidimensionality framework of \cite{demaineBidimensionalityNewConnections2005} requires problems to have what they call the separation property, which is not fulfilled by $k$-ECSS, $k$-VCSS, or $k$-CAP, even when restricted to planar graphs.
This separation property in particular requires the following:

\begin{quote}
Given any graph $G$, a vertex cut $C$, and an optimal solution $\OPT$ in $G$, then for any union $G'$ of some subset of connected components of $G-C$, we must have $|\OPT\cap G'| = |\OPT(G')| \pm \mathcal{O}(|C|)$.
\end{quote}

To see why the above property does not hold for $k$-ECSS, let $k$ be even and consider a large cycle $G=(V,E)$ with $k$ parallel edges between every pair of vertices that are consecutive on the cycle.
The optimal $k$-ECSS solution takes $k/2$ parallel edges between every pair of consecutive vertices on the cycle, and hence $|\OPT| = |V|\cdot \sfrac{k}{2}$.
Now, consider a vertex cut $C$ consisting of any two non-adjacent vertices.
The graph $G-C$ has two connected components, each being a path with $k$ parallel edges between any consecutive vertices on the path.
Hence, each connected component of $G-C$ only has a single $k$-ECSS solution, which contains all edges of the instance, i.e., $k$ parallel edges between every pair of consecutive vertices on the path.
Hence, if we let $G'$ be both connected components of $G-C$, we need $|\OPT(G')| = (|V|-2) k$ many edges to be $k$-edge-connected in both components.
However, $|\OPT \cap G'| = (|V|-2) \cdot \sfrac{k}{2}$.
Thus the difference between $|\OPT(G')|$ and $|\OPT\cap G'|$ is $(|V|-2)\cdot \sfrac{k}{2}$, which cannot be bounded by $\mathcal{O}(|C|) = \mathcal{O}(1)$.

It is not hard to adapt the above example to $k$-VCSS, obtaining the same issue.
For example, one can start with a cycle and connect each vertex to all vertices of distance at most $k$ on the cycle.
An optimal $k$-VCSS solution consists of including, for each vertex, all edges to the $\sfrac{k}{2}$ closest vertices in each direction on the cycle, which needs $|\OPT| = |V| \cdot \sfrac{k}{2}$ edges altogether.
However, after removing a vertex set $C$ of two segments of $k$ consecutive vertices each, every connected component of $G-C$ needs all remaining edges to be $k$-vertex-connected, which is $k \cdot |V| - \mathcal{O}(k)$ many edges.
Note that for $k=2$, this construction leads to a planar graph, showing that also planarity does not help to fulfill the separation property for $k$-VCSS.
(We recall, that planar $k$-VCSS is only relevant for $k\leq 5$ because a planar graph cannot be $k$-vertex-connected for $k\geq 6$.)

Also $k$-CAP does not fulfill the above-highlighted separation property.
For example, for $1$-CAP, one can get counterexamples by considering instances on a spider with a central vertex $c$ and an even number of paths of length $3$ attached to $c$.
In terms of links, one can think of having an unweighted instance with one link parallel to each edge of the spider, and, additionally, a set of links consisting of a perfect matching on the leaves of the spider.
By setting $C=\{c\}$, one can see that the separation property is violated.
Indeed, in the original instance it suffices to take the perfect matching on the leaves to ensure $2$-edge-connectivity, while in each connected component of $G-C$, one link is necessary to obtain $2$-edge-connectivity.
This leads to twice as many links in total compared to the original solution.

Moreover, note that for $k$-CAP, a subinstance of $G$ obtained by removing a vertex cut $C$ may not be a feasible instance anymore, even if the original instance is feasible.
For $k=1$, this can be addressed as follows. The impact of a link from $G$ to a connected component of $G-C$, which is a subtree, is easily described by a ``shorter'' link within the corresponding connected component in $G-C$, which is also called a \emph{shadow link} (see, e.g., \cite{adjiashviliBeatingApproximationFactor2018}).
However, it is unclear how to do this for $k\geq 2$.
Hence, even defining a clean notion of subinstance in such cases is not straightforward.

\subsection{Ubiquity}
In \cite{cohen-addadApproximatingConnectivityDomination2016}, the authors describe a framework to derive PTASs for special types of minimization problems on vertex- or edge-weighted bounded genus graphs. For the comparison to our work, we will focus on the edge-weighted case. The authors of \cite{cohen-addadApproximatingConnectivityDomination2016} consider minimization problems on edge-weighted graphs $(G,w)$ that satisfy the following properties:
\begin{enumerate}[(a)]
	\item \label{ubiquity:connected_and_treewidth} Every feasible edge set $S$ is connected. There is a constant $t$ such that $G/S$ has treewidth at most $t$ for every feasible $S$. (They call a problem with this property \emph{$t$-ubiquitous}.)
	\item \label{ubiquity:constant_factor} The problem admits a constant factor approximation.
	\item \label{ubiquity:FPT_PTAS} For every fixed $\epsilon > 0$, there is an algorithm, that, in time $\mathrm{poly}(n)\cdot 2^{\mathcal{O}(t)}$, returns a $(1+\epsilon)$-approximate solution to an instance with $n$ vertices and treewidth $t$.\footnote{The authors phrase this property in terms of the branchwidth. However, treewidth and branchwidth are within a constant factor of each other~\cite{ROBERTSON1991153}.}
	\item \label{ubiquity:lift} There exist a constant $\beta$ and a polynomial time algorithm that, given an instance $G$, a subgraph $K$ of $G$ and a feasible solution $S$ to $G/E(K)$, computes a solution of costs at most $w(S)+\beta\cdot w(E(K))$ for $G$. In particular, it is assumed that the class of feasible input instances is closed under edge contractions. 
\end{enumerate}
For planar graphs, the overall approach is to iteratively construct circular separators, that, together, yield a ``cheap'' subgraph $K$ whose contraction results in a graph of treewidth $\mathcal{O}(\log (n))$. For the contracted instance $G/K$, property \ref{ubiquity:FPT_PTAS} is used to obtain a $(1+\epsilon)$-approximate solution. Property \ref{ubiquity:lift} is invoked to complete it to a good solution for the whole instance.

Properties \ref{ubiquity:connected_and_treewidth} and \ref{ubiquity:constant_factor} are clearly satisfied by solutions to the $k$-WECSS and the $k$-WVCSS. Moreover, they can be achieved for $k$-WCAP when interpreting it as a special case of the $(k+1)$-WECSS where all non-link edges have cost $0$. However, issues arise with properties \ref{ubiquity:FPT_PTAS} and \ref{ubiquity:lift}. While polynomial-time algorithms for the $k$-(W)ECSS and the $k$-(W)VCSS in bounded treewidth graphs are known, the dependency of the running time on the treewidth is much worse than $2^{\mathcal{O}(t)}$, see, e.g.,~\cite{chalermsook_et_al:LIPIcs.APPROX-RANDOM.2018.8}. If one, e.g., enumerates the minimum sizes of cuts compatible with the partitions of a single bag, the framework from \cite{cohen-addadApproximatingConnectivityDomination2016} results in an exponential-time algorithm. Another issue arises with property \ref{ubiquity:lift}. As edge contractions may decrease the vertex connectivity, $G/K$ might not even be a feasible instance to the $k$-VCSS. For $k\ge 1$, property \ref{ubiquity:lift} does not hold for the $(k+1)$-WECSS or $k$-CAP. To see this, consider an instance where there is a $k$-edge-connected spanning subgraph of cost $0$, but edges of cost $1$ need to be added to make it $(k+1)$-edge-connected.
Summing up, the techniques used in \cite{cohen-addadApproximatingConnectivityDomination2016} do not seem to be applicable to our setting, or, generally speaking, to graph minimization problems dealing with high connectivity requirements (unless in a multi-subgraph setting). \cite{fominApproximationSchemesWidth2019} extends the results in \cite{cohen-addadApproximatingConnectivityDomination2016} by considering general minor-free graph families and by relaxing condition \ref{ubiquity:connected_and_treewidth};  however, they still rely on properties \ref{ubiquity:FPT_PTAS} and \ref{ubiquity:lift}.

\end{document}